\documentclass[fleqn]{article}
\usepackage{latexsym}
\usepackage{amssymb}
\usepackage{stmaryrd}
\usepackage{graphicx}
\usepackage{color}
\usepackage{url}
\usepackage{phonetic}
\usepackage{amsmath}

\newcommand{\be}{\begin{enumerate}}
\newcommand{\ee}{\end{enumerate}}
\newcommand{\bi}{\begin{itemize}}
\newcommand{\ei}{\end{itemize}}
\newcommand{\bc}{\begin{center}}
\newcommand{\ec}{\end{center}}
\newcommand{\bsp}{\begin{sloppypar}}
\newcommand{\esp}{\end{sloppypar}}

\newtheorem{thm}{Theorem}[subsection]
\newtheorem{cor}[thm]{Corollary}
\newtheorem{lem}[thm]{Lemma}
\newtheorem{prop}[thm]{Proposition}
\newtheorem{rem}[thm]{Remark}

\newenvironment{proof}{\par\noindent{\bf Proof\sglsp}}{\hfill$\Box$}

\newcommand{\sglsp}{\ }
\newcommand{\dblsp}{\ \ }

\newcommand{\sC}{\mbox{$\cal C$}}
\newcommand{\sD}{\mbox{$\cal D$}}
\newcommand{\sE}{\mbox{$\cal E$}}

\newcommand{\sH}{\mbox{$\cal H$}}

\newcommand{\sM}{\mbox{$\cal M$}}

\newcommand{\sT}{\mbox{$\cal T$}}

\newcommand{\sV}{\mbox{$\cal V$}}

\renewcommand{\phi}{\varphi}

\newcommand{\churchqe}{$\mbox{\sc ctt}_{\rm qe}$}
\newcommand{\churchuqe}{$\mbox{\sc ctt}_{\rm uqe}$}
\newcommand{\churcheps}{$\mbox{\sc ctt}_\epsilon$}
\newcommand{\qzero}{${\cal Q}_0$}
\newcommand{\qzerou}{${\cal Q}^{\rm u}_{0}$}
\newcommand{\qzerouqe}{${\cal Q}^{\rm uqe}_{0}$}

\newcommand{\set}[1]{{\{ #1 \}}}
\newcommand{\sembrack}[1]{\llbracket#1\rrbracket}
\newcommand{\synbrack}[1]{\ulcorner#1\urcorner}
\newcommand{\commabrack}[1]{\lfloor#1\rfloor}
\newcommand{\mname}[1]{\mbox{\sf #1}}

\newcommand{\mdot}{\mathrel.}
\newcommand{\tarrow}{\rightarrow}
\newcommand{\LambdaApp}{\lambda\,}
\newcommand{\Neg}{\neg}
\newcommand{\NegAlt}{{\sim}}
\renewcommand{\And}{\wedge}
\newcommand{\Implies}{\supset}
\newcommand{\Or}{\vee}
\newcommand{\Iff}{\equiv}
\newcommand{\Forall}{\forall}
\newcommand{\ForallApp}{\forall\,}
\newcommand{\Forsome}{\exists}
\newcommand{\ForsomeApp}{\exists\,}

\newcommand{\imps}{\mbox{\sc imps}}

\newcommand{\lutins}{\mbox{\sc lutins}}

\newcommand{\proves}[2]{#1 \vdash #2}

\newcommand{\TRUE}{\mbox{{\sc t}}}
\newcommand{\FALSE}{\mbox{{\sc f}}}

\title{{\bf Incorporating Quotation and Evaluation Into Church's Type
    Theory}\thanks{This paper is a greatly extended version of
    \cite{Farmer16}. This research was supported by NSERC.}}

\author{William M. Farmer\thanks{Address: Department of Computing and
    Software, McMaster University, 1280 Main Street West, Hamilton,
    Ontario L8S 4K1, Canada.  E-mail:
    {\texttt{wmfarmer@mcmaster.ca}.}}}

\date{3 March 2018}

\begin{document}

\maketitle

\begin{abstract}

\noindent
{\churchqe} is a version of Church's type theory that includes
quotation and evaluation operators that are similar to quote and eval
in the Lisp programming language.  With quotation and evaluation it is
possible to reason in {\churchqe} about the interplay of the syntax
and semantics of expressions and, as a result, to formalize
syntax-based mathematical algorithms.  We present the syntax and
semantics of {\churchqe} as well as a proof system for {\churchqe}.
The proof system is shown to be sound for all formulas and complete
for formulas that do not contain evaluations.  We give several
examples that illustrate the usefulness of having quotation and
evaluation in {\churchqe}.  

\bigskip

\noindent
\textbf{Keywords:} Church's type theory, simple type theory,
metareasoning, reflection, quotation, evaluation, quasiquotation,
reasoning about syntax, schemas, symbolic computation, meaning
formulas, substitution.

\end{abstract}

\newpage

\tableofcontents
\listoftables

\newpage

\section{Introduction}\label{sec:introduction}

The Lisp programming language is famous for its use of
\emph{quotation} and \emph{evaluation}.\footnote{Lisp is also famous
  for its use of \emph{quasiquotation}, a more powerful form of
  quotation which we discuss in section~\ref{sec:examples}.}  From
code the Lisp quotation operator called \emph{quote} produces
meta-level data (i.e., S-expressions) that represents the code, and
from this data the Lisp evaluation operator called \emph{eval}
produces the code that the data represents.  In Lisp,
\emph{metaprogramming} (i.e., programming at the meta-level) is
performed by manipulating S-expressions and is \emph{reflected} (i.e.,
integrated) into object-level programming by the use of quote and
eval.

Metaprogramming with reflection is a very powerful programming tool.
Besides Lisp, several other programming languages employ quotation and
evaluation mechanisms to enable metaprogramming with reflection.
Examples include Agda~\cite{Norell07,Norell09,VanDerWalt12},
Archon~\cite{Stump09}, Elixir~\cite{Elixir18}, F\#~\cite{FSharp18},
Idris~\cite{Christiansen:2016,Christiansen:2014,Christiansen:2016:Thesis},
MetaML~\cite{TahaSheard00}, MetaOCaml~\cite{MetaOCaml11},
reFLect~\cite{GrundyEtAl06}, Scala~\cite{Odersky16,Scalameta18}, and
Template Haskell~\cite{SheardJones02}.  Indeed nearly all major
programming languages today provide some level of support for
metaprogramming.  Quotation is not crucial for metaprogramming since
strings or abstract syntax trees (ASTs) can be used directly as data
representing code.  However, quotation is convenient for connecting
code to its representation.

Analogous to metaprogramming in a programming language,
\emph{metareasoning} is performed in a logic by manipulating
meta-level values (e.g., ASTs) that represent expressions in the logic
and is \emph{reflected} into object-level reasoning using quotation
and evaluation\footnote{Evaluation in this context is also called
  unquoting, interpretation, dereferencing, and dereification.}
mechanisms~\cite{Costantini02}.  Metareasoning with reflection has
been used in several proof assistants --- including
Agda~\cite{VanDerWalt12,VanDerWaltSwierstra12},
Coq~\cite{Boutin97,BraibantPous11,Chlipala13,GonthierEtAl15,GregoireMahboubi05,JamesHinze09,OostdijkGeuvers02},
Idris~\cite{Christiansen:2016,Christiansen:2014,Christiansen:2016:Thesis},
Isabelle/HOL~\cite{ChaiebNipkow08},
Lean~\cite{ebner2017metaprogramming}, Maude~\cite{ClavelMeseguer02},
Nqthm/ACL1~\cite{BoyerMoore81,HuntEtAl05},
Nuprl/MetaPRL~\cite{AllenEtAl90,Barzilay05,Constable95,Howe92,KnoblockConstable86,Nogin05,Yu07},
PVS~\cite{VonHenkeEtAl98}, reFLect~\cite{MelhamEtAl13}, and
Theorema~\cite{GieseBuchberger07} --- for formalizing metalogical
techniques (logical reflection) and incorporating symbolic computation
into proofs (computational reflection)~\cite{Farmer13,Harrison95}.

In proof assistants such as Coq and Agda, metareasoning with
reflection is implemented in the logic by defining an infrastructure
consisting of (1) an \emph{inductive type of syntactic values} that
represent certain object-level expressions, (2) an \emph{informal
  quotation operator} (residing only at the meta-level) that maps
these object-level expressions to syntactic values, and (3) a
\emph{formal evaluation operator} (residing at the object-level) that
maps syntactic values to the values of the object-level expressions
that they represent.  (The three components of this approach form an
instance of a \emph{syntax framework}~\cite{FarmerArxiv13a}, a
mathematical structure that models systems for reasoning about the
syntax of an interpreted language.)

The reflection infrastructures that have been employed in today's
proof assistants are usually \emph{local} in the sense that the
syntactic values of the inductive type represent only the expressions
of the logic that are relevant to a particular problem, the quotation
operator can only be applied to these expressions, and the evaluation
operator can only be applied to the syntactic values of this inductive
type.  Can metareasoning with reflection be implemented in a
traditional logic like first-order logic or simple type
theory~\cite{Farmer08} using a \emph{global} infrastructure for the
entire set of expressions of the logic with global quotation and
evaluation operators like Lisp's quote and eval?  This is largely an
open question.

We have proposed a version of NBG set theory named
Chiron~\cite{FarmerArxiv13} and a version of Alonzo Church's type
theory~\cite{Church40}\footnote{Church's type theory~\cite{Church40}
  is a version of simple type theory with lambda notation.  It is a
  classical form of type theory in contrast to constructive type
  theories, like Martin-L\"of type theory~\cite{Martin-Lof84} and the
  calculus of constructions~\cite{CoquandHuet88}.} named
{\qzerouqe}~\cite{FarmerArxiv14} that include global quotation and
evaluation operators, but these logics have a high level of complexity
and are not easy to implement.  This paper presents a logic named
{\churchqe}, a version of Church's type theory with quotation and
evaluation which is much simpler than {\qzerouqe}.  We believe
{\churchqe} is the first readily implementable version of simple type
theory that includes global quotation and evaluation.
See~\cite{GieseBuchberger07} for research in a similar direction.

Several challenging design problems face the logic engineer who seeks
to incorporate global quotation and evaluation into a traditional
logic.  The three design problems that most concern us are the
following.  We will write the quotation and evaluation operators
applied to an expression $e$ as $\synbrack{e}$ and $\sembrack{e}$,
respectively.

\be

  \item \emph{Evaluation Problem.}  An evaluation operator is
    applicable to syntactic values that represent formulas and thus is
    effectively a truth predicate.  Hence, by the proof of Alfred
    Tarski's theorem on the undefinability of
    truth~\cite{Tarski33,Tarski35,Tarski35a}, if the evaluation
    operator is total in the context of a sufficiently strong theory
    like first-order Peano arithmetic, then it is possible to express
    the liar paradox using the quotation and evaluation operators.
    Therefore, the evaluation operator must be partial and the law of
    disquotation\footnote{There are different meanings for the law of
      disquotation depending, for example, on how free variables in a
      quotation are handled.  We are assuming a law of disquotation in
      which $\sembrack{\synbrack{e}}$ has the same value as $e$ for
      any variable assignment.} cannot hold universally (i.e., for
    some expressions $e$, $\sembrack{\synbrack{e}} \not= e$).  As a
    result, reasoning with evaluation is cumbersome and leads to
    undefined expressions.

  \item \emph{Variable Problem.}  The variable $x$ is not free in the
    expression $\synbrack{x + 3}$ (or in any quotation).  However, $x$
    is free in $\sembrack{\synbrack{x + 3}}$ because
    $\sembrack{\synbrack{x + 3}} = x + 3$.  If the value of a constant
    $c$ is $\synbrack{x + 3}$, then $x$ is free in $\sembrack{c}$
    because $\sembrack{c} = \sembrack{\synbrack{x + 3}} = x + 3$.
    Hence, in the presence of an evaluation operator, whether or not a
    variable is free in an expression may depend on the values of the
    expression's components.  As a consequence, the substitution of an
    expression for the free occurrences of a variable in another
    expression depends on the semantics (as well as the syntax) of the
    expressions involved and must be integrated with the proof system
    for the logic.  That is, a logic with quotation and evaluation
    requires a semantics-dependent form of substitution in which side
    conditions, like whether a variable is free in an expression, are
    proved within the proof system.  This is a major departure from
    traditional logic.

  \item \emph{Double Substitution Problem.}  By the semantics of
    evaluation, the value of $\sembrack{e}$ is the \emph{value} of the
    expression whose syntax tree is represented by the \emph{value} of
    $e$.  Hence the semantics of evaluation involves a double
    valuation (see condition 6 of the definition of a general model in
    subsection~\ref{subsec:gen-models}).  If the value of a variable $x$
    is $\synbrack{x}$, then $\sembrack{x} = \sembrack{\synbrack{x}} =
    x = \synbrack{x}$ (see Proposition~\ref{prop:double-sub}).  Hence
    the substitution of $\synbrack{x}$ for $x$ in $\sembrack{x}$
    requires one substitution inside the argument of the evaluation
    operator and another substitution after the evaluation operator is
    eliminated.  This double substitution is another major departure
    from traditional logic.

\ee

To solve the Evaluation Problem, it is necessary to restrict the
application of either the quotation operator or the evaluation
operator.  In {\qzerouqe}, $\synbrack{e}$ is defined for every
expression $e$, but $\sembrack{\synbrack{e}}$ is defined only if $e$
is an expression in which every occurrence of the evaluation operator
is within a quotation.  The Variable and Double Substitution Problems
are then solved by defining an explicit substitution operator that
operates on syntactic values.  This results in a proof system for
{\qzerouqe} that is very complex and very difficult to implement.  On
the other hand, if $\synbrack{e}$ is defined only when $e$ is a closed
evaluation-free expression, then all three of the design problem
disappear.  However, most of the usefulness of having quotation and
evaluation in a logic also disappear --- which is illustrated by the
examples in section~\ref{sec:examples}.

The logic {\churchqe} takes a middle path to solve the three design
problems: $\synbrack{e}$ is defined only if $e$ is a (possible open)
expression that does not contain the evaluation operator.  It is much
simpler than {\qzerouqe}, but also much more useful than a logic in
which only closed expressions can be quoted.  Like {\qzerouqe},
{\churchqe} is based on {\qzero}~\cite{Andrews02}, Peter Andrews'
version of Church's type theory.  In this paper, we present the syntax
and semantics of {\churchqe} as well as a proof system for
{\churchqe}.  We also give several examples that demonstrate the
benefits of having quotation and evaluation in {\churchqe}.

The rest of the paper is organized as follows.  The syntax of
{\churchqe} is defined in section~\ref{sec:syntax}.  A Henkin-style
general models semantics for {\churchqe} is defined in
section~\ref{sec:semantics}.  Four examples that illustrate the
utility of the quotation and evaluation facility in {\churchqe} are
presented in section~\ref{sec:examples}.  A proof system for
{\churchqe} is given in section~\ref{sec:proof-system}.  Various
proof-theoretic results about the proof system are proved in
section~\ref{sec:pt-results}.  The proof system is proved in
sections~\ref{sec:soundness} and~\ref{sec:completeness} to be,
respectively, sound with respect to the semantics of {\churchqe} and
complete with respect to the semantics of {\churchqe} for
evaluation-free formulas.  In section~\ref{sec:revisit} the results
stated about the examples discussed in section~\ref{sec:examples} are
proved within the proof system for {\churchqe}.  The extensive body of
literature related to {\churchqe} is briefly surveyed in
section~\ref{sec:related-work}.  And the paper ends with some final
remarks in section~\ref{sec:conclusion} including a brief discussion
on future work.

\section{Syntax}\label{sec:syntax}

{\churchqe} has the syntax of Church's type theory plus an inductive
type of syntactic values, a partial quotation operator, and a typed
evaluation operator.  The syntax of {\churchqe} is very similar to the
syntax of {\qzero}~\cite[pp.~210--211]{Andrews02}.  Like {\qzero}, the
propositional connectives and quantifiers are defined using function
application, function abstraction, and equality.  For the sake of
simplicity, {\churchqe} does not contain, as in {\qzero}, a definite
description operator or, as in the logic of the HOL proof
assistant~\cite{GordonMelham93}, an indefinite description (choice)
operator and type variables.

\subsection{Types}

A \emph{type} of {\churchqe} is a string of symbols defined
inductively by the following formation rules:
\be

  \item \emph{Type of individuals}: $\iota$ is a type.

  \item \emph{Type of truth values}: $\omicron$ is a type.

  \item \emph{Type of constructions}: $\epsilon$ is a type.

  \item \emph{Function type}: If $\alpha$ and $\beta$ are types, then
    $(\alpha \tarrow \beta)$ is a type.\footnote{In Andrews'
    {\qzero}~\cite{Andrews02} and Church's original
    system~\cite{Church40}, the function type $(\alpha \tarrow \beta)$
    is written as $(\beta\alpha)$.}

\ee
Let $\sT$ denote the set of types of {\churchqe}.
$\alpha,\beta,\gamma, \ldots$ are syntactic variables ranging over
types.  When there is no loss of meaning, matching pairs of
parentheses in types may be omitted.  We assume that function type
formation associates to the right so that a type of the form $(\alpha
\tarrow (\beta \tarrow \gamma))$ may be written as $\alpha \tarrow
\beta \tarrow \gamma$.

We will see in the next subsection that in {\churchqe} types are
directly assigned to variables and constants and thereby indirectly
assigned to expressions.

\subsection{Expressions}\label{subsec:expressions}

A \emph{typed symbol} is a symbol with a subscript from $\sT$.  Let
$\sV$ be a set of typed symbols such that, for each $\alpha \in \sT$,
$\sV$ contains denumerably many typed symbols with subscript~$\alpha$.
A \emph{variable of type $\alpha$} of {\churchqe} is a member of $\sV$
with subscript~$\alpha$.  $\textbf{f}_\alpha, \textbf{g}_\alpha,
\textbf{h}_\alpha, \textbf{u}_\alpha, \textbf{v}_\alpha,
\textbf{w}_\alpha,\textbf{x}_\alpha, \textbf{y}_\alpha,
\textbf{z}_\alpha,\ldots$ are syntactic variables ranging over
variables of type~$\alpha$.  We will assume that $f_\alpha, g_\alpha,
h_\alpha, u_\alpha, v_\alpha, w_\alpha, x_\alpha, y_\alpha,
z_\alpha,\ldots$ are actual variables of type~$\alpha$ of {\churchqe}.

Let $\sC$ be a set of typed symbols disjoint from $\sV$ that includes
the typed symbols in Table~\ref{tab:log-con}.  A \emph{constant of
  type~$\alpha$} of {\churchqe} is a member of $\sC$ with
subscript~$\alpha$.  The typed symbols in Table~\ref{tab:log-con} are
the \emph{logical constants} of {\churchqe}.  $\textbf{c}_\alpha,
\textbf{d}_\alpha, \ldots$ are syntactic variables ranging over
constants of type~$\alpha$.

\begin{table}
\bc
\begin{tabular}{|ll|}
\hline
$\mname{=}_{\alpha \tarrow \alpha \tarrow o}$ 
& for all $\alpha \in \sT$\\
$\mname{is-var}_{\epsilon \tarrow o}$
&\\
$\mname{is-var}_{\epsilon \tarrow o}^{\alpha}$
& for all $\alpha \in \sT$\\
$\mname{is-con}_{\epsilon \tarrow o}$
&\\
$\mname{is-con}_{\epsilon \tarrow o}^{\alpha}$
& for all $\alpha \in \sT$\\
$\mname{app}_{\epsilon \tarrow \epsilon \tarrow \epsilon}$
&\\
$\mname{abs}_{\epsilon \tarrow \epsilon \tarrow \epsilon}$
&\\
$\mname{quo}_{\epsilon \tarrow \epsilon}$
&\\
$\mname{is-expr}_{\epsilon \tarrow o}$
&\\
$\mname{is-expr}_{\epsilon \tarrow o}^{\alpha}$
& for all $\alpha \in \sT$\\
$\sqsubset_{\epsilon \tarrow \epsilon \tarrow o}$
&\\
$\mname{is-free-in}_{\epsilon \tarrow \epsilon \tarrow o}$
&\\
\hline
\end{tabular}
\ec
\caption{Logical Constants}\label{tab:log-con}
\end{table}

An \emph{expression of type $\alpha$} of {\churchqe} is a string of
symbols defined inductively by the formation rules below.
$\textbf{A}_\alpha, \textbf{B}_\alpha, \textbf{C}_\alpha, \ldots$ are
syntactic variables ranging over expressions of type $\alpha$.  An
expression is \emph{eval-free} if it is constructed using just the
first five formation rules.
\be

  \item \emph{Variable}: $\textbf{x}_\alpha$ is an expression of type
    $\alpha$.

  \item \emph{Constant}: $\textbf{c}_\alpha$ is an expression of type
    $\alpha$.

  \item \emph{Function application}: $(\textbf{F}_{\alpha \tarrow
    \beta} \, \textbf{A}_\alpha)$ is an expression of type $\beta$.

  \item \emph{Function abstraction}: $(\LambdaApp \textbf{x}_\alpha
    \mdot \textbf{B}_\beta)$ is an expression of type $\alpha \tarrow
    \beta$.

  \item \emph{Quotation}: $\synbrack{\textbf{A}_\alpha}$ is an
    expression of type $\epsilon$ if $\textbf{A}_\alpha$ is eval-free.

  \item \emph{Evaluation}: $\sembrack{\textbf{A}_\epsilon}_{{\bf
      B}_\beta}$ is an expression of type $\beta$.

\ee 

\noindent
The sole purpose of the second component $\textbf{B}_\beta$ in an
evaluation $\sembrack{\textbf{A}_\epsilon}_{{\bf B}_\beta}$ is to
establish the type of the evaluation.  A \emph{formula} is an
expression of type $o$.  When there is no loss of meaning, matching
pairs of parentheses in expressions may be omitted.  We assume that
function application formation associates to the left so that an
expression of the form $((\textbf{G}_{\alpha \tarrow \beta \tarrow
  \gamma} \, \textbf{A}_\alpha) \, \textbf{B}_\beta)$ may be written
as $\textbf{G}_{\alpha \tarrow \beta \tarrow \gamma} \,
\textbf{A}_\alpha \, \textbf{B}_\beta$.

Let $\textbf{A}_\alpha$ and $\textbf{B}_\beta$ be eval-free
expressions.  An occurrence of a variable $\textbf{x}_\alpha$ in
$\textbf{B}_\beta$ is \emph{bound} [\emph{free}] if (1) it is not in a
quotation and (2) it is [not] in a subexpression of $\textbf{B}_\beta$
of the form $\LambdaApp \textbf{x}_\alpha \mdot \textbf{C}_\gamma$.  A
variable $\textbf{x}_\alpha$ is \emph{bound} [\emph{free}] \emph{in}
$\textbf{B}_\beta$ if there is a bound [free] occurrence of
$\textbf{x}_\alpha$ in $\textbf{B}_\beta$. $\textbf{A}_\alpha$ is
\emph{free for $\textbf{x}_\alpha$ in} $\textbf{B}_\beta$ if no free
occurrence of $\textbf{x}_\alpha$ in $\textbf{B}_\beta$ is within a
subexpression of $\textbf{B}_\beta$ of the form $\LambdaApp
\textbf{y}_\gamma \mdot \textbf{C}_\delta$ such that
$\textbf{y}_\gamma$ is free in $\textbf{A}_\alpha$.

\begin{rem}[Bound and Free Variables]\em\bsp
For expressions that are not eval-free (i.e., contain one or more
evaluations), it is not possible to define in a purely syntactic way
the notion of a bound or free variable due to the Variable Problem
(see section~\ref{sec:introduction}).  Hence notions concerning bound
and free variables, such as substitution and alpha-equivalence, cannot
be readily extended to non-eval-free expressions.\esp
\end{rem}

\subsection{Constructions}

A \emph{construction} of {\churchqe} is an expression of type
$\epsilon$ defined inductively as follows:

\be

  \item $\synbrack{\textbf{x}_\alpha}$ is a construction.

  \item $\synbrack{\textbf{c}_\alpha}$ is a construction.

  \item If $\textbf{A}_\epsilon$ and $\textbf{B}_\epsilon$ are
    constructions, then $\mname{app}_{\epsilon \tarrow \epsilon
      \tarrow \epsilon} \, \textbf{A}_\epsilon \,
    \textbf{B}_\epsilon$, $\mname{abs}_{\epsilon \tarrow \epsilon
      \tarrow \epsilon} \, \textbf{A}_\epsilon \,
    \textbf{B}_\epsilon$, and $\mname{quo}_{\epsilon \tarrow \epsilon}
    \, \textbf{A}_\epsilon$ are constructions.

\ee

\noindent
The set of constructions is thus an inductive type whose base elements
are quotations of variables and constants and whose constructors are
$\mname{app}_{\epsilon \tarrow \epsilon \tarrow \epsilon}$,
$\mname{abs}_{\epsilon \tarrow \epsilon \tarrow \epsilon}$, and
$\mname{quo}_{\epsilon \tarrow \epsilon}$.  We will call these three
constants \emph{syntax constructors}.

Let $\sE$ be the function mapping eval-free expressions to
constructions that is defined inductively as follows:

\be

  \item $\sE(\textbf{x}_\alpha) = \synbrack{\textbf{x}_\alpha}$.

  \item $\sE(\textbf{c}_\alpha) = \synbrack{\textbf{c}_\alpha}$.

  \item $\sE(\textbf{F}_{\alpha \tarrow \beta} \, \textbf{A}_\alpha) =
    \mname{app}_{\epsilon \tarrow \epsilon \tarrow \epsilon} \,
    \sE(\textbf{F}_{\alpha \tarrow \beta}) \, \sE(\textbf{A}_\alpha)$.

  \item $\sE(\LambdaApp \textbf{x}_\alpha \mdot \textbf{B}_\beta) =
    \mname{abs}_{\epsilon \tarrow \epsilon \tarrow \epsilon} \,
    \sE(\textbf{x}_\alpha) \, \sE(\textbf{B}_\beta)$.

  \item $\sE(\synbrack{\textbf{A}_\alpha}) = \mname{quo}_{\epsilon
    \tarrow \epsilon} \, \sE(\textbf{A}_\alpha)$.

\ee

\noindent
$\sE$ is clearly injective.  When $\textbf{A}_\alpha$ is eval-free,
$\sE(\textbf{A}_\alpha)$ is the unique construction that represents
the syntax tree of $\textbf{A}_\alpha$.  That is,
$\sE(\textbf{A}_\alpha)$ is a syntactic value that represents how
$\textbf{A}_\alpha$ is syntactically constructed.  For every eval-free
expression, there is a construction that represents its syntax tree,
but not every construction represents the syntax tree of an eval-free
expression.  For example, $\mname{app}_{\epsilon \tarrow \epsilon
  \tarrow \epsilon} \, \synbrack{\textbf{x}_\alpha} \,
\synbrack{\textbf{x}_\alpha}$ represents the syntax tree of
$(\textbf{x}_\alpha \, \textbf{x}_\alpha)$ which is not an expression
of {\churchqe} since the types are mismatched.  A construction is
\emph{proper} if it is in the range of $\sE$, i.e., it represents the
syntax tree of an eval-free expression.  Whether a construction is
proper is easily decided by examining it syntactic structure.

The five kinds of eval-free expressions and the syntactic values that
represent their syntax trees are given in
Table~\ref{tab:eval-free-exprs}.  The logical constants
$\mname{is-var}_{\epsilon \tarrow o}$, $\mname{is-var}_{\epsilon
  \tarrow o}^{\alpha}$, $\mname{is-con}_{\epsilon \tarrow o}$,
$\mname{is-con}_{\epsilon \tarrow o}^{\alpha}$,
$\mname{is-expr}_{\epsilon \tarrow o}$, $\mname{is-expr}_{\epsilon
  \tarrow o}^{\alpha}$, $\sqsubset_{\epsilon \tarrow \epsilon \tarrow
  o}$, and $\mname{is-free-in}_{\epsilon \tarrow \epsilon \tarrow o}$
are used to make assertions about the expressions that constructors
represent.  Their meanings are given in
subsection~\ref{subsec:interpretations}.  $\sqsubset_{\epsilon \tarrow
  \epsilon \tarrow o}$ is needed to express the induction principle
for constructions (Axiom B6 in section~\ref{sec:proof-system}).

\begin{table}[b]
\bc
\begin{tabular}{|lll|}
\hline

\textbf{Kind}
& \textbf{Syntax}
& \textbf{Syntactic Value}\\

Variable \hspace*{15ex}
& $\textbf{x}_\alpha$  \hspace*{9ex}
& $\synbrack{\textbf{x}_\alpha}$\\

Constant
& $\textbf{c}_\alpha$
& $\synbrack{\textbf{c}_\alpha}$\\

Function application
& $\textbf{F}_{\alpha \tarrow \beta} \, \textbf{A}_\alpha$
& $\mname{app}_{\epsilon \tarrow \epsilon \tarrow \epsilon} \,
  \sE(\textbf{F}_{\alpha \tarrow \beta}) \, \sE(\textbf{A}_\alpha)$\\

Function abstraction
& $\LambdaApp \textbf{x}_\alpha \mdot \textbf{B}_\beta$
& $\mname{abs}_{\epsilon \tarrow \epsilon \tarrow \epsilon} \,
  \sE(\textbf{x}_\alpha) \, \sE(\textbf{B}_\beta)$\\

Quotation
& $\synbrack{\textbf{A}_\alpha}$
& $\mname{quo}_{\epsilon \tarrow \epsilon} \, \sE(\textbf{A}_\alpha)$\\

\hline
\end{tabular}
\ec
\caption{Five Kinds of Eval-Free Expressions}\label{tab:eval-free-exprs}
\end{table}

\begin{rem}[Type of Constructions]\em\bsp
All constructions have the same type $\epsilon$.  Thus,
$\sE(A_\alpha)$ and $\sE(B_\beta)$ both have type $\epsilon$ even when
$A_\alpha$ and $B_\beta$ are eval-free expressions with different
types.  An alternate approach would be to parameterize $\epsilon$ so
that all constructors of the form $\sE(A_\alpha)$ would have the type
$\epsilon_\alpha$.  Instead of parameterizing $\epsilon$ by the type of
expressions, we have chosen to partition $\epsilon$ by the set
$\set{\mname{is-expr}_{\epsilon \tarrow o}^{\alpha} \;|\; \alpha \in
  \sT}$ of unary predicates on $\epsilon$.\esp
\end{rem}

\subsection{Theories}

Let $\sD \subseteq \sC$.  An expression $\textbf{A}_\alpha$ of
{\churchqe} is a \emph{\sD-expression} if each constant occurring in
$\textbf{A}_\alpha$ is a member of $\sD$.  Let $L_{\cal D}$ be the set
of all \sD-expressions.  A \emph{language} of {\churchqe} is $L_{\cal
  D}$ for some $\sD \subseteq \sC$ such that $\sD$ contains all the
logical constants of {\churchqe}.  A \emph{theory} of {\churchqe} is a
pair $T=(L,\Gamma)$ where $L$ is a language of {\churchqe} and
$\Gamma$ is a set of formulas in $L$.  The \emph{theory of the logic}
is the theory $T_{\rm logic} = (L_{\cal C},\emptyset)$.  A theory
$T=(L,\Gamma)$ is \emph{eval-free} if each member of $\Gamma$ is
eval-free.  $\textbf{A}_\alpha$ is an \emph{expression of a theory
$T = (L,\Gamma)$} if $\textbf{A}_\alpha \in L$.

\subsection{Definitions and Abbreviations} \label{subsec:definitions}

As Andrews does in~\cite[p.~212]{Andrews02}, we introduce in
Table~\ref{tab:defs} several defined logical constants and
abbreviations.  The former includes constants for true and false and
the propositional connectives.  The latter includes notation for
equality, the propositional connectives, universal and existential
quantification, $\sqsubset_{\epsilon \tarrow \epsilon \tarrow o}$ as
an infix operator, and a simplified notation for evaluations.

\begin{table}
\bc
\begin{tabular}{|lll|}
\hline

$(\textbf{A}_\alpha = \textbf{B}_\alpha)$ \hspace*{1ex}
& stands for  \hspace*{1ex}
& $=_{\alpha \tarrow \alpha \tarrow o} \, \textbf{A}_\alpha \, \textbf{B}_\alpha$.\\

$(\textbf{A}_o \Iff \textbf{B}_o)$ 
& stands for 
& $=_{o \tarrow o \tarrow o} \, \textbf{A}_o \, \textbf{B}_o$.\\

$T_o$ 
& stands for
& $=_{o \tarrow o \tarrow o} \; = \; =_{o \tarrow o \tarrow o}$.\\

$F_o$ 
& stands for
& $(\LambdaApp x_o \mdot T_o) = (\LambdaApp x_o \mdot x_o).$\\

$(\ForallApp \textbf{x}_\alpha \mdot \textbf{A}_o)$ 
& stands for
& $(\LambdaApp \textbf{x}_\alpha \mdot T_o) = (\LambdaApp \textbf{x}_\alpha \mdot \textbf{A}_o)$.\\

$\wedge_{o \tarrow o \tarrow o}$ 
& stands for
& $\LambdaApp x_o \mdot \LambdaApp y_o \mdot {}$\\
& 
& \hspace*{2ex}$((\LambdaApp g_{o \tarrow o \tarrow o} \mdot 
g_{o \tarrow o \tarrow o} \, T_o \, T_o) = {}$\\
&
& \hspace*{3ex}$(\LambdaApp g_{o \tarrow o \tarrow o} \mdot 
g_{o \tarrow o \tarrow o} \, x_o \, y_o)).$\\

$(\textbf{A}_o \And \textbf{B}_o)$ 
& stands for
& $\wedge_{o \tarrow o \tarrow o} \, \textbf{A}_o \, \textbf{B}_o$.\\

$\Implies_{o \tarrow o \tarrow o}$ 
& stands for
& $\LambdaApp x_o \mdot \LambdaApp y_o \mdot (x_o = (x_o \And y_o)).$\\ 

$(\textbf{A}_o \Implies \textbf{B}_o)$ 
& stands for
& ${\Implies_{o \tarrow o \tarrow o}} \, \textbf{A}_o \,\textbf{B}_o$.\\

$\Neg_{o \tarrow o}$ 
& stands for
& ${=_{o \tarrow o \tarrow o}} \, F_o$.\\

$(\Neg\textbf{A}_o)$ 
& stands for
& $\Neg_{o \tarrow o} \, \textbf{A}_o$.\\

$\vee_{o \tarrow o \tarrow o}$ 
& stands for
& $\LambdaApp x_o \mdot \LambdaApp y_o \mdot \Neg (\Neg x_o \And \Neg y_o).$\\

$(\textbf{A}_o \Or \textbf{B}_o)$ 
& stands for
& ${\vee_{o \tarrow o \tarrow o}} \, \textbf{A}_o \, \textbf{B}_o$.\\

$(\ForsomeApp \textbf{x}_\alpha \mdot \textbf{A}_o)$ 
& stands for
& $\Neg(\ForallApp \textbf{x}_\alpha \mdot \Neg\textbf{A}_o)$.\\

$(\textbf{A}_\alpha \not= \textbf{B}_\alpha)$ 
& stands for 
& $\Neg(\textbf{A}_\alpha = \textbf{B}_\alpha)$.\\

$\textbf{A}_\epsilon \sqsubset_{\epsilon \tarrow \epsilon \tarrow o} \textbf{B}_\epsilon$
& stands for
& $\sqsubset_{\epsilon \tarrow \epsilon \tarrow o} \, 
\textbf{A}_\epsilon \, \textbf{B}_\epsilon$.\\

$\sembrack{\textbf{A}_\epsilon}_\beta$ 
& stands for
& $\sembrack{\textbf{A}_\epsilon}_{{\bf B}_\beta}$.\\

\hline
\end{tabular}
\ec
\caption{Definitions and Abbreviations}\label{tab:defs}
\end{table}

\section{Semantics}\label{sec:semantics}

The semantics of {\churchqe} extends the semantics of
{\qzero}~\cite[pp.~238--239]{Andrews02} by defining the domain of the
type $\epsilon$ and what quotations and evaluations mean.

\subsection{Frames}

A \emph{frame} of {\churchqe} is a collection $\set{D_\alpha \;|\;
  \alpha \in \sT}$ of domains such that:

\be

  \item $D_\iota$ is a nonempty set of values (called \emph{individuals}).

  \item $D_o = \set{\TRUE,\FALSE}$, the set of standard \emph{truth
    values}.

  \item $D_\epsilon$ is the set of \emph{constructions} of
    {\churchqe}.

  \item For $\alpha, \beta \in \sT$, $D_{\alpha \tarrow \beta}$ is
    some set of \emph{total functions} from $D_\alpha$ to $D_\beta$.

\ee
$\sD_\iota$ is the \emph{domain of individuals}, $\sD_o$ is the
\emph{domain of truth values}, $\sD_\epsilon$ is the \emph{domain of
  constructions}, and, for $\alpha, \beta \in \sT$, $\sD_{\alpha
  \tarrow \beta}$ is a \emph{function domain}.

\subsection{Interpretations}\label{subsec:interpretations}

An \emph{interpretation} of {\churchqe} is a pair $(\set{D_\alpha
  \;|\; \alpha \in \sT},I)$ consisting of a frame and an
interpretation function $I$ that maps each constant in $\sC$ of type
$\alpha$ to an element of $D_\alpha$ such that:

\be

  \item For all $\alpha \in \sT$, $I(=_{\alpha \tarrow \alpha
    \tarrow o})$ is the function $f \in D_{\alpha \tarrow
    \alpha \tarrow o}$ such that, for all $d_1,d_2 \in D_\alpha$,
    $f(d_1)(d_2) = \TRUE$ iff $d_1 = d_2$.  That is, $I(=_{\alpha
    \tarrow \alpha \tarrow o})$ is the identity relation on
    $D_\alpha$.

  \item $I(\mname{is-var}_{\epsilon \tarrow o})$ is the function $f
    \in D_{\epsilon \tarrow o}$ such that, for all constructions
    $\textbf{A}_\epsilon \in D_\epsilon$, $f(\textbf{A}_\epsilon) =
    \TRUE$ iff $\textbf{A}_\epsilon = \synbrack{\textbf{x}_\alpha}$
    for some variable $\textbf{x}_\alpha \in \sV$ (where $\alpha$ can
    be any type).

  \item For all $\alpha \in \sT$, $I(\mname{is-var}_{\epsilon \tarrow
    o}^{\alpha})$ is the function $f \in D_{\epsilon \tarrow o}$
    such that, for all constructions $\textbf{A}_\epsilon \in
    D_\epsilon$, $f(\textbf{A}_\epsilon) = \TRUE$ iff
    $\textbf{A}_\epsilon = \synbrack{\textbf{x}_\alpha}$ for some
    variable $\textbf{x}_\alpha \in \sV$.

  \item $I(\mname{is-con}_{\epsilon \tarrow o})$ is the function $f \in
    D_{\epsilon \tarrow o}$ such that, for all constructions
    $\textbf{A}_\epsilon \in D_\epsilon$, $f(\textbf{A}_\epsilon) =
    \TRUE$ iff $\textbf{A}_\epsilon = \synbrack{\textbf{c}_\alpha}$
    for some constant $\textbf{c}_\alpha \in \sC$ (where $\alpha$ can
    be any type).

  \item For all $\alpha \in \sT$, $I(\mname{is-con}_{\epsilon \tarrow
    o}^{\alpha})$ is the function $f \in D_{\epsilon \tarrow o}$
    such that, for all constructions $\textbf{A}_\epsilon \in
    D_\epsilon$, $f(\textbf{A}_\epsilon) = \TRUE$ iff
    $\textbf{A}_\epsilon = \synbrack{\textbf{c}_\alpha}$ for some
    constant $\textbf{c}_\alpha \in \sC$.

  \item \bsp $I(\mname{app}_{\epsilon \tarrow \epsilon \tarrow
    \epsilon})$ is the function $f \in D_{\epsilon \tarrow \epsilon
    \tarrow \epsilon}$ such that, for all constructions
    $\textbf{A}_\epsilon, \textbf{B}_\epsilon \in D_\epsilon$,
    $f(\textbf{A}_\epsilon)(\textbf{B}_\epsilon)$ is the construction
    $\mname{app}_{\epsilon \tarrow \epsilon \tarrow \epsilon} \,
    \textbf{A}_\epsilon \, \textbf{B}_\epsilon$. \esp

  \item $I(\mname{abs}_{\epsilon \tarrow \epsilon \tarrow \epsilon})$
    is the function $f \in D_{\epsilon \tarrow \epsilon \tarrow
      \epsilon}$ such that, for all constructions
    $\textbf{A}_\epsilon, \textbf{B}_\epsilon \in D_\epsilon$,
    $f(\textbf{A}_\epsilon)(\textbf{B}_\epsilon)$ is the construction
    $\mname{abs}_{\epsilon \tarrow \epsilon \tarrow \epsilon} \,
    \textbf{A}_\epsilon \, \textbf{B}_\epsilon$.

  \item $I(\mname{quo}_{\epsilon \tarrow \epsilon})$ is the function $f
    \in D_{\epsilon \tarrow \epsilon}$ such that, for all
    constructions $\textbf{A}_\epsilon \in D_\epsilon$,
    $f(\textbf{A}_\epsilon)$ is the construction
    $\mname{quo}_{\epsilon \tarrow \epsilon} \, \textbf{A}_\epsilon$.

  \item $I(\mname{is-expr}_{\epsilon \tarrow o})$ is the function $f \in
    D_{\epsilon \tarrow o}$ such that, for all constructions
    $\textbf{A}_\epsilon \in D_\epsilon$, $f(\textbf{A}_\epsilon) =
    \TRUE$ iff $\textbf{A}_\epsilon = \sE(\textbf{B}_\alpha)$ for some
    (eval-free) expression $\textbf{B}_\alpha$ (where $\alpha$ can be
    any type).

  \item For all $\alpha \in \sT$, $I(\mname{is-expr}_{\epsilon \tarrow
    o}^{\alpha})$ is the function $f \in D_{\epsilon \tarrow o}$
    such that, for all constructions $\textbf{A}_\epsilon \in
    D_\epsilon$, $f(\textbf{A}_\epsilon) = \TRUE$ iff
    $\textbf{A}_\epsilon = \sE(\textbf{B}_\alpha)$ for some
    (eval-free) expression $\textbf{B}_\alpha$.

  \item $I(\sqsubset_{\epsilon \tarrow \epsilon \tarrow o})$ is the
    function $f \in D_{\epsilon \tarrow \epsilon \tarrow o}$ such
    that, for all constructions $\textbf{A}_\epsilon,
    \textbf{B}_\epsilon \in D_\epsilon$,
    $f(\textbf{A}_\epsilon)(\textbf{B}_\epsilon) = \TRUE$ iff
    $\textbf{A}_\epsilon$ is a proper subexpression of
    $\textbf{B}_\epsilon$.

  \item $I(\mname{is-free-in}_{\epsilon \tarrow \epsilon \tarrow o})$
    is the function $f \in D_{\epsilon \tarrow \epsilon \tarrow o}$
    such that, for all constructions $\textbf{A}_\epsilon,
    \textbf{B}_\epsilon \in D_\epsilon$,
    $f(\textbf{A}_\epsilon)(\textbf{B}_\epsilon) = \TRUE$ iff
    $\textbf{A}_\epsilon = \synbrack{\textbf{x}_\alpha}$ for some
    $\textbf{x}_\alpha \in \sV$, $\textbf{B}_\epsilon =
    \sE(\textbf{C}_\beta)$ for some (eval-free) expression
    $\textbf{C}_\beta$, and $\textbf{x}_\alpha$ is free in
    $\textbf{C}_\beta$.

\ee

\begin{rem}[Domain of Constructions]\em
We would prefer that $D_\epsilon$ contains just the set of proper
constructions because we need only proper constructions to represent
the syntax trees of eval-free expressions.  However, then the natural
interpretations of the three syntax constructors ---
$\mname{app}_{\epsilon \tarrow \epsilon \tarrow \epsilon}$,
$\mname{abs}_{\epsilon \tarrow \epsilon \tarrow \epsilon}$, and
$\mname{quo}_{\epsilon \tarrow \epsilon}$ --- would be partial
functions.  Since {\churchqe} admits only total functions, it is more
convenient to allow $D_\epsilon$ to include improper constructions
than to interpret the syntax constructors as total functions that
represent partial functions.
\end{rem}

An \emph{assignment} into a frame $\set{D_\alpha \;|\; \alpha \in
  \sT}$ is a function $\phi$ whose domain is $\sV$ such that
$\phi(\textbf{x}_\alpha) \in D_\alpha$ for each $\textbf{x}_\alpha \in
\sV$.  Given an assignment $\phi$, $\textbf{x}_\alpha \in \sV$, and $d
\in D_\alpha$, let $\phi[\textbf{x}_\alpha \mapsto d]$ be the
assignment $\psi$ such that $\psi(\textbf{x}_\alpha) = d$ and
$\psi(\textbf{y}_\beta) = \phi(\textbf{y}_\beta)$ for all variables
$\textbf{y}_\beta$ distinct from $\textbf{x}_\alpha$.  Given an
interpretation $\sM = (\set{D_\alpha \;|\; \alpha \in \sT}, I)$,
$\mname{assign}(\sM)$ is the set of assignments into the frame of
$\sM$.

\subsection{General Models} \label{subsec:gen-models}

We are now ready to define the notion of a ``general model'' for
{\churchqe} in which function domains need not contain all possible
total functions.  General models were introduced by Leon Henkin
in~\cite{Henkin50}.

An interpretation $\sM = (\set{D_\alpha \;|\; \alpha \in \sT), I}$ is
a \emph{general model} for {\churchqe} if there is a binary valuation
function $V^{\cal M}$ such that, for all assignments $\phi \in
\mname{assign}(\sM)$ and expressions $\textbf{C}_\gamma$, $V^{\cal
  M}_{\phi}(\textbf{C}_\gamma) \in D_\gamma$\footnote{We write
  $V^{\cal M}_{\phi}(\textbf{C}_\gamma)$ instead of $V^{\cal
    M}(\phi,\textbf{C}_\gamma)$.} and each of the following conditions
is satisfied:

\be

  \item If $\textbf{C}_\gamma \in \sV$, then $V^{\cal
    M}_{\phi}(\textbf{C}_\gamma) = \phi(\textbf{C}_\gamma)$.

  \item If $\textbf{C}_\gamma \in \sC$, then $V^{\cal
    M}_{\phi}(\textbf{C}_\gamma) = I(\textbf{C}_\gamma)$.

  \item If $\textbf{C}_\gamma$ is $\textbf{F}_{\alpha \tarrow \beta} \,
    \textbf{A}_\alpha$, then $V^{\cal M}_{\phi}(\textbf{C}_\gamma) =
    V^{\cal M}_{\phi}(\textbf{F}_{\alpha \tarrow \beta})(V^{\cal
      M}_{\phi}(\textbf{A}_\alpha))$.

  \item If $\textbf{C}_\gamma$ is $\LambdaApp \textbf{x}_\alpha \mdot
    \textbf{B}_\beta$, then $V^{\cal M}_{\phi}(\textbf{C}_\gamma)$ is
    the function $f \in D_{\alpha \tarrow \beta}$ such that, for each
    $d \in D_\alpha$, $f(d) = V^{\cal M}_{\phi[{\bf x}_\alpha \mapsto
      d]}(\textbf{B}_\beta)$.

  \item If $\textbf{C}_\gamma$ is $\synbrack{\textbf{A}_\alpha}$, then
    $V^{\cal M}_{\phi}(\textbf{C}_\gamma) = \sE(\textbf{A}_\alpha)$.

  \item If $\textbf{C}_\gamma$ is
    $\sembrack{\textbf{A}_\epsilon}_\beta$ and $V^{\cal
    M}_{\phi}(\mname{is-expr}_{\epsilon \tarrow o}^{\beta} \,
    \textbf{A}_\epsilon) = \TRUE$, then \[V^{\cal
      M}_{\phi}(\textbf{C}_\gamma) = V^{\cal
      M}_{\phi}(\sE^{-1}(V^{\cal M}_{\phi}(\textbf{A}_\epsilon))).\]

  \item For each $\beta \in \sT$, there is some $d_\beta \in D_\beta$
    such that, if $\textbf{C}_\gamma$ is
    $\sembrack{\textbf{A}_\epsilon}_\beta$ and $V^{\cal
      M}_{\phi}(\mname{is-expr}_{\epsilon \tarrow o}^{\beta} \,
    \textbf{A}_\epsilon) = \FALSE$, then $V^{\cal
      M}_{\phi}(\textbf{C}_\gamma) = d_\beta$.

\ee 

\begin{prop} \label{prop:gen-models-exist}
General models for {\churchqe} exist.
\end{prop}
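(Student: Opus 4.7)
The plan is to construct a standard (full) general model, in which every function domain $D_{\alpha \tarrow \beta}$ is the entire set of total functions from $D_\alpha$ to $D_\beta$, and then to define the valuation function $V^{\cal M}$ by a careful induction that handles the double-valuation in condition~6.

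First, I would fix the frame by taking $D_\iota$ to be any nonempty set (say a singleton), $D_o = \set{\TRUE,\FALSE}$, $D_\epsilon$ to be the full set of constructions as required, and $D_{\alpha \tarrow \beta}$ to be all total functions from $D_\alpha$ to $D_\beta$. Each of the twelve clauses defining an interpretation function $I$ on the logical constants specifies a particular total function on the appropriate domains, and because the frame is full, each of these functions exists in the relevant $D_\gamma$; I would simply set $I$ to these values, and assign an arbitrary element of $D_\alpha$ to every non-logical constant $\textbf{c}_\alpha$. For clause~7 of the general model definition, I would fix, once and for all, an arbitrary default value $d_\beta \in D_\beta$ for each $\beta \in \sT$ (each $D_\beta$ is nonempty: $D_\iota$ and $D_o$ are nonempty by stipulation, $D_\epsilon$ contains quotations of variables, and each $D_{\alpha \tarrow \beta}$ contains the constant function to a chosen $d_\beta$).

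Next I would define $V^{\cal M}_\phi(\textbf{C}_\gamma)$ by recursion, but not by plain structural recursion, because clause~6 recursively invokes $V^{\cal M}_\phi$ on $\sE^{-1}(V^{\cal M}_\phi(\textbf{A}_\epsilon))$, which is not in general a subexpression of $\textbf{C}_\gamma$. The key observation that rescues the recursion is that $\sE^{-1}$ is only applied to proper constructions and returns an eval-free expression. So I would use double induction: the outer measure is the number of occurrences of the evaluation operator in $\textbf{C}_\gamma$, and the inner measure is the length of $\textbf{C}_\gamma$. For the clauses~1--5 the inductive step reduces the inner measure while not increasing the outer measure, and for clause~6 the first recursive call on $\textbf{A}_\epsilon$ strictly reduces the inner measure, while the second recursive call on $\sE^{-1}(V^{\cal M}_\phi(\textbf{A}_\epsilon))$ strictly reduces the outer measure to values below the number of evaluations in $\textbf{C}_\gamma$, since the argument is eval-free. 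This establishes that $V^{\cal M}$ is well-defined as a total function on all expressions.

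Finally I would verify that the resulting $V^{\cal M}$ satisfies conditions 1--7 of the definition of a general model. Conditions 1--5 and~7 hold directly by the definitions; condition~6 holds because whenever $V^{\cal M}_\phi(\mname{is-expr}_{\epsilon \tarrow o}^{\beta} \, \textbf{A}_\epsilon) = \TRUE$, the interpretation of $\mname{is-expr}_{\epsilon \tarrow o}^{\beta}$ forces $V^{\cal M}_\phi(\textbf{A}_\epsilon)$ to be a proper construction of type $\beta$, so $\sE^{-1}$ is defined on it and the recursive clause applies. The main obstacle I anticipate is precisely the well-definedness argument for $V^{\cal M}$ in the presence of the double valuation; once the layered induction on (number of evaluations, expression length) is set up, everything else is routine bookkeeping.
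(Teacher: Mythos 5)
Your proposal is correct and follows essentially the same route as the paper: take a standard interpretation with full function domains and define $V^{\cal M}$ recursively. In fact you supply a detail the paper elides --- the paper simply says the valuation is defined ``by induction on the structure of expressions,'' whereas your lexicographic measure (number of evaluation occurrences, then length) is exactly what is needed to justify the second recursive call in condition~6, whose argument $\sE^{-1}(V^{\cal M}_{\phi}(\textbf{A}_\epsilon))$ is eval-free but need not be a subexpression of $\textbf{C}_\gamma$.
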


\begin{proof} 
It is easy to construct an interpretation $\sM = (\set{\sD_\alpha
  \;|\; \alpha \in \sT}, I)$ such that $\sD_{\alpha \tarrow \beta}$ is
the set of all total functions from $\sD_\alpha$ to $\sD_\beta$ for
all $\alpha,\beta \in \sT$.  It is also easy to define by induction on
the structure of expressions a valuation function $V^{\cal M}$ such
that $\sM$ is a general model for {\churchqe}.
\end{proof}

\begin{rem}[Semantics of Evaluations]\em\bsp
When $V^{\cal M}_{\phi}(\mname{is-expr}_{\epsilon \tarrow o}^{\beta}
\, \textbf{A}_\epsilon) = \TRUE$, the semantics of $V^{\cal
  M}_{\phi}(\sembrack{\textbf{A}_\epsilon}_\beta)$ involves a double
valuation as mentioned in the Double Substitution Problem (see
section~\ref{sec:introduction}).\esp
\end{rem}

\begin{rem}[Undefined Evaluations]\em
Suppose $V^{\cal M}_{\phi}(\textbf{A}_\epsilon)$ is an improper
construction.  Then $V^{\cal M}_{\phi}(\sE^{-1}(V^{\cal
  M}_{\phi}(\textbf{A}_\epsilon)))$ is undefined and $V^{\cal
  M}_{\phi}(\sembrack{\textbf{A}_\epsilon}_\beta)$ has no natural
value.  Since {\churchqe} does not admit undefined expressions,
$V^{\cal M}_{\phi}(\sembrack{\textbf{A}_\epsilon}_\beta)$ is defined
to be $d_\beta$, an unspecified error value for the type
$\beta$. Similarly, if $V^{\cal M}_{\phi}(\textbf{A}_\epsilon)$ is a
proper construction of the form $\sE(\textbf{B}_\gamma)$ with $\gamma
\not= \beta$, $V^{\cal
  M}_{\phi}(\sembrack{\textbf{A}_\epsilon}_\beta) = d_\beta$.
\end{rem}

Let $\sM$ be a general model for {\churchqe}.  $\textbf{A}_o$ is
\emph{valid in $\sM$}, written $\sM \vDash \textbf{A}_o$, if $V^{\cal
  M}_{\phi}(\textbf{A}_o) = \TRUE$ for all $\phi \in
\mname{assign}(\sM)$.  $\textbf{A}_o$ is \emph{valid in {\churchqe}},
written ${} \vDash \textbf{A}_o$, if $\textbf{A}_o$ is valid in every
general model for {\churchqe}.

Let $T=(L,\Gamma)$ be a theory of {\churchqe}, $\textbf{A}_o$ be a
formula of $T$, and $\sH$ be a set of formulas of $T$.  A
\emph{general model for $T$} is a general model $\sM$ for {\churchqe}
such that $\sM \vDash \textbf{A}_o$ for all $\textbf{A}_o \in
\Gamma$.  $\textbf{A}_o$ is \emph{valid in $T$}, written $T \vDash
\textbf{A}_o$, if $\textbf{A}_o$ is valid in every general model for
$T$.  \emph{$\sH$ entails $\textbf{A}_o$ in $T$}, written $T,\sH
\vDash \textbf{A}_o$, if, for all general models $\sM$ for $T$ and
all $\phi \in \mname{assign}(\sM)$, $V^{\cal M}_{\phi}(\textbf{H}_o) =
\TRUE$ for all $\textbf{H}_o \in \sH$ implies $V^{\cal
  M}_{\phi}(\textbf{A}_o) = \TRUE$.

\begin{prop}\label{prop:val-const}
Let $\sM$ be a general model for {\churchqe}, $\textbf{A}_\epsilon$ be
a construction, and $\phi \in \mname{assign}(\sM)$.  Then $V^{\cal
  M}_{\phi}(\textbf{A}_\epsilon) = \textbf{A}_\epsilon$.
\end{prop}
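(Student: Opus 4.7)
The plan is to prove Proposition \ref{prop:val-const} by structural induction on the definition of construction given in the subsection on Constructions. There are two base cases and three inductive cases, corresponding exactly to the five formation rules for constructions.

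For the base cases, suppose $\textbf{A}_\epsilon$ is $\synbrack{\textbf{x}_\alpha}$ or $\synbrack{\textbf{c}_\alpha}$. In either case $\textbf{A}_\epsilon$ is a quotation of an eval-free expression, so condition 5 of the general model definition applies and gives $V^{\cal M}_{\phi}(\synbrack{\textbf{x}_\alpha}) = \sE(\textbf{x}_\alpha) = \synbrack{\textbf{x}_\alpha}$, using the defining clause $\sE(\textbf{x}_\alpha) = \synbrack{\textbf{x}_\alpha}$ from the definition of $\sE$. The constant case is identical. So both base cases reduce directly to the definition of $\sE$ on atomic eval-free expressions.

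For the inductive step, suppose $\textbf{A}_\epsilon$ has the form $\mname{app}_{\epsilon \tarrow \epsilon \tarrow \epsilon} \, \textbf{B}_\epsilon \, \textbf{C}_\epsilon$ where $\textbf{B}_\epsilon$ and $\textbf{C}_\epsilon$ are constructions. By two applications of condition 3 (function application) of the general model definition, $V^{\cal M}_{\phi}(\textbf{A}_\epsilon) = V^{\cal M}_{\phi}(\mname{app}_{\epsilon \tarrow \epsilon \tarrow \epsilon})(V^{\cal M}_{\phi}(\textbf{B}_\epsilon))(V^{\cal M}_{\phi}(\textbf{C}_\epsilon))$. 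By condition 2 this equals $I(\mname{app}_{\epsilon \tarrow \epsilon \tarrow \epsilon})(V^{\cal M}_{\phi}(\textbf{B}_\epsilon))(V^{\cal M}_{\phi}(\textbf{C}_\epsilon))$, and by condition 6 of the interpretation definition this is precisely the construction $\mname{app}_{\epsilon \tarrow \epsilon \tarrow \epsilon} \, V^{\cal M}_{\phi}(\textbf{B}_\epsilon) \, V^{\cal M}_{\phi}(\textbf{C}_\epsilon)$. By the inductive hypothesis applied to $\textbf{B}_\epsilon$ and $\textbf{C}_\epsilon$, this equals $\mname{app}_{\epsilon \tarrow \epsilon \tarrow \epsilon} \, \textbf{B}_\epsilon \, \textbf{C}_\epsilon = \textbf{A}_\epsilon$. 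The cases for $\mname{abs}_{\epsilon \tarrow \epsilon \tarrow \epsilon}$ and $\mname{quo}_{\epsilon \tarrow \epsilon}$ are handled in exactly the same way, using conditions 7 and 8 of the interpretation definition respectively.

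There is no real obstacle here; the argument is a routine structural induction. The key observation making it work is that the interpretations of the three syntax constructors are defined to produce the very constructions they build syntactically (conditions 6--8 of subsection \ref{subsec:interpretations}), and that quotations of variables and constants are valuated directly to themselves via $\sE$ (condition 5 of the general model definition). Together these clauses ensure that valuation commutes with the construction-building operations, so every construction is a fixed point of $V^{\cal M}_{\phi}$.
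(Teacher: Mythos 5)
Your proof is correct and is exactly the argument the paper compresses into its one-line justification: the paper's proof simply cites conditions 6--8 of the interpretation definition together with conditions 3 and 5 of the general model definition, and your structural induction spells out how those clauses combine. No discrepancies.
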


\begin{proof}
Follows immediately from conditions 6--8 of the definition of an
interpretation and conditions 3 and 5 of the definition of a general
model.
\end{proof}

\begin{thm}[Law of Quotation] \label{thm:sem-quotation}
$\synbrack{\textbf{A}_\alpha} = \sE(\textbf{A}_\alpha)$ is valid in
  {\churchqe}.
\end{thm}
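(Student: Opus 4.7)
The plan is to unfold both sides of the equation $\synbrack{\textbf{A}_\alpha} = \sE(\textbf{A}_\alpha)$ under an arbitrary general model $\sM = (\set{D_\alpha \;|\; \alpha \in \sT}, I)$ and assignment $\phi \in \mname{assign}(\sM)$, show that they have the same value in $D_\epsilon$, and then conclude validity via the semantics of the equality symbol $=_{\epsilon \tarrow \epsilon \tarrow o}$.

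First I would compute $V^{\cal M}_{\phi}(\synbrack{\textbf{A}_\alpha})$. Since quotation is only formed when $\textbf{A}_\alpha$ is eval-free, condition 5 of the definition of a general model applies directly and gives $V^{\cal M}_{\phi}(\synbrack{\textbf{A}_\alpha}) = \sE(\textbf{A}_\alpha)$.

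Next I would compute $V^{\cal M}_{\phi}(\sE(\textbf{A}_\alpha))$. By the inductive definition of $\sE$, the expression $\sE(\textbf{A}_\alpha)$ is a construction of type $\epsilon$, so Proposition~\ref{prop:val-const} applies and yields $V^{\cal M}_{\phi}(\sE(\textbf{A}_\alpha)) = \sE(\textbf{A}_\alpha)$. Combining these two computations, both sides have the same value $\sE(\textbf{A}_\alpha) \in D_\epsilon$. Then, by condition 1 of the definition of an interpretation (the interpretation of $=_{\epsilon \tarrow \epsilon \tarrow o}$ as the identity relation on $D_\epsilon$) together with condition 3 of the definition of a general model, we get $V^{\cal M}_{\phi}(\synbrack{\textbf{A}_\alpha} = \sE(\textbf{A}_\alpha)) = \TRUE$. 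Since $\sM$ and $\phi$ were arbitrary, the formula is valid in \churchqe.

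There is no serious obstacle here: the statement is essentially a direct consequence of the semantic clause for quotation together with the fact that constructions are self-denoting (Proposition~\ref{prop:val-const}). The only point that requires a moment's care is to note that the outer expression $\sE(\textbf{A}_\alpha)$ on the right-hand side is itself a syntactic object of the logic (a construction in the sense of subsection on Constructions) and is of type $\epsilon$, so that the equality is well-typed and Proposition~\ref{prop:val-const} is indeed applicable.
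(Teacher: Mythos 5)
Your proposal is correct and follows essentially the same route as the paper's proof: condition 5 of the definition of a general model gives $V^{\cal M}_{\phi}(\synbrack{\textbf{A}_\alpha}) = \sE(\textbf{A}_\alpha)$, and Proposition~\ref{prop:val-const} gives $V^{\cal M}_{\phi}(\sE(\textbf{A}_\alpha)) = \sE(\textbf{A}_\alpha)$, after which the semantics of equality yields validity. Your added remark that $\sE(\textbf{A}_\alpha)$ is itself a construction of type $\epsilon$, so that Proposition~\ref{prop:val-const} is applicable, is a correct and worthwhile point of care.
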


\begin{proof}
Let $\sM$ be a general model for {\churchqe} and $\phi \in
\mname{assign}(\sM)$.  Then
\begin{align} \setcounter{equation}{0}
&
V^{\cal M}_{\phi}(\synbrack{\textbf{A}_\alpha}) \\
&=
\sE(\textbf{A}_\alpha) \\
&=
V^{\cal M}_{\phi}(\sE(\textbf{A}_\alpha))
\end{align}
(2) follows from condition 5 of the definition of a general model and
(3) follows from Proposition~\ref{prop:val-const}. Hence $V^{\cal
  M}_{\phi}(\synbrack{\textbf{A}_\alpha}) = V^{\cal
  M}_{\phi}(\sE(\textbf{A}_\alpha))$ for all $\phi \in
\mname{assign}(\sM)$, and so $\synbrack{\textbf{A}_\alpha} =
\sE(\textbf{A}_\alpha)$ is valid in every general model for
   {\churchqe}.
\end{proof}

\begin{thm}[Law of Disquotation] \label{thm:sem-disquotation}
$\sembrack{\synbrack{\textbf{A}_\alpha}}_\alpha = \textbf{A}_\alpha$
  is valid in {\churchqe}.
\end{thm}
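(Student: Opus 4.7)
The plan is to unfold the validity statement directly from the semantic definitions. Let $\sM$ be an arbitrary general model for {\churchqe} and $\phi \in \mname{assign}(\sM)$. It suffices, by condition 1 of the definition of an interpretation (the meaning of $=_{\alpha \tarrow \alpha \tarrow o}$), to show that $V^{\cal M}_{\phi}(\sembrack{\synbrack{\textbf{A}_\alpha}}_\alpha) = V^{\cal M}_{\phi}(\textbf{A}_\alpha)$ as elements of $D_\alpha$. I would set up the computation so that it chains through conditions 5 and 6 of the general-model definition together with the injectivity of $\sE$.

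First, I would observe that because $\synbrack{\textbf{A}_\alpha}$ is a legal expression, the quotation formation rule forces $\textbf{A}_\alpha$ to be eval-free, so $\sE(\textbf{A}_\alpha)$ is defined and $\sE^{-1}(\sE(\textbf{A}_\alpha)) = \textbf{A}_\alpha$. Second, I would apply condition 5 of the general-model definition to get $V^{\cal M}_{\phi}(\synbrack{\textbf{A}_\alpha}) = \sE(\textbf{A}_\alpha)$. The key preliminary step is then to verify the hypothesis needed to apply condition 6, namely that
\[
V^{\cal M}_{\phi}(\mname{is-expr}_{\epsilon \tarrow o}^{\alpha} \, \synbrack{\textbf{A}_\alpha}) = \TRUE.
\]
This follows from condition 10 of the definition of an interpretation, since $V^{\cal M}_{\phi}(\synbrack{\textbf{A}_\alpha}) = \sE(\textbf{A}_\alpha)$ and $\textbf{A}_\alpha$ itself is an eval-free expression of type $\alpha$ witnessing the existential in the clause for $\mname{is-expr}^{\alpha}$.

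With the hypothesis in hand, condition 6 of the general-model definition gives
\[
V^{\cal M}_{\phi}(\sembrack{\synbrack{\textbf{A}_\alpha}}_\alpha) = V^{\cal M}_{\phi}(\sE^{-1}(V^{\cal M}_{\phi}(\synbrack{\textbf{A}_\alpha}))) = V^{\cal M}_{\phi}(\sE^{-1}(\sE(\textbf{A}_\alpha))) = V^{\cal M}_{\phi}(\textbf{A}_\alpha),
\]
as required. Since $\sM$ and $\phi$ were arbitrary, validity follows.

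The only subtle point, and the step I expect to need the most care, is invoking condition 6 legitimately: one must explicitly discharge the $\mname{is-expr}^{\alpha}$ side condition rather than silently assume it, and the proof that the side condition holds rests on the eval-freeness of $\textbf{A}_\alpha$ that is baked into the formation rule for quotation. Everything else is routine unfolding of the semantic clauses and uses injectivity of $\sE$, which is noted right after its definition.
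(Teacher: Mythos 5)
Your proposal is correct and follows essentially the same route as the paper: both unfold conditions 5 and 6 of the general-model definition, discharge the $\mname{is-expr}_{\epsilon \tarrow o}^{\alpha}$ side condition via the value of the quotation, and finish with injectivity of $\sE$. Your explicit appeal to clause 10 of the interpretation definition just makes precise the step the paper abbreviates as ``(b) follows from (a).''
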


\begin{proof}\bsp
Let $\sM$ be a general model for {\churchqe} and $\phi \in
\mname{assign}(\sM)$.  Then
\begin{align} \setcounter{equation}{0}
&
V^{\cal M}_{\phi}(\sembrack{\synbrack{\textbf{A}_\alpha}}_\alpha) \\
&=
V^{\cal M}_{\phi}(\sE^{-1}(V^{\cal M}_{\phi}(\synbrack{\textbf{A}_\alpha}))) \\
&=
V^{\cal M}_{\phi}(\sE^{-1}(\sE(\textbf{A}_\alpha))) \\
&=
V^{\cal M}_{\phi}(\textbf{A}_\alpha)
\end{align}
(a) $V^{\cal M}_{\phi}(\synbrack{\textbf{A}_\alpha}) =
\sE(\textbf{A}_\alpha)$ is by condition 5 of the definition of a
general model.  (b) $V^{\cal M}_{\phi}(\mname{is-expr}_{\epsilon
  \tarrow o}^{\alpha} \, \synbrack{\textbf{A}_\alpha}) = \TRUE$
follows from (a). (2) follows from (b) and condition 6 of the
definition of a general model; (3) follows from (a); and (4) is
immediate.  Hence $V^{\cal
  M}_{\phi}(\sembrack{\synbrack{\textbf{A}_\alpha}}_\alpha) = V^{\cal
  M}_{\phi}(\textbf{A}_\alpha)$ for all $\phi \in
\mname{assign}(\sM)$, and so
$\sembrack{\synbrack{\textbf{A}_\alpha}}_\alpha = \textbf{A}_\alpha$
is valid in every general model for {\churchqe}.\esp
\end{proof}

\begin{rem}[Evaluation Problem]\em\bsp
Theorem~\ref{thm:sem-disquotation} shows that disquotation holds
universally in {\churchqe} contrary to the Evaluation Problem (see
section~\ref{sec:introduction}).  We have avoided the Evaluation
Problem in {\churchqe} by admitting only quotations of eval-free
expressions and thus making it impossible to express the liar paradox.
If quotations of non-eval-free expressions were allowed in
{\churchqe}, the logic would be significantly more expressive, but
also much more complicated, as seen in
{\qzerouqe}~\cite{FarmerArxiv14}.\esp
\end{rem}

\subsection{Standard Models}

An interpretation $\sM = (\set{\sD_\alpha \;|\; \alpha \in \sT), I}$
is a \emph{standard model} for {\churchqe} if $\sD_{\alpha \tarrow
  \beta}$ is the set of all total functions from $\sD_\alpha$ to
$\sD_\beta$ for all $\alpha,\beta \in \sT$.

\begin{lem}
Standard models for {\churchqe} exist, and every standard model for
{\churchqe} is also a general model for {\churchqe}.
\end{lem}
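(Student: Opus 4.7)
The plan is to handle both clauses together by exhibiting one concrete standard model, then observing that the construction of its valuation function uses nothing specific to the particular choices made and so applies to every standard model.

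For existence, I would choose any nonempty set as $D_\iota$, put $D_o = \set{\TRUE,\FALSE}$, let $D_\epsilon$ be the set of constructions of \churchqe{}, and take $D_{\alpha \tarrow \beta}$ to be the full function space from $D_\alpha$ to $D_\beta$ for all $\alpha, \beta \in \sT$. Every $D_\alpha$ is nonempty by a straightforward induction on types. The interpretation function $I$ sends each logical constant to the function described by conditions 1--12 of subsection~\ref{subsec:interpretations}; because $D_{\alpha \tarrow \beta}$ contains every total function, each of the required functions really does live in the frame, so $I$ is well-defined.

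To produce the valuation $V^{\cal M}$ for this (or for any standard) interpretation $\sM$, I would proceed in two stages. First, define $V^{\cal M}_{\phi}(\textbf{A}_\alpha)$ on eval-free expressions by the obvious structural recursion given by conditions 1--5 of subsection~\ref{subsec:gen-models}; fullness of the function spaces ensures that the function built in the abstraction clause genuinely lies in $D_{\alpha \tarrow \beta}$. Second, for each $\beta \in \sT$ fix some $d_\beta \in D_\beta$ (available because every $D_\beta$ is nonempty), and extend $V^{\cal M}$ to arbitrary expressions by structural recursion. An evaluation $\sembrack{\textbf{A}_\epsilon}_\beta$ is handled by first computing $V^{\cal M}_{\phi}(\textbf{A}_\epsilon)$ inductively, then returning $V^{\cal M}_{\phi}(\textbf{B}_\beta)$ if that value has the form $\sE(\textbf{B}_\beta)$ and returning $d_\beta$ otherwise.

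The main obstacle is that condition 6 involves a ``double valuation,'' flagged by the Double Substitution Problem, so one must check that the recursion is well-founded. The key observation is that $\sE^{-1}$ returns an \emph{eval-free} expression, on which $V^{\cal M}$ has already been pinned down in the first stage, so no circularity arises. Once this is noted, conditions 1--7 of subsection~\ref{subsec:gen-models} hold by construction, which simultaneously establishes that the exhibited $\sM$ is a general model and --- since the stage-two extension uses only standardness of the function domains --- that every standard model is a general model.
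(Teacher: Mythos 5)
Your proof is correct and follows the same route as the paper, which simply appeals to the construction in the proof of Proposition~\ref{prop:gen-models-exist} (take full function spaces and define $V^{\cal M}$ by induction on expressions). Your two-stage definition of $V^{\cal M}$ --- first on eval-free expressions, then on all expressions, using that $\sE^{-1}$ always returns an eval-free expression --- is precisely the well-foundedness detail the paper leaves implicit behind the phrase ``it is also easy to define,'' so you have supplied a more careful version of the same argument.
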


\begin{proof}
By the proof of Proposition~\ref{prop:gen-models-exist}.
\end{proof}

\bigskip

A general model for {\churchqe} is a \emph{nonstandard model} for
{\churchqe} if it is not a standard model.

\section{Examples}\label{sec:examples}

We will present in this section four examples that illustrate the
utility of the quotation and evaluation facility in {\churchqe}.

\subsection{Reasoning about Syntax}

Reasoning about the syntax of expressions is normally performed in the
metalogic, but in {\churchqe} reasoning about the syntax of eval-free
expressions can be performed in the logic itself.  This is done by
reasoning about constructions (which represent the syntax trees of
eval-free expressions) using quotation and the machinery of
constructions.  Algorithms that manipulate eval-free expressions can
be formalized as functions that manipulate constructions.  The
functions can be executed using beta-reduction, rewriting, and other
kinds of simplification.

As an example, consider the constant
$\mname{make-implication}_{\epsilon \tarrow \epsilon \tarrow
  \epsilon}$ defined as
\[\LambdaApp x_\epsilon \mdot \LambdaApp y_\epsilon \mdot
(\mname{app}_{\epsilon \tarrow \epsilon \tarrow \epsilon} \,
(\mname{app}_{\epsilon \tarrow \epsilon \tarrow \epsilon} \,
\synbrack{\Implies_{o \tarrow o \tarrow o}} \, x_\epsilon) \,
y_\epsilon).\] It can be used to build constructions that
represent implications.  As another example, consider the constant
$\mname{is-app}_{\epsilon \tarrow o}$ defined as
\[\LambdaApp x_\epsilon \mdot
\ForsomeApp y_\epsilon \mdot \ForsomeApp z_\epsilon
\mdot x_\epsilon = (\mname{app}_{\epsilon \tarrow \epsilon
  \tarrow \epsilon} \, y_\epsilon \, z_\epsilon).\]
It can be used to test whether a construction represents a function
application.

Reasoning about syntax is a two-step process: First, a construction is
built using quotation and the machinery of constructions, and second,
the construction is employed using evaluation.  Continuing the example
above, \[\mname{make-implication}_{\epsilon \tarrow \epsilon \tarrow
  \epsilon} \, \synbrack{\textbf{A}_o} \, \synbrack{\textbf{B}_o}\]
builds a construction equivalent to the quotation
$\synbrack{\textbf{A}_o \Implies \textbf{B}_o}$ and
\[\sembrack{\mname{make-implication}_{\epsilon \tarrow \epsilon \tarrow
  \epsilon} \, \synbrack{\textbf{A}_o} \, \synbrack{\textbf{B}_o}}_o\]
employs the construction as the implication $\textbf{A}_o \Implies
\textbf{B}_o$.  We prove these two results within the proof system for
       {\churchqe} in subsection~\ref{subsec:revisit-syntax} (see
       Propositions~\ref{prop:make-impl-a}
       and~\ref{prop:make-impl-b}).

Using this mixture of quotation and evaluation, it is possible to
express the interplay of syntax and semantics that is needed to
formalize syntax-based algorithms that are commonly used in
mathematics~\cite{Farmer13}.  See
subsection~\ref{secsub:meaning-formulas} for an example.

\subsection{Quasiquotation}

Quasiquotation is a parameterized form of quotation in which the
parameters serve as holes in a quotation that are filled with
expressions that denote syntactic values.  It is a very powerful
syntactic device for specifying expressions and defining macros.
Quasiquotation was introduced by Willard Van Orman Quine in 1940 in
the first version of his book \emph{Mathematical
  Logic}~\cite{Quine03}.  It has been extensively employed in the Lisp
family of programming languages~\cite{Bawden99}.\footnote{In Lisp, the
  standard symbol for quasiquotation is the backquote ({\tt `})
  symbol, and thus in Lisp, quasiquotation is usually called
  \emph{backquote}.}

In {\churchqe}, constructing a large quotation from smaller quotations
can be tedious because it requires many applications of syntax
constructors.  Quasiquotation provides a convenient way to construct
big quotations from little quotations.  It can be defined
straightforwardly in {\churchqe}.

A \emph{quasi-expression} of {\churchqe} is defined inductively as
follows:

\be

  \item $\commabrack{\textbf{A}_\epsilon}$ is a quasi-expression
    called an \emph{antiquotation}.

  \item $\textbf{x}_\alpha$ is a quasi-expression.

  \item $\textbf{c}_\alpha$ is a quasi-expression.

  \item If $M$ and $N$ are quasi-expressions, then $(M \, N)$,
    $(\LambdaApp \textbf{x}_\alpha \mdot N)$, $(\LambdaApp
    \commabrack{\textbf{A}_\epsilon} \mdot N)$, and $\synbrack{M}$ are
    quasi-expressions.

\ee

\noindent
A quasi-expression is thus an eval-free expression where one or more
subexpressions have been replaced by antiquotations.  For example,
$\Neg(\textbf{A}_o \And \commabrack{\textbf{B}_\epsilon})$ is a
quasi-expression.  Obviously, every eval-free expression is a
quasi-expression.

Let $\sE'$ be the function mapping quasi-expressions to expressions
of type~$\epsilon$ that is defined inductively as follows:

\be

  \item $\sE'(\commabrack{\textbf{A}_\epsilon}) = \textbf{A}_\epsilon$.

  \item $\sE'(\textbf{x}_\alpha) = \synbrack{\textbf{x}_\alpha}$.

  \item $\sE'(\textbf{c}_\alpha) = \synbrack{\textbf{c}_\alpha}$.

  \item $\sE'(M \, N) = \mname{app}_{\epsilon \tarrow \epsilon \tarrow
    \epsilon} \, \sE'(M) \, \sE'(N)$.

  \item $\sE'(\LambdaApp M \mdot N) = \mname{abs}_{\epsilon \tarrow
    \epsilon \tarrow \epsilon} \, \sE'(M) \, \sE'(N)$.

  \item $\sE(\synbrack{M}) = \mname{quo}_{\epsilon \tarrow \epsilon} \,
    \sE'(M)$.

\ee

\noindent
Notice that $\sE'(M) = \sE(M)$ when $M$ is an eval-free expression.
Continuing our example above, $\sE'(\Neg(\textbf{A}_o \And
\commabrack{\textbf{B}_\epsilon})) = {}$
\[\mname{app}_{\epsilon \tarrow
  \epsilon \tarrow \epsilon} \, \synbrack{\Neg_{o \tarrow o}} \,
(\mname{app}_{\epsilon \tarrow \epsilon \tarrow \epsilon} \,
(\mname{app}_{\epsilon \tarrow \epsilon \tarrow \epsilon}
\synbrack{\wedge_{o \tarrow o \tarrow o}} \, \sE'(\textbf{A}_o))
\, \textbf{B}_\epsilon).\]

A \emph{quasiquotation} is an expression of the form $\synbrack{M}$
where $M$ is a quasi-expression.  Thus every quotation is a
quasiquotation.  The quasiquotation $\synbrack{M}$ serves as an
alternate notation for the expression $\sE'(M)$.  So
$\synbrack{\Neg(\textbf{A}_o \And \commabrack{\textbf{B}_\epsilon})}$
stands for the significantly more verbose expression in the previous
paragraph.  It represents the syntax tree of a negated conjunction in
which the part of the tree corresponding to the second conjunct is
replaced by the syntax tree represented by $\textbf{B}_\epsilon$.  If
$\textbf{B}_\epsilon$ is a quotation $\synbrack{\textbf{C}_o}$, then
the quasiquotation $\synbrack{\Neg(\textbf{A}_o \And
  \commabrack{\synbrack{\textbf{C}_o}})}$ is equivalent to the
quotation $\synbrack{\Neg(\textbf{A}_o \And \textbf{C}_o)}$.

The use of quasiquotation is further illustrated in the next two
subsections.

\subsection{Schemas}

A \emph{schema} is a metalogical expression containing syntactic
variables.  An instance of a schema is a logical expression obtained
by replacing the syntactic variables with appropriate logical
expressions.  In {\churchqe}, a schema can be formalized as a single
logical expression.

For example, consider the \emph{law of excluded middle (LEM)} that is
expressed as the formula schema $A \Or \Neg A$ where $A$ is a
syntactic variable ranging over all formulas.  LEM can be
formalized in {\churchqe} as the universal statement
\[\ForallApp x_\epsilon \mdot 
\mname{is-expr}_{\epsilon \tarrow o}^{o} \, x_\epsilon \Implies
\sembrack{x_\epsilon}_o \Or \Neg \sembrack{x_\epsilon}_o.\] An
instance of this formalization of LEM is any instance of the universal
statement.  Using quasiquotation, LEM could also be formalized in
{\churchqe} as
\[\ForallApp x_\epsilon \mdot 
\mname{is-expr}_{\epsilon \tarrow o}^{o} \, x_\epsilon \Implies
\sembrack{\synbrack{\commabrack{x_\epsilon} \Or \Neg
    \commabrack{x_\epsilon}}}_o.\] In
subsection~\ref{subsec:revisit-schemas}, we prove the first
formalization of LEM within the proof system for {\churchqe}, and we
show how instances of LEM can be derived by instantiating this
formalization (see Theorem~\ref{thm:lem} and
Proposition~\ref{prop:lem-instance}).

If we assume that the domain of the type $\iota$ is the natural
numbers and $\sC$ includes the usual constants of natural number
arithmetic (including a constant $S_{\iota \tarrow \iota}$
representing the successor function), then the (first-order)
\emph{induction schema for Peano arithmetic} can be formalized in
     {\churchqe} as
\begin{align*}
&
\ForallApp f_\epsilon \mdot 
((\mname{is-expr}_{\epsilon \tarrow o}^{\iota \tarrow o} \, f_\epsilon \And
\mname{is-peano}_{\epsilon \tarrow o} \, f_\epsilon) \Implies {} \\
&
\hspace*{2ex}
((\sembrack{f_\epsilon}_{\iota \tarrow o} \, 0 \And
(\ForallApp x_\iota \mdot \sembrack{f_\epsilon}_{\iota \tarrow o} \, x_\iota \Implies
\sembrack{f_\epsilon}_{\iota \tarrow o} \, 
(S_{\iota \tarrow \iota} \, x_\iota)))
\Implies 
\ForallApp x_\iota \mdot \sembrack{f_\epsilon}_{\iota \tarrow o} \, x_\iota))
\end{align*}
where $\mname{is-peano}_{\epsilon \tarrow o} \, f_\epsilon$ holds iff
$f_\epsilon$ represents the syntax tree of a predicate of
first-order Peano arithmetic.  The \emph{induction schema for
  Presburger arithmetic} is exactly the same as the induction schema
for Peano arithmetic except that the predicate
$\mname{is-peano}_{\epsilon \tarrow o}$ is replaced by an appropriate
predicate $\mname{is-presburger}_{\epsilon \tarrow o}$.  Hence it is
possible to directly express both first-order Peano arithmetic and
Presburger arithmetic as theories in {\churchqe}.

\subsection{Meaning Formulas} \label{secsub:meaning-formulas}

Many symbolic algorithms work by manipulating mathematical expressions
in a mathematically meaningful way.  A \emph{meaning formula} for such
an algorithm is a statement that captures the mathematical
relationship between the input and output expressions of the
algorithm.  For example, consider a symbolic differentiation algorithm
that takes as input an expression representing a function (say $x^2$),
repeatedly applies syntactic differentiation rules to the expression,
and then returns as output the final expression that is produced
($2x$).  The intended meaning formula of this algorithm states that
the function ($\LambdaApp x : \mathbb{R} \mdot 2x$) represented by the
output expression $2x$ is the derivative of the function ($\LambdaApp
x : \mathbb{R} \mdot x^2$) represented by the input expression $x^2$.

Meaning formulas are difficult to express in a traditional logic like
first-order logic or simple type theory since there is no way to
directly refer to the syntactic structure of the expressions in the
logic~\cite{Farmer13}.  However, meaning formulas can be easily
expressed in {\churchqe}.

\bsp 
Consider the following example on the differentiation of polynomials.
Let $T_{\mathbb R} = (L_{\cal D},\Gamma)$ be a theory of {\churchqe}
in which the type $\iota$ is specified to be the real numbers
(see~\cite[p.~276--277]{Farmer08}).  We assume that $\sD$ contains the
following constants:

\be

  \item Constants for zero and one: $0_\iota$ and $1_\iota$.

  \item The usual constants for the negation, addition, and
    multiplication of real numbers: $-_{\iota \tarrow \iota}$,
    $+_{\iota \tarrow \iota \tarrow \iota}$, and $*_{\iota \tarrow
      \iota \tarrow \iota}$.

  \item The usual constants for the negation, addition, and
    multiplication of functions: $-_{(\iota \tarrow \iota) \tarrow
      (\iota \tarrow \iota)}$, $+_{(\iota \tarrow \iota) \tarrow
      (\iota \tarrow \iota) \tarrow (\iota \tarrow \iota)}$, and
    $*_{(\iota \tarrow \iota) \tarrow (\iota \tarrow \iota) \tarrow
      (\iota \tarrow \iota)}$.

  \item $\mname{is-diff}_{(\iota \tarrow \iota) \tarrow o}$ such that
    $\mname{is-diff}_{(\iota \tarrow \iota) \tarrow o} \,
    \textbf{F}_{\iota \tarrow \iota}$ holds iff the function
    $\textbf{F}_{\iota \tarrow \iota}$ is differentiable (at every
    point in the domain of $\iota$).

  \item $\mname{deriv}_{(\iota \tarrow \iota) \tarrow (\iota \tarrow
    \iota)}$ such that $\mname{deriv}_{(\iota \tarrow \iota) \tarrow
    (\iota \tarrow \iota)} \, \textbf{F}_{\iota \tarrow \iota}$
    denotes the derivative of the function $\textbf{F}_{\iota \tarrow
      \iota}$.

  \item $\mname{is-poly}_{\epsilon \tarrow o}$ such that
    $\mname{is-poly}_{\epsilon \tarrow o} \, \textbf{A}_\epsilon$
    holds iff $\textbf{A}_\epsilon$ represents a syntax tree of an
    expression of type $\iota$ that has the form of a polynomial with
    variables from $\set{x_\iota,y_\iota}$ and constants from
    $\set{0_\iota,1_\iota}$.

  \item $\mname{poly-diff}_{\epsilon \tarrow \epsilon \tarrow
    \epsilon}$ such that $\mname{poly-diff}_{\epsilon \tarrow \epsilon
    \tarrow \epsilon} \, \textbf{A}_\epsilon \,
    \synbrack{\textbf{x}_\iota}$ denotes the result of applying the
    usual differentiation rules for polynomials to
    $\textbf{A}_\epsilon$ with respect to $\textbf{x}_\iota$.

\ee
When $n$ is a
natural number with $n \ge 2$, let $n_\iota$ be an abbreviation defined
by:

\be

  \item $2_\iota$ stands for $1_\iota + 1_\iota$.

  \item $(n+1)_\iota$ stands for $n_\iota + 1_\iota$.

\ee
To improve readability, $+_{\iota \tarrow \iota \tarrow \iota}$,
$*_{\iota \tarrow \iota \tarrow \iota}$, $+_{(\iota \tarrow \iota)
  \tarrow (\iota \tarrow \iota) \tarrow (\iota \tarrow \iota)}$, and
$*_{(\iota \tarrow \iota) \tarrow (\iota \tarrow \iota) \tarrow (\iota
  \tarrow \iota)}$ are written as infix operators.
$\mname{is-poly}_{\epsilon \tarrow o}$ is recursively defined in
$T_{\mathbb R}$ (using quasiquotation) as:
\begin{align*}
\LambdaApp u_\epsilon \mdot {}
& u_\epsilon = \synbrack{x_\iota} \Or 
  u_\epsilon = \synbrack{y_\iota} \Or 
  u_\epsilon = \synbrack{0_\iota} \Or 
  u_\epsilon = \synbrack{1_\iota} \Or {}\\
& \ForsomeApp v_\epsilon \mdot
  (\mname{is-poly}_{\epsilon \tarrow o} \, v_\epsilon \And
  u_\epsilon = \synbrack{-_{\iota \tarrow \iota} \, 
  \commabrack{v_\epsilon}}) \Or {}\\
& \ForsomeApp v_\epsilon \mdot 
  \ForsomeApp w_\epsilon \mdot 
  (\mname{is-poly}_{\epsilon \tarrow o} \, v_\epsilon \And
  \mname{is-poly}_{\epsilon \tarrow o} \, w_\epsilon \And {}\\
& \hspace*{2ex}u_\epsilon =
  \synbrack{\commabrack{v_\epsilon} +_{\iota \tarrow \iota \tarrow \iota}
  \commabrack{w_\epsilon}}) \Or {}\\
& \ForsomeApp v_\epsilon \mdot
  \ForsomeApp w_\epsilon \mdot 
  (\mname{is-poly}_{\epsilon \tarrow o} \, v_\epsilon \And
  \mname{is-poly}_{\epsilon \tarrow o} \, w_\epsilon \And {}\\
& \hspace*{2ex}u_\epsilon =
  \synbrack{\commabrack{v_\epsilon} *_{\iota \tarrow \iota \tarrow \iota}
  \commabrack{w_\epsilon}}) \Or {}\\
\end{align*}
When $n$ is a natural number, let $\textbf{x}_{\iota}^{n}$ be an
abbreviation defined by:

\be

  \item $\textbf{x}_{\iota}^{0}$ stands for $1_\iota$.

  \item $\textbf{x}_{\iota}^{m+1}$ stands for $
    \textbf{x}_{\iota}^{m} * \textbf{x}_{\iota}$.

\ee
$\mname{poly-diff}_{\epsilon \tarrow \epsilon \tarrow \epsilon}$ is
specified in $T_{\mathbb R}$ (using quasiquotation) by the following
formulas:

\be

  \item $\mname{is-var}_{\epsilon \tarrow o}^{\iota} \, x_{\epsilon}
    \Implies \mname{poly-diff}_{\epsilon \tarrow \epsilon \tarrow \epsilon} \,
    x_{\epsilon}\, x_{\epsilon} = \synbrack{1_\iota}$.

  \item $(\mname{is-var}_{\epsilon \tarrow o}^{\iota} \,
    x_{\epsilon} \And \mname{is-var}_{\epsilon \tarrow
      o}^{\iota} \, y_{\epsilon} \And x_{\epsilon}
    \not= y_{\epsilon})\Implies \mname{poly-diff}_{\epsilon
      \tarrow \epsilon \tarrow \epsilon} \, x_{\epsilon}\,
    y_{\epsilon} = \synbrack{0_\iota}$.

  \item $(\mname{is-con}_{\epsilon \tarrow o}^{\iota} \, x_{\epsilon}
    \And \mname{is-var}_{\epsilon \tarrow o}^{\iota} \,
    y_{\epsilon})\Implies \mname{poly-diff}_{\epsilon \tarrow \epsilon
      \tarrow \epsilon} \, x_{\epsilon}\, y_{\epsilon} =
    \synbrack{0_\iota}$.

  \item $(\mname{is-poly}_{\epsilon \tarrow o} \, x_{\epsilon}
    \And \mname{is-var}_{\epsilon \tarrow o}^{\iota} \, y_{\epsilon})\Implies {}$\\ 
    \hspace*{2ex}$\mname{poly-diff}_{\epsilon \tarrow \epsilon \tarrow \epsilon} \,
    \synbrack{-_{\iota \tarrow \iota} \, \commabrack{x_{\epsilon}}} \, y_{\epsilon} =
    \synbrack{-_{\iota \tarrow \iota} \, 
    \commabrack{\mname{poly-diff}_{\epsilon \tarrow  \epsilon \tarrow \epsilon} \, 
    x_{\epsilon} \, y_{\epsilon}}})$.

  \item $(\mname{is-poly}_{\epsilon \tarrow o} \, x_{\epsilon}
    \And \mname{is-poly}_{\epsilon \tarrow o} \, y_{\epsilon} 
    \And \mname{is-var}_{\epsilon \tarrow o}^{\iota} \, z_{\epsilon}) \Implies {}\\ 
    \hspace*{2ex}\mname{poly-diff}_{\epsilon \tarrow \epsilon \tarrow \epsilon} \,
    \synbrack{\commabrack{x_\epsilon} 
    +_{\iota \tarrow \iota \tarrow \iota} 
    \commabrack{y_\epsilon}} \, z_{\epsilon} = {}\\
    \hspace*{2ex}\synbrack{
    \commabrack{\mname{poly-diff}_{\epsilon \tarrow \epsilon \tarrow \epsilon} \,
    x_\epsilon \, z_\epsilon} 
    +_{\iota \tarrow \iota \tarrow \iota} 
    \commabrack{\mname{poly-diff}_{\epsilon \tarrow \epsilon \tarrow \epsilon} \,
    y_\epsilon \, z_\epsilon}}$.

  \item $(\mname{is-poly}_{\epsilon \tarrow o} \, x_{\epsilon}
    \And \mname{is-poly}_{\epsilon \tarrow o} \, y_{\epsilon} 
    \And \mname{is-var}_{\epsilon \tarrow o}^{\iota} \, z_{\epsilon}) \Implies {}\\ 
    \hspace*{2ex}\mname{poly-diff}_{\epsilon \tarrow \epsilon \tarrow \epsilon} \,
    \synbrack{\commabrack{x_\epsilon} 
    *_{\iota \tarrow \iota \tarrow \iota} 
    \commabrack{y_\epsilon}} \, z_{\epsilon} = {}\\
    \hspace*{2ex}\synbrack{
    (\commabrack{(\mname{poly-diff}_{\epsilon \tarrow \epsilon \tarrow \epsilon} \,
    x_\epsilon \, z_\epsilon)} *_{\iota \tarrow \iota \tarrow \iota} 
    \commabrack{y_\epsilon})
    +_{\iota \tarrow \iota \tarrow \iota} {}\\
    \hspace*{3.2ex}(\commabrack{x_\epsilon} 
    *_{\iota \tarrow \iota \tarrow \iota}
    \commabrack{(\mname{poly-diff}_{\epsilon \tarrow \epsilon \tarrow \epsilon} \,
    y_\epsilon \, z_\epsilon)})}$.

\ee
The value of an application of $\mname{poly-diff}_{\epsilon \tarrow
  \epsilon \tarrow \epsilon}$ can be computed using these formulas as
conditional rewrite rules.
\esp

The meaning formula for $\mname{poly-diff}_{\epsilon \tarrow \epsilon
  \tarrow \epsilon}$ can then be given as
\begin{align*}
&
\ForallApp u_\epsilon \mdot \ForallApp v_\epsilon \mdot
((\mname{is-poly}_{\epsilon \tarrow o} \, u_\epsilon) 
\And \mname{is-var}_{\epsilon \tarrow o}^{\iota} \, v_\epsilon \Implies {} \\
&
\hspace*{2ex}
\mname{deriv}_{(\iota \tarrow \iota) \tarrow (\iota \tarrow \iota)}
(\sembrack{\mname{abs}_{\epsilon \tarrow  \epsilon \tarrow \epsilon} \,
v_\epsilon \, u_\epsilon}_{\iota \tarrow \iota}) = {} \\
&
\hspace*{2ex}
\sembrack{\mname{abs}_{\epsilon \tarrow  \epsilon \tarrow \epsilon} \,
v_\epsilon \, 
(\mname{poly-diff}_{\epsilon \tarrow \epsilon \tarrow \epsilon} \, 
u_\epsilon \, v_\epsilon)}_{\iota \tarrow \iota}).\footnotemark
\end{align*}
\bsp\noindent
Unfortunately, we are not able to prove this formula in the proof
system for {\churchqe} presented later in the paper.  So instead we
will define the \emph{meaning formula for $\mname{poly-diff}_{\epsilon
    \tarrow \epsilon \tarrow \epsilon}$} as the formula schema
\begin{align*}
&
\ForallApp u_\epsilon \mdot
(\mname{is-poly}_{\epsilon \tarrow o} \, u_\epsilon \Implies {} \\
&
\hspace*{2ex}
\mname{deriv}_{(\iota \tarrow \iota) \tarrow (\iota \tarrow \iota)}
(\LambdaApp \textbf{x}_\iota \mdot \sembrack{u_\epsilon}_\iota) = 
\LambdaApp \textbf{x}_\iota \mdot
\sembrack{\mname{poly-diff}_{\epsilon \tarrow \epsilon \tarrow \epsilon} \, 
u_\epsilon \, \synbrack{\textbf{x}_\iota}}_\iota)
\end{align*}
where $\textbf{x}_\iota$ is either $x_\iota$ or $y_\iota$.  As
expected, the equation \[\mname{deriv}_{(\iota \tarrow \iota) \tarrow
  (\iota \tarrow \iota)} (\LambdaApp x_\iota \mdot x_{\iota}^{2}) =
\LambdaApp x_\iota \mdot 2 * x_{\iota}\] follows from this meaning
formula and the definitions of $\mname{is-poly}_{\epsilon \tarrow o}$
and $\mname{poly-diff}_{\epsilon \tarrow \epsilon \tarrow \epsilon}$.
\esp

In subsection~\ref{subsec:revisit-meaning}, we prove the meaning
formula for $\mname{poly-diff}_{\epsilon \tarrow \epsilon \tarrow
  \epsilon}$ and the application of it given above within the proof
system for {\churchqe} (see Theorem~\ref{thm:meaning-form} and
Proposition~\ref{prop:mf-app}).

\footnotetext{We restrict this example to polynomials since polynomial
  functions and their derivatives are always total.  Thus issues of
  undefinedness do not arise in the formulation of the meaning formula
  for $\mname{poly-diff}_{\epsilon \tarrow \epsilon \tarrow
    \epsilon}$.}

\section{Proof System}\label{sec:proof-system}

We present in this section the proof system for {\churchqe} which is
an extension of Andrews' elegant proof system for {\qzero} given
in~\cite[p.~213]{Andrews02}.

\subsection{Requirements}\label{subsec:requirements}

At first glance, it would appear that a proof system for {\churchqe}
could be straightforwardly developed by extending Andrews' proof
system for {\qzero}.  We can define $\mname{is-var}_{\epsilon \tarrow
  o}$, $\mname{is-var}_{\epsilon \tarrow o}^{\alpha}$,
$\mname{is-con}_{\epsilon \tarrow o}$, $\mname{is-con}_{\epsilon
  \tarrow o}^{\alpha}$ by axiom schemas.  We can also define
$\mname{is-expr}_{\epsilon \tarrow o}$, $\mname{is-expr}_{\epsilon
  \tarrow o}^{\alpha}$, $\sqsubset_{\epsilon \tarrow \epsilon \tarrow
  o}$, and $\mname{is-free-in}_{\epsilon \tarrow \epsilon \tarrow o}$,
by axiom schemas using recursion on the construction of expressions.
We can specify that the type $\epsilon$ of constructions is an
inductive type using a set of axioms that say (1)~the constructions
are distinct from each other and (2)~induction holds for
constructions.  We can specify quotation using the Law of Quotation
$\synbrack{\textbf{A}_\alpha} = \sE(\textbf{A}_\alpha)$
(Theorem~\ref{thm:sem-quotation}).  And we can specify evaluation
using the Law of Disquotation
$\sembrack{\synbrack{\textbf{A}_\alpha}}_\alpha = \textbf{A}_\alpha$
(Theorem~\ref{thm:sem-disquotation}).

Andrews' proof system with these added axioms would enable simple
theorems about closed expressions involving quotation and evaluation
to be proved, but the proof system would not be able to substitute
expressions for free variables occurring within evaluations.  Hence
axiom schemas and meaning formulas could be expressed in {\churchqe},
but they would be useless because they could not be instantiated using
the proof system.  Clearly, a useful proof system for {\churchqe}
requires some form of substitution that is applicable to evaluations.
The proof system we have described above would also not be able to
reduce evaluations that contain free variables.  Hence it would not be
possible to prove schemas and meaning formulas.  A useful proof system
for {\churchqe} needs a means to reduce evaluations applied to
expressions that are not just quotations.

To be useful, a proof system for {\churchqe} needs to satisfy the
following requirements:

\be

  \item \textbf{R1}. The proof system is sound, i.e., it only proves
    valid formulas.

  \item \textbf{R2}. The proof system is complete with respect to the
    (general models) semantics for {\churchqe} for eval-free formulas.

  \item \textbf{R3}. The proof system can be used to reason about
    constructions that are denoted by quotations and other expressions of
    type $\epsilon$.

  \item \textbf{R4}. The proof system can instantiate free variables
    occurring within evaluations as found in formulas that
    represent schemas and meaning formulas.

  \item \textbf{R5}. The proof system can prove formulas, such as
    those that represent schemas and meaning formulas, in which free
    variables occur within evaluations.

\ee
In this section we present a proof system that is intended to satisfy
requirements R1--5.  In subsequent sections we will show that this
proof system actually does satisfy these requirements.

\subsection{Challenges}

A proof system for a logic with quotation and evaluation must solve
the three design problems presented in section~\ref{sec:introduction}.
The Evaluation Problem is not an issue in {\churchqe} since quotations
do not contain evaluations.  However, the Variable and Double
Substitution Problems are serious issues for {\churchqe}.  Due to the
Variable Problem, concepts concerning free and bound variables cannot
be purely syntactic, and the due to the Double Substitution Problem, a
substitution must be performed twice in some cases.

The notion of a variable being free in an expression and the notion of
a substitution of an expression for the free occurrences of a variable
in another expression are two crucial concepts affected by the
Variable Problem.  To express the axioms and rules of inference of a
proof system for {\churchqe}, we need a way to say ``a variable
$\textbf{x}_\alpha$ is free in an expression $\textbf{B}_\beta$''.  We
defined this concept in subsection~\ref{subsec:expressions} in the
usual way when $\textbf{B}_\beta$ is eval-free as a purely syntactic
assertion expressed in the metalogic of {\churchqe}.  However, when
$\textbf{B}_\beta$ is not eval-free, this approach does not work
because whether $\textbf{x}_\alpha$ occurs in $\textbf{B}_\beta$ may
depend on the semantics of $\textbf{B}_\beta$.

An approach taking advantage of {\churchqe}'s facility for reasoning
about the syntax of expressions is to introduce a logical constant
$\mname{is-free-in}_{\epsilon \tarrow \epsilon \tarrow o}$ such that
\[\mname{is-free-in}_{\epsilon \tarrow \epsilon
  \tarrow o} \, \synbrack{\textbf{x}_\alpha} \,
\synbrack{\textbf{B}_\beta}\] holds iff ``$\textbf{x}_\alpha$ is free
in $\textbf{B}_\beta$''.  This works fine when $\textbf{B}_\beta$ is
eval-free, but it does not work at all when $\textbf{B}_\beta$ is not
eval-free since once again quotations cannot contain evaluations.
(This approach does work in {\qzerouqe}~\cite{FarmerArxiv14} in which
quotations in {\churchqe} may contain evaluations.)  Nevertheless, the
constant $\mname{is-free-in}_{\epsilon \tarrow \epsilon \tarrow o}$ is
included in {\churchqe} so that semantic statements of the
form \[\mname{is-free-in}_{\epsilon \tarrow \epsilon \tarrow o} \,
\synbrack{\textbf{x}_\alpha} \, \textbf{A}_\epsilon,\] can be
expressed where $\textbf{A}_\epsilon$ is not a quotation and may even
contain free variables.

So what do we do when $\textbf{B}_\beta$ is not eval-free?  Our
approach is to use the more restrictive semantic notion
``$\textbf{x}_\alpha$ is effective in $\textbf{B}_\beta$'' in place of
the syntactic notion ``$\textbf{x}_\alpha$ is free in
$\textbf{B}_\beta$'' when $\textbf{B}_\beta$ is possibly not
eval-free.  ``$\textbf{x}_\alpha$ is effective in $\textbf{B}_\beta$''
means that the value of $\textbf{B}_\beta$ depends on the value of
$\textbf{x}_\alpha$.  Clearly, if $\textbf{B}_\beta$ is eval-free,
``$\textbf{x}_\alpha$ is effective in $\textbf{B}_\beta$'' implies
``$\textbf{x}_\alpha$ is free in $\textbf{B}_\beta$''.  However,
``$\textbf{x}_\alpha$ is effective in ${B}_\beta$'' is a refinement of
``$\textbf{x}_\alpha$ is free in $\textbf{B}_\beta$'' on eval-free
expressions since $\textbf{x}_\alpha$ is free in $\textbf{x}_\alpha =
\textbf{x}_\alpha$, but $\textbf{x}_\alpha$ is not effective in
$\textbf{x}_\alpha = \textbf{x}_\alpha$.

We express ``$\textbf{x}_\alpha$ is effective in $\textbf{B}_\beta$''
in {\churchqe} by the abbreviation
\[\mname{IS-EFFECTIVE-IN}(\textbf{x}_\alpha,\textbf{B}_\beta) 
{\dblsp} \mbox{stands for} {\dblsp} \ForsomeApp \textbf{y}_\alpha \mdot
((\LambdaApp \textbf{x}_\alpha \mdot \textbf{B}_\beta) \,
\textbf{y}_\alpha \not= \textbf{B}_\beta)\] where $\textbf{y}_\alpha$ is
any variable of type $\alpha$ that differs from $\textbf{x}_\alpha$.
We will prove later (Lemma~\ref{lem:not-effective-sem}) that
$\mname{IS-EFFECTIVE-IN}(\textbf{x}_\alpha,\textbf{B}_\beta)$ holds
precisely when the value of $\textbf{B}_\beta$ depends on the value of
$\textbf{x}_\alpha$.  The following are examples of valid formulas in
{\churchqe} involving $\mname{IS-EFFECTIVE-IN}$:

\be

  \item $(\ForsomeApp \textbf{u}_\alpha \mdot \ForsomeApp
    \textbf{v}_\alpha \mdot \textbf{u}_\alpha \not= \textbf{v}_\alpha)
    \Implies \mname{IS-EFFECTIVE-IN}(\textbf{x}_\alpha,
    \textbf{x}_\alpha)$.

  \item $\mname{IS-EFFECTIVE-IN}(\textbf{x}_\epsilon,
    \sembrack{\textbf{x}_\epsilon}_\alpha)$.

  \item $((\ForsomeApp \textbf{u}_\alpha \mdot \ForsomeApp
    \textbf{v}_\alpha \mdot \textbf{u}_\alpha \not= \textbf{v}_\alpha)
    \And \textbf{y}_\epsilon = \synbrack{\textbf{x}_\alpha}) \Implies
    \mname{IS-EFFECTIVE-IN}(\textbf{x}_\alpha,
    \sembrack{\textbf{y}_\epsilon}_\alpha)$.

  \item $\Neg \mname{IS-EFFECTIVE-IN}(\textbf{x}_\alpha, \LambdaApp
    \textbf{x}_\alpha \mdot \textbf{B}_\beta)$.

  \item $\Neg
    \mname{IS-EFFECTIVE-IN}(\textbf{x}_\alpha,\textbf{B}_\beta =
    \textbf{B}_\beta)$.

  \item $\Neg
    \mname{IS-EFFECTIVE-IN}(\textbf{x}_\alpha,\textbf{B}_\beta \Or
    \Neg \textbf{B}_\beta)$.

\ee

As a consequence of the Variable Problem, substitution involving
evaluations cannot be purely syntactic as in a traditional logic.  It
must be a semantics-dependent operation in which side conditions, like
whether a variable is free in an expression, are proved within the
proof system.  Since {\churchqe} supports reasoning about syntax, an
obvious way forward is to add to $\sC$ a logical constant
$\mname{sub}_{\epsilon \tarrow \epsilon \tarrow \epsilon \tarrow
  \epsilon}$ such that \[\mname{sub}_{\epsilon \tarrow \epsilon
  \tarrow \epsilon \tarrow \epsilon} \, \synbrack{\textbf{A}_\alpha}
\, \synbrack{\textbf{x}_\alpha} \, \synbrack{\textbf{B}_\beta} =
\synbrack{\textbf{C}_\beta}\] holds iff ``$\textbf{C}_\beta$ is the
result of substituting $\textbf{A}_\alpha$ for each free occurrence of
$\textbf{x}_\alpha$ in $\textbf{B}_\beta$ without any variable
captures''.  $\mname{sub}_{\epsilon \tarrow \epsilon \tarrow \epsilon
  \tarrow \epsilon}$ thus plays the role of an explicit substitution
operator~\cite{AbadiEtAl91}.  This approach, however, does not work in
{\churchqe} since $\textbf{B}_\beta$ may contain evaluations, but
quotations in {\churchqe} may not contain evaluations.  (Although the
approach does work in {\qzerouqe}~\cite{FarmerArxiv14} in which
quotations in {\churchqe} may contain evaluations, it is extremely
complicated due to the Evaluation Problem.)

A more promising approach is to add axiom schemas to the five
beta-reduction axiom schemas used by Andrews' in his proof system for
{\qzero}~\cite[p.~213]{Andrews02} that specify beta-reduction of
applications of the form $(\LambdaApp \textbf{x}_\alpha \mdot
\synbrack{\textbf{B}_\beta}) \, \textbf{A}_\alpha$ and
$(\LambdaApp \textbf{x}_\alpha \mdot
\sembrack{\textbf{B}_\epsilon}_\beta) \, \textbf{A}_\alpha$.

But how do we solve the Double Substitution Problem in applications of
the second form?  There seems to be no easy way of emulating a double
substitution with beta-reduction, so the best approach appears to be
to consider only cases that do not require a second substitution, as
formalized by the axiom schema B11.2 given below which uses the
constant $\mname{is-free-in}_{\epsilon \tarrow \epsilon \tarrow o}$.
We will see that being able to beta-reduce lambda applications that
correspond to double substitutions is not a necessity and that a proof
system without this capability can prove many useful theorems
involving evaluations.

\bsp In summary, we address the problem of stating
``$\textbf{x}_\alpha$ is free in $\textbf{B}_\beta$'' by (1)
expressing it either syntactically in the usual way or semantically as
$\mname{is-free-in}_{\epsilon \tarrow \epsilon \tarrow o} \,
\synbrack{\textbf{x}_\alpha} \, \synbrack{\textbf{B}_\beta}$ when
$\textbf{B}_\beta$ is eval-free and (2) expressing it more tightly as
$\mname{IS-EFFECTIVE-IN}(\textbf{x}_\alpha,\textbf{B}_\beta)$ when
$\textbf{B}_\beta$ is not restricted.  And we address the problem of
defining substitution by augmenting the beta-reduction axioms of for
{\qzero} to cover quotations and evaluations.
\esp

\subsection{Axioms and Rules of Inference}

The proof system for {\churchqe} consists of the axioms for {\qzero}
(the A axioms), the single rule of inference for {\qzero} named R, and
a set of additional axioms (the B axioms).

\subsubsection{Andrews' Proof System}

Andrews' proof system for {\qzero} is complete with respect to the
general models semantics for {\qzero}~\cite[5502 on
  p.~213]{Andrews02}.  Its axioms and single rule of inference (Rule
R) are the core of the proof system for {\churchqe}.  Axioms A1--3 are
used without modification.  Andrews' five beta-reduction axiom schemas
\cite[p.~213]{Andrews02} have been replaced with a slightly modified
set of axiom schemas.  In particular, the syntactic variables in axiom
schemas A4.1--6 range over all expressions of {\churchqe} (but
restrictions are placed on the syntactic variables in A4.5).  Rule R
is modified slightly to work correctly with quotations and
evaluations.

\bi

  \item[] \textbf{A1 (Truth Values)}

  \be

    \item[] $(g_{o \tarrow o} \, T_o \And g_{o \tarrow o} \, F_o) =
    \ForallApp x_o \mdot g_{o \tarrow o} \, x_o$.

  \ee

  \item[] \textbf{A2 (Leibniz' Law)}

  \be

    \item[] $x_\alpha = y_\alpha \Implies h_{\alpha \tarrow o} \,
      x_\alpha = h_{\alpha \tarrow o} \, y_\alpha$.

  \ee

  \item[] \textbf{A3 (Extensionality)}

  \be

    \item[] $(f_{\alpha \tarrow \beta} = g_{\alpha \tarrow \beta}) =
      \ForallApp x_\alpha \mdot (f_{\alpha \tarrow \beta} \, x_\alpha =
      g_{\alpha \tarrow \beta} \, x_\alpha)$.

  \ee

  \item[] \textbf{A4 (Beta-Reduction)}

  \be

    \item $(\LambdaApp \textbf{x}_\alpha \mdot \textbf{y}_\beta) \,
      \textbf{A}_\alpha = \textbf{y}_\beta$ {\sglsp} where
      $\textbf{x}_\alpha$ and $\textbf{y}_\beta$ are distinct.

    \item $(\LambdaApp \textbf{x}_\alpha \mdot \textbf{x}_\alpha) \,
      \textbf{A}_\alpha = \textbf{A}_\alpha$.

    \item $(\LambdaApp \textbf{x}_\alpha \mdot \textbf{c}_\beta) \,
      \textbf{A}_\alpha = \textbf{c}_\beta$.

    \item $(\LambdaApp \textbf{x}_\alpha \mdot (\textbf{B}_{\beta
      \tarrow \gamma} \, \textbf{C}_\beta)) \, \textbf{A}_\alpha =
      ((\LambdaApp \textbf{x}_\alpha \mdot \textbf{B}_{\beta \tarrow
      \gamma}) \, \textbf{A}_\alpha) \, ((\LambdaApp \textbf{x}_\alpha
      \mdot \textbf{C}_\beta) \, \textbf{A}_\alpha)$.

    \item $(\LambdaApp \textbf{x}_\alpha \mdot \LambdaApp
      \textbf{y}_\beta \mdot \textbf{B}_\gamma) \, \textbf{A}_\alpha =
      \LambdaApp \textbf{y}_\beta \mdot ((\LambdaApp \textbf{x}_\alpha
      \mdot \textbf{B}_\gamma) \, \textbf{A}_\alpha)$ {\sglsp} where
      $\textbf{x}_\alpha$ and $\textbf{y}_\beta$ are distinct and
      either (1) $\textbf{A}_\alpha$ is eval-free and
      $\textbf{y}_\beta$ is not free in $\textbf{A}_\alpha$ or (2)
      $\textbf{B}_\gamma$ is eval-free and $\textbf{x}_\alpha$ is not
      free in $\textbf{B}_\gamma$.

    \item $(\LambdaApp \textbf{x}_\alpha \mdot \LambdaApp
      \textbf{x}_\alpha \mdot \textbf{B}_\beta) \, \textbf{A}_\alpha =
      \LambdaApp \textbf{x}_\alpha \mdot \textbf{B}_\beta$.

  \ee

  \item[] \textbf{Rule R (Equality Substitution)} From
    $\textbf{A}_\alpha = \textbf{B}_\alpha$ and $\textbf{C}_o$ infer
    the result of replacing one occurrence of $\textbf{A}_\alpha$ in
    $\textbf{C}_o$ by an occurrence of $\textbf{B}_\alpha$, provided
    that the occurrence of $\textbf{A}_\alpha$ in $\textbf{C}_o$ is
    not within a quotation, not the first argument of a function
    abstraction, and not the second argument of an evaluation.

\ei

\subsubsection{Axioms for Constructions}

The axioms in this part (1) define the logical constants
$\mname{is-var}_{\epsilon \tarrow o}$, $\mname{is-var}_{\epsilon
  \tarrow o}^{\alpha}$, $\mname{is-con}_{\epsilon \tarrow o}$,
$\mname{is-con}_{\epsilon \tarrow o}^{\alpha}$,
$\mname{is-expr}_{\epsilon \tarrow o}$, and $\mname{is-expr}_{\epsilon
  \tarrow o}^{\alpha}$, (2) specify the type of constructions as an
inductive type, (3) define $\mname{is-free-in}_{\epsilon \tarrow
  \epsilon \tarrow o}$ so that
\[\mname{is-free-in}_{\epsilon \tarrow \epsilon \tarrow o} \,
\synbrack{\textbf{x}_\alpha} \, \synbrack{\textbf{B}_\beta}\] is
provable iff $\textbf{x}_\alpha$ is free in $\textbf{B}_\beta$
(Lemma~\ref{lem:syn-is-free-in}).  B1.2--4, B2.2--4, B3.1-4, B3.6, and
B4.6--7 are axiom schemas; the rest are individual axioms.

By virtue of the axioms given in this part, the proof system for
{\churchqe} satisfies Requirement R3.

\bi

  \item[] \textbf{B1 (Definitions of $\mname{is-var}_{\epsilon \tarrow
      o}$ and $\mname{is-var}_{\epsilon \tarrow o}^{\alpha}$)}

  \be

    \item $\mname{is-var}_{\epsilon \tarrow o}^{\alpha} \, x_\epsilon
      \Implies \mname{is-var}_{\epsilon \tarrow o} \, x_\epsilon$.

    \item $\mname{is-var}_{\epsilon \tarrow o}^{\alpha} \,
      \synbrack{\textbf{x}_\alpha}.$

    \item $\Neg(\mname{is-var}_{\epsilon \tarrow o}^{\alpha} \,
      \synbrack{\textbf{x}_\beta})$ {\sglsp} where $\alpha \not=
      \beta$.

    \item $\NegAlt (\mname{is-var}_{\epsilon \tarrow o} \,
      \textbf{A}_\epsilon)$ {\sglsp} where $\textbf{A}_\epsilon$ is a
      construction not of the form $\synbrack{\textbf{x}_\alpha}$.

  \ee

  \item[] \textbf{B2 (Definitions of $\mname{is-con}_{\epsilon \tarrow
      o}$ and $\mname{is-con}_{\epsilon \tarrow o}^{\alpha}$)}

  \be

    \item $\mname{is-con}_{\epsilon \tarrow o}^{\alpha} \, x_\epsilon
      \Implies \mname{is-con}_{\epsilon \tarrow o} \, x_\epsilon$.

    \item $\mname{is-con}_{\epsilon \tarrow o}^{\alpha} \,
      \synbrack{\textbf{c}_\alpha}.$

    \item $\Neg(\mname{is-con}_{\epsilon \tarrow o}^{\alpha} \,
      \synbrack{\textbf{c}_\beta})$ {\sglsp} where $\alpha \not=
      \beta$.

    \item $\Neg (\mname{is-con}_{\epsilon \tarrow o} \,
      \textbf{A}_\epsilon)$ {\sglsp} where $\textbf{A}_\epsilon$ is a
      construction not of the form $\synbrack{\textbf{c}_\alpha}$.

  \ee

  \item[] \textbf{B3 (Definitions of $\mname{is-expr}_{\epsilon \tarrow
      o}$ and $\mname{is-expr}_{\epsilon \tarrow o}^{\alpha}$)}

  \be

    \item $\mname{is-expr}_{\epsilon \tarrow o}^{\alpha} \, x_\epsilon
      \Implies \mname{is-expr}_{\epsilon \tarrow o} \, x_\epsilon$.

    \item $(\mname{is-var}_{\epsilon \tarrow o}^{\alpha} \, x_\epsilon \Or
      \mname{is-con}_{\epsilon \tarrow o}^{\alpha} \, x_\epsilon)
      \Implies \mname{is-expr}_{\epsilon \tarrow o}^{\alpha} x_\epsilon $.

    \item $(\mname{is-expr}_{\epsilon \tarrow o}^{\alpha \tarrow \beta} \, x_\epsilon \And 
      \mname{is-expr}_{\epsilon \tarrow o}^{\alpha} \, y_\epsilon) \Implies
      \mname{is-expr}_{\epsilon \tarrow o}^{\beta} \, 
      (\mname{app}_{\epsilon \tarrow \epsilon \tarrow \epsilon} \, x_\epsilon \, y_\epsilon)$.

    \item $(\mname{is-var}_{\epsilon \tarrow o}^{\alpha} \, x_\epsilon \, \And \,
      \mname{is-expr}_{\epsilon \tarrow o}^{\beta} \, y_\epsilon) \Implies
      \mname{is-expr}_{\epsilon \tarrow o}^{\alpha \tarrow \beta} \, 
      (\mname{abs}_{\epsilon \tarrow \epsilon \tarrow \epsilon} \, x_\epsilon \, y_\epsilon)$.

    \item $\mname{is-expr}_{\epsilon \tarrow o} \, x_\epsilon \Implies
      \mname{is-expr}_{\epsilon \tarrow o}^{\epsilon} \,
      (\mname{quo}_{\epsilon \tarrow \epsilon} \, x_\epsilon)$.

    \item $(\mname{is-expr}_{\epsilon \tarrow o}^{\alpha} \, x_\epsilon
      \And \mname{is-expr}_{\epsilon \tarrow o}^{\beta} \, y_\epsilon)
      \Implies \Neg(\mname{is-expr}_{\epsilon \tarrow o} \,
      (\mname{app}_{\epsilon \tarrow \epsilon \tarrow \epsilon} \,
      x_\epsilon \, y_\epsilon))$\\ where $\alpha$ is not of the form
      $\beta \tarrow \gamma$.

    \item $(\Neg(\mname{is-expr}_{\epsilon \tarrow o} \, x_\epsilon)
      \Or \Neg(\mname{is-expr}_{\epsilon \tarrow o} \, y_\epsilon))
      \Implies \Neg(\mname{is-expr}_{\epsilon \tarrow o} \,
      (\mname{app}_{\epsilon \tarrow \epsilon \tarrow \epsilon} \,
      x_\epsilon \, y_\epsilon))$.

    \item $(\Neg(\mname{is-var}_{\epsilon \tarrow o} \, x_\epsilon)
      \Or \Neg(\mname{is-expr}_{\epsilon \tarrow o} \, y_\epsilon))
      \Implies \Neg(\mname{is-expr}_{\epsilon \tarrow o} \,
      (\mname{abs}_{\epsilon \tarrow \epsilon \tarrow \epsilon} \,
      x_\epsilon \, y_\epsilon))$.

    \item $\Neg(\mname{is-expr}_{\epsilon \tarrow o} \, x_\epsilon)
      \Implies \Neg(\mname{is-expr}_{\epsilon \tarrow o} \,
      (\mname{quo}_{\epsilon \tarrow \epsilon} \, x_\epsilon))$.

  \ee 

  \item[] \textbf{B4 (Constructions are Distinct)}

  \be

    \item $\mname{is-var}_{\epsilon \tarrow o} \, x_\epsilon \Implies \\[.5ex] 
      \hspace*{2ex} (\Neg(\mname{is-con}_{\epsilon \tarrow o} \,
      x_\epsilon) \And x_\epsilon \not= \mname{app}_{\epsilon \tarrow \epsilon
        \tarrow \epsilon} \, y_\epsilon \, z_\epsilon \And\\[.5ex] 
      \hspace*{3ex} x_\epsilon \not= \mname{abs}_{\epsilon \tarrow
        \epsilon \tarrow \epsilon} \, y_\epsilon \, z_\epsilon \And
      x_\epsilon \not= \mname{quo}_{\epsilon \tarrow \epsilon} \,
      y_\epsilon)$.

    \item $\mname{is-con}_{\epsilon \tarrow o} \, x_\epsilon \Implies\\[.5ex] 
      \hspace*{2ex} (\Neg(\mname{is-var}_{\epsilon \tarrow o} \,
      x_\epsilon) \And x_\epsilon \not= \mname{app}_{\epsilon \tarrow \epsilon
        \tarrow \epsilon} \, y_\epsilon \, z_\epsilon \And\\[.5ex] 
      \hspace*{3ex} x_\epsilon \not= \mname{abs}_{\epsilon \tarrow
        \epsilon \tarrow \epsilon} \, y_\epsilon \, z_\epsilon \And
      x_\epsilon \not= \mname{quo}_{\epsilon \tarrow \epsilon} \,
      y_\epsilon)$.

    \item $x_\epsilon = \mname{app}_{\epsilon \tarrow \epsilon
        \tarrow \epsilon} \, y_\epsilon \, z_\epsilon \Implies\\[.5ex] 
      \hspace*{2ex} (\Neg(\mname{is-var}_{\epsilon \tarrow o} \,
      x_\epsilon) \And \Neg(\mname{is-con}_{\epsilon \tarrow o} \,
      x_\epsilon) \And\\[.5ex] 
      \hspace*{3ex} x_\epsilon \not= \mname{abs}_{\epsilon \tarrow
        \epsilon \tarrow \epsilon} \, u_\epsilon \, v_\epsilon \And
      x_\epsilon \not= \mname{quo}_{\epsilon \tarrow \epsilon} \,
      u_\epsilon)$.

    \item $x_\epsilon = \mname{abs}_{\epsilon \tarrow \epsilon
        \tarrow \epsilon} \, y_\epsilon \, z_\epsilon \Implies\\[.5ex] 
      \hspace*{2ex} (\Neg(\mname{is-var}_{\epsilon \tarrow o} \,
      x_\epsilon) \And \Neg(\mname{is-con}_{\epsilon \tarrow o} \,
      x_\epsilon) \And\\[.5ex] 
      \hspace*{3ex} x_\epsilon \not= \mname{app}_{\epsilon \tarrow
        \epsilon \tarrow \epsilon} \, u_\epsilon \, v_\epsilon \And
      x_\epsilon \not= \mname{quo}_{\epsilon \tarrow \epsilon} \,
      u_\epsilon)$.

    \item $x_\epsilon = \mname{quo}_{\epsilon \tarrow \epsilon} \,
      y_\epsilon \Implies\\[.5ex] 
      \hspace*{2ex} (\Neg(\mname{is-var}_{\epsilon \tarrow o} \,
      x_\epsilon) \And \Neg(\mname{is-con}_{\epsilon \tarrow o} \,
      x_\epsilon) \And\\[.5ex] 
      \hspace*{3ex} x_\epsilon \not= \mname{app}_{\epsilon \tarrow
        \epsilon \tarrow \epsilon} \, u_\epsilon \, v_\epsilon \And
      x_\epsilon \not= \mname{abs}_{\epsilon \tarrow \epsilon
        \tarrow \epsilon} \, u_\epsilon \, v_\epsilon)$.

    \item $\synbrack{\textbf{x}_\alpha} \not=
      \synbrack{\textbf{y}_\beta}$ {\sglsp} where $\textbf{x}_\alpha$
      and $\textbf{y}_\beta$ are distinct.

    \item $\synbrack{\textbf{c}_\alpha} \not=
      \synbrack{\textbf{d}_\beta}$ {\sglsp} where $\textbf{c}_\alpha$
      and $\textbf{d}_\beta$ are distinct.

    \item $\mname{app}_{\epsilon \tarrow \epsilon \tarrow \epsilon} \,
      x_\epsilon \, y_\epsilon = \mname{app}_{\epsilon \tarrow
        \epsilon \tarrow \epsilon} \, u_\epsilon \, v_\epsilon \Implies
      (x_\epsilon = u_\epsilon \And x_\epsilon = v_\epsilon)$.

    \item $\mname{abs}_{\epsilon \tarrow \epsilon \tarrow \epsilon} \,
      x_\epsilon \, y_\epsilon = \mname{abs}_{\epsilon \tarrow
        \epsilon \tarrow \epsilon} \, u_\epsilon \, v_\epsilon \Implies
      (x_\epsilon = u_\epsilon \And y_\epsilon = v_\epsilon)$.

    \item $\mname{quo}_{\epsilon \tarrow \epsilon} \, x_\epsilon =
      \mname{quo}_{\epsilon \tarrow \epsilon} \, u_\epsilon \Implies
      x_\epsilon = u_\epsilon$.

  \ee

  \item[] \textbf{B5 (Definition of $\sqsubset_{\epsilon \tarrow
      \epsilon \tarrow o}$)}

  \be

    \item $x_\epsilon \sqsubset_{\epsilon \tarrow \epsilon \tarrow o}
      \mname{app}_{\epsilon \tarrow \epsilon \tarrow \epsilon} \,
      x_\epsilon \, y_\epsilon $.

    \item $x_\epsilon \sqsubset_{\epsilon \tarrow \epsilon \tarrow o}
      \mname{app}_{\epsilon \tarrow \epsilon \tarrow \epsilon} \,
      y_\epsilon \, x_\epsilon $.

    \item $x_\epsilon \sqsubset_{\epsilon \tarrow \epsilon \tarrow o}
      \mname{abs}_{\epsilon \tarrow \epsilon \tarrow \epsilon} \,
      x_\epsilon \, y_\epsilon $.

    \item $x_\epsilon \sqsubset_{\epsilon \tarrow \epsilon \tarrow o}
      \mname{abs}_{\epsilon \tarrow \epsilon \tarrow \epsilon} \,
      y_\epsilon \, x_\epsilon $.

    \item $x_\epsilon \sqsubset_{\epsilon \tarrow \epsilon \tarrow o}
      \mname{quo}_{\epsilon \tarrow \epsilon} \, x_\epsilon$.

    \item $(x_\epsilon \sqsubset_{\epsilon \tarrow \epsilon \tarrow o}
      y_\epsilon \And y_\epsilon \sqsubset_{\epsilon \tarrow \epsilon
        \tarrow o} z_\epsilon) \Implies x_\epsilon \sqsubset_{\epsilon
        \tarrow \epsilon \tarrow o} z_\epsilon$.

  \ee 

  \item[] \textbf{B6 (Induction Principle for Constructions)}

  \be

    \item[] $\ForallApp u_\epsilon \mdot (\ForallApp v_\epsilon \mdot
      (v_\epsilon \sqsubset_{\epsilon \tarrow \epsilon \tarrow o}
      u_\epsilon \Implies p_{\epsilon \tarrow o} \, v_\epsilon)
      \Implies p_{\epsilon \tarrow o} \, u_\epsilon) \Implies
      \ForallApp u_\epsilon \mdot p_{\epsilon \tarrow o} \,
      u_\epsilon$.

  \ee

  \item[] \textbf{B7 (Definition of $\mname{is-free-in}_{\epsilon
      \tarrow \epsilon \tarrow o}$)}

  \be

    \item $\mname{is-var}_{\epsilon \tarrow o} \, x_\epsilon \Implies
      \mname{is-free-in}_{\epsilon \tarrow \epsilon \tarrow o} \,
      x_\epsilon \, x_\epsilon$.
 
    \item $(\mname{is-var}_{\epsilon \tarrow o} \, x_\epsilon \And
      \mname{is-var}_{\epsilon \tarrow o} \, y_\epsilon \And
      x_\epsilon \not= y_\epsilon) \Implies
      \Neg(\mname{is-free-in}_{\epsilon \tarrow \epsilon \tarrow o} \,
      x_\epsilon \, y_\epsilon)$.

    \item $(\mname{is-var}_{\epsilon \tarrow o} \, x_\epsilon \And
      \mname{is-con}_{\epsilon \tarrow o} \, y_\epsilon \Implies
      \Neg(\mname{is-free-in}_{\epsilon \tarrow \epsilon \tarrow o} \,
      x_\epsilon \, y_\epsilon)$.

    \item $(\mname{is-var}_{\epsilon \tarrow o} \, x_\epsilon \And
      \mname{is-expr}_{\epsilon \tarrow o} \, (\mname{app}_{\epsilon
        \tarrow \epsilon \tarrow \epsilon} \, y_\epsilon \,
      z_\epsilon)) \Implies \\[.5ex] 
      \hspace*{2ex} \mname{is-free-in}_{\epsilon \tarrow \epsilon
        \tarrow o} \, x_\epsilon \, (\mname{app}_{\epsilon \tarrow
        \epsilon \tarrow \epsilon} \, y_\epsilon \, z_\epsilon) \Iff \\[.5ex] 
      \hspace*{2ex} (\mname{is-free-in}_{\epsilon \tarrow \epsilon
        \tarrow o} \, x_\epsilon \, y_\epsilon \Or
      \mname{is-free-in}_{\epsilon \tarrow \epsilon \tarrow o} \,
      x_\epsilon \, z_\epsilon)$.

    \item $(\mname{is-var}_{\epsilon \tarrow o} \, x_\epsilon \And
      \mname{is-expr}_{\epsilon \tarrow o} \, (\mname{abs}_{\epsilon
        \tarrow \epsilon \tarrow \epsilon} \, y_\epsilon \,
      z_\epsilon) \And x_\epsilon \not= y_\epsilon) \Implies \\[.5ex] 
      \hspace*{2ex} \mname{is-free-in}_{\epsilon \tarrow \epsilon
        \tarrow o} \, x_\epsilon \, (\mname{abs}_{\epsilon \tarrow
        \epsilon \tarrow \epsilon} \, y_\epsilon \, z_\epsilon) \Iff
      \mname{is-free-in}_{\epsilon \tarrow \epsilon \tarrow o} \,
      x_\epsilon \, z_\epsilon$.

    \item $\mname{is-expr}_{\epsilon \tarrow o} \,
      (\mname{abs}_{\epsilon \tarrow \epsilon \tarrow \epsilon} \,
      x_\epsilon \, y_\epsilon) \Implies
      \Neg(\mname{is-free-in}_{\epsilon \tarrow \epsilon \tarrow o} \,
      x_\epsilon \, (\mname{abs}_{\epsilon \tarrow \epsilon \tarrow
        \epsilon} \, x_\epsilon \, y_\epsilon))$.

    \item $(\mname{is-var}_{\epsilon \tarrow o} \, x_\epsilon \And
      \mname{is-expr}_{\epsilon \tarrow o} \, y_\epsilon)
      \Implies \Neg(\mname{is-free-in}_{\epsilon \tarrow \epsilon
        \tarrow o} \, x_\epsilon \, (\mname{quo}_{\epsilon \tarrow
        \epsilon} \, y_\epsilon))$.

    \item $(\Neg(\mname{is-var}_{\epsilon \tarrow o} \, x_\epsilon)
      \Or \Neg(\mname{is-expr}_{\epsilon \tarrow o} \, y_\epsilon))
      \Implies \Neg(\mname{is-free-in}_{\epsilon \tarrow \epsilon
        \tarrow o} \, x_\epsilon \, y_\epsilon)$.

  \ee

\ei

\subsubsection{Axioms for Quotation}

The axioms for quotation are the instances of the quotation properties
expressed below and the instances of the beta-reduction property for
applications of the form $(\LambdaApp \textbf{x}_\alpha \mdot
\synbrack{\textbf{B}_\beta})\, \textbf{A}_\alpha)$.

\smallskip

\bi

  \item[] \textbf{B8 (Properties of Quotation)}

  \be

    \item $\synbrack{\textbf{F}_{\alpha \tarrow \beta} \,
      \textbf{A}_\alpha} = \mname{app}_{\epsilon \tarrow \epsilon
      \tarrow \epsilon} \, \synbrack{\textbf{F}_{\alpha \tarrow
        \beta}} \, \synbrack{\textbf{A}_\alpha}$.

    \item $\synbrack{\LambdaApp \textbf{x}_\alpha \mdot
      \textbf{B}_\beta} = \mname{abs}_{\epsilon \tarrow \epsilon
      \tarrow \epsilon} \, \synbrack{\textbf{x}_\alpha} \,
      \synbrack{\textbf{B}_\beta}$.

    \item $\synbrack{\synbrack{\textbf{A}_\alpha}} =
      \mname{quo}_{\epsilon \tarrow \epsilon} \,
      \synbrack{\textbf{A}_\alpha}$.

  \ee

  \item[] \textbf{B9 (Beta-Reduction for Quotations)}

  \be

    \item[] $(\LambdaApp \textbf{x}_\alpha \mdot
      \synbrack{\textbf{B}_\beta}) \, \textbf{A}_\alpha) =
      \synbrack{\textbf{B}_\beta}$.

  \ee

\ei

\subsubsection{Axioms for Evaluation}

The axioms for evaluation are the instances of the evaluation
properties expressed below and the instances of the beta-reduction
properties for a partial set of applications of the form $(\LambdaApp
\textbf{x}_\alpha \mdot \sembrack{\textbf{B}_\epsilon}_\beta) \,
\textbf{A}_\alpha$ also expressed below.

\bsp
By virtue of the axioms for the properties of evaluation and for
beta-reduction for evaluations, the proof system for {\churchqe}
satisfies Requirements R5 and R4, respectively.
\esp

\smallskip

\bi

  \item[] \textbf{B10 (Properties of Evaluation)}

  \be

    \item $\sembrack{\synbrack{\textbf{x}_\alpha}}_\alpha = \textbf{x}_\alpha$.

    \item $\sembrack{\synbrack{\textbf{c}_\alpha}}_\alpha = \textbf{c}_\alpha$.

    \item $(\mname{is-expr}_{\epsilon \tarrow o}^{\alpha \tarrow \beta}
      \, \textbf{A}_\epsilon \And \mname{is-expr}_{\epsilon \tarrow o}^{\alpha}
      \, \textbf{B}_\epsilon) \Implies\\[.5ex]
      \hspace*{2ex} \sembrack{\mname{app}_{\epsilon \tarrow \epsilon
          \tarrow \epsilon} \, \textbf{A}_\epsilon \,
        \textbf{B}_\epsilon}_{\beta} =
      \sembrack{\textbf{A}_\epsilon}_{\alpha \tarrow \beta} \,
      \sembrack{\textbf{B}_\epsilon}_{\alpha}$.

    \item $(\mname{is-expr}_{\epsilon \tarrow o}^{\beta} \,
      \textbf{A}_\epsilon \And \Neg(\mname{is-free-in}_{\epsilon \tarrow
        \epsilon \tarrow o} \, \synbrack{\textbf{x}_\alpha} \,
      \synbrack{\textbf{A}_\epsilon})) \Implies\\[.5ex]
      \hspace*{2ex} \sembrack{\mname{abs}_{\epsilon \tarrow \epsilon
          \tarrow \epsilon} \, \synbrack{\textbf{x}_\alpha} \,
        \textbf{A}_\epsilon}_{\alpha \tarrow \beta} = \LambdaApp
      \textbf{x}_\alpha \mdot \sembrack{\textbf{A}_\epsilon}_\beta$.

    \item $\mname{is-expr}_{\epsilon \tarrow o}^\epsilon \,
      \textbf{A}_\epsilon \Implies \sembrack{\mname{quo}_{\epsilon
          \tarrow \epsilon} \, \textbf{A}_\epsilon}_\epsilon =
      \textbf{A}_\epsilon$.

  \ee

  \item[] \textbf{B11 (Beta-Reduction for Evaluations)}

  \be

    \item $(\LambdaApp \textbf{x}_\alpha \mdot
      \sembrack{\textbf{B}_\epsilon}_\beta) \, \textbf{x}_\alpha =
      \sembrack{\textbf{B}_\epsilon}_\beta$.
 
    \item $(\mname{is-expr}_{\epsilon \tarrow o}^{\beta} \,
      ((\LambdaApp \textbf{x}_\alpha \mdot \textbf{B}_\epsilon) \,
      \textbf{A}_\alpha) \And \\[.5ex] \hspace*{0.5ex}
      \Neg(\mname{is-free-in}_{\epsilon \tarrow \epsilon \tarrow o}
      \, \synbrack{\textbf{x}_\alpha} \, ((\LambdaApp
      \textbf{x}_\alpha \mdot \textbf{B}_\epsilon) \,
      \textbf{A}_\alpha))) \Implies \\[.5ex]
      \hspace*{2ex} (\LambdaApp \textbf{x}_\alpha \mdot
      \sembrack{\textbf{B}_\epsilon}_\beta) \, \textbf{A}_\alpha =
      \sembrack{(\LambdaApp \textbf{x}_\alpha \mdot
        \textbf{B}_\epsilon) \, \textbf{A}_\alpha}_\beta$.

  \ee

\ei

\subsubsection{Axioms involving \mname{IS-EFFECTIVE-IN}}

The first axiom in this part says that a variable is not effective in
an eval-free expression whenever the variable is not free in the
expression.  The second axiom strengthens Axiom A4.5 by replacing the
side conditions of the form ``$\textbf{B}_\beta$ is eval-free and
$\textbf{x}_\alpha$ is not free in $\textbf{B}_\beta$'' with
conditions of the form
$\Neg\mname{IS-EFFECTIVE-IN}(\textbf{x}_\alpha,\textbf{B}_\beta)$
given in the axiom itself.

\bi

  \item[] \textbf{B12 (``Not Free In'' means ``Not Effective In'')}

  \bi

    \item[] $\Neg\mname{IS-EFFECTIVE-IN}(\textbf{x}_\alpha,\textbf{B}_\beta)$\\
      where $\textbf{B}_\beta$ is eval-free and
      $\textbf{x}_\alpha$ is not free in $\textbf{B}_\beta$.

  \ei

  \item[] \textbf{B13 (Axiom A4.5 using ``Not Effective In'')}

  \bi

    \item[] $(\Neg \mname{IS-EFFECTIVE-IN}(\textbf{y}_\beta,\textbf{A}_\alpha)\Or 
      \Neg \mname{IS-EFFECTIVE-IN}(\textbf{x}_\alpha,\textbf{B}_\gamma)) \Implies {}\\ 
      \hspace*{2ex}(\LambdaApp \textbf{x}_\alpha \mdot 
      \LambdaApp \textbf{y}_\beta \mdot \textbf{B}_\gamma) \, \textbf{A}_\alpha =
      \LambdaApp \textbf{y}_\beta \mdot 
      ((\LambdaApp \textbf{x}_\alpha \mdot \textbf{B}_\gamma) \, \textbf{A}_\alpha)$\\
      where $\textbf{x}_\alpha$ and $\textbf{y}_\beta$ are distinct.

   \ei

\ei

\subsection{Proofs}

Let $T=(L,\Gamma)$ be a theory of {\churchqe}, $\textbf{A}_o$ be a
formula in $L$, and $\sH$ be a set of eval-free formulas in $L$.  A
\emph{proof of $\textbf{A}_o$ in $T$} is a finite sequence of formulas
in $L$ ending with $\textbf{A}_o$ such that every formula in the
sequence is one of the axioms of {\churchqe}, a member of $\Gamma$, or
is inferred from previous formulas in the sequence by the Rule R.
$\textbf{A}_o$ is a \emph{theorem of $T$} if there is a proof of
$\textbf{A}_o$ in $T$.  A \emph{proof in {\churchqe}} is a proof in
the theory $T_{\rm logic} = (L_{\cal C},\emptyset)$, and a \emph{theorem
  of {\churchqe}} is a theorem of $T_{\rm logic}$.

\emph{$\textbf{A}_o$ is provable from $\sH$ in $T$}, written
$\proves{T,\sH}{\textbf{A}_o}$, is defined by the following
statements:

\be

  \item If $\textbf{A}_o \in \sH$, then $\proves{T,\sH}{\textbf{A}_o}$.

  \item If $\textbf{A}_o$ is a theorem of $T$, then
    $\proves{T,\sH}{\textbf{A}_o}$.

  \item \textbf{Rule ${\rm \bf R}'$}.  If
    $\proves{T,\sH}{\textbf{A}_\alpha = \textbf{B}_\alpha}$;
    $\proves{T,\sH}{\textbf{C}_o}$; and $\textbf{D}_o$ is obtained
    from $\textbf{C}_o$ by replacing one occurrence of
    $\textbf{A}_\alpha$ in $\textbf{C}_o$ by an occurrence of
    $\textbf{B}_\alpha$, then $\proves{T,\sH}{\textbf{D}_o}$ provided
    that the occurrence of $\textbf{A}_\alpha$ in $\textbf{C}_o$ is
    not within a quotation, not the first argument of a function
    abstraction, not the second argument of an evaluation, and in
    a function application $\LambdaApp \textbf{x}_\beta \mdot
    \textbf{E}_\gamma$ only if
    \[\proves{T,\sH}{\Neg\mname{IS-EFFECTIVE-IN}(\textbf{x}_\beta,\textbf{A}_\alpha
      = \textbf{B}_\alpha)}\] or 
    \[\proves{T,\sH}{\Neg\mname{IS-EFFECTIVE-IN}(\textbf{x}_\beta,\textbf{H}_o)}\]
    for all $\textbf{H}_o \in \sH$.

\ee
\noindent
Notice that the variables in the members of $\Gamma$ are treated as if
they were universally quantified, while the variables in the members
of $\sH$ are treated as if they were constants.
$\proves{T,\emptyset}{\textbf{A}_o}$ and $\proves{T_{\rm
    logic},\emptyset}{\textbf{A}_o}$ are abbreviated as
$\proves{T}{\textbf{A}_o}$ and $\proves{}{\textbf{A}_o}$,
respectively.  Note that $\proves{T}{\textbf{A}_o}$ iff $\textbf{A}_o$
is a theorem of $T$ since Rule ${\rm R}'$ reduces to Rule R when $\sH
= \emptyset$.  $T$ is \emph{consistent} if not $\proves{T}{F_o}$

\section{Proof-Theoretic Results}\label{sec:pt-results}

In this section, let $T$ be a theory of {\churchqe} and $\sH$ be a set
of formulas of $T$.

\subsection{Equality}

\begin{lem}[Reflexivity of Equality]\label{lem:eq-reflex}
$\proves{}{\textbf{A}_\alpha = \textbf{A}_\alpha}$.
\end{lem}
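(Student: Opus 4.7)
The plan is to follow Andrews' standard trick for reflexivity in ${\cal Q}_0$, which transfers directly to \churchqe{} since the required axiom and the rule R are carried over essentially unchanged.

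First I would take the axiom A4.2, which is the single-formula axiom
\[
(\LambdaApp x_\alpha \mdot x_\alpha) \, \textbf{A}_\alpha = \textbf{A}_\alpha.
\]
Call this formula $\phi$. Then I would apply Rule R using $\phi$ both as the driving equation (with left-hand side $(\LambdaApp x_\alpha \mdot x_\alpha) \, \textbf{A}_\alpha$ and right-hand side $\textbf{A}_\alpha$) and as the target formula $\textbf{C}_o$. Specifically, I would replace the occurrence of $(\LambdaApp x_\alpha \mdot x_\alpha) \, \textbf{A}_\alpha$ that appears on the left-hand side of $\phi$ by $\textbf{A}_\alpha$, which yields $\textbf{A}_\alpha = \textbf{A}_\alpha$.

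The only things to check are the syntactic side conditions of Rule R: the replaced occurrence lies on the top level of the left-hand side of an equation, so it is not within a quotation, not in the first-argument (variable) position of a function abstraction, and not in the second-argument (type-witness) position of an evaluation. Hence Rule R applies and the two-line proof $\phi$, then $\textbf{A}_\alpha = \textbf{A}_\alpha$, is a valid proof in \churchqe{}. There is no real obstacle here; the argument is essentially Andrews' 5200 and does not interact with any of the new quotation/evaluation machinery.
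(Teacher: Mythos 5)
Your proof is correct and is exactly the paper's argument: the paper also takes the Axiom A4.2 instance $(\LambdaApp \textbf{x}_\alpha \mdot \textbf{x}_\alpha) \, \textbf{A}_\alpha = \textbf{A}_\alpha$ and applies Rule R to it using itself as both the driving equation and the target formula. Your explicit check of the Rule R side conditions is a small bonus the paper leaves implicit.
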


\begin{proof}
  \begin{align} \setcounter{equation}{0}
  & \vdash (\LambdaApp \textbf{x}_\alpha \mdot \textbf{x}_\alpha) \,
      \textbf{A}_\alpha = \textbf{A}_\alpha\\
  & \vdash \textbf{A}_\alpha = \textbf{A}_\alpha
  \end{align}
(1) is an instance of Axiom A4.2 and (2) follows from (1) and (1) by
  Rule R.
\end{proof}

\begin{lem}[Equality Rules]\label{lem:eq-rules}
\be

  \item[]

  \item If $\proves{T,\sH}{\textbf{A}_\alpha = \textbf{B}_\alpha}$, then
    $\proves{T,\sH}{\textbf{B}_\alpha = \textbf{A}_\alpha}$.

  \item If $\proves{T,\sH}{\textbf{A}_\alpha = \textbf{B}_\alpha}$ and
    $\proves{T,\sH}{\textbf{B}_\alpha = \textbf{C}_\alpha}$, then
    $\proves{T,\sH}{\textbf{A}_\alpha = \textbf{C}_\alpha}$.

  \item If $\proves{T,\sH}{\textbf{A}_{\alpha \tarrow \beta} =
    \textbf{B}_{\alpha \tarrow \beta}}$ and
    $\proves{T,\sH}{\textbf{C}_\alpha = \textbf{D}_\alpha}$, then\\
    \mbox{$\proves{T,\sH}{\textbf{A}_{\alpha \tarrow \beta} \,
    \textbf{C}_\alpha = \textbf{B}_{\alpha \tarrow \beta} \,
    \textbf{D}_\alpha}$.}

  \item If $\proves{T,\sH}{\textbf{A}_\epsilon =
    \textbf{B}_\epsilon}$, then $\proves{T,\sH}{\sembrack{\textbf{A}_\epsilon}_\alpha =
    \sembrack{\textbf{B}_\epsilon}_\alpha}$.

  \item If $\proves{T,\sH}{\textbf{A}_o}$ and
    $\proves{T,\sH}{\textbf{A}_o = \textbf{B}_o}$, then
    $\proves{T,\sH}{\textbf{B}_o}$.

\ee
\end{lem}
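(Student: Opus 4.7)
The plan is to derive each part of the lemma by combining the reflexivity of equality (Lemma \ref{lem:eq-reflex}) with a carefully chosen application of Rule $\mathrm{R}'$. Every case reduces to starting from some already-provable equation, then using one of the given hypotheses to replace a single subterm, checking in each case that the position of the replaced occurrence is permitted by Rule $\mathrm{R}'$.

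Concretely: for part 1, start from $\textbf{A}_\alpha = \textbf{A}_\alpha$ and use the hypothesis $\textbf{A}_\alpha = \textbf{B}_\alpha$ to rewrite the left-hand occurrence of $\textbf{A}_\alpha$, obtaining $\textbf{B}_\alpha = \textbf{A}_\alpha$. For part 2, apply Rule $\mathrm{R}'$ to the formula $\textbf{A}_\alpha = \textbf{B}_\alpha$ using the equation $\textbf{B}_\alpha = \textbf{C}_\alpha$ to rewrite the right-hand $\textbf{B}_\alpha$. For part 3, start from $\textbf{A}_{\alpha \tarrow \beta} \, \textbf{C}_\alpha = \textbf{A}_{\alpha \tarrow \beta} \, \textbf{C}_\alpha$ and perform two successive Rule $\mathrm{R}'$ rewrites on the right, first replacing $\textbf{A}_{\alpha \tarrow \beta}$ by $\textbf{B}_{\alpha \tarrow \beta}$ and then $\textbf{C}_\alpha$ by $\textbf{D}_\alpha$. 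For part 4, start from $\sembrack{\textbf{A}_\epsilon}_\alpha = \sembrack{\textbf{A}_\epsilon}_\alpha$ and rewrite one occurrence of $\textbf{A}_\epsilon$ using the hypothesis $\textbf{A}_\epsilon = \textbf{B}_\epsilon$. For part 5, apply Rule $\mathrm{R}'$ directly to $\textbf{A}_o$ using $\textbf{A}_o = \textbf{B}_o$, replacing $\textbf{A}_o$ by $\textbf{B}_o$.

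In each case I need to verify that the replaced occurrence is not within a quotation, not the first argument of a function abstraction, and not the second argument of an evaluation, and that no proof of a $\mname{IS-EFFECTIVE-IN}$ side condition is required (since none of the replacements takes place inside a $\lambda$-body). The only substantive observation is for part 4: the subterm $\textbf{A}_\epsilon$ being rewritten sits as the \emph{first} argument of $\sembrack{\cdot}_\alpha$, so the prohibition on rewriting the second argument of an evaluation does not apply. In all other parts the occurrence being rewritten is at a completely ordinary position (a direct argument of $=_{\alpha \tarrow \alpha \tarrow o}$ or a function application), so Rule $\mathrm{R}'$ applies without any side condition.

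There is no real obstacle here; the proof is a routine bookkeeping exercise, and the only place one must pause to think is confirming that ``first argument of an evaluation'' is a legal position for Rule $\mathrm{R}'$ in part 4.
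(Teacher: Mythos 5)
Your proof is correct and is exactly the argument the paper intends: its entire proof of this lemma is the one line ``By Lemma~\ref{lem:eq-reflex} and Rule ${\rm R}'$,'' and your case-by-case elaboration (reflexivity instances rewritten by Rule ${\rm R}'$, with the observation that the rewritten occurrences never sit in a forbidden position and never trigger an $\mname{IS-EFFECTIVE-IN}$ side condition) is the intended filling-in of that line. Your remark that in part~4 the rewrite happens in the \emph{first} argument of the evaluation, which Rule ${\rm R}'$ permits, is precisely the one point worth checking.
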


\begin{proof}
By Lemma~\ref{lem:eq-reflex} and Rule ${\rm R}'$.
\end{proof}

\subsection{Substitution}

It will be convenient to define in the metalogic of {\churchqe} a
substitution operator named $\mname{SUB}$.  Roughly speaking,
$\mname{SUB}(\textbf{A}_\alpha,\textbf{x}_\alpha,\textbf{B}_\beta)$
denotes the expression obtained by replacing each free occurrence of
$\textbf{x}_\alpha$ in $\textbf{B}_\beta$ with $\textbf{A}_\alpha$
except that the substitution is curtained (1) within a function
abstraction when a variable capture will occur and (2) within an
evaluation when $\textbf{A}_\alpha$ is not $\textbf{x}_\alpha$.

The \emph{substitution of an expression $\textbf{A}_\alpha$ for a
  variable $\textbf{x}_\alpha$ in an expression $\textbf{B}_\beta$},
written
$\mname{SUB}(\textbf{A}_\alpha,\textbf{x}_\alpha,\textbf{B}_\beta)$,
is defined recursively as follows:

\be

  \item
    $\mname{SUB}(\textbf{A}_\alpha,\textbf{x}_\alpha,\textbf{y}_\beta)
    = \textbf{y}_\beta$ {\sglsp} where $\textbf{x}_\alpha$ and
    $\textbf{y}_\beta$ are distinct.

  \item
    $\mname{SUB}(\textbf{A}_\alpha,\textbf{x}_\alpha,\textbf{x}_\alpha)
    = \textbf{A}_\alpha$.

  \item
    $\mname{SUB}(\textbf{A}_\alpha,\textbf{x}_\alpha,\textbf{c}_\beta)
    = \textbf{c}_\beta$.

  \item
    $\mname{SUB}(\textbf{A}_\alpha,\textbf{x}_\alpha,\textbf{B}_{\beta
    \tarrow \gamma} \, \textbf{C}_\beta) =
    \mname{SUB}(\textbf{A}_\alpha,\textbf{x}_\alpha,\textbf{B}_{\beta
      \tarrow \gamma}) \,
    (\mname{SUB}(\textbf{A}_\alpha,\textbf{x}_\alpha,\textbf{C}_\beta))$.

  \item
    $\mname{SUB}(\textbf{A}_\alpha,\textbf{x}_\alpha,\LambdaApp
    \textbf{y}_\beta \mdot \textbf{B}_\gamma) = \LambdaApp
    \textbf{y}_\beta \mdot
    \mname{SUB}(\textbf{A}_\alpha,\textbf{x}_\alpha,\textbf{B}_\gamma)$
          {\sglsp} where $\textbf{x}_\alpha$ and $\textbf{y}_\beta$
          are distinct and either (1) $\textbf{A}_\alpha$ is eval-free
          and $\textbf{y}_\beta$ is not free in $\textbf{A}_\alpha$ or
          (2) $\textbf{B}_\gamma$ is eval-free and $\textbf{x}_\alpha$
          is not free in $\textbf{B}_\gamma$.

  \item
    $\mname{SUB}(\textbf{A}_\alpha,\textbf{x}_\alpha,\LambdaApp
    \textbf{y}_\beta \mdot \textbf{B}_\gamma) = (\LambdaApp
    \textbf{x}_\alpha \mdot \LambdaApp \textbf{y}_\beta \mdot
    \textbf{B}_\gamma) \, \textbf{A}_\alpha$ {\sglsp} where
    $\textbf{x}_\alpha$ and $\textbf{y}_\beta$ are distinct and not
    (1) $\textbf{A}_\alpha$ is eval-free and $\textbf{y}_\beta$ is not
    free in $\textbf{A}_\alpha$ and not (2) $\textbf{B}_\gamma$ is
    eval-free and $\textbf{x}_\alpha$ is not free in
    $\textbf{B}_\gamma$.

  \item
    $\mname{SUB}(\textbf{A}_\alpha,\textbf{x}_\alpha,\LambdaApp
    \textbf{x}_\alpha \mdot \textbf{B}_\beta) = \LambdaApp
    \textbf{x}_\alpha \mdot \textbf{B}_\beta$.

  \item 
    $\mname{SUB}(\textbf{A}_\alpha,\textbf{x}_\alpha,\synbrack{\textbf{B}_\beta})
    = \synbrack{\textbf{B}_\beta}$.

  \item
    $\mname{SUB}(\textbf{x}_\alpha,\textbf{x}_\alpha,\sembrack{\textbf{B}_\epsilon}_\beta)
    = \sembrack{\textbf{B}_\epsilon}_\beta$.

  \item 
    $\mname{SUB}(\textbf{A}_\alpha,\textbf{x}_\alpha,\sembrack{\textbf{B}_\epsilon}_\beta)
    = (\LambdaApp \textbf{x}_\alpha \mdot
    \sembrack{\textbf{B}_\epsilon}_\beta) \, \textbf{A}_\alpha$
             {\sglsp} where $\textbf{A}_\alpha$ is not
             $\textbf{x}_\alpha$.

\ee
Notice that in clauses 6 and 10, the substitution is not performed and
the intent of the substitution is recorded as an appropriate lambda
application.

The following theorem shows that beta-reduction can be performed (in
part) by substitution via \mname{SUB}.

\begin{thm}[Beta-Reduction by Substitution]
\[\proves{}{(\LambdaApp \textbf{x}_\alpha \mdot \textbf{B}_\beta) \,
    \textbf{A}_\alpha = \mname{SUB}(\textbf{A}_\alpha,
    \textbf{x}_\alpha, \textbf{B}_\beta)}.\]
\end{thm}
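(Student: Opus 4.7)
The plan is to prove this by structural induction on $\textbf{B}_\beta$, matching case-by-case the ten clauses that define $\mname{SUB}$. In each case, the goal reduces to finding the appropriate axiom — either a beta-reduction axiom from the A4 series (for the eval-free kernel), B9 (for a quotation body), or B11.1 (for the evaluation body when $\textbf{A}_\alpha$ is $\textbf{x}_\alpha$) — and then, where $\textbf{B}_\beta$ is a compound expression, invoking the inductive hypothesis and the equality rules of Lemma~\ref{lem:eq-rules}.

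Concretely: clauses 1--3 follow immediately from Axioms A4.1--A4.3; clause 4 uses A4.4 together with two invocations of the inductive hypothesis on $\textbf{B}_{\beta \tarrow \gamma}$ and $\textbf{C}_\beta$, combined via the congruence rule of Lemma~\ref{lem:eq-rules}(3); clause 5 uses A4.5 (whose side conditions match exactly the disjunction ``(1) or (2)'' required by the clause) followed by the inductive hypothesis on $\textbf{B}_\gamma$ under the binder and Lemma~\ref{lem:eq-rules}; clause 7 is A4.6; clause 8 is B9; and clause 9 is B11.1. Clauses 6 and 10 are the two cases where $\mname{SUB}$ refuses to push the substitution inward: there, $\mname{SUB}(\textbf{A}_\alpha,\textbf{x}_\alpha,\textbf{B}_\beta)$ is defined to be the literal expression $(\LambdaApp \textbf{x}_\alpha \mdot \textbf{B}_\beta) \, \textbf{A}_\alpha$, so both sides of the target equation coincide and Lemma~\ref{lem:eq-reflex} closes the goal.

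The only step with any genuine subtlety is clause 5. Here A4.5 gives
\[
(\LambdaApp \textbf{x}_\alpha \mdot \LambdaApp \textbf{y}_\beta \mdot \textbf{B}_\gamma) \, \textbf{A}_\alpha
= \LambdaApp \textbf{y}_\beta \mdot ((\LambdaApp \textbf{x}_\alpha \mdot \textbf{B}_\gamma) \, \textbf{A}_\alpha),
\]
and the inductive hypothesis applied to $\textbf{B}_\gamma$ yields $(\LambdaApp \textbf{x}_\alpha \mdot \textbf{B}_\gamma) \, \textbf{A}_\alpha = \mname{SUB}(\textbf{A}_\alpha,\textbf{x}_\alpha,\textbf{B}_\gamma)$. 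To replace the inner redex on the right under the binder $\LambdaApp \textbf{y}_\beta$, one needs to justify a rewrite inside a function abstraction; this is where Rule~$\mathrm{R}'$ imposes the \mname{IS-EFFECTIVE-IN} side condition. Fortunately, the case hypothesis guarantees that either $\textbf{A}_\alpha$ is eval-free with $\textbf{y}_\beta$ not free in $\textbf{A}_\alpha$, or $\textbf{B}_\gamma$ is eval-free with $\textbf{x}_\alpha$ not free in $\textbf{B}_\gamma$; in either case Axiom B12 supplies the necessary $\Neg\mname{IS-EFFECTIVE-IN}$ fact, licensing the rewrite. (Alternatively, one may apply Lemma~\ref{lem:eq-rules}(3) to pre-combine the equation with reflexivity of the abstraction before crossing the binder, using extensionality A3.)

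The main obstacle, then, is bookkeeping around clause~5: one must be careful to record which disjunct of the side condition holds, quote Axiom B12 to convert the syntactic ``not free in'' into the semantic $\Neg\mname{IS-EFFECTIVE-IN}$, and thereby satisfy the proviso of Rule~$\mathrm{R}'$ when rewriting underneath $\LambdaApp \textbf{y}_\beta$. All other cases reduce to a single axiom or to reflexivity and are essentially mechanical.
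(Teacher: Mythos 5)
Your induction and case split coincide with the paper's proof: the ten clauses of $\mname{SUB}$ are discharged by Axioms A4.1--A4.6, B9, and B11.1, with clauses 6 and 10 closed by reflexivity (the paper phrases this as ``immediately from the definition of $\mname{SUB}$,'' which is the same observation), and the compound cases combined via the induction hypothesis and the Equality Rules. Structurally the argument is exactly the paper's.

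The one step you single out as subtle, however, is mis-analyzed. The theorem is stated with $\sH = \emptyset$, so Rule ${\rm R}'$ collapses to Rule R: the proviso ``\ldots or $\proves{T,\sH}{\Neg\mname{IS-EFFECTIVE-IN}(\textbf{x}_\beta,\textbf{H}_o)}$ for all $\textbf{H}_o \in \sH$'' is vacuously satisfied, and Rule R itself forbids rewriting only within a quotation, in the binder position of an abstraction, or in the type argument of an evaluation --- rewriting in the \emph{body} of $\LambdaApp \textbf{y}_\beta$ is unconditionally permitted. Hence no $\Neg\mname{IS-EFFECTIVE-IN}$ fact is needed, and the paper accordingly just cites the Equality Rules. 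More importantly, the justification you offer would not actually go through if it were needed: you propose to obtain $\Neg\mname{IS-EFFECTIVE-IN}(\textbf{y}_\beta,\, (\LambdaApp \textbf{x}_\alpha \mdot \textbf{B}_\gamma)\,\textbf{A}_\alpha = \mname{SUB}(\textbf{A}_\alpha,\textbf{x}_\alpha,\textbf{B}_\gamma))$ from Axiom B12, but B12 requires the target expression to be eval-free \emph{and} $\textbf{y}_\beta$ not to be free in it, and neither holds in general: under disjunct (1) of the side condition only $\textbf{A}_\alpha$ is constrained, while $\textbf{y}_\beta$ will typically be free (indeed effective) in $\textbf{B}_\gamma$ --- that is the normal situation for the body of $\LambdaApp \textbf{y}_\beta \mdot \textbf{B}_\gamma$. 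So the step you were worried about is fine, but for a different reason than the one you give; had the $\mname{IS-EFFECTIVE-IN}$ condition genuinely been in force, your argument would have a hole there.
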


\begin{proof}
The proof is by induction on the structure of $\textbf{B}_\beta$.
There are six cases corresponding to the six formation rules for
expressions.

\bi

  \item[] \textbf{Case 1}: $\textbf{B}_\beta$ is a variable.

  \bi

    \item[] \textbf{Subcase 1a}: $\textbf{B}_\beta$ is
      $\textbf{y}_\beta$ and $\textbf{x}_\alpha$ and
      $\textbf{y}_\beta$ are distinct.  The theorem follows
      immediately from Axiom A4.1 and the definition of
      $\mname{SUB}$.

    \item[] \textbf{Subcase 1b}: $\textbf{B}_\beta$ is
      $\textbf{x}_\alpha$. The theorem follows immediately from Axiom
      A4.2 and the definition of $\mname{SUB}$.

  \ei

  \item[] \textbf{Case 2}: $\textbf{B}_\beta$ is a constant. The
    theorem follows immediately from Axiom A4.3 and the definition of
    $\mname{SUB}$.

  \item[] \textbf{Case 3}: $\textbf{B}_\beta$ is a function
    application $\textbf{B}_{\beta \tarrow \gamma} \,
    \textbf{C}_\beta$.
  \begin{align} \setcounter{equation}{0}
  \vdash {} &(\LambdaApp \textbf{x}_\alpha \mdot 
  (\textbf{B}_{\beta \tarrow \gamma} \, 
  \textbf{C}_\beta)) \, \textbf{A}_\alpha =\nonumber\\
  &((\LambdaApp \textbf{x}_\alpha \mdot \textbf{B}_{\beta \tarrow \gamma}) \, 
  \textbf{A}_\alpha) \, 
  ((\LambdaApp \textbf{x}_\alpha \mdot \textbf{C}_\beta) \, \textbf{A}_\alpha)\\
  \vdash {} &(\LambdaApp \textbf{x}_\alpha \mdot \textbf{B}_{\beta \tarrow \gamma}) \,
  \textbf{A}_\alpha = 
  \mname{SUB}(\textbf{A}_\alpha, \textbf{x}_\alpha, \textbf{B}_{\beta \tarrow \gamma})\\
  \vdash {} &(\LambdaApp \textbf{x}_\alpha \mdot \textbf{C}_\beta) \,
  \textbf{A}_\alpha = 
  \mname{SUB}(\textbf{A}_\alpha, \textbf{x}_\alpha, \textbf{C}_\beta)\\
  \vdash {} &(\LambdaApp \textbf{x}_\alpha \mdot 
  (\textbf{B}_{\beta \tarrow \gamma} \, 
  \textbf{C}_\beta)) \, \textbf{A}_\alpha =\nonumber\\
  &\mname{SUB}(\textbf{A}_\alpha, \textbf{x}_\alpha, \textbf{B}_{\beta \tarrow \gamma})
  (\mname{SUB}(\textbf{A}_\alpha, \textbf{x}_\alpha, \textbf{C}_\beta))\\
  \vdash {} &(\LambdaApp \textbf{x}_\alpha \mdot 
  (\textbf{B}_{\beta \tarrow \gamma} \, 
  \textbf{C}_\beta)) \, \textbf{A}_\alpha =
  \mname{SUB}(\textbf{A}_\alpha,\textbf{x}_\alpha,
  \textbf{B}_{\beta \tarrow \gamma} \, \textbf{C}_\beta)
  \end{align}
  (1) is by Axiom A4.4; (2) and (3) are by the induction hypothesis;
  (4) follows from (1), (2), and (3) by the Equality Rules; and (5),
  the theorem, follows from (4) by the definition of $\mname{SUB}$.

  \item[] \textbf{Case 4}: $\textbf{B}_\beta$ is a function abstraction.  

  \bi

    \item[] \textbf{Subcase 4a}: $\textbf{B}_\beta$ is $\LambdaApp
      \textbf{y}_\beta \mdot \textbf{B}_\gamma$, $\textbf{x}_\alpha$
      and $\textbf{y}_\beta$ are distinct, and either
      (1)~$\textbf{A}_\alpha$ is eval-free and $\textbf{y}_\beta$ is
      not free in $\textbf{A}_\alpha$ or (2) $\textbf{B}_\gamma$ is
      eval-free and $\textbf{x}_\alpha$ is not free in
      $\textbf{B}_\gamma$.
    \begin{align} \setcounter{equation}{0}
    & 
    \vdash (\LambdaApp \textbf{x}_\alpha \mdot \LambdaApp \textbf{y}_\beta \mdot 
    \textbf{B}_\gamma) \, \textbf{A}_\alpha =
    \LambdaApp \textbf{y}_\beta \mdot 
    ((\LambdaApp \textbf{x}_\alpha \mdot \textbf{B}_\gamma) \, \textbf{A}_\alpha)\\
    &
    \vdash (\LambdaApp \textbf{x}_\alpha \mdot \textbf{B}_\gamma) \, 
    \textbf{A}_\alpha =
    \mname{SUB}(\textbf{A}_\alpha,\textbf{x}_\alpha,\textbf{B}_\gamma)\\
    & 
    \vdash (\LambdaApp \textbf{x}_\alpha \mdot \LambdaApp \textbf{y}_\beta \mdot 
    \textbf{B}_\gamma) \, \textbf{A}_\alpha =
    \LambdaApp \textbf{y}_\beta \mdot
    \mname{SUB}(\textbf{A}_\alpha,\textbf{x}_\alpha,\textbf{B}_\gamma)\\
    &
    \vdash (\LambdaApp \textbf{x}_\alpha \mdot \LambdaApp \textbf{y}_\beta \mdot 
    \textbf{B}_\gamma) \, \textbf{A}_\alpha =
    \mname{SUB}(\textbf{A}_\alpha,\textbf{x}_\alpha,
    \LambdaApp \textbf{y}_\beta \mdot \textbf{B}_\gamma)
    \end{align}
    (1) is by Axiom A4.5; (2) is by the induction hypothesis; (3)
    follows from (1) and (2) by the Equality Rules; and (4), the
    theorem, follows from (3) by the definition of $\mname{SUB}$.      

    \item[] \textbf{Subcase 4b}: $\textbf{B}_\beta$ is $\LambdaApp
      \textbf{y}_\beta \mdot \textbf{B}_\gamma$, $\textbf{x}_\alpha$
      and $\textbf{y}_\beta$ are distinct, not (1) $\textbf{A}_\alpha$
      is eval-free and $\textbf{y}_\beta$ is not free in
      $\textbf{A}_\alpha$, and not (2) $\textbf{B}_\gamma$ is
      eval-free and $\textbf{x}_\alpha$ is not free in
      $\textbf{B}_\gamma$.  The theorem follows immediately from the
      definition of $\mname{SUB}$.

    \item[] \textbf{Subcase 4c}: $\textbf{B}_\beta$ is $\LambdaApp
      \textbf{x}_\alpha \mdot \textbf{B}_\beta$.  The theorem follows
      immediately from Axiom~A4.6 and the definition of $\mname{SUB}$.

  \ei

  \item[] \textbf{Case 5}: $\textbf{B}_\beta$ is a quotation. The
    theorem follows immediately from Axiom~B9 and the definition of
    $\mname{SUB}$.

  \item[] \textbf{Case 6}: $\textbf{B}_\beta$ is an evaluation.  

  \bi

    \item[] \textbf{Subcase 6a}: $\textbf{A}_\alpha$ is
      $\textbf{x}_\alpha$.  The theorem follows immediately from Axiom
      B11.1 and the definition of $\mname{SUB}$.

    \item[] \textbf{Subcase 6b}: $\textbf{A}_\alpha$ is not
      $\textbf{x}_\alpha$.  The theorem follows immediately from the
      definition of $\mname{SUB}$.

  \ei

\ei
\end{proof}

\begin{thm}[Universal Instantiation]
If $\proves{T,\sH}{\ForallApp \textbf{x}_\alpha \mdot \textbf{B}_o}$,
then\\
$\proves{T,\sH}{\mname{SUB}(\textbf{A}_\alpha,\textbf{x}_\alpha,\textbf{B}_o)}$.
\end{thm}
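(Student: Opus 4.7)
The plan is to unfold the definition of the universal quantifier, apply both sides of the resulting equation to $\textbf{A}_\alpha$, and then collapse the two lambda applications using the preceding Beta-Reduction by Substitution theorem.

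Recall that $\ForallApp \textbf{x}_\alpha \mdot \textbf{B}_o$ is an abbreviation for $(\LambdaApp \textbf{x}_\alpha \mdot T_o) = (\LambdaApp \textbf{x}_\alpha \mdot \textbf{B}_o)$. From the hypothesis $\proves{T,\sH}{\ForallApp \textbf{x}_\alpha \mdot \textbf{B}_o}$ together with the reflexivity fact $\proves{}{\textbf{A}_\alpha = \textbf{A}_\alpha}$ (Lemma~\ref{lem:eq-reflex}), clause~(3) of the Equality Rules (Lemma~\ref{lem:eq-rules}) yields
\[
\proves{T,\sH}{(\LambdaApp \textbf{x}_\alpha \mdot T_o) \, \textbf{A}_\alpha = (\LambdaApp \textbf{x}_\alpha \mdot \textbf{B}_o) \, \textbf{A}_\alpha}.
\]
Two applications of the Beta-Reduction by Substitution theorem then give $\proves{}{(\LambdaApp \textbf{x}_\alpha \mdot T_o) \, \textbf{A}_\alpha = \mname{SUB}(\textbf{A}_\alpha, \textbf{x}_\alpha, T_o)}$ and $\proves{}{(\LambdaApp \textbf{x}_\alpha \mdot \textbf{B}_o) \, \textbf{A}_\alpha = \mname{SUB}(\textbf{A}_\alpha, \textbf{x}_\alpha, \textbf{B}_o)}$. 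Since $T_o$ is the closed expression $=_{o \tarrow o \tarrow o} \; = \; =_{o \tarrow o \tarrow o}$, built only from constants and function application, only clauses~3 and~4 of the definition of $\mname{SUB}$ fire when recursing into it, so $\mname{SUB}(\textbf{A}_\alpha, \textbf{x}_\alpha, T_o)$ is syntactically $T_o$. Chaining the three equations via the Equality Rules gives $\proves{T,\sH}{T_o = \mname{SUB}(\textbf{A}_\alpha, \textbf{x}_\alpha, \textbf{B}_o)}$.

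To finish, note that $\proves{}{T_o}$ is immediate from Lemma~\ref{lem:eq-reflex} since $T_o$ itself has the form $\textbf{C}_\alpha = \textbf{C}_\alpha$. Applying clause~(5) of the Equality Rules to $\proves{}{T_o}$ and the equation just obtained delivers $\proves{T,\sH}{\mname{SUB}(\textbf{A}_\alpha, \textbf{x}_\alpha, \textbf{B}_o)}$, the desired conclusion. The one point meriting care is the interaction with Rule~${\rm R}'$ hiding inside the Equality Rules: one must confirm that every replacement performed in the derivation above occurs in the head of a function application rather than inside a function abstraction, so that none of the `not-effective-in' side conditions of Rule~${\rm R}'$ have to be discharged. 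Given that the replacements all produce outermost applications of the form $(\LambdaApp \textbf{x}_\alpha \mdot \cdot) \, \textbf{A}_\alpha$, this check is routine.
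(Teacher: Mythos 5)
Your proof is correct and follows essentially the same route as the paper, which simply defers to Andrews' proof of 5215: unfold $\ForallApp \textbf{x}_\alpha \mdot \textbf{B}_o$ as $(\LambdaApp \textbf{x}_\alpha \mdot T_o) = (\LambdaApp \textbf{x}_\alpha \mdot \textbf{B}_o)$, apply both sides to $\textbf{A}_\alpha$, beta-reduce via $\mname{SUB}$, and detach from $T_o$. Your added checks --- that $\mname{SUB}(\textbf{A}_\alpha,\textbf{x}_\alpha,T_o)$ is literally $T_o$ and that none of the Rule ${\rm R}'$ side conditions are triggered because every replacement sits outside any function abstraction --- are exactly the points that need verifying in the {\churchqe} setting, and they go through as you state.
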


\begin{proof}
Similar to the proof of 5215 in~\cite[p.~221]{Andrews02}.
\end{proof}

\begin{lem}\label{lem:sub-eqn}
If $\proves{}{\textbf{B}_\beta = \textbf{C}_\beta}$, then
$\proves{}{\mname{SUB}(\textbf{A}_\alpha, \textbf{x}_\alpha,
  \textbf{B}_\beta = \textbf{C}_\beta)}$.
\end{lem}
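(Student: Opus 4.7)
The plan is to deduce the desired conclusion from the Universal Instantiation theorem proved immediately above, after first establishing a generalization step. Write $\textbf{D}_o$ for $\textbf{B}_\beta = \textbf{C}_\beta$, so the hypothesis reads $\proves{}{\textbf{D}_o}$ and the goal is $\proves{}{\mname{SUB}(\textbf{A}_\alpha, \textbf{x}_\alpha, \textbf{D}_o)}$. The chain of reasoning is: from $\textbf{D}_o$ derive $\ForallApp \textbf{x}_\alpha \mdot \textbf{D}_o$ by generalization, then apply Universal Instantiation with $\textbf{A}_\alpha$ to read off the substitution instance.

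The generalization step itself is the standard Andrews-style derivation, which I would carry over to {\churchqe} as follows. By Lemma~\ref{lem:eq-reflex}, $\proves{}{T_o}$ (since $T_o$ abbreviates $=_{o \tarrow o \tarrow o} \; = \; =_{o \tarrow o \tarrow o}$, an instance of reflexivity). Combining $\proves{}{T_o}$ with $\proves{}{\textbf{D}_o}$ yields $\proves{}{\textbf{D}_o = T_o}$ by the usual two-truth-values analysis available from Axiom A1 (this is Andrews' 5210, whose proof uses only A1--A3 and Rule R and is unaffected by the quotation/evaluation machinery). Now start from the reflexivity instance $\proves{}{(\LambdaApp \textbf{x}_\alpha \mdot T_o) = (\LambdaApp \textbf{x}_\alpha \mdot T_o)}$ and apply Rule R, using $\proves{}{T_o = \textbf{D}_o}$ to replace the rightmost occurrence of $T_o$ by $\textbf{D}_o$. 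This occurrence sits in the body of a function abstraction (not in its bound-variable slot, not inside a quotation, and not as the second argument of an evaluation), so the side conditions of Rule R are satisfied. The result is $\proves{}{(\LambdaApp \textbf{x}_\alpha \mdot T_o) = (\LambdaApp \textbf{x}_\alpha \mdot \textbf{D}_o)}$, which is exactly $\proves{}{\ForallApp \textbf{x}_\alpha \mdot \textbf{D}_o}$ by the definition of the universal quantifier in Table~\ref{tab:defs}. Universal Instantiation applied to this with term $\textbf{A}_\alpha$ and matrix $\textbf{D}_o$ immediately yields $\proves{}{\mname{SUB}(\textbf{A}_\alpha, \textbf{x}_\alpha, \textbf{D}_o)}$, as required.

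The place to worry is whether generalization truly survives the move to {\churchqe}. The new Rule ${\rm R}'$ imposes an $\mname{IS\text{-}EFFECTIVE\text{-}IN}$ side condition when replacing under a function abstraction, which could in principle block the step above if $\textbf{B}_\beta$ or $\textbf{C}_\beta$ contains evaluations. However, the lemma is stated with $\proves{}{\cdot}$, i.e.\ with no auxiliary hypotheses ($\sH = \emptyset$), so the applicable rule is Rule R itself, which has no effectiveness proviso for replacements inside the body of a $\lambda$-abstraction. Consequently the standard $\qzero$ derivation transfers verbatim, and no new proof-theoretic obstacle arises from quotation or evaluation.
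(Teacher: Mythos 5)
There is a genuine gap here: your argument is circular. The step you gloss as ``the usual two-truth-values analysis'' --- from $\proves{}{\textbf{D}_o}$ and $\proves{}{T_o}$ infer $\proves{}{\textbf{D}_o = T_o}$ --- is Andrews' Rule T (5218/5219), not 5210, and its proof is emphatically not ``only A1--A3 and Rule R.'' To extract anything from Axiom A1 one must first instantiate its free variable $g_{o \tarrow o}$ with a concrete term such as $\LambdaApp x_o \mdot (x_o = (x_o = T_o))$, and at this point in the development the only available tool for instantiating a free variable in an equational theorem is precisely Lemma~\ref{lem:sub-eqn} itself: that is its raison d'\^{e}tre, namely to specialize the equational axioms A1--A3. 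Universal Generalization in turn depends on Rule T, which is why the paper states it --- together with Modus Ponens and the Tautology Theorem --- only \emph{after} this lemma. So your generalize-then-instantiate route presupposes the result being proved. (Your two correct observations --- that Rule R rather than Rule ${\rm R}'$ applies since $\sH = \emptyset$, and that Universal Instantiation is already available --- do not repair this.)

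The intended proof, following Andrews' 5204, uses only what is already established. By Beta-Reduction by Substitution, $\proves{}{(\LambdaApp \textbf{x}_\alpha \mdot \textbf{B}_\beta)\,\textbf{A}_\alpha = \mname{SUB}(\textbf{A}_\alpha,\textbf{x}_\alpha,\textbf{B}_\beta)}$. Applying Rule R with the hypothesis $\proves{}{\textbf{B}_\beta = \textbf{C}_\beta}$ to the occurrence of $\textbf{B}_\beta$ in the $\lambda$-body (permitted: that occurrence is not within a quotation, not the binder, and not the second argument of an evaluation) yields $\proves{}{(\LambdaApp \textbf{x}_\alpha \mdot \textbf{C}_\beta)\,\textbf{A}_\alpha = \mname{SUB}(\textbf{A}_\alpha,\textbf{x}_\alpha,\textbf{B}_\beta)}$. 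A second application of Beta-Reduction by Substitution gives $\proves{}{(\LambdaApp \textbf{x}_\alpha \mdot \textbf{C}_\beta)\,\textbf{A}_\alpha = \mname{SUB}(\textbf{A}_\alpha,\textbf{x}_\alpha,\textbf{C}_\beta)}$, and the Equality Rules then produce $\proves{}{\mname{SUB}(\textbf{A}_\alpha,\textbf{x}_\alpha,\textbf{B}_\beta) = \mname{SUB}(\textbf{A}_\alpha,\textbf{x}_\alpha,\textbf{C}_\beta)}$, which by clauses 3 and 4 of the definition of $\mname{SUB}$ is literally the formula $\mname{SUB}(\textbf{A}_\alpha,\textbf{x}_\alpha,\textbf{B}_\beta = \textbf{C}_\beta)$. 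This argument needs no propositional machinery at all.
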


\begin{proof}
Similar to the proof of 5204 in~\cite[p.~217]{Andrews02}.
\end{proof}

\begin{lem}\label{lem:eta}
$\proves{}{f_{\alpha \tarrow \beta} = \LambdaApp \textbf{y}_\alpha
    \mdot f_{\alpha \tarrow \beta} \, \textbf{y}_\alpha}$.
\end{lem}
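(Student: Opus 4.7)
The plan is to invoke Axiom A3 (Extensionality) with $g_{\alpha \tarrow \beta}$ specialized to $\LambdaApp \textbf{y}_\alpha \mdot f_{\alpha \tarrow \beta} \, \textbf{y}_\alpha$, which reduces the goal to showing that $f_{\alpha \tarrow \beta} \, \textbf{y}_\alpha$ and $(\LambdaApp \textbf{y}_\alpha \mdot f_{\alpha \tarrow \beta} \, \textbf{y}_\alpha) \, \textbf{y}_\alpha$ agree under a universal quantifier on $\textbf{y}_\alpha$.

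First, I would beta-reduce $(\LambdaApp \textbf{y}_\alpha \mdot f_{\alpha \tarrow \beta} \, \textbf{y}_\alpha) \, \textbf{y}_\alpha$. Axiom A4.4 rewrites it as $((\LambdaApp \textbf{y}_\alpha \mdot f_{\alpha \tarrow \beta}) \, \textbf{y}_\alpha) \, ((\LambdaApp \textbf{y}_\alpha \mdot \textbf{y}_\alpha) \, \textbf{y}_\alpha)$. Since $f_{\alpha \tarrow \beta}$ and $\textbf{y}_\alpha$ are distinct variables, Axiom A4.1 collapses the first factor to $f_{\alpha \tarrow \beta}$, and Axiom A4.2 collapses the second to $\textbf{y}_\alpha$. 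Chaining these equations through the Equality Rules (Lemma~\ref{lem:eq-rules}) yields $\vdash (\LambdaApp \textbf{y}_\alpha \mdot f_{\alpha \tarrow \beta} \, \textbf{y}_\alpha) \, \textbf{y}_\alpha = f_{\alpha \tarrow \beta} \, \textbf{y}_\alpha$, and symmetry flips this into the form matching the right-hand side of A3.

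Next, I would generalize on $\textbf{y}_\alpha$ to obtain $\vdash \ForallApp \textbf{y}_\alpha \mdot (f_{\alpha \tarrow \beta} \, \textbf{y}_\alpha = (\LambdaApp \textbf{y}_\alpha \mdot f_{\alpha \tarrow \beta} \, \textbf{y}_\alpha) \, \textbf{y}_\alpha)$; with no hypotheses in play, Andrews' derived rule of generalization (the analogue of 5205 in \cite{Andrews02}, obtainable from reflexivity and Rule~R once one unfolds the definition of $\Forall$) applies cleanly. Finally, I apply A3 with $g_{\alpha \tarrow \beta}$ replaced by $\LambdaApp \textbf{y}_\alpha \mdot f_{\alpha \tarrow \beta} \, \textbf{y}_\alpha$ via Universal Instantiation: this instance of A3 is exactly the equivalence between the universal statement just derived and the target equation, so a single use of Rule~R delivers the conclusion.

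The only delicate step is the generalization move; the rest is mechanical use of the beta-reduction axioms and Extensionality. One should also check that $\textbf{y}_\alpha$ is chosen distinct from $f_{\alpha \tarrow \beta}$, as required for the application of A4.1, but this is harmless: the lemma is a schema in $\textbf{y}_\alpha$, and we may always pick a variable differing from $f_{\alpha \tarrow \beta}$.
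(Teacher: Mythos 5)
Your proposal takes essentially the same route as the paper: the paper's proof of this lemma is just a pointer to Andrews' 5205, and the Extensionality-plus-beta-reduction argument you sketch is precisely that proof.

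The one soft spot is the step you yourself flag as delicate. You should not discharge the generalization by appealing to a derived rule of Universal Generalization: in this paper that theorem is only established \emph{after} the present lemma (and, as an aside, your parenthetical cites 5205 --- which is this lemma itself --- rather than 5220 for that rule). The way Andrews handles it, and the way that avoids any forward reference, is to bypass generalization entirely: take a second instance of Axiom A3 with $g_{\alpha \tarrow \beta}$ instantiated to $f_{\alpha \tarrow \beta}$ itself (justified by Lemma~\ref{lem:sub-eqn}), which together with Lemma~\ref{lem:eq-reflex} and the Equality Rules yields $\proves{}{\ForallApp x_\alpha \mdot (f_{\alpha \tarrow \beta} \, x_\alpha = f_{\alpha \tarrow \beta} \, x_\alpha)}$ outright; then use Rule~R with the symmetrized beta equation $f_{\alpha \tarrow \beta} \, x_\alpha = (\LambdaApp \textbf{y}_\alpha \mdot f_{\alpha \tarrow \beta} \, \textbf{y}_\alpha) \, x_\alpha$ to rewrite under the quantifier. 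Note also that the bound variable of the universal in A3 is the fixed variable $x_\alpha$, not your $\textbf{y}_\alpha$, so the universal statement you must derive is $\ForallApp x_\alpha \mdot (f_{\alpha \tarrow \beta} \, x_\alpha = (\LambdaApp \textbf{y}_\alpha \mdot f_{\alpha \tarrow \beta} \, \textbf{y}_\alpha) \, x_\alpha)$; this costs nothing, since the left-hand side of the substituted A3 instance is already the target equation. Finally, the distinctness of $\textbf{y}_\alpha$ from $f_{\alpha \tarrow \beta}$ is automatic from the type subscripts, so no choice of variable is required and the lemma holds for every $\textbf{y}_\alpha$ of type $\alpha$, as it must for its later use in Lemma~\ref{lem:alpha-eq}.
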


\begin{proof}
Similar to the proof of 5205 in~\cite[p.~217]{Andrews02}.
\end{proof}

\begin{lem}\label{lem:alpha-eq}
If $\textbf{B}_\beta$ is eval-free and $\textbf{y}_\alpha$ is not free
in $\textbf{B}_\beta$, then\\ $\proves{}{\LambdaApp \textbf{x}_\alpha
  \mdot \textbf{B}_\beta = \LambdaApp \textbf{y}_\alpha \mdot
  \mname{SUB}(\textbf{y}_\alpha, \textbf{x}_\alpha,
  \textbf{B}_\beta)}$.
\end{lem}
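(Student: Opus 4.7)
The plan is to derive the renaming equation by instantiating the eta lemma at the specific function $\LambdaApp \textbf{x}_\alpha \mdot \textbf{B}_\beta$ and then reducing the resulting lambda application under the new binder $\LambdaApp \textbf{y}_\alpha$. The main ingredients are Lemma~\ref{lem:eta}, Lemma~\ref{lem:sub-eqn} (to perform the instantiation as a substitution into an equation), the Beta-Reduction by Substitution theorem, and Rule R.

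First, I apply Lemma~\ref{lem:sub-eqn} to Lemma~\ref{lem:eta} by substituting $\LambdaApp \textbf{x}_\alpha \mdot \textbf{B}_\beta$ for $f_{\alpha \tarrow \beta}$ in the equation $f_{\alpha \tarrow \beta} = \LambdaApp \textbf{y}_\alpha \mdot f_{\alpha \tarrow \beta} \, \textbf{y}_\alpha$. Unfolding $\mname{SUB}$ on each side: the left collapses to $\LambdaApp \textbf{x}_\alpha \mdot \textbf{B}_\beta$ via clause~2, while on the right, clause~5 lets the substitution pass through the outer $\LambdaApp \textbf{y}_\alpha$. The side condition for clause~5 is met because $f_{\alpha \tarrow \beta}$ and $\textbf{y}_\alpha$ are distinct (different types), and the substituend $\LambdaApp \textbf{x}_\alpha \mdot \textbf{B}_\beta$ is eval-free (since $\textbf{B}_\beta$ is) with $\textbf{y}_\alpha$ not free in it (since $\textbf{y}_\alpha$ is not free in $\textbf{B}_\beta$ by hypothesis). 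Further applications of clauses~4,~2, and~1 distribute the substitution through $f_{\alpha \tarrow \beta} \, \textbf{y}_\alpha$, producing
\[\vdash \LambdaApp \textbf{x}_\alpha \mdot \textbf{B}_\beta = \LambdaApp \textbf{y}_\alpha \mdot (\LambdaApp \textbf{x}_\alpha \mdot \textbf{B}_\beta) \, \textbf{y}_\alpha.\]

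Next, the Beta-Reduction by Substitution theorem yields $\vdash (\LambdaApp \textbf{x}_\alpha \mdot \textbf{B}_\beta) \, \textbf{y}_\alpha = \mname{SUB}(\textbf{y}_\alpha, \textbf{x}_\alpha, \textbf{B}_\beta)$. Because the lemma has no hypotheses ($\sH = \emptyset$), Rule ${\rm R}'$ reduces to the basic Rule R, whose side conditions only forbid substitution inside a quotation, in the first argument of a function abstraction, or in the second argument of an evaluation. The target occurrence of $(\LambdaApp \textbf{x}_\alpha \mdot \textbf{B}_\beta) \, \textbf{y}_\alpha$ sits in the body of $\LambdaApp \textbf{y}_\alpha \mdot \cdot$, a permitted position, so Rule R replaces it with $\mname{SUB}(\textbf{y}_\alpha, \textbf{x}_\alpha, \textbf{B}_\beta)$ and delivers the lemma.

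The main technical obstacle is verifying the clause~5 condition when pushing $\mname{SUB}$ under the outer binder $\LambdaApp \textbf{y}_\alpha$; once the eval-freeness of $\textbf{B}_\beta$ and the freshness of $\textbf{y}_\alpha$ are invoked, the remaining steps are routine. The degenerate case in which $\textbf{y}_\alpha$ is $\textbf{x}_\alpha$ can be handled separately by a short induction on $\textbf{B}_\beta$ showing that $\mname{SUB}(\textbf{x}_\alpha, \textbf{x}_\alpha, \textbf{B}_\beta)$ is provably equal to $\textbf{B}_\beta$, after which the conclusion follows from Lemma~\ref{lem:eq-reflex}.
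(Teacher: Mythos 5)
Your proposal is correct and follows essentially the same route as the paper's proof: instantiate Lemma~\ref{lem:eta} via Lemma~\ref{lem:sub-eqn}, unfold $\mname{SUB}$ using the eval-freeness of $\textbf{B}_\beta$ and the freshness of $\textbf{y}_\alpha$ to obtain $\LambdaApp \textbf{x}_\alpha \mdot \textbf{B}_\beta = \LambdaApp \textbf{y}_\alpha \mdot ((\LambdaApp \textbf{x}_\alpha \mdot \textbf{B}_\beta) \, \textbf{y}_\alpha)$, and finish by Beta-Reduction by Substitution and the Equality Rules. Your extra checks of the clause~5 side condition and of the Rule R position restrictions are exactly the details the paper leaves implicit.
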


\begin{proof}
\begin{align} \setcounter{equation}{0}
& 
\vdash f_{\alpha \tarrow \beta} = \LambdaApp \textbf{y}_\alpha \mdot 
f_{\alpha \tarrow \beta} \, \textbf{y}_\alpha\\
&
\vdash \mname{SUB}(\LambdaApp \textbf{x}_\alpha \mdot \textbf{B}_\beta, {\sglsp}
f_{\alpha \tarrow \beta}, {\sglsp}
f_{\alpha \tarrow \beta} = \LambdaApp \textbf{y}_\alpha \mdot 
f_{\alpha \tarrow \beta} \, \textbf{y}_\alpha)\\
&
\vdash \LambdaApp \textbf{x}_\alpha \mdot \textbf{B}_\beta = 
\LambdaApp \textbf{y}_\alpha \mdot 
((\LambdaApp \textbf{x}_\alpha \mdot \textbf{B}_\beta) \,
\textbf{y}_\alpha)\\
&
\vdash \LambdaApp \textbf{x}_\alpha \mdot \textbf{B}_\beta = 
\LambdaApp \textbf{y}_\alpha \mdot \mname{SUB}(\textbf{y}_\alpha,
\textbf{x}_\alpha, \textbf{B}_\beta)
\end{align}
(1) is Lemma~\ref{lem:eta}; (2) follows from (1) by
Lemma~\ref{lem:sub-eqn}; (3) follows from (2) and the hypothesis by
the definition of \mname{SUB}; and (4) follows from (3) by
Beta-Reduction by Substitution and the Equality Rules.
\end{proof}

\begin{cor}[Alpha-Equivalence] 
If $\textbf{B}_\beta$ is eval-free, $\textbf{y}_\alpha$ is not be free
in $\textbf{B}_\beta$, and $\textbf{y}_\alpha$ is free for
$\textbf{x}_\alpha$ in $\textbf{B}_\beta$, then $\LambdaApp
\textbf{x}_\alpha \mdot \textbf{B}_\beta$ and $\LambdaApp
\textbf{y}_\alpha \mdot \mname{SUB}(\textbf{y}_\alpha,
\textbf{x}_\alpha, \textbf{B}_\beta)$ are alpha-equivalent.
\end{cor}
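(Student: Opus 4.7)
The plan is to show that under the corollary's hypotheses, $\mname{SUB}(\textbf{y}_\alpha, \textbf{x}_\alpha, \textbf{B}_\beta)$ coincides with the ordinary textual replacement of the free occurrences of $\textbf{x}_\alpha$ in $\textbf{B}_\beta$ by $\textbf{y}_\alpha$. Once this is established, the conclusion is immediate from the standard definition of alpha-equivalence: renaming the bound variable $\textbf{x}_\alpha$ of $\LambdaApp \textbf{x}_\alpha \mdot \textbf{B}_\beta$ to the variable $\textbf{y}_\alpha$ (which does not occur free in $\textbf{B}_\beta$ and causes no capture) yields exactly $\LambdaApp \textbf{y}_\alpha \mdot \mname{SUB}(\textbf{y}_\alpha, \textbf{x}_\alpha, \textbf{B}_\beta)$.

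I would proceed by induction on the structure of $\textbf{B}_\beta$, using that, since $\textbf{B}_\beta$ is eval-free, it is built using only the first five formation rules. The cases of variables, constants, and quotations are handled directly by clauses 1--3 and 8 of the definition of $\mname{SUB}$, each of which matches what standard substitution would do (in particular, substitution does not descend into a quotation, and this agrees with the fact that $\textbf{x}_\alpha$ is never free inside a quotation). The function-application case follows from the induction hypothesis applied to the two immediate subterms together with clause 4.

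The crucial case is a function abstraction $\LambdaApp \textbf{z}_\gamma \mdot \textbf{C}_\delta$. If $\textbf{z}_\gamma$ is $\textbf{x}_\alpha$, clause 7 fires and no substitution is performed, matching standard capture-avoiding substitution. Otherwise, one must verify that clause 5 applies rather than clause 6, so that the substitution propagates into $\textbf{C}_\delta$. Since $\textbf{A}_\alpha$ is here $\textbf{y}_\alpha$, which is eval-free, condition (1) of clause 5 holds as soon as $\textbf{z}_\gamma \not= \textbf{y}_\alpha$; and the hypothesis that $\textbf{y}_\alpha$ is free for $\textbf{x}_\alpha$ in $\textbf{B}_\beta$ precisely rules out the combination $\textbf{z}_\gamma = \textbf{y}_\alpha$ together with a free occurrence of $\textbf{x}_\alpha$ inside $\textbf{C}_\delta$. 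When no such free occurrence exists, condition (2) of clause 5 applies instead because $\textbf{C}_\delta$ is eval-free and $\textbf{x}_\alpha$ is not free in it. In either subcase, the induction hypothesis gives the expected textual substitution in the body.

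The main obstacle is the bookkeeping in the abstraction case, where one has to carefully track which clause of $\mname{SUB}$ applies as a function of the freeness data; once this is sorted out, the rest of the argument reduces to invoking the hypotheses about freeness of $\textbf{y}_\alpha$ in $\textbf{B}_\beta$ and the standard definition of alpha-equivalence.
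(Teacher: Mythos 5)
Your argument is correct and matches the substance of the paper's one-line proof, which appeals to ``the definition of $\mname{SUB}$'': under the stated hypotheses, the clause-by-clause check you perform (in particular, that clause 5 rather than clause 6 fires at each abstraction node, using that $\textbf{y}_\alpha$ is a variable and is free for $\textbf{x}_\alpha$ in $\textbf{B}_\beta$) shows $\mname{SUB}(\textbf{y}_\alpha,\textbf{x}_\alpha,\textbf{B}_\beta)$ is exactly the ordinary capture-avoiding textual renaming, whence alpha-equivalence. The paper additionally cites Lemma~\ref{lem:alpha-eq} (the provable equality), but since alpha-equivalence is a purely syntactic, meta-level notion, your direct structural induction suffices without it.
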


\begin{proof}
By Lemma~\ref{lem:alpha-eq} and the definition of \mname{SUB}.
\end{proof}

\begin{rem}[Nominal Data Types]\em
Since alpha-conversion is not universally valid in {\churchqe}, it is
not clear whether techniques for managing variable naming and binding
--- such as \emph{higher-order abstract
  syntax}~\cite{Miller00,PfenningElliot88} and \emph{nominal
  techniques}~\cite{GabbayPitts02,Pitts03} --- are applicable to
{\churchqe}.  However, the paper~\cite{NanevskiPfenning05} does
combine quotation/evaluation techniques with nominal techniques.
\end{rem}

\begin{lem}\label{lem:univ-inst}
$\proves{}{\ForallApp \textbf{x}_\alpha \mdot \textbf{B}_o \Implies
    \mname{SUB}(\textbf{A}_\alpha, \textbf{x}_\alpha, \textbf{B}_o)}$.
\end{lem}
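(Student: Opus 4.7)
The plan is to give a direct Andrews-style derivation (mimicking the proof of Theorem 5220 in \cite{Andrews02}) using the Beta-Reduction by Substitution theorem just proved and Leibniz' Law (Axiom A2). The key insight is that $\ForallApp \textbf{x}_\alpha \mdot \textbf{B}_o$ unfolds to the equation $(\LambdaApp \textbf{x}_\alpha \mdot T_o) = (\LambdaApp \textbf{x}_\alpha \mdot \textbf{B}_o)$, so Leibniz' Law can be invoked to transport this equation into an equation between the two sides applied to $\textbf{A}_\alpha$, and then beta-reduction carries us to the desired substitution instance.

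First I would apply Beta-Reduction by Substitution twice to obtain
\begin{align*}
\vdash {} & (\LambdaApp \textbf{x}_\alpha \mdot T_o) \, \textbf{A}_\alpha = T_o, \\
\vdash {} & (\LambdaApp \textbf{x}_\alpha \mdot \textbf{B}_o) \, \textbf{A}_\alpha = \mname{SUB}(\textbf{A}_\alpha, \textbf{x}_\alpha, \textbf{B}_o).
\end{align*}
The first equation holds because $T_o$ is a closed combination of constants, so $\mname{SUB}(\textbf{A}_\alpha, \textbf{x}_\alpha, T_o) = T_o$ by iterated application of clauses 3 and 4 of the definition of $\mname{SUB}$.

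Next I would instantiate Axiom A2 with $\alpha := \alpha \tarrow o$, $x_{\alpha \tarrow o} := \LambdaApp \textbf{x}_\alpha \mdot T_o$, $y_{\alpha \tarrow o} := \LambdaApp \textbf{x}_\alpha \mdot \textbf{B}_o$, and $h_{(\alpha \tarrow o) \tarrow o} := \LambdaApp f_{\alpha \tarrow o} \mdot f_{\alpha \tarrow o} \, \textbf{A}_\alpha$, then beta-reduce the two applications of $h$ to obtain
\[\vdash (\LambdaApp \textbf{x}_\alpha \mdot T_o) = (\LambdaApp \textbf{x}_\alpha \mdot \textbf{B}_o) \Implies (\LambdaApp \textbf{x}_\alpha \mdot T_o) \, \textbf{A}_\alpha = (\LambdaApp \textbf{x}_\alpha \mdot \textbf{B}_o) \, \textbf{A}_\alpha.\]
The antecedent is precisely $\ForallApp \textbf{x}_\alpha \mdot \textbf{B}_o$, and two applications of Rule ${\rm R}'$ using the equations displayed above rewrite the consequent as $T_o = \mname{SUB}(\textbf{A}_\alpha, \textbf{x}_\alpha, \textbf{B}_o)$.

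Finally, $\vdash T_o$ holds by Reflexivity of Equality applied to $\mname{=}_{o \tarrow o \tarrow o}$, so a last application of Rule ${\rm R}'$ replaces $T_o = \mname{SUB}(\textbf{A}_\alpha, \textbf{x}_\alpha, \textbf{B}_o)$ by $\mname{SUB}(\textbf{A}_\alpha, \textbf{x}_\alpha, \textbf{B}_o)$ within the consequent, yielding the Lemma. The main obstacle I expect is the instantiation step for Axiom A2: the axiom is stated with designated variables, and replacing them by the specific expressions above requires a substitution mechanism for free variables in theorems, which can be supplied either by the meta-theoretic Universal Instantiation just proved (after a preliminary generalization) or by a careful bottom-up chain of Rule ${\rm R}'$ steps. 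The side conditions of Rule ${\rm R}'$ involving $\mname{IS-EFFECTIVE-IN}$ are straightforward to discharge here because every replacement takes place in an eval-free context and so reduces, via Axiom B12, to the ordinary syntactic notion of freeness.
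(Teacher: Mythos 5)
Your overall route is exactly the one the paper intends (it cites Andrews' 5226, whose proof is the Leibniz-plus-beta-reduction argument you describe), and the first two stages are fine: $(\LambdaApp \textbf{x}_\alpha \mdot T_o)\,\textbf{A}_\alpha = T_o$ and $(\LambdaApp \textbf{x}_\alpha \mdot \textbf{B}_o)\,\textbf{A}_\alpha = \mname{SUB}(\textbf{A}_\alpha,\textbf{x}_\alpha,\textbf{B}_o)$ do follow from Beta-Reduction by Substitution, and the instance of Axiom A2 (obtainable via Universal Generalization followed by Universal Instantiation, with $f$ and $h$ chosen fresh for $\textbf{A}_\alpha$) does yield $\proves{}{\ForallApp \textbf{x}_\alpha \mdot \textbf{B}_o \Implies (T_o = \mname{SUB}(\textbf{A}_\alpha,\textbf{x}_\alpha,\textbf{B}_o))}$ after two legitimate applications of Rule ${\rm R}'$ inside the consequent.

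The gap is in your last step. Rule ${\rm R}'$ replaces an occurrence of $\textbf{X}$ by $\textbf{Y}$ only when $\textbf{X} = \textbf{Y}$ has itself been \emph{derived}; to turn the consequent $T_o = \mname{SUB}(\textbf{A}_\alpha,\textbf{x}_\alpha,\textbf{B}_o)$ into $\mname{SUB}(\textbf{A}_\alpha,\textbf{x}_\alpha,\textbf{B}_o)$ you therefore need the equation $(T_o = \textbf{C}_o) = \textbf{C}_o$ as a theorem, not merely $\proves{}{T_o}$. (The trick of rewriting the theorem $T_o$ itself with the equation $T_o = \mname{SUB}(\ldots)$ works in the \emph{rule} form of universal instantiation, where that equation is derivable from the hypothesis $\ForallApp \textbf{x}_\alpha \mdot \textbf{B}_o$; here it sits under an implication and is not available as a theorem.) The missing bridging lemma is the analogue of Andrews' 5218, $\proves{}{(T_o = \textbf{A}_o) = \textbf{A}_o}$, which cannot be obtained from reflexivity: its proof requires Axiom A1 (case analysis on the two truth values) via the chain 5210--5218. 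Since neither the Deduction Theorem nor the Tautology Theorem is available at this point in the development, you must either establish this lemma first or restructure the ending; as written, the final inference is a non sequitur.
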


\begin{proof}
Similar to the proof of 5226 in~\cite[p.~224]{Andrews02}.
\end{proof}

\begin{thm}[Universal Generalization]
If $\proves{T,\sH}{\textbf{A}_o}$, then\\ $\proves{T,\sH}{\ForallApp
  \textbf{x}_\alpha \mdot \textbf{A}_o}$ provided $\proves{T,\sH}{\Neg
  \mname{IS-EFFECTIVE-IN}(\textbf{x}_\alpha,\textbf{H}_o)}$ for all
$\textbf{H}_o \in \sH$.
\end{thm}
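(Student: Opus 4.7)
The plan is to mimic Andrews' proof of 5220 in \cite[p.~222]{Andrews02}, adapted to work with Rule ${\rm R}'$ and its new side condition. Recall that $\ForallApp \textbf{x}_\alpha \mdot \textbf{A}_o$ stands for $(\LambdaApp \textbf{x}_\alpha \mdot T_o) = (\LambdaApp \textbf{x}_\alpha \mdot \textbf{A}_o)$, so our goal is to establish this equation.

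First I would show the auxiliary fact $\proves{T,\sH}{\textbf{A}_o = T_o}$ from the hypothesis $\proves{T,\sH}{\textbf{A}_o}$. This is a purely propositional step that follows by the usual \qzero\ reasoning: from the tautology $T_o$ (provable via an instance of A1 together with Rule R) and from $\textbf{A}_o$, one derives $\textbf{A}_o \Iff T_o$, which is definitionally $\textbf{A}_o = T_o$. This step does not involve the variable $\textbf{x}_\alpha$ in any special way and reuses a standard \qzero\ proof verbatim.

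Next I would combine this with the reflexive equation $\proves{}{(\LambdaApp \textbf{x}_\alpha \mdot \textbf{A}_o) = (\LambdaApp \textbf{x}_\alpha \mdot \textbf{A}_o)}$ given by Lemma~\ref{lem:eq-reflex}, and apply Rule ${\rm R}'$ to replace the underlined occurrence of $\textbf{A}_o$ on the left-hand side by $T_o$, yielding $\proves{T,\sH}{(\LambdaApp \textbf{x}_\alpha \mdot T_o) = (\LambdaApp \textbf{x}_\alpha \mdot \textbf{A}_o)}$. By the definition of $\Forall$ this is exactly $\proves{T,\sH}{\ForallApp \textbf{x}_\alpha \mdot \textbf{A}_o}$, as desired.

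The delicate step, and the place where the theorem's side condition is used, is verifying that Rule ${\rm R}'$ is legitimately applicable. The targeted occurrence of $\textbf{A}_o$ lies inside the function abstraction $\LambdaApp \textbf{x}_\alpha \mdot \textbf{A}_o$, so Rule ${\rm R}'$ demands that either $\proves{T,\sH}{\Neg\mname{IS-EFFECTIVE-IN}(\textbf{x}_\alpha,\textbf{A}_o = T_o)}$ or $\proves{T,\sH}{\Neg\mname{IS-EFFECTIVE-IN}(\textbf{x}_\alpha,\textbf{H}_o)}$ for every $\textbf{H}_o \in \sH$. The latter condition is precisely the hypothesis of the theorem, so the side condition is satisfied and the application of Rule ${\rm R}'$ goes through. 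The occurrence is not within a quotation, not in the first argument of an abstraction, and not in the second argument of an evaluation, so the remaining syntactic restrictions of Rule ${\rm R}'$ are trivially met. I expect this bookkeeping around Rule ${\rm R}'$'s restrictions to be the only substantive obstacle; the rest of the argument transfers directly from Andrews.
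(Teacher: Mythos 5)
Your proposal is correct and takes essentially the same route as the paper, which proves this theorem simply by deferring to Andrews' proof of 5220 with ``is effective in'' substituted for ``is free in'': derive $\proves{T,\sH}{\textbf{A}_o = T_o}$ by standard propositional reasoning, then apply Rule ${\rm R}'$ to a reflexive equation between the lambda abstractions to obtain $(\LambdaApp \textbf{x}_\alpha \mdot T_o) = (\LambdaApp \textbf{x}_\alpha \mdot \textbf{A}_o)$, with the theorem's hypothesis discharging the disjunctive side condition for replacement inside $\LambdaApp \textbf{x}_\alpha \mdot {}$. You have correctly identified the only delicate point, namely that the new side condition of Rule ${\rm R}'$ is exactly what the proviso of the theorem supplies.
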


\begin{proof}
Similar to the proof of 5220 in~\cite[p.~222]{Andrews02}.  The notion
of ``is effective in'' is used in place of the notion of ``is free
in''.
\end{proof}

\begin{thm}[Existential Generalization]
If $\proves{T,\sH}{\mname{SUB}(\textbf{A}_\alpha, \textbf{x}_\alpha,
  \textbf{B}_o)}$,\\ then $\proves{T,\sH}{\ForsomeApp \textbf{x}_\alpha
  \mdot \textbf{B}_o}$.
\end{thm}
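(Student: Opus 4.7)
The plan is to reduce the theorem to Lemma~\ref{lem:univ-inst} by unfolding the existential quantifier. Recall that $\ForsomeApp \textbf{x}_\alpha \mdot \textbf{B}_o$ stands for $\Neg(\ForallApp \textbf{x}_\alpha \mdot \Neg \textbf{B}_o)$, so the target formula is $\Neg(\ForallApp \textbf{x}_\alpha \mdot \Neg\textbf{B}_o)$.

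First I would invoke Lemma~\ref{lem:univ-inst} instantiated at the formula $\Neg\textbf{B}_o$ to obtain
\[\proves{}{\ForallApp \textbf{x}_\alpha \mdot \Neg\textbf{B}_o \Implies \mname{SUB}(\textbf{A}_\alpha, \textbf{x}_\alpha, \Neg\textbf{B}_o)}.\]
Next I would show that \mname{SUB} distributes over the defined negation. Since $\Neg\textbf{B}_o$ abbreviates the function application $\Neg_{o \tarrow o} \, \textbf{B}_o$, clause~4 of the definition of \mname{SUB} gives
\[\mname{SUB}(\textbf{A}_\alpha, \textbf{x}_\alpha, \Neg\textbf{B}_o) = \mname{SUB}(\textbf{A}_\alpha, \textbf{x}_\alpha, \Neg_{o \tarrow o}) \, \mname{SUB}(\textbf{A}_\alpha, \textbf{x}_\alpha, \textbf{B}_o),\]
and clause~3 (the constant case, applicable because $\Neg_{o \tarrow o} \in \sC$) collapses the first factor to $\Neg_{o \tarrow o}$. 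Thus $\mname{SUB}(\textbf{A}_\alpha, \textbf{x}_\alpha, \Neg\textbf{B}_o)$ is literally the same string as $\Neg\mname{SUB}(\textbf{A}_\alpha, \textbf{x}_\alpha, \textbf{B}_o)$.

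Combining these, I would derive
\[\proves{}{\ForallApp \textbf{x}_\alpha \mdot \Neg\textbf{B}_o \Implies \Neg\mname{SUB}(\textbf{A}_\alpha, \textbf{x}_\alpha, \textbf{B}_o)}\]
and then pass to its contrapositive, obtaining
\[\proves{}{\mname{SUB}(\textbf{A}_\alpha, \textbf{x}_\alpha, \textbf{B}_o) \Implies \Neg(\ForallApp \textbf{x}_\alpha \mdot \Neg\textbf{B}_o)},\]
which by the definition of $\Forsome$ is exactly $\mname{SUB}(\textbf{A}_\alpha, \textbf{x}_\alpha, \textbf{B}_o) \Implies \ForsomeApp \textbf{x}_\alpha \mdot \textbf{B}_o$. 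Finally, modus ponens (derivable from A1--A3 and Rule R, as in Andrews) applied to this implication and the hypothesis $\proves{T,\sH}{\mname{SUB}(\textbf{A}_\alpha, \textbf{x}_\alpha, \textbf{B}_o)}$ yields the conclusion. This parallels Andrews' derivation of 5227 in~\cite[p.~224]{Andrews02}.

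The main obstacle is essentially notational: one must verify that $\Neg_{o \tarrow o}$ really is treated by \mname{SUB} as a constant (it is, since it is a defined logical constant belonging to $\sC$) and that contrapositive reasoning together with modus ponens is available from the tautological consequences of A1--A3 already derivable at this point in the development. Both concerns are handled by the standard propositional infrastructure for {\qzero}, so the argument is genuinely a short corollary of Lemma~\ref{lem:univ-inst} together with the definitional behavior of \mname{SUB} on applications and constants.
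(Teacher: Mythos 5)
Your argument is correct and is essentially the paper's own proof: the paper simply defers to Andrews' 5242, whose derivation is exactly this contrapositive of universal instantiation (the paper's Lemma~\ref{lem:univ-inst}) applied to $\Neg\textbf{B}_o$, followed by propositional reasoning and modus ponens. The only imprecision is that $\Neg_{o \tarrow o}$ is an abbreviation for ${=_{o \tarrow o \tarrow o}} \, F_o$ rather than a member of $\sC$, but since that expression is closed and eval-free, $\mname{SUB}$ still passes through it unchanged, so the identity $\mname{SUB}(\textbf{A}_\alpha,\textbf{x}_\alpha,\Neg\textbf{B}_o) = \Neg\mname{SUB}(\textbf{A}_\alpha,\textbf{x}_\alpha,\textbf{B}_o)$ holds as you claim.
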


\begin{proof}
Similar to the proof of 5242 in~\cite[p.~229]{Andrews02}.  
\end{proof}

\subsection{Propositional Reasoning}

\begin{thm}[Modus Ponens]
If $\proves{T,\sH}{\textbf{A}_o}$ and $\proves{T,\sH}{\textbf{A}_o
  \Implies \textbf{B}_o}$, then $\proves{T,\sH}{\textbf{B}_o}$.
\end{thm}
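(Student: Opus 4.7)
The plan is to unfold the definitions of $\Implies$ and $\And$ via beta-reduction and then extract $\textbf{B}_o$ from the resulting conjunction by applying it to the second-projection function $\LambdaApp x_o \mdot \LambdaApp y_o \mdot y_o$. This is the standard strategy used in Andrews' treatment of {\qzero}, and it carries over here because the entire derivation stays within the eval-free fragment of {\churchqe}.

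First, by two applications of the Beta-Reduction by Substitution theorem together with the Equality Rules (Lemma~\ref{lem:eq-rules}), I obtain
\[
\proves{}{(\textbf{A}_o \Implies \textbf{B}_o) = (\textbf{A}_o = (\textbf{A}_o \And \textbf{B}_o))}
\]
by unfolding the definition of $\Implies_{o \tarrow o \tarrow o}$. Combining this with the hypothesis $\proves{T,\sH}{\textbf{A}_o \Implies \textbf{B}_o}$ via Equality Rule 5 yields $\proves{T,\sH}{\textbf{A}_o = (\textbf{A}_o \And \textbf{B}_o)}$, and a further application of Rule~${\rm R}'$ using $\proves{T,\sH}{\textbf{A}_o}$ (replacing the single occurrence of $\textbf{A}_o$ by $\textbf{A}_o \And \textbf{B}_o$, a top-level substitution requiring no side conditions) produces $\proves{T,\sH}{\textbf{A}_o \And \textbf{B}_o}$.

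Next, two more beta-reductions against the definition of $\wedge_{o \tarrow o \tarrow o}$ give
\[
\proves{}{(\textbf{A}_o \And \textbf{B}_o) = ((\LambdaApp g_{o \tarrow o \tarrow o} \mdot g_{o \tarrow o \tarrow o} \, T_o \, T_o) = (\LambdaApp g_{o \tarrow o \tarrow o} \mdot g_{o \tarrow o \tarrow o} \, \textbf{A}_o \, \textbf{B}_o))}.
\]
Equality Rule 5 then produces the inner equation, and Equality Rule 3 (together with reflexivity of equality on $\LambdaApp x_o \mdot \LambdaApp y_o \mdot y_o$) lets me apply both sides to the projection function $\LambdaApp x_o \mdot \LambdaApp y_o \mdot y_o$. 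Beta-reducing each side, by four further uses of Beta-Reduction by Substitution, collapses the equation to $T_o = \textbf{B}_o$. Since $T_o$ is literally $=_{o \tarrow o \tarrow o} \, = \, =_{o \tarrow o \tarrow o}$, we have $\proves{}{T_o}$ directly from Lemma~\ref{lem:eq-reflex}, and one final application of Equality Rule 5 delivers $\proves{T,\sH}{\textbf{B}_o}$.

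The main obstacle is bookkeeping rather than any genuine difficulty: the derivation threads multiple nested beta-reductions through the definitions of $\Implies$ and $\And$, and each application of Rule~${\rm R}'$ in principle carries side conditions involving $\mname{IS-EFFECTIVE-IN}$. However, every expression appearing in this proof is eval-free, so Axiom~B12 reduces each $\Neg\mname{IS-EFFECTIVE-IN}$ obligation to the corresponding purely syntactic ``not free in'' condition, which is immediate by inspection in every case. Thus the argument is essentially a transcription of Andrews' proof of Modus Ponens for {\qzero}.
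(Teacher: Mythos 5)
Your argument has a genuine gap: the claim that ``every expression appearing in this proof is eval-free'' is false, since $\textbf{A}_o$ and $\textbf{B}_o$ are arbitrary formulas of $T$ and may contain evaluations (Modus Ponens is in fact applied to such formulas later in the paper). This breaks the proof at its first step. Unfolding $\textbf{A}_o \Implies \textbf{B}_o$ requires reducing $(\LambdaApp x_o \mdot \LambdaApp y_o \mdot (x_o = (x_o \And y_o))) \, \textbf{A}_o$, i.e.\ pushing $\textbf{A}_o$ under the binder $\LambdaApp y_o$. Because $x_o$ is free in the (eval-free) body $x_o = (x_o \And y_o)$, condition (2) of Axiom A4.5 and of clause 5 of $\mname{SUB}$ fails, and condition (1) requires $\textbf{A}_o$ to be eval-free with $y_o$ not free in it; when $\textbf{A}_o$ is not eval-free, clause 6 of $\mname{SUB}$ applies and Beta-Reduction by Substitution returns the unreduced redex, so you do not obtain $(\textbf{A}_o \Implies \textbf{B}_o) = (\textbf{A}_o = (\textbf{A}_o \And \textbf{B}_o))$. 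Axiom B13 cannot rescue the step: it would require proving either $\Neg\mname{IS-EFFECTIVE-IN}(x_o, x_o = (x_o \And y_o))$, which is false (take $y_o \mapsto \FALSE$) and hence unprovable by soundness, or $\Neg\mname{IS-EFFECTIVE-IN}(y_o,\textbf{A}_o)$, which is unprovable for arbitrary $\textbf{A}_o$. The same obstruction recurs, more severely, in your extraction step: rewriting $\textbf{A}_o \And \textbf{B}_o$ as $(\LambdaApp g_{o \tarrow o \tarrow o} \mdot g \, T_o \, T_o) = (\LambdaApp g_{o \tarrow o \tarrow o} \mdot g \, \textbf{A}_o \, \textbf{B}_o)$ requires pushing $\textbf{A}_o$ and $\textbf{B}_o$ under $\LambdaApp g_{o \tarrow o \tarrow o}$, whose body $g \, x_o \, y_o$ has $x_o$ and $y_o$ free, so the same clauses block it.

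The adaptation of Andrews' 5224 that survives in {\churchqe} never substitutes $\textbf{A}_o$ or $\textbf{B}_o$ under a binder. From $\proves{T,\sH}{\textbf{A}_o}$ and the instance $\proves{}{(T_o = \textbf{A}_o) = \textbf{A}_o}$ of the eval-free theorem $\proves{}{\ForallApp x_o \mdot ((T_o = x_o) = x_o)}$ (this Universal Instantiation is safe because $(T_o = x_o) = x_o$ contains no abstractions, so $\mname{SUB}$ only traverses applications), one gets $\proves{T,\sH}{\textbf{A}_o = T_o}$; Rule ${\rm R}'$ then rewrites the argument occurrence of $\textbf{A}_o$ in $\Implies_{o \tarrow o \tarrow o} \textbf{A}_o \, \textbf{B}_o$ to $T_o$ with no side conditions, since that occurrence lies under no abstraction, quotation, or evaluation. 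Finally $\proves{}{(T_o \Implies \textbf{B}_o) = \textbf{B}_o}$ follows by Universal Instantiation from the closed eval-free theorem $\proves{}{\ForallApp y_o \mdot ((T_o \Implies y_o) = y_o)}$, provable by cases via Axiom A1; this instantiation succeeds for arbitrary $\textbf{B}_o$ because every abstraction in $(T_o \Implies y_o) = y_o$ either binds $y_o$ itself or binds $x_o$ over a body of the form $\LambdaApp y_o \mdot \cdots$ in which $y_o$ is not free, so clauses 5(2) and 7 of $\mname{SUB}$ carry $\textbf{B}_o$ through without placing it under a binder. Your derivation does establish the special case where $\textbf{A}_o$ and $\textbf{B}_o$ are eval-free and avoid the bound variables $x_o$, $y_o$, $g_{o \tarrow o \tarrow o}$, but the theorem is stated, and needed, without that restriction.
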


\begin{proof}
Similar to the proof of 5224 in~\cite[p.~226]{Andrews02}.
\end{proof}

\begin{thm}[Tautology Theorem]
If $\textbf{A}_o$ is a substitution instance of a a tautology, then
$\proves{}{\textbf{A}_o}$.
\end{thm}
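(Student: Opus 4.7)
The plan is to adapt Andrews' proof of the Tautology Theorem for {\qzero}~\cite[p.~228]{Andrews02}, which should transfer to {\churchqe} essentially verbatim since every axiom and rule used there (A1--3, A4.1--6, Rule R, and the derived equality and substitution machinery of the preceding subsections) remains available. The argument will proceed in three stages. First, I will establish a handful of basic propositional identities such as $\proves{}{T_o \And T_o = T_o}$, $\proves{}{T_o \And F_o = F_o}$, $\proves{}{\Neg T_o = F_o}$, and the analogues for $\Or$ and $\Implies$, by unfolding the definitions in Table~\ref{tab:defs} and applying Beta-Reduction by Substitution, Extensionality (A3), Leibniz' Law (A2), and the Equality Rules of Lemma~\ref{lem:eq-rules}. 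Second, I will derive a Cases Theorem: if $\proves{T,\sH}{\mname{SUB}(T_o, x_o, \textbf{B}_o)}$ and $\proves{T,\sH}{\mname{SUB}(F_o, x_o, \textbf{B}_o)}$, then $\proves{T,\sH}{\textbf{B}_o}$. This should follow from Axiom A1 by instantiating $g_{o \tarrow o}$ with $\LambdaApp x_o \mdot \textbf{B}_o$, combining the two hypotheses into a conjunction via the definition of $\And$, extracting the universal $\ForallApp x_o \mdot (\LambdaApp x_o \mdot \textbf{B}_o) \, x_o$, and concluding $\textbf{B}_o$ by Universal Instantiation and beta-reduction.

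Third, the Tautology Theorem itself will be proved by first showing that the underlying tautology $\textbf{B}_o$ in its propositional variables $p^1_o, \ldots, p^n_o$ is provable, by induction on $n$. The base case reduces, via the truth-table identities and the Equality Rules, to $\proves{}{T_o = T_o}$, which holds by Lemma~\ref{lem:eq-reflex}; the inductive step applies the Cases Theorem at $p^n_o$, observing that $\mname{SUB}(T_o, p^n_o, \textbf{B}_o)$ and $\mname{SUB}(F_o, p^n_o, \textbf{B}_o)$ are tautologies in $n-1$ variables and hence provable by the induction hypothesis. Any substitution instance $\textbf{A}_o$ of $\textbf{B}_o$ will then follow by iterating Universal Generalization (whose side condition on $\mname{IS-EFFECTIVE-IN}$ is vacuous since $\sH = \emptyset$) to obtain $\proves{}{\ForallApp p^1_o \cdots \ForallApp p^n_o \mdot \textbf{B}_o}$, followed by $n$ applications of Universal Instantiation.

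The main obstacle will be verifying that $\mname{SUB}$ distributes correctly over the propositional connectives during the induction and that the final Universal Instantiations substitute without surprise. Clauses 5--7 of the definition of $\mname{SUB}$ restrict substitution into function abstractions and clauses 9--10 restrict it into evaluations, so one must confirm that the restrictive branches never fire. Since $T_o$ and $F_o$ are closed and eval-free, since $\textbf{B}_o$ contains no evaluations, and since the auxiliary bound variables introduced by the definitions of the connectives ($x_o$, $y_o$, $g_{o \tarrow o \tarrow o}$) can be alpha-renamed away from the free variables of any substituend using the Alpha-Equivalence corollary following Lemma~\ref{lem:alpha-eq}, only the permissive clauses of $\mname{SUB}$ apply, and distributivity laws such as $\proves{}{\mname{SUB}(X, p, \textbf{B}_o \And \textbf{C}_o) = \mname{SUB}(X, p, \textbf{B}_o) \And \mname{SUB}(X, p, \textbf{C}_o)}$ should follow by direct unfolding. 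This reduces the {\churchqe} argument essentially to the {\qzero} one.
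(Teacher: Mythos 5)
Your proposal is correct and follows essentially the same route as the paper, which simply defers to Andrews' proof of 5234 in \cite{Andrews02}: derive the basic truth-table identities, obtain a Cases rule from Axiom A1, prove the underlying tautology by induction on the number of propositional variables, and pass to the substitution instance via Universal Generalization and Universal Instantiation. Your closing check that only the permissive clauses of $\mname{SUB}$ fire (because the connective definitions are closed, eval-free abstractions in which the propositional variables do not occur free) is exactly the point that makes the {\qzero} argument transfer to {\churchqe}.
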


\begin{proof}
Similar to the proof of 5234 in~\cite[p.~227]{Andrews02}.
\end{proof}

\begin{lem}\label{lem:forall-or}
If $\proves{T,\sH}{\Neg \mname{IS-EFFECTIVE-IN}(\textbf{x}_\alpha,
  \textbf{A}_o)}$, then \[\proves{T,\sH}{\ForallApp \textbf{x}_\alpha
  \mdot (\textbf{A}_o \Or \textbf{B}_o) \Implies (\textbf{A}_o \Or
  \ForallApp \textbf{x}_\alpha \mdot \textbf{B}_o)}.\]
\end{lem}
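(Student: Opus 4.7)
I would mimic the classical first-order proof: assume $\forall x.(A \Or B)$ and do case analysis on $A$; if $A$ holds, disjunction introduction finishes; if $\Neg A$, then the premise forces $B$ for every $x$, giving $\forall x.B$. To formalize this in {\churchqe}, first apply the Tautology Theorem to obtain $\proves{}{(A \Or B) \Iff (\Neg A \Implies B)}$ and $\proves{}{(A \Or \forall x.B) \Iff (\Neg A \Implies \forall x.B)}$, reducing the goal to
\[
\proves{T,\sH}{\forall x.(A \Or B) \Implies (\Neg A \Implies \forall x.B)}.
\]

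Next, using the Deduction Theorem (a standard derived tool, proved in the style of Andrews for {\qzero}), the goal further reduces to $\proves{T,\, \sH \cup \set{\forall x.(A \Or B),\, \Neg A}}{\forall x.B}$. In this expanded context, Universal Instantiation applied to $\forall x.(A \Or B)$ produces $A \Or B$; Modus Ponens with the tautology $(A \Or B) \Implies (\Neg A \Implies B)$ yields $\Neg A \Implies B$; a second Modus Ponens with the hypothesis $\Neg A$ gives $B$; and finally Universal Generalization lifts $B$ to $\forall x.B$. A closing use of the Tautology Theorem then converts the implication form back into $A \Or \forall x.B$.

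The main obstacle is discharging the ``not effective in'' side conditions that Universal Generalization demands for each newly added hypothesis (since in {\churchqe} this condition replaces the purely syntactic ``not free in'' of classical systems). For $\Neg A$, the condition follows from the assumption $\proves{T,\sH}{\Neg \mname{IS-EFFECTIVE-IN}(x, A)}$ together with a small lemma propagating non-effectiveness through the application $\Neg_{o \tarrow o}\, A$, which is proved by beta-reducing $(\LambdaApp x \mdot \Neg_{o \tarrow o}\, A)\, y$ via A4.3 and A4.4. For $\forall x.(A \Or B)$, I would prove once and for all that $\proves{}{\Neg \mname{IS-EFFECTIVE-IN}(x, \forall x.P)}$ for every $P$; this holds by beta-reducing $(\LambdaApp x \mdot \forall x.P)\, y$ using A4.3, A4.4, and A4.6 applied to the expansion $\forall x.P \equiv (\LambdaApp x \mdot T_o) = (\LambdaApp x \mdot P)$, where each inner $\LambdaApp x$-subterm is returned unchanged by A4.6. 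A secondary subtlety, worth flagging, is that one cannot simply rewrite $A \Or B$ to $\Neg A \Implies B$ inside the binder of $\forall x.(\cdot)$ via Rule $R'$, because the effective-in side condition of $R'$ on the rewriting equation can fail when $B$ depends on $x$; this is precisely why the proof routes through the Deduction Theorem plus UI/UG rather than attempting an in-place equational rewrite under the quantifier.
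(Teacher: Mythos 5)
Your high-level strategy is the textbook first-order one, but as written it does not go through in this development, for two reasons. First, you route the argument through the Deduction Theorem. In the paper, as in Andrews' {\qzero} (where this lemma is 5235 and the Deduction Theorem is 5240), the Deduction Theorem is proved \emph{after} this lemma and depends on it: the induction case in which Rule ${\rm R}'$ is applied under a binder is handled precisely by pulling a hypothesis out of a universal quantifier, which is what this lemma supplies. Invoking the Deduction Theorem here is therefore circular. Second, and more fundamentally, your Universal Generalization step takes place in the context $\sH \cup \set{\ForallApp \textbf{x}_\alpha \mdot (\textbf{A}_o \Or \textbf{B}_o), \Neg\textbf{A}_o}$, and Universal Generalization requires $\Neg\mname{IS-EFFECTIVE-IN}(\textbf{x}_\alpha,\textbf{H}_o)$ to be provable for \emph{every} member of the hypothesis set, not only for the two hypotheses you added. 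You discharge the side condition only for the new hypotheses; for the members of the ambient $\sH$ nothing is known (the lemma's hypothesis concerns only $\textbf{A}_o$), and $\sH$ is arbitrary. Your argument thus establishes only a weakened statement in which $\textbf{x}_\alpha$ is additionally assumed not effective in every member of $\sH$, which is not what is needed downstream (e.g., in Lemma~\ref{lem:not-effective-syn} and, through it, in the Deduction Theorem itself).

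The paper's proof, following Andrews' 5235, avoids both problems by working at the theorem level and doing a truth-value case analysis on $\textbf{A}_o$ rather than a case analysis by hypotheses: one proves the two instances of the goal with $\textbf{A}_o$ replaced by $T_o$ and by $F_o$ using the Tautology Theorem and Rule R (the $F_o$ case rewrites $F_o \Or \textbf{B}_o$ to $\textbf{B}_o$ under the binder, which carries no proviso when the hypothesis set is empty), combines the two cases via Axiom A1, and finally beta-reduces an application of the form $(\LambdaApp x_o \mdot \ForallApp \textbf{x}_\alpha \mdot (x_o \Or \textbf{B}_o) \Implies \cdots) \, \textbf{A}_o$ to recover the goal. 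That last beta-reduction must push $\textbf{A}_o$ under $\LambdaApp \textbf{x}_\alpha$; Axiom A4.5's syntactic side conditions fail there ($\textbf{A}_o$ need not be eval-free and $x_o$ is free in $x_o \Or \textbf{B}_o$), and this is exactly the ``last step'' for which the paper substitutes Axiom B13, whose condition $\Neg\mname{IS-EFFECTIVE-IN}(\textbf{x}_\alpha,\textbf{A}_o)$ is the lemma's hypothesis. Your proposal never touches B13, which is a symptom of having diverged from the argument the lemma is designed to support.
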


\begin{proof}
Similar to the proof of 5235 in \cite[p.~227]{Andrews02} except Axiom
B13 is needed in the last step of the proof.
\end{proof}

\begin{lem}\label{lem:not-effective-syn}
If $\proves{T,\sH}{\Neg \mname{IS-EFFECTIVE-IN}(\textbf{x}_\alpha,
  \textbf{A}_o)}$, then \[\proves{T,\sH}{(\ForallApp \textbf{x}_\alpha
  \mdot \textbf{A}_o) \Iff \textbf{A}_o}.\]
\end{lem}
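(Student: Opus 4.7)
The plan is to establish each direction of the biconditional separately, in the style of Andrews' 5237, and then combine them by propositional reasoning.

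For the forward direction $\ForallApp \textbf{x}_\alpha \mdot \textbf{A}_o \Implies \textbf{A}_o$, I would apply Lemma~\ref{lem:univ-inst} with the substituted expression chosen to be $\textbf{x}_\alpha$ itself. A routine structural induction on $\textbf{A}_o$ using the clauses defining $\mname{SUB}$ (in particular clauses 2, 7, and 9, which return the identity when a variable is substituted for itself in a variable, a self-binding abstraction, and an evaluation) shows that $\mname{SUB}(\textbf{x}_\alpha, \textbf{x}_\alpha, \textbf{A}_o)$ is literally $\textbf{A}_o$. Thus the relevant instance of Lemma~\ref{lem:univ-inst} is already $\proves{}{\ForallApp \textbf{x}_\alpha \mdot \textbf{A}_o \Implies \textbf{A}_o}$, which carries over to the context $T,\sH$.

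For the backward direction $\textbf{A}_o \Implies \ForallApp \textbf{x}_\alpha \mdot \textbf{A}_o$, I would start from the tautology $\Neg \textbf{A}_o \Or \textbf{A}_o$, provable by the Tautology Theorem with no hypotheses. Universal Generalization then applies vacuously (the hypothesis set is empty), yielding $\proves{}{\ForallApp \textbf{x}_\alpha \mdot (\Neg \textbf{A}_o \Or \textbf{A}_o)}$. Next I would invoke Lemma~\ref{lem:forall-or} with $\Neg \textbf{A}_o$ playing the role of that lemma's first disjunct to obtain
\[\proves{T,\sH}{\ForallApp \textbf{x}_\alpha \mdot (\Neg \textbf{A}_o \Or \textbf{A}_o) \Implies (\Neg \textbf{A}_o \Or \ForallApp \textbf{x}_\alpha \mdot \textbf{A}_o)}.\]
Modus Ponens then produces $\proves{T,\sH}{\Neg \textbf{A}_o \Or \ForallApp \textbf{x}_\alpha \mdot \textbf{A}_o}$, i.e., the desired implication. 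Combining the two directions via the Tautology Theorem and Modus Ponens yields the biconditional.

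The main obstacle is discharging the side condition of Lemma~\ref{lem:forall-or}: from the given $\proves{T,\sH}{\Neg\mname{IS-EFFECTIVE-IN}(\textbf{x}_\alpha, \textbf{A}_o)}$ I need $\proves{T,\sH}{\Neg\mname{IS-EFFECTIVE-IN}(\textbf{x}_\alpha, \Neg \textbf{A}_o)}$. Expanding $\Neg \textbf{A}_o$ as the application $\Neg_{o \tarrow o} \, \textbf{A}_o$, Axioms A4.4 and A4.3 give
\[\proves{}{(\LambdaApp \textbf{x}_\alpha \mdot \Neg \textbf{A}_o) \, \textbf{y}_\alpha = \Neg ((\LambdaApp \textbf{x}_\alpha \mdot \textbf{A}_o) \, \textbf{y}_\alpha)}\]
for a fresh $\textbf{y}_\alpha$. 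Unfolding the hypothesis (eliminating the outer $\Neg\Forsome$ to extract $(\LambdaApp \textbf{x}_\alpha \mdot \textbf{A}_o) \, \textbf{y}_\alpha = \textbf{A}_o$) and applying the Equality Rules yields $(\LambdaApp \textbf{x}_\alpha \mdot \Neg \textbf{A}_o) \, \textbf{y}_\alpha = \Neg \textbf{A}_o$, from which the required negated existential follows by propositional reasoning. This is the only non-routine step; everything else is a direct application of lemmas already proved in this subsection.
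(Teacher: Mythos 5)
Your proof follows the paper's almost line for line: the backward direction via the tautology $\Neg\textbf{A}_o \Or \textbf{A}_o$, Universal Generalization with an empty hypothesis set, Lemma~\ref{lem:forall-or}, Modus Ponens, and the Tautology Theorem, and the forward direction via Lemma~\ref{lem:univ-inst} instantiated with $\textbf{x}_\alpha$ itself (the paper routes the identity $\mname{SUB}(\textbf{x}_\alpha,\textbf{x}_\alpha,\textbf{A}_o) = \textbf{A}_o$ through Beta-Reduction by Substitution and Axiom B11.1 rather than observing it directly from the clauses of $\mname{SUB}$, but the effect is the same). Your explicit derivation of $\Neg\mname{IS-EFFECTIVE-IN}(\textbf{x}_\alpha,\Neg\textbf{A}_o)$ addresses a detail the paper's proof silently elides when it feeds the hypothesis into Lemma~\ref{lem:forall-or}; just note that re-forming the negated existential at the end of that derivation is a Universal Generalization over the auxiliary variable $\textbf{y}_\alpha$ (with its attendant proviso on $\sH$), not pure propositional reasoning.
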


\begin{proof}
\begin{align} \setcounter{equation}{0}
& \vdash
  \Neg \textbf{A}_o \Or \textbf{A}_o\\
& \vdash
  \ForallApp \textbf{x}_\alpha \mdot (\Neg \textbf{A}_o \Or \textbf{A}_o)\\
T,\sH & \vdash
  \Neg \textbf{A}_o \Or \ForallApp \textbf{x}_\alpha \mdot \textbf{A}_o\\
T,\sH & \vdash
  \textbf{A}_o \Implies \ForallApp \textbf{x}_\alpha \mdot \textbf{A}_o\\
T,\sH & \vdash
  (\ForallApp \textbf{x}_\alpha \mdot \textbf{A}_o) \Implies 
  (\LambdaApp \textbf{x}_\alpha \mdot \textbf{A}_o) \, \textbf{x}_\alpha\\
T,\sH & \vdash
  (\ForallApp \textbf{x}_\alpha \mdot \textbf{A}_o) \Implies \textbf{A}_o\\
T,\sH & \vdash
  (\ForallApp \textbf{x}_\alpha \mdot \textbf{A}_o) \Iff \textbf{A}_o
\end{align}
(1) is an instance of the Tautology Theorem; (2) follows from (1) by
Universal Generalization; (3) follows from (2), the hypothesis, and
Lemma~\ref{lem:forall-or} by Modus Ponens; (4) follows from (3) by the
Tautology Theorem; (5) follows from Lemma~\ref{lem:univ-inst} and
Beta-Reduction by Substitution by the Equality Rules; (6)~follows from
Beta-Reduction by Substitution and the Equality Rules if
$\textbf{A}_o$ is eval-free and by Axiom B11.1 and the Equality Rules
if $\textbf{A}_o$ is not eval-free; and (7) follows from (4) and (6)
by the Tautology Theorem.
\end{proof}

\begin{rem}\em
Lemma~\ref{lem:not-effective-syn} shows that a variable that is not
effective in an expression has the same behavior with respect to
universal quantification as a variable that is not free in an
eval-free expression.
\end{rem}

\begin{thm}[Deduction Theorem]
If $\proves{T,\sH \cup \set{\textbf{H}_o}}{\textbf{A}_o}$,
then\\ $\proves{T,\sH}{\textbf{H}_o \Implies \textbf{A}_o}$.
\end{thm}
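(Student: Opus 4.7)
The plan is to follow the standard induction-on-derivation-length approach used by Andrews in the proof of Theorem 5240 of~\cite[p.~228]{Andrews02}, adapting it to {\churchqe}'s Rule R'. The three base cases---$\textbf{A}_o$ is a theorem of $T$, $\textbf{A}_o \in \sH$, or $\textbf{A}_o = \textbf{H}_o$---dispatch via the Tautology Theorem together with Modus Ponens: the first two use the tautology $\textbf{A}_o \Implies (\textbf{H}_o \Implies \textbf{A}_o)$ combined with the direct proof $\proves{T,\sH}{\textbf{A}_o}$, and the third uses the tautology $\textbf{H}_o \Implies \textbf{H}_o$.

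For the inductive step, suppose $\textbf{A}_o$ is obtained by Rule R' from $\textbf{B}_\alpha = \textbf{C}_\alpha$ and $\textbf{D}_o$, where $\textbf{A}_o$ is $\textbf{D}_o$ with one occurrence of $\textbf{B}_\alpha$ replaced by $\textbf{C}_\alpha$. The induction hypothesis gives $\proves{T,\sH}{\textbf{H}_o \Implies (\textbf{B}_\alpha = \textbf{C}_\alpha)}$ and $\proves{T,\sH}{\textbf{H}_o \Implies \textbf{D}_o}$. My strategy is to internalize the replacement via Leibniz' Law (Axiom A2): abstracting at the replacement position yields a predicate $h_{\alpha \tarrow o}$ so that $\textbf{D}_o$ and $\textbf{A}_o$ are beta-equivalent to $h_{\alpha \tarrow o} \, \textbf{B}_\alpha$ and $h_{\alpha \tarrow o} \, \textbf{C}_\alpha$ respectively---the beta-reduction steps needed are legal exactly at those positions where Rule R' permits replacement (not within a quotation, not first argument of an abstraction, not second argument of an evaluation). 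Leibniz' Law then converts $\textbf{B}_\alpha = \textbf{C}_\alpha$ into $\textbf{D}_o \Iff \textbf{A}_o$; prefixing by $\textbf{H}_o \Implies$ and combining with the induction hypothesis $\proves{T,\sH}{\textbf{H}_o \Implies \textbf{D}_o}$ via the Tautology Theorem yields $\proves{T,\sH}{\textbf{H}_o \Implies \textbf{A}_o}$.

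The main obstacle will be the effectiveness side condition of Rule R' when the replaced occurrence lies inside a function abstraction $\LambdaApp \textbf{x}_\beta \mdot \textbf{E}_\gamma$. The original derivation in $T,\sH \cup \set{\textbf{H}_o}$ discharged this condition by exhibiting either $\proves{T,\sH \cup \set{\textbf{H}_o}}{\Neg\mname{IS-EFFECTIVE-IN}(\textbf{x}_\beta, \textbf{B}_\alpha = \textbf{C}_\alpha)}$ or non-effectiveness of $\textbf{x}_\beta$ in every member of $\sH \cup \set{\textbf{H}_o}$; I must recover the analogous discharge in $T,\sH$ when constructing $h_{\alpha \tarrow o}$ and invoking Axiom A2. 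The first alternative carries over directly because its statement is independent of the hypothesis set. For the second alternative I must additionally handle $\textbf{H}_o$ itself: since $\textbf{H}_o$ is eval-free (by the restriction on hypothesis sets in the definition of proofs), Axiom B12 reduces $\Neg\mname{IS-EFFECTIVE-IN}(\textbf{x}_\beta, \textbf{H}_o)$ to the purely syntactic requirement that $\textbf{x}_\beta$ not be free in $\textbf{H}_o$---a condition checkable independent of any theory. Weaving these effectiveness bookkeeping steps through the induction, as Andrews does for his syntactic analogue, closes the argument.
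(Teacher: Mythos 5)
Your overall strategy is the one the paper intends: its proof of this theorem is literally ``similar to the proof of 5240 in \cite[p.~228]{Andrews02}, with `is effective in' used in place of `is free in','' and your base cases and your treatment of the side conditions of Rule ${\rm R}'$ (in particular discharging $\Neg\mname{IS-EFFECTIVE-IN}(\textbf{x}_\beta,\textbf{H}_o)$ via Axiom B12 and the eval-freeness of hypothesis sets) are the right adaptations.

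There is, however, one step in your Rule ${\rm R}'$ case that fails as stated: the claim that ``the beta-reduction steps needed are legal exactly at those positions where Rule ${\rm R}'$ permits replacement.'' Rule ${\rm R}'$ \emph{does} permit replacing an occurrence of $\textbf{B}_\alpha$ that lies inside the first argument of an evaluation $\sembrack{\textbf{E}_\epsilon}_\gamma$, but at such a position the beta-equivalences $h_{\alpha \tarrow o}\,\textbf{B}_\alpha = \textbf{D}_o$ and $h_{\alpha \tarrow o}\,\textbf{C}_\alpha = \textbf{A}_o$ that your Leibniz internalization requires are not available: Beta-Reduction by Substitution bottoms out at clause 10 of the definition of $\mname{SUB}$, which refuses to push a non-variable argument through an evaluation, and the only escape hatch, Axiom B11.2, carries an $\mname{is-free-in}$ side condition that is in general unprovable --- this is precisely the phenomenon used in the Non-Eval-Free Incompleteness subsection to show that $(\LambdaApp x_\epsilon \mdot \sembrack{x_\epsilon}_\epsilon)\,\synbrack{x_\epsilon} = \synbrack{x_\epsilon}$ is valid but unprovable. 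So you cannot in general manufacture the internalized schema $(\textbf{B}_\alpha = \textbf{C}_\alpha) \Implies (\textbf{D}_o \Iff \textbf{A}_o)$ by a single abstraction-plus-beta at the replacement position. The Rule ${\rm R}'$ case should instead follow Andrews' actual route for 5240, namely an induction on the structure of the context surrounding the replaced occurrence (his 5239-style argument), with the evaluation constructor handled by its own clause; your Leibniz-plus-beta shortcut only covers the contexts built from variables, constants, applications, and (suitably conditioned) abstractions.
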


\begin{proof}
Similar to the proof of 5240 in~\cite[p.~228]{Andrews02}.  The notion
of ``is effective in'' is used in place of the notion of ``is free
in''.
\end{proof}

\begin{lem} \label{lem:exist-rule}
If $\proves{T,\sH \cup \set{\textbf{A}_o}}{\textbf{B}_o}$, then
$\proves{T,\sH\cup \set{\ForsomeApp \textbf{x}_\alpha \mdot
    \textbf{A}_o}}{\textbf{B}_o}$ provided $\proves{T,\sH}{\Neg
  \mname{IS-EFFECTIVE-IN}(\textbf{x}_\alpha,\textbf{B}_o)}$ and
$\proves{T,\sH}{\Neg
  \mname{IS-EFFECTIVE-IN}(\textbf{x}_\alpha,\textbf{H}_o)}$ for all
$\textbf{H}_o \in \sH$.
\end{lem}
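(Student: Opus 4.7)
The plan is to adapt Andrews' proof of existential instantiation (5244 in~\cite{Andrews02}), using the semantic notion ``is effective in'' in place of the usual syntactic ``is free in''. The side conditions of the statement are precisely what is needed to make the three key inference steps go through: the Deduction Theorem, Universal Generalization, and Lemma~\ref{lem:forall-or}.

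First, I would apply the Deduction Theorem to the hypothesis $\proves{T,\sH \cup \set{\textbf{A}_o}}{\textbf{B}_o}$ to obtain $\proves{T,\sH}{\textbf{A}_o \Implies \textbf{B}_o}$. Next, because $\proves{T,\sH}{\Neg \mname{IS-EFFECTIVE-IN}(\textbf{x}_\alpha,\textbf{H}_o)}$ for every $\textbf{H}_o \in \sH$, Universal Generalization is applicable and yields $\proves{T,\sH}{\ForallApp \textbf{x}_\alpha \mdot (\textbf{A}_o \Implies \textbf{B}_o)}$. Using the Tautology Theorem and the Equality Rules, I rewrite the matrix as the logically equivalent disjunction $\textbf{B}_o \Or \Neg \textbf{A}_o$ to get $\proves{T,\sH}{\ForallApp \textbf{x}_\alpha \mdot (\textbf{B}_o \Or \Neg \textbf{A}_o)}$.

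Now I invoke Lemma~\ref{lem:forall-or}, whose side condition $\Neg \mname{IS-EFFECTIVE-IN}(\textbf{x}_\alpha, \textbf{B}_o)$ is exactly one of our given hypotheses, to pull $\textbf{B}_o$ past the quantifier, giving $\proves{T,\sH}{\textbf{B}_o \Or \ForallApp \textbf{x}_\alpha \mdot \Neg \textbf{A}_o}$. By the Tautology Theorem this is equivalent to $\Neg(\ForallApp \textbf{x}_\alpha \mdot \Neg \textbf{A}_o) \Implies \textbf{B}_o$, which is $(\ForsomeApp \textbf{x}_\alpha \mdot \textbf{A}_o) \Implies \textbf{B}_o$ by the definition of $\Forsome$ in Table~\ref{tab:defs}. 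Finally, enlarging the hypothesis set to $\sH \cup \set{\ForsomeApp \textbf{x}_\alpha \mdot \textbf{A}_o}$ (which only strengthens provability) and applying Modus Ponens to the added hypothesis and this implication delivers $\proves{T,\sH \cup \set{\ForsomeApp \textbf{x}_\alpha \mdot \textbf{A}_o}}{\textbf{B}_o}$.

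The main conceptual wrinkle — and where this proof differs from the standard textbook argument — is the replacement of the syntactic ``not free in'' side conditions by the semantic ``not effective in'' ones. Because Lemmas~\ref{lem:forall-or}, \ref{lem:not-effective-syn}, and the statements of Universal Generalization and the Deduction Theorem have already been cast in terms of \mname{IS-EFFECTIVE-IN}, the hypotheses we are given plug directly into each inference without any further syntactic analysis of $\textbf{A}_o$, $\textbf{B}_o$, or $\sH$; in particular, we never need to worry that these formulas may contain evaluations.
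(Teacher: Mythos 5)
Your proof is correct, but it is organized differently from the paper's. The paper contraposes immediately: from $\proves{T,\sH \cup \set{\textbf{A}_o}}{\textbf{B}_o}$ it passes (via the Deduction Theorem and propositional logic) to $\proves{T,\sH \cup \set{\Neg\textbf{B}_o}}{\Neg\textbf{A}_o}$, applies Universal Generalization \emph{with $\Neg\textbf{B}_o$ sitting in the hypothesis set} to get $\proves{T,\sH \cup \set{\Neg\textbf{B}_o}}{\ForallApp \textbf{x}_\alpha \mdot \Neg\textbf{A}_o}$, and then finishes with a second application of the Deduction Theorem, the definition of $\Forsome$, and propositional logic. You instead keep $\textbf{B}_o$ in the conclusion, generalize $\textbf{A}_o \Implies \textbf{B}_o$ over $\sH$ alone, and use Lemma~\ref{lem:forall-or} to pull $\textbf{B}_o$ out of the quantifier. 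The trade-off: your route needs Lemma~\ref{lem:forall-or} (whose side condition consumes the hypothesis $\proves{T,\sH}{\Neg\mname{IS-EFFECTIVE-IN}(\textbf{x}_\alpha,\textbf{B}_o)}$ directly) plus a Rule ${\rm R}'$ rewrite of the matrix under the quantifier, which is legitimate here only because the not-effective conditions on the members of $\sH$ discharge Rule ${\rm R}'$'s side condition for occurrences inside a function abstraction --- you should say so explicitly, or rewrite the matrix before generalizing. The paper's route avoids Lemma~\ref{lem:forall-or} entirely but must convert the given condition on $\textbf{B}_o$ into $\Neg\mname{IS-EFFECTIVE-IN}(\textbf{x}_\alpha,\Neg\textbf{B}_o)$ so that Universal Generalization applies over the enlarged hypothesis set. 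Both arguments use the same three ingredients (Deduction Theorem, Universal Generalization, definition of $\Forsome$) and the same side conditions; neither has a gap.
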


\begin{proof}
\begin{align} \setcounter{equation}{0}
& T,\sH \cup \set{\textbf{A}_o} \vdash
  \textbf{B}_o\\
& T,\sH \cup \set{\Neg\textbf{B}_o} \vdash
  \Neg\textbf{A}_o\\
& T,\sH \cup \set{\Neg\textbf{B}_o} \vdash
  \ForallApp \textbf{x}_\alpha \mdot \Neg\textbf{A}_o\\
& T,\sH \cup \set{\ForsomeApp \textbf{x}_\alpha \mdot \textbf{A}_o} \vdash
  \textbf{B}_o
\end{align}
(1) is the hypothesis; (2) follows from (1) by the Deduction Theorem
and propositional logic; (3) follows from (2) by Universal
Generalization and the condition placed on $\textbf{x}_\alpha$; and
(4) follows from (3) by the Deduction Theorem, the definition of
$\Forsome$, and propositional logic.
\end{proof}

\begin{thm}[Weakening]
Let $\sH'$ be a set of formulas of $T$ such that\\ $\sH \subseteq \sH'$.
If $\proves{T,\sH}{\textbf{A}_o}$, then
$\proves{T,\sH'}{\textbf{A}_o}$.
\end{thm}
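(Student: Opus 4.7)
The plan is to induct on the structure of the derivation witnessing $\proves{T,\sH}{\textbf{A}_o}$, cased on the three defining clauses of the provability relation. The base cases are immediate: if $\textbf{A}_o \in \sH$ then $\textbf{A}_o \in \sH'$ (since $\sH \subseteq \sH'$), so clause~(1) supplies $\proves{T,\sH'}{\textbf{A}_o}$; and if $\textbf{A}_o$ is a theorem of $T$, clause~(2) gives $\proves{T,\sH'}{\textbf{A}_o}$ directly, since theoremhood in $T$ makes no reference to the hypothesis set.

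The substantive case is an application of Rule~${\rm R}'$ in which $\textbf{D}_o$ is inferred from $\proves{T,\sH}{\textbf{A}_\alpha = \textbf{B}_\alpha}$ and $\proves{T,\sH}{\textbf{C}_o}$ by replacing an occurrence of $\textbf{A}_\alpha$ in $\textbf{C}_o$ with $\textbf{B}_\alpha$. The induction hypothesis transfers both premises to $T,\sH'$. If the replaced occurrence is not inside a function abstraction, Rule~${\rm R}'$ has no further side condition and immediately yields $\proves{T,\sH'}{\textbf{D}_o}$. The interesting situation is when the occurrence sits inside some $\LambdaApp \textbf{x}_\beta \mdot \textbf{E}_\gamma$: Rule~${\rm R}'$ then demands either (a)~$\proves{T,\sH'}{\Neg\mname{IS-EFFECTIVE-IN}(\textbf{x}_\beta, \textbf{A}_\alpha = \textbf{B}_\alpha)}$, or (b)~$\proves{T,\sH'}{\Neg\mname{IS-EFFECTIVE-IN}(\textbf{x}_\beta, \textbf{H}_o)}$ for every $\textbf{H}_o \in \sH'$. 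If disjunct~(a) was satisfied originally (with $\sH$), the induction hypothesis carries it to~$\sH'$ and we are done.

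The main obstacle is when only disjunct~(b) held for $\sH$, because the condition ``for every $\textbf{H}_o \in \sH$'' does not automatically lift to ``for every $\textbf{H}_o \in \sH'$'' --- hypotheses in $\sH'\setminus\sH$ may have $\textbf{x}_\beta$ effective in them. My plan is to upgrade~(b) into~(a) inside the original context $T,\sH$ before appealing to the induction hypothesis: disjunct~(b) is precisely the side condition for Universal Generalization, so that theorem yields $\proves{T,\sH}{\ForallApp \textbf{x}_\beta \mdot (\textbf{A}_\alpha = \textbf{B}_\alpha)}$. Combining this with the premise $\textbf{A}_\alpha = \textbf{B}_\alpha$ via Universal Instantiation, Beta-Reduction by Substitution, and the Equality Rules produces $\proves{T,\sH}{\Neg\mname{IS-EFFECTIVE-IN}(\textbf{x}_\beta, \textbf{A}_\alpha = \textbf{B}_\alpha)}$, since both sides of the truth-value equation underlying $\mname{IS-EFFECTIVE-IN}$ collapse to $\TRUE$. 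This auxiliary derivation is built from subderivations of the original proof together with a bounded amount of hypothesis-insensitive logical machinery, so applying the induction hypothesis to each ingredient propagates the conclusion to $T,\sH'$, delivering disjunct~(a) in the weakened context and letting Rule~${\rm R}'$ conclude $\proves{T,\sH'}{\textbf{D}_o}$.
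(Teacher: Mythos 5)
The paper disposes of Weakening with a single sentence --- ``follows immediately from the definition of $\proves{T,\sH}{\textbf{A}_o}$'' --- i.e., the routine induction on the derivation that your first two paragraphs carry out. Your base cases and the unproblematic subcases of Rule ${\rm R}'$ match that intended argument. What you add is the observation that the proviso of Rule ${\rm R}'$ is not monotone in the hypothesis set: disjunct~(b) quantifies over \emph{all} of $\sH$, so enlarging $\sH$ to $\sH'$ can invalidate it. That observation is correct, and it is not addressed by the paper's one-line proof.

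Your repair, however, does not close the gap. First, once the abbreviation is unfolded, $\Neg\mname{IS-EFFECTIVE-IN}(\textbf{x}_\beta,\textbf{A}_\alpha = \textbf{B}_\alpha)$ is (up to propositional manipulation) $\ForallApp \textbf{y}_\beta \mdot ((\LambdaApp \textbf{x}_\beta \mdot (\textbf{A}_\alpha = \textbf{B}_\alpha))\,\textbf{y}_\beta = (\textbf{A}_\alpha = \textbf{B}_\alpha))$; to introduce the outer quantifier you must apply Universal Generalization over the witness variable $\textbf{y}_\beta$, and that rule carries its own side condition --- $\textbf{y}_\beta$ provably not effective in every member of the hypothesis set --- which is not among your assumptions (you have it only for $\textbf{x}_\beta$). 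Second, even granting the derivation of disjunct~(a) in the context $T,\sH$, that derivation is a new one, not a subderivation of the original, so the induction hypothesis does not transfer it to $T,\sH'$; and if you instead rebuild it directly in the context $T,\sH'$, the Universal Generalization over $\textbf{x}_\beta$ now has a side condition ranging over all of $\sH'$, which fails for precisely the new hypotheses you are worried about. So the step ``applying the induction hypothesis to each ingredient propagates the conclusion to $T,\sH'$'' is where the argument breaks: Universal Generalization is not hypothesis-insensitive machinery. As written, neither your argument nor the paper's settles the disjunct-(b) case; a complete proof would need either a strengthened induction hypothesis or a separately established lemma converting disjunct~(b) plus $\proves{T,\sH}{\textbf{A}_\alpha = \textbf{B}_\alpha}$ into disjunct~(a) in a form that survives enlargement of the hypothesis set.
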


\begin{proof}
Follows immediately from the definition of
$\proves{T,\sH}{\textbf{A}_o}$.
\end{proof}

\subsection{Quotations}

\begin{thm}[Syntactic Law of Quotation]\label{thm:syn-quotation}
$\proves{}{\synbrack{\textbf{A}_\alpha} = \sE(\textbf{A}_\alpha)}$.
\end{thm}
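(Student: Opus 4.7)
The plan is to prove the theorem by structural induction on the (necessarily eval-free) expression $\textbf{A}_\alpha$, mirroring the inductive definition of $\sE$ and using the quotation axioms B8.1--3 to peel off the quotation operator one layer at a time, then applying the induction hypothesis to the immediate subexpressions. This is essentially the syntactic analogue of Theorem~\ref{thm:sem-quotation}, but with the axioms B8 playing the role that condition~5 of a general model played in the semantic proof.

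In more detail: the base cases ($\textbf{A}_\alpha$ a variable $\textbf{x}_\alpha$ or a constant $\textbf{c}_\alpha$) are immediate from Reflexivity of Equality (Lemma~\ref{lem:eq-reflex}), since by the first two clauses of the definition of $\sE$ we have $\sE(\textbf{x}_\alpha) = \synbrack{\textbf{x}_\alpha}$ and $\sE(\textbf{c}_\alpha) = \synbrack{\textbf{c}_\alpha}$ as literally identical expressions. For the function application case, $\textbf{A}_\alpha = \textbf{F}_{\beta \tarrow \alpha} \, \textbf{B}_\beta$, Axiom B8.1 gives $\proves{}{\synbrack{\textbf{F}_{\beta \tarrow \alpha} \, \textbf{B}_\beta} = \mname{app}_{\epsilon \tarrow \epsilon \tarrow \epsilon} \, \synbrack{\textbf{F}_{\beta \tarrow \alpha}} \, \synbrack{\textbf{B}_\beta}}$; the induction hypothesis applied to $\textbf{F}_{\beta \tarrow \alpha}$ and $\textbf{B}_\beta$ rewrites the two inner quotations to $\sE(\textbf{F}_{\beta \tarrow \alpha})$ and $\sE(\textbf{B}_\beta)$ via the Equality Rules (Lemma~\ref{lem:eq-rules}), and by the third clause of the definition of $\sE$ the resulting right-hand side is precisely $\sE(\textbf{F}_{\beta \tarrow \alpha} \, \textbf{B}_\beta)$. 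The function abstraction case uses Axiom B8.2 together with the induction hypothesis applied to $\textbf{x}_\beta$ and $\textbf{B}_\gamma$, and the (inner) quotation case uses Axiom B8.3 together with the induction hypothesis applied to the quoted subexpression.

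I do not anticipate any real obstacle: eval-freeness is preserved under all the relevant subexpression operations (indeed, only eval-free expressions admit quotations at all), so the induction hypothesis always applies where needed; the Rule R side-conditions are trivially satisfied because the rewrites are all performed outside any quotation, any function abstraction's binding position, and any evaluation's second argument. If desired, the argument can be written out succinctly as a case analysis along the five formation rules for eval-free expressions, with each non-base case a two- or three-line derivation using the corresponding axiom of B8 and the Equality Rules. This theorem, together with Axiom B10.1 and B10.2 plus the induction on $\sE$, is precisely what will later support a purely syntactic treatment of disquotation.
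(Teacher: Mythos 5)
Your proof is correct and follows essentially the same route as the paper's: structural induction along the five formation rules for eval-free expressions, with the base cases handled by reflexivity and the three composite cases handled by Axioms B8.1--3, the induction hypothesis, and the Equality Rules / Rule R. The only cosmetic difference is that in the abstraction case the paper simply notes $\sE(\textbf{x}_\beta) = \synbrack{\textbf{x}_\beta}$ rather than invoking the induction hypothesis on the bound variable, which amounts to the same thing.
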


\begin{proof}
Let $\textbf{A}_\alpha$ be eval-free.  Our proof is by induction on
the structure of $\textbf{A}_\alpha$.  There are five cases
corresponding to the five formation rules for eval-free expressions.

\bi

  \item[] \textbf{Case 1}: $\textbf{A}_\alpha$ is a variable
    $\textbf{x}_\alpha$.  $\proves{}{\synbrack{\textbf{x}_\alpha} =
    \synbrack{\textbf{x}_\alpha}}$ by Lemma~\ref{lem:eq-reflex}.
    So $\proves{}{\synbrack{\textbf{x}_\alpha} =
      \sE(\textbf{x}_\alpha)}$ since $\sE(\textbf{x}_\alpha) =
    \synbrack{\textbf{x}_\alpha}$.

  \item[] \textbf{Case 2}: $\textbf{A}_\alpha$ is a constant
    $\textbf{c}_\alpha$.  Similar to Case 1.

  \item[] \textbf{Case 3}: $\textbf{A}_\alpha$ is an function
    application $\textbf{F}_{\beta \tarrow \gamma} \, \textbf{B}_\beta$.
    \begin{align} \setcounter{equation}{0}
    & \vdash \synbrack{\textbf{F}_{\beta \tarrow \gamma} \, \textbf{B}_\beta} = 
      \mname{app}_{\epsilon \tarrow \epsilon \tarrow \epsilon} \, 
      \synbrack{\textbf{F}_{\beta \tarrow \gamma}} \, \synbrack{\textbf{B}_\beta}\\ 
    & \vdash \synbrack{\textbf{F}_{\beta \tarrow \gamma}} =
      \sE(\textbf{F}_{\beta \tarrow \gamma})\\
    & \vdash \synbrack{\textbf{B}_\beta} = \sE(\textbf{B}_\beta)\\
    & \vdash \synbrack{\textbf{F}_{\beta \tarrow \gamma} \, \textbf{B}_\beta} = 
      \mname{app}_{\epsilon \tarrow \epsilon \tarrow \epsilon} \, 
      \sE(\textbf{F}_{\beta \tarrow \gamma}) \, \sE(\textbf{B}_\beta)\\
    & \vdash \synbrack{\textbf{F}_{\beta \tarrow \gamma} \, \textbf{B}_\beta} = 
      \sE(\textbf{F}_{\beta \tarrow \gamma} \, \textbf{B}_\beta)
    \end{align}
    (1) is an instance of Axiom B8.1; (2) and (3) are by the induction
    hypothesis; (4) follows from (1), (2), and (3) by Rule R used
    twice; and (5) follows from (4) since $\sE(\textbf{F}_{\beta
      \tarrow \gamma} \, \textbf{B}_\beta) = \mname{app}_{\epsilon
      \tarrow \epsilon \tarrow \epsilon} \, \sE(\textbf{F}_{\beta
      \tarrow \gamma}) \, \sE(\textbf{B}_\beta)$.

  \item[] \textbf{Case 4}: $\textbf{A}_\alpha$ is an function
    abstraction $(\LambdaApp \textbf{x}_\beta \mdot
    \textbf{B}_\gamma)$.
    \begin{align} \setcounter{equation}{0}
    & \vdash \synbrack{\LambdaApp \textbf{x}_\beta \mdot \textbf{B}_\gamma} = 
      \mname{abs}_{\epsilon \tarrow \epsilon \tarrow \epsilon} \, 
      \synbrack{\textbf{x}_\beta} \, \synbrack{\textbf{B}_\gamma}\\
    & \vdash \synbrack{\textbf{B}_\gamma} = \sE(\textbf{B}_\gamma)\\
    & \vdash \synbrack{\LambdaApp \textbf{x}_\beta \mdot \textbf{B}_\gamma} =
      \mname{abs}_{\epsilon \tarrow \epsilon \tarrow \epsilon} \, 
      \synbrack{\textbf{x}_\beta} \, \sE(\textbf{B}_\gamma)\\
    & \vdash \synbrack{\LambdaApp \textbf{x}_\beta \mdot \textbf{B}_\gamma} =
      \sE(\LambdaApp \textbf{x}_\beta \mdot \textbf{B}_\gamma)
    \end{align}
    (1) is an instance of Axiom B8.2; (2) is by the induction
    hypothesis; (3)~follows from (2) and (1) by Rule R; and (4)
    follows from (3) since $\sE(\LambdaApp \textbf{x}_\beta \mdot
    \textbf{B}_\gamma) = \mname{abs}_{\epsilon \tarrow \epsilon
      \tarrow \epsilon} \, \sE(\textbf{x}_\beta) \,
    \sE(\textbf{B}_\gamma) = \mname{abs}_{\epsilon \tarrow \epsilon
      \tarrow \epsilon} \, \synbrack{\textbf{x}_\beta} \,
    \sE(\textbf{B}_\gamma)$.

  \item[] \textbf{Case 5}: $\textbf{A}_\alpha$ is a quotation
    $\synbrack{\textbf{B}_\beta}$. 
    \begin{align} \setcounter{equation}{0}
    & \vdash \synbrack{\synbrack{\textbf{B}_\beta}} = 
      \mname{quo}_{\epsilon \tarrow \epsilon} \, \synbrack{\textbf{B}_\beta}\\
    & \vdash \synbrack{\textbf{B}_\beta} = \sE(\textbf{B}_\beta)\\
    & \vdash \synbrack{\synbrack{\textbf{B}_\beta}} = 
      \mname{quo}_{\epsilon \tarrow \epsilon} \, \sE(\textbf{B}_\beta)\\
    & \vdash \synbrack{\synbrack{\textbf{B}_\beta}} = 
      \sE(\synbrack{\textbf{B}_\beta})
    \end{align}
    (1) is an instance of Axiom B8.3; (2) is by the induction
    hypothesis; (3) follows from (2) and (1) by Rule R; and (4)
    follows from (3) since $\sE(\synbrack{\textbf{B}_\beta}) =
    \mname{quo}_{\epsilon \tarrow \epsilon} \, \sE(\textbf{B}_\beta)$.

\ei
\end{proof}

\subsection{Constructions}

\begin{lem}\label{lem:syn-is-expr}
$\proves{}{\mname{is-expr}_{\epsilon \tarrow o}^{\alpha} \,
    \synbrack{\textbf{A}_\alpha}}$.
\end{lem}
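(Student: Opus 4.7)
The plan is to prove this by structural induction on the eval-free expression $\textbf{A}_\alpha$, using the Syntactic Law of Quotation (Theorem~\ref{thm:syn-quotation}) to convert the quotation $\synbrack{\textbf{A}_\alpha}$ into the corresponding construction $\sE(\textbf{A}_\alpha)$, which is built from the syntax constructors whose behavior with respect to $\mname{is-expr}_{\epsilon \tarrow o}^{\alpha}$ is governed by the B3 axioms. Concretely, by Theorem~\ref{thm:syn-quotation} and the Equality Rules (Lemma~\ref{lem:eq-rules}) it suffices to prove $\proves{}{\mname{is-expr}_{\epsilon \tarrow o}^{\alpha} \, \sE(\textbf{A}_\alpha)}$.

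The induction has five cases corresponding to the five formation rules for eval-free expressions. For a variable $\textbf{x}_\alpha$, $\sE(\textbf{x}_\alpha) = \synbrack{\textbf{x}_\alpha}$, so Axiom B1.2 gives $\mname{is-var}_{\epsilon \tarrow o}^{\alpha} \, \synbrack{\textbf{x}_\alpha}$, from which Axiom B3.2 (together with propositional reasoning) yields the conclusion. The constant case is symmetric, using Axioms B2.2 and B3.2. For a function application $\textbf{F}_{\beta \tarrow \alpha} \, \textbf{B}_\beta$, the induction hypothesis gives $\mname{is-expr}_{\epsilon \tarrow o}^{\beta \tarrow \alpha} \, \sE(\textbf{F}_{\beta \tarrow \alpha})$ and $\mname{is-expr}_{\epsilon \tarrow o}^{\beta} \, \sE(\textbf{B}_\beta)$, and Axiom B3.3 then supplies $\mname{is-expr}_{\epsilon \tarrow o}^{\alpha}$ of the $\mname{app}$-construction, which by definition of $\sE$ is $\sE(\textbf{F}_{\beta \tarrow \alpha} \, \textbf{B}_\beta)$.

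For a function abstraction $\LambdaApp \textbf{x}_\beta \mdot \textbf{B}_\gamma$ (where $\alpha = \beta \tarrow \gamma$), Axiom B1.2 gives $\mname{is-var}_{\epsilon \tarrow o}^{\beta} \, \synbrack{\textbf{x}_\beta}$, the induction hypothesis gives $\mname{is-expr}_{\epsilon \tarrow o}^{\gamma} \, \sE(\textbf{B}_\gamma)$, and Axiom B3.4 combines these to give $\mname{is-expr}_{\epsilon \tarrow o}^{\beta \tarrow \gamma}$ of the $\mname{abs}$-construction, which is $\sE(\LambdaApp \textbf{x}_\beta \mdot \textbf{B}_\gamma)$. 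For a quotation $\synbrack{\textbf{B}_\beta}$ (where $\alpha = \epsilon$), the induction hypothesis yields $\mname{is-expr}_{\epsilon \tarrow o}^{\beta} \, \sE(\textbf{B}_\beta)$, Axiom B3.1 weakens this to $\mname{is-expr}_{\epsilon \tarrow o} \, \sE(\textbf{B}_\beta)$, and Axiom B3.5 then gives $\mname{is-expr}_{\epsilon \tarrow o}^{\epsilon}$ of the $\mname{quo}$-construction $\sE(\synbrack{\textbf{B}_\beta})$.

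I do not expect any real obstacle: the B3 axioms were designed precisely to mirror the recursive structure of $\sE$, and the Syntactic Law of Quotation already packages the interaction between quotation and the syntax constructors. The only step that needs a touch of care is extracting the inductive hypothesis at the correct type index (e.g.\ matching $\beta \tarrow \alpha$ and $\beta$ in Case~3, and noting that the type superscript on $\mname{is-expr}$ changes with each constructor), and applying Modus Ponens (together with Universal Instantiation for the free variables in the B3 axiom schemas) to discharge the implications.
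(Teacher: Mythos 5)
Your proposal is correct and follows essentially the same route as the paper, whose proof is just a one-line sketch citing induction on the structure of $\textbf{A}_\alpha$ together with Axioms B1--B3, Universal Generalization/Instantiation, the Syntactic Law of Quotation, and propositional logic; your write-up simply fills in the case analysis that sketch leaves implicit. The only nicety worth noting is that instantiating the free variables $x_\epsilon, y_\epsilon$ in the B axiom schemas formally requires Universal Generalization before Universal Instantiation, which the paper lists explicitly and you gesture at.
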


\begin{proof}
By induction on the structure of $\textbf{A}_\alpha$ using Axioms
B1.1--4, B2.1--4, and B3.1--9, Universal Generalization, Universal
Instantiation, the Syntactic Law of Quotation, and propositional
logic.
\end{proof} 

\begin{lem}\label{lem:syn-sqsubset}
$\proves{}{\synbrack{\textbf{A}_\alpha} \sqsubset_{\epsilon \tarrow
      \epsilon \tarrow o} \, \synbrack{\textbf{B}_\beta}}$ iff
  $\textbf{A}_\alpha$ is a proper subexpression of $\textbf{B}_\beta$.
\end{lem}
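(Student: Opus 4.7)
The statement is a biconditional between a syntactic notion (proper subexpression) and a proof-theoretic notion (provability in $T_{\rm logic}$), so the plan is to argue each direction separately, with completely different tools.

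For the $(\Leftarrow)$ direction (proper subexpression implies provability), the plan is induction on the structure of $\textbf{B}_\beta$, which must be eval-free since otherwise $\synbrack{\textbf{B}_\beta}$ does not exist. For variables and constants the hypothesis is vacuous. For $\textbf{B}_\beta = \textbf{F}_{\gamma\tarrow\beta}\,\textbf{C}_\gamma$, any proper subexpression $\textbf{A}_\alpha$ either equals $\textbf{F}_{\gamma\tarrow\beta}$, equals $\textbf{C}_\gamma$, or is a proper subexpression of one of them. In the two immediate subcases, instantiate Axiom B5.1 (resp.\ B5.2) at $\synbrack{\textbf{F}_{\gamma\tarrow\beta}}$ and $\synbrack{\textbf{C}_\gamma}$, then use Axiom B8.1 together with Rule R$'$ to rewrite $\mname{app}_{\epsilon\tarrow\epsilon\tarrow\epsilon}\,\synbrack{\textbf{F}_{\gamma\tarrow\beta}}\,\synbrack{\textbf{C}_\gamma}$ as $\synbrack{\textbf{F}_{\gamma\tarrow\beta}\,\textbf{C}_\gamma}$. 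In the two inductive subcases, combine the induction hypothesis with the immediate case and Axiom B5.6 (transitivity of $\sqsubset$). The cases $\textbf{B}_\beta = \LambdaApp\textbf{x}_\gamma\mdot\textbf{C}_\delta$ and $\textbf{B}_\beta = \synbrack{\textbf{C}_\gamma}$ are handled analogously, using B5.3/B5.4 with B8.2, and B5.5 with B8.3, respectively.

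For the $(\Rightarrow)$ direction (provability implies proper subexpression), the axioms about $\sqsubset$ are all positive, so there is no hope of a direct syntactic derivation of non-membership; instead I will appeal semantically to the soundness theorem (forward-referencing Section~\ref{sec:soundness}). Suppose $\proves{}{\synbrack{\textbf{A}_\alpha}\sqsubset_{\epsilon\tarrow\epsilon\tarrow o}\synbrack{\textbf{B}_\beta}}$. For any general model $\sM$ and $\phi\in\mname{assign}(\sM)$, Theorem~\ref{thm:sem-quotation} together with Proposition~\ref{prop:val-const} yields $V^{\cal M}_\phi(\synbrack{\textbf{A}_\alpha}) = \sE(\textbf{A}_\alpha)$ and $V^{\cal M}_\phi(\synbrack{\textbf{B}_\beta}) = \sE(\textbf{B}_\beta)$. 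Soundness and clause 11 of the definition of an interpretation then force $\sE(\textbf{A}_\alpha)$ to be a proper subexpression of $\sE(\textbf{B}_\beta)$ as constructions. A short check, following the clauses in the definition of $\sE$, shows that the image of $\sE$ carries the proper subexpression relation on eval-free expressions to the proper subexpression relation on constructions and that $\sE$ is injective, so $\textbf{A}_\alpha$ is itself a proper subexpression of $\textbf{B}_\beta$.

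The main obstacle is the $(\Rightarrow)$ direction, which is unavoidable given the one-sided nature of the B5 axioms and forces either a forward reference to soundness or a hand-verified semantic argument restricted to just the $\sqsubset$-axioms. A minor but necessary piece of bookkeeping is aligning ``proper subexpression of an eval-free expression'' with ``proper subexpression of a construction'' across the bridge supplied by $\sE$ and Axioms B8.1--3.
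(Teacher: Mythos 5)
Your proof is correct, and for the provability half it coincides with the paper's: the paper's entire proof is the one-liner ``by induction on the structure of $\textbf{B}_\beta$ using Axioms B5.1--6, Universal Generalization, Universal Instantiation, the Syntactic Law of Quotation, and propositional logic,'' which is exactly your induction --- B5.1--B5.5 for the immediate subexpression cases, B5.6 for the nested ones, and the Syntactic Law of Quotation (i.e., Axioms B8.1--3) with Rule R to convert, e.g., $\mname{app}_{\epsilon \tarrow \epsilon \tarrow \epsilon} \, \synbrack{\textbf{F}_{\gamma \tarrow \beta}} \, \synbrack{\textbf{C}_\gamma}$ into $\synbrack{\textbf{F}_{\gamma \tarrow \beta} \, \textbf{C}_\gamma}$. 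Where you genuinely diverge is the converse direction, which the paper's proof does not address at all: the tools it lists are all positive with respect to $\sqsubset_{\epsilon \tarrow \epsilon \tarrow o}$, and, as you note, the B5 axioms (unlike the B7 axioms for $\mname{is-free-in}_{\epsilon \tarrow \epsilon \tarrow o}$) contain no negative clauses, so unprovability in the non-subexpression case cannot be established syntactically. Your semantic detour --- soundness, condition 5 of the definition of a general model together with Proposition~\ref{prop:val-const} to evaluate the quotations, clause 11 of the definition of an interpretation, and the check that $\sE$ reflects the proper-subexpression relation --- is the right (and essentially the only) way to close this half; the forward reference to Section~\ref{sec:soundness} is harmless since the soundness proof nowhere depends on this lemma. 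The one piece of bookkeeping to spell out is the one you already flag: among the proper subexpressions of the construction $\sE(\textbf{B}_\beta)$, those lying in the range of $\sE$ (as opposed to, say, the bare constant $\mname{app}_{\epsilon \tarrow \epsilon \tarrow \epsilon}$ or one of its partial applications, which have the wrong type or shape) are precisely the $\sE$-images of the proper subexpressions of $\textbf{B}_\beta$, and injectivity of $\sE$ then finishes the argument.
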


\begin{proof}
By induction on the structure of $\textbf{B}_\beta$ using Axioms
B5.1--6, Universal Generalization, Universal Instantiation, the
Syntactic Law of Quotation, and propositional logic.
\end{proof} 

\begin{lem}\label{lem:syn-is-free-in}
$\proves{}{\mname{is-free-in}_{\epsilon \tarrow \epsilon \tarrow o} \,
    \synbrack{\textbf{x}_\alpha} \, \synbrack{\textbf{B}_\beta}}$
  {\sglsp} iff {\sglsp} $\textbf{x}_\alpha$ is free in
  $\textbf{B}_\beta$.
\end{lem}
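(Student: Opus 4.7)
The plan is to prove both directions simultaneously by induction on the structure of the eval-free expression $\textbf{B}_\beta$: for each $\textbf{B}_\beta$ we show that if $\textbf{x}_\alpha$ is free in $\textbf{B}_\beta$ then $\proves{}{\mname{is-free-in}_{\epsilon \tarrow \epsilon \tarrow o} \, \synbrack{\textbf{x}_\alpha} \, \synbrack{\textbf{B}_\beta}}$, and if $\textbf{x}_\alpha$ is not free in $\textbf{B}_\beta$ then $\proves{}{\Neg(\mname{is-free-in}_{\epsilon \tarrow \epsilon \tarrow o} \, \synbrack{\textbf{x}_\alpha} \, \synbrack{\textbf{B}_\beta})}$. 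Since every quotation satisfies $\mname{is-var}$, $\mname{is-con}$, or $\mname{is-expr}$ by Lemma~\ref{lem:syn-is-expr} (together with Axioms B1.2 and B2.2), the side conditions attached to the B7 axioms will all be routinely dischargeable.

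The five cases of the induction track the five formation rules for eval-free expressions. In the variable case $\textbf{B}_\beta = \textbf{y}_\beta$, we split on whether $\textbf{x}_\alpha$ and $\textbf{y}_\beta$ are the same symbol: if so, Axiom B7.1 (with B1.2) gives the positive statement; if not, Axiom B4.6 yields $\synbrack{\textbf{x}_\alpha} \not= \synbrack{\textbf{y}_\beta}$, which feeds Axiom B7.2 to produce the negation. The constant case follows directly from Axiom B7.3 (using B2.2 for $\mname{is-con}$). For a function application $\textbf{F}_{\gamma \tarrow \beta} \, \textbf{A}_\gamma$, apply Axiom B8.1 to rewrite $\synbrack{\textbf{F}_{\gamma \tarrow \beta} \, \textbf{A}_\gamma}$ as $\mname{app}_{\epsilon \tarrow \epsilon \tarrow \epsilon} \, \synbrack{\textbf{F}_{\gamma \tarrow \beta}} \, \synbrack{\textbf{A}_\gamma}$ and then use Axiom B7.4; the induction hypothesis on the two subterms matches the syntactic definition ``free in an application iff free in one of its subexpressions''. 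The quotation case $\textbf{B}_\beta = \synbrack{\textbf{C}_\gamma}$ is closed by Axiom B8.3 followed by Axiom B7.7, matching the syntactic convention that a variable is never free inside a quotation.

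The function abstraction case $\textbf{B}_\beta = \LambdaApp \textbf{y}_\gamma \mdot \textbf{C}_\delta$ is the only subtle one. Rewrite $\synbrack{\LambdaApp \textbf{y}_\gamma \mdot \textbf{C}_\delta}$ via Axiom B8.2 as $\mname{abs}_{\epsilon \tarrow \epsilon \tarrow \epsilon} \, \synbrack{\textbf{y}_\gamma} \, \synbrack{\textbf{C}_\delta}$. If $\textbf{x}_\alpha$ and $\textbf{y}_\gamma$ are the same symbol, Axiom B7.6 delivers the negation, matching the fact that $\textbf{x}_\alpha$ is not free in $\LambdaApp \textbf{x}_\alpha \mdot \textbf{C}_\delta$. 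Otherwise Axiom B4.6 supplies $\synbrack{\textbf{x}_\alpha} \not= \synbrack{\textbf{y}_\gamma}$, so Axiom B7.5 reduces the question to whether $\mname{is-free-in}_{\epsilon \tarrow \epsilon \tarrow o} \, \synbrack{\textbf{x}_\alpha} \, \synbrack{\textbf{C}_\delta}$ holds, which the induction hypothesis settles in lockstep with the syntactic definition.

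The main obstacle is purely bookkeeping: one must ensure that each use of Axioms B7.4--B7.7 comes with a proof of its $\mname{is-expr}$ premise for the relevant constructions, and each use of Axiom B7.2 or B7.5 comes with a proof of the inequality of the two quoted variables. Both ingredients are uniformly available: Lemma~\ref{lem:syn-is-expr} handles the former, and Axiom B4.6 handles the latter. With these in hand the induction is entirely mechanical, proceeding by Rule R together with the Syntactic Law of Quotation (Theorem~\ref{thm:syn-quotation}) for rewriting quoted subexpressions, and propositional reasoning (the Tautology Theorem) to combine the two halves of Axiom B7.4 in the application case.
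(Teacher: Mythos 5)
Your induction is the same argument the paper compresses into a single sentence, and the case analysis is carried out correctly: B7.1/B7.2 with B1.2 and B4.6 for variables, B7.3 with B2.2 for constants, B8.1+B7.4 for applications, B8.2 with B7.5/B7.6 and the split on whether the bound variable coincides with $\textbf{x}_\alpha$ for abstractions, and B8.3+B7.7 for quotations, with Lemma~\ref{lem:syn-is-expr} discharging the $\mname{is-expr}$ side conditions throughout. There is one step you have not accounted for. Your induction establishes (i) if $\textbf{x}_\alpha$ is free in $\textbf{B}_\beta$ then $\proves{}{\mname{is-free-in}_{\epsilon \tarrow \epsilon \tarrow o} \, \synbrack{\textbf{x}_\alpha} \, \synbrack{\textbf{B}_\beta}}$, and (ii) if it is not free then $\proves{}{\Neg(\mname{is-free-in}_{\epsilon \tarrow \epsilon \tarrow o} \, \synbrack{\textbf{x}_\alpha} \, \synbrack{\textbf{B}_\beta})}$. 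Item (i) is the ``if'' direction, but the ``only if'' direction asserts \emph{non-provability} of the positive formula when $\textbf{x}_\alpha$ is not free in $\textbf{B}_\beta$, and refutability (ii) yields non-provability only if $T_{\rm logic}$ is consistent. So you must additionally invoke soundness of the axioms and Rule R together with the existence of a general model (Proposition~\ref{prop:gen-models-exist}) --- equivalently, derive the only-if direction semantically from the Soundness Theorem and Lemma~\ref{lem:sem-is-free-in}. This is a standard and easily filled step, but as written your proof does not close the only-if direction.
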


\begin{proof}
By induction on the structure of $\textbf{B}_\beta$ using Axioms
B1.1--4, B2.1--4, B7.1--8, Lemma~\ref{lem:syn-is-expr}, Universal
Generalization, Universal Instantiation, the Syntactic Law of
Quotation, and propositional logic.
\end{proof}

\subsection{Evaluations}

\begin{thm}[Syntactic Law of Disquotation]\label{thm:syn-disquotation}
$\proves{}{\sembrack{\synbrack{\textbf{A}_\alpha}}_\alpha =
    \textbf{A}_\alpha}$.
\end{thm}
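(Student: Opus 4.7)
The plan is a structural induction on the eval-free expression $\textbf{A}_\alpha$, mirroring the five-case argument used for the Syntactic Law of Quotation (Theorem~\ref{thm:syn-quotation}). The axioms B10.1--5 for evaluation are the disquotation counterparts of the axioms B8.1--3 for quotation together with the constructor cases of Axiom B3, so each case reduces a nested $\sembrack{\synbrack{\cdot}}$ by pairing the appropriate B8/B10 axioms and then appealing either to a base axiom or to the inductive hypothesis.

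The variable and constant cases are exactly Axioms B10.1 and B10.2. For the application case $\textbf{A}_\alpha = \textbf{F}_{\beta \tarrow \alpha} \, \textbf{B}_\beta$, I would rewrite $\synbrack{\textbf{F}_{\beta \tarrow \alpha} \, \textbf{B}_\beta}$ to $\mname{app}_{\epsilon \tarrow \epsilon \tarrow \epsilon} \, \synbrack{\textbf{F}_{\beta \tarrow \alpha}} \, \synbrack{\textbf{B}_\beta}$ via Axiom B8.1, invoke Axiom B10.3 with its two $\mname{is-expr}^{\cdot}$ hypotheses supplied by Lemma~\ref{lem:syn-is-expr}, and close the case by rewriting with the two inductive hypotheses through the Equality Rules. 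The quotation case $\textbf{A}_\alpha = \synbrack{\textbf{B}_\beta}$ (so $\alpha = \epsilon$) is analogous: rewrite $\synbrack{\synbrack{\textbf{B}_\beta}}$ to $\mname{quo}_{\epsilon \tarrow \epsilon} \, \synbrack{\textbf{B}_\beta}$ via Axiom B8.3 and then apply Axiom B10.5, with its hypothesis supplied by the instance of Lemma~\ref{lem:syn-is-expr} applied to $\synbrack{\textbf{B}_\beta}$ viewed as an eval-free expression of type $\epsilon$.

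The abstraction case $\textbf{A}_\alpha = \LambdaApp \textbf{x}_\beta \mdot \textbf{B}_\gamma$ requires the most care. I rewrite $\synbrack{\LambdaApp \textbf{x}_\beta \mdot \textbf{B}_\gamma}$ to $\mname{abs}_{\epsilon \tarrow \epsilon \tarrow \epsilon} \, \synbrack{\textbf{x}_\beta} \, \synbrack{\textbf{B}_\gamma}$ using Axiom B8.2 and then appeal to Axiom B10.4. Its first hypothesis $\mname{is-expr}^\gamma \, \synbrack{\textbf{B}_\gamma}$ is immediate from Lemma~\ref{lem:syn-is-expr}; its second hypothesis $\Neg(\mname{is-free-in}_{\epsilon \tarrow \epsilon \tarrow o} \, \synbrack{\textbf{x}_\beta} \, \synbrack{\synbrack{\textbf{B}_\gamma}})$ follows from Lemma~\ref{lem:syn-is-free-in} together with the syntactic observation that no occurrence of a variable inside a quotation is counted as free. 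The conclusion of B10.4 is $\LambdaApp \textbf{x}_\beta \mdot \sembrack{\synbrack{\textbf{B}_\gamma}}_\gamma$, and the inductive hypothesis $\sembrack{\synbrack{\textbf{B}_\gamma}}_\gamma = \textbf{B}_\gamma$ must then be pushed under the binder using Rule $R'$.

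The main obstacle I anticipate is that final sub-step. Because the theorem is stated with $\sH = \emptyset$, the effectiveness side condition that Rule $R'$ imposes on substitutions under a $\lambda$ is vacuously satisfied --- the disjunct ``$\proves{T,\sH}{\Neg \mname{IS-EFFECTIVE-IN}(\textbf{x}_\beta, \textbf{H}_o)}$ for all $\textbf{H}_o \in \sH$'' holds trivially --- so the inductive hypothesis can be rewritten into the body of the $\lambda$ without further justification. In any hypothetical version of the theorem, one would instead need to prove $\Neg \mname{IS-EFFECTIVE-IN}(\textbf{x}_\beta, \sembrack{\synbrack{\textbf{B}_\gamma}}_\gamma = \textbf{B}_\gamma)$ directly, and since the equation contains an evaluation, Axiom B12 would not apply --- this is the one place where the argument genuinely depends on the empty context.
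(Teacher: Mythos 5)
Your proposal is correct and follows essentially the same route as the paper's proof: structural induction over the five formation rules for eval-free expressions, with the base cases discharged by Axioms B10.1--2 and the inductive cases by pairing B8.1--3 with B10.3--5, supplying the side conditions via Lemmas~\ref{lem:syn-is-expr} and~\ref{lem:syn-is-free-in}. Your observation about rewriting under the binder in the abstraction case is also right --- since the theorem is proved with $\sH = \emptyset$, Rule ${\rm R}'$ reduces to Rule R and the replacement in the body of the $\lambda$ needs no effectiveness side condition, which is exactly how the paper's appeal to the Equality Rules goes through there.
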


\begin{proof}
Let $\textbf{A}_\alpha$ be eval-free.  Our proof is by induction on
the structure of $\textbf{A}_\alpha$.  There are five cases
corresponding to the five formation rules for eval-free expressions.

\bi

  \item[] \textbf{Case 1}: $\textbf{A}_\alpha$ is a variable
    $\textbf{x}_\alpha$. $\proves{}{\sembrack{\synbrack{\textbf{x}_\alpha}}_\alpha
    = \textbf{x}_\alpha}$ by Axiom B10.1.

  \item[] \textbf{Case 2}: $\textbf{A}_\alpha$ is a constant
    $\textbf{c}_\alpha$. $\proves{}{\sembrack{\synbrack{\textbf{c}_\alpha}}_\alpha
    = \textbf{c}_\alpha}$ by Axiom B10.2.

  \item[] \textbf{Case 3}: $\textbf{A}_\alpha$ is a function
    application $\textbf{F}_{\beta \tarrow \gamma} \, \textbf{B}_\beta$.
    \begin{align} \setcounter{equation}{0}
    & \vdash \mname{is-expr}_{\epsilon \tarrow o}^{\beta \tarrow \gamma} \, 
      \synbrack{\textbf{F}_{\beta \tarrow \gamma}}\\
    & \vdash \mname{is-expr}_{\epsilon \tarrow o}^{\beta} \, 
      \synbrack{\textbf{B}_\beta}\\
    & \vdash \sembrack{\mname{app}_{\epsilon \tarrow \epsilon \tarrow \epsilon} \, 
      \synbrack{\textbf{F}_{\beta \tarrow \gamma}} \,
      \synbrack{\textbf{B}_\beta}}_\gamma =
      \sembrack{\synbrack{\textbf{F}_{\beta \tarrow \gamma}}}_{\beta \tarrow \gamma} \,
      \sembrack{\synbrack{\textbf{B}_\beta}}_{\beta}\\
    & \vdash \sembrack{\synbrack{\textbf{F}_{\beta \tarrow \gamma}}}_{\beta \tarrow \gamma}
      = \textbf{F}_{\beta \tarrow \gamma}\\
    & \vdash \sembrack{\synbrack{\textbf{B}_\beta}}_{\beta} = \textbf{B}_\beta\\
    & \vdash \synbrack{\textbf{F}_{\beta \tarrow \gamma} \, \textbf{B}_\beta} =
      \mname{app}_{\epsilon \tarrow \epsilon \tarrow \epsilon} \, 
      \synbrack{\textbf{F}_{\beta \tarrow \gamma}} \,
      \synbrack{\textbf{B}_\beta}\\
    & \vdash \sembrack{\synbrack{\textbf{F}_{\beta \tarrow \gamma} \, 
      \textbf{B}_\beta}}_\gamma = 
      \textbf{F}_{\beta \tarrow \gamma} \, \textbf{B}_\beta
    \end{align}
    (1) and (2) are by Lemma~\ref{lem:syn-is-expr}; (3) follows from
    (1) and (2) by Axiom B10.3; (4) and (5) are by the induction
    hypothesis; (6) is an instance of Axiom B8.1; and (7) follows from
    (3)--(6) by the Equality Rules.

  \item[] \textbf{Case 4}: $\textbf{A}_\alpha$ is an function
    abstraction $\LambdaApp \textbf{x}_\beta \mdot \textbf{B}_\gamma$.
    \begin{align} \setcounter{equation}{0}
    & \vdash \mname{is-expr}_{\epsilon \tarrow o}^{\gamma} \, 
      \synbrack{\textbf{B}_\gamma}\\
    & \vdash \Neg(\mname{is-free-in}_{\epsilon \tarrow \epsilon \tarrow o} \, 
      \synbrack{\textbf{x}_\beta} \,
      \synbrack{\synbrack{\textbf{B}_\gamma}})\\
    & \vdash \sembrack{\mname{abs}_{\epsilon \tarrow \epsilon \tarrow \epsilon} \, 
      \synbrack{\textbf{x}_\beta} \,
      \synbrack{\textbf{B}_\gamma}}_{\beta \tarrow \gamma} = 
      \LambdaApp \textbf{x}_\beta \mdot 
      \sembrack{\synbrack{\textbf{B}_\gamma}}_\gamma\\
    & \vdash \sembrack{\synbrack{\textbf{B}_\gamma}}_\gamma =
      \textbf{B}_\gamma\\
    & \vdash \synbrack{\LambdaApp \textbf{x}_\beta \mdot \textbf{B}_\gamma} =
      \mname{abs}_{\epsilon \tarrow \epsilon \tarrow \epsilon} \, 
      \synbrack{\textbf{x}_\beta} \, \synbrack{\textbf{B}_\gamma}\\
    & \vdash \sembrack{\synbrack{\LambdaApp \textbf{x}_\beta \mdot 
      \textbf{B}_\gamma}}_{\beta \tarrow \gamma} = 
      \LambdaApp \textbf{x}_\beta \mdot \textbf{B}_\gamma
    \end{align}
    (1) is by Lemma~\ref{lem:syn-is-expr}; (2) is by
    Lemma~\ref{lem:syn-is-free-in}; (3) follows from (1) and (2) by
    Axiom B10.4; (4) is by the induction hypothesis; (5) is an instance
    of Axiom B8.2; and (6) follows from (3)--(5) by the Equality
    Rules.

  \item[] \textbf{Case 5}: $\textbf{A}_\alpha$ is a quotation
    $\synbrack{\textbf{B}_\beta}$.
    \begin{align} \setcounter{equation}{0} 
    & \vdash \mname{is-expr}_{\epsilon \tarrow o}^{\beta} \, 
      \synbrack{\textbf{B}_\beta}\\
    & \vdash \sembrack{\mname{quo}_{\epsilon \tarrow \epsilon} \,
      \synbrack{\textbf{B}_\beta}}_\epsilon = \synbrack{\textbf{B}_\beta}\\
    & \vdash \synbrack{\synbrack{\textbf{B}_\beta}} =
      \mname{quo}_{\epsilon \tarrow \epsilon} \, \synbrack{\textbf{B}_\beta}\\
    & \vdash \sembrack{\synbrack{\synbrack{\textbf{B}_\beta}}}_\epsilon =
      \synbrack{\textbf{B}_\beta}   
    \end{align}
    (1) is by Lemma~\ref{lem:syn-is-expr}; (2) follows from (1) by
    Axiom B10.5; (3) is an instance of Axiom B8.3; and (4) follows
    from (2) and (3) by the Equality Rules.

\ei
\end{proof}

\bigskip

The next lemma, which illustrates the application of Axiom B11.2, will
be employed in section~\ref{sec:revisit}.

\begin{lem}\label{lem:axiomB11.2-app}
\be

  \item[]

  \item If $\textbf{A}_\alpha$ is eval-free and $\textbf{x}_\epsilon$
    is not free in $\textbf{A}_\alpha$, then \[\proves{}{(\LambdaApp
      \textbf{x}_\epsilon \mdot \sembrack{\textbf{x}_\epsilon}_\alpha)
      \, \synbrack{\textbf{A}_\alpha} = \textbf{A}_\alpha}.\]

  \item If $\proves{T,\sH}{\mname{is-expr}_{\epsilon \tarrow
      o}^{\alpha} \, \textbf{B}_\epsilon}$ and
    $\proves{T,\sH}{\Neg\mname{is-free-in}_{\epsilon \tarrow \epsilon
      \tarrow o} \, \synbrack{\textbf{x}_\epsilon} \,
    \textbf{B}_\epsilon}$, then \[\proves{T,\sH}{(\LambdaApp
    \textbf{x}_\epsilon \mdot \sembrack{\textbf{x}_\epsilon}_\alpha)
    \, \textbf{B}_\epsilon = \sembrack{\textbf{B}_\epsilon}_\alpha}.\]

  \item If $\proves{T,\sH}{\mname{is-expr}_{\epsilon \tarrow
      o}^{\alpha} \, \textbf{y}_\epsilon}$,
    $\proves{T,\sH}{\Neg\mname{is-free-in}_{\epsilon \tarrow \epsilon
      \tarrow o} \, \synbrack{\textbf{x}_\epsilon} \,
    \textbf{y}_\epsilon}$, and $\textbf{x}_\epsilon$ and
    $\textbf{y}_\epsilon$ are different variables,
    then \[\proves{T,\sH}{(\LambdaApp \textbf{x}_\epsilon \mdot
      \sembrack{\textbf{y}_\epsilon}_\alpha) \, \textbf{B}_\epsilon =
      \sembrack{\textbf{y}_\epsilon}_\alpha}.\]

  \item \bsp Let $\textbf{B}_\epsilon$ be eval-free;
    $\proves{T,\sH}{\Neg\mname{is-free-in}_{\epsilon \tarrow \epsilon
      \tarrow o} \, \synbrack{\textbf{x}_\epsilon} \,
    \synbrack{\textbf{B}_\epsilon}}$;
    $\proves{T,\sH}{\mname{is-expr}_{\epsilon \tarrow o}^{\alpha} \,
    \textbf{B}_\epsilon}$; and
    $\proves{T,\sH}{\Neg\mname{is-free-in}_{\epsilon \tarrow \epsilon
      \tarrow o} \, \synbrack{\textbf{x}_\epsilon} \,
    \textbf{B}_\epsilon}$.  Also assume that there is some variable
    $\textbf{y}_\epsilon$ different from $\textbf{x}_\epsilon$ such
    that \[\proves{T,\sH}{\Neg
      \mname{IS-EFFECTIVE-IN}(\textbf{y}_\epsilon,\textbf{H}_o)}\] for
    all $\textbf{H}_o \in \sH$. Then \[\proves{T,\sH}{\Neg
      \mname{IS-EFFECTIVE-IN}( \textbf{x}_\epsilon,
      \sembrack{\textbf{B}_\epsilon}_\alpha)}.\]\esp

\ee
\end{lem}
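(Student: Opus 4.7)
The plan is to prove all four parts by instantiating Axiom B11.2 appropriately and then simplifying both sides with beta-reduction (A4.1--2 or Beta-Reduction by Substitution), Rule R, and the Syntactic Law of Disquotation. In every case, I instantiate B11.2's inner construction and its argument so that the conclusion $(\LambdaApp \textbf{x}_\alpha \mdot \sembrack{\textbf{B}_\epsilon}_\beta)\,\textbf{A}_\alpha = \sembrack{(\LambdaApp \textbf{x}_\alpha \mdot \textbf{B}_\epsilon)\,\textbf{A}_\alpha}_\beta$ matches the target on the left, and I discharge B11.2's two hypotheses by first reducing the inner beta-redex and then transferring the available is-expr and non-is-free-in facts through Rule R.

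For Part 1, I take the inner body to be $\textbf{x}_\epsilon$ and the argument to be $\synbrack{\textbf{A}_\alpha}$. Then $(\LambdaApp \textbf{x}_\epsilon \mdot \textbf{x}_\epsilon)\,\synbrack{\textbf{A}_\alpha} = \synbrack{\textbf{A}_\alpha}$ by A4.2, so B11.2's hypotheses reduce to $\mname{is-expr}^{\alpha}_{\epsilon \tarrow o}\,\synbrack{\textbf{A}_\alpha}$ (Lemma~\ref{lem:syn-is-expr}) and $\Neg\mname{is-free-in}_{\epsilon \tarrow \epsilon \tarrow o}\,\synbrack{\textbf{x}_\epsilon}\,\synbrack{\textbf{A}_\alpha}$ (Lemma~\ref{lem:syn-is-free-in} applied to the meta-hypothesis that $\textbf{x}_\epsilon$ is not free in $\textbf{A}_\alpha$). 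Applying B11.2 and simplifying its right-hand side via A4.2 inside the evaluation (Equality Rules of Lemma~\ref{lem:eq-rules}) followed by the Syntactic Law of Disquotation (Theorem~\ref{thm:syn-disquotation}) yields $\textbf{A}_\alpha$. Parts~2 and~3 are handled by the same template: in Part~2, the inner body is again $\textbf{x}_\epsilon$ and the argument is the general expression $\textbf{B}_\epsilon$, so A4.2 reduces $(\LambdaApp \textbf{x}_\epsilon \mdot \textbf{x}_\epsilon)\,\textbf{B}_\epsilon$ to $\textbf{B}_\epsilon$ and B11.2's hypotheses follow directly from those assumed (via Rule R); in Part~3, the inner body is $\textbf{y}_\epsilon$, A4.1 reduces $(\LambdaApp \textbf{x}_\epsilon \mdot \textbf{y}_\epsilon)\,\textbf{B}_\epsilon$ to $\textbf{y}_\epsilon$ since $\textbf{x}_\epsilon$ and $\textbf{y}_\epsilon$ are distinct, and the hypotheses again transfer directly.

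Part~4 requires more work. I instantiate B11.2 with inner body $\textbf{B}_\epsilon$ and argument $\textbf{y}_\epsilon$. Since $\textbf{B}_\epsilon$ is eval-free and, by Lemma~\ref{lem:syn-is-free-in} applied to the provable formula $\Neg\mname{is-free-in}_{\epsilon \tarrow \epsilon \tarrow o}\,\synbrack{\textbf{x}_\epsilon}\,\synbrack{\textbf{B}_\epsilon}$, the variable $\textbf{x}_\epsilon$ is not syntactically free in $\textbf{B}_\epsilon$ (the alternative case renders $T,\sH$ inconsistent so the conclusion is vacuous), the Beta-Reduction by Substitution Theorem gives $(\LambdaApp \textbf{x}_\epsilon \mdot \textbf{B}_\epsilon)\,\textbf{y}_\epsilon = \textbf{B}_\epsilon$. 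Rule R then upgrades the two assumed facts on $\textbf{B}_\epsilon$ to the corresponding facts on the beta-redex, so B11.2 applies and, after one more use of the Equality Rules on its right-hand side, yields $\proves{T,\sH}{(\LambdaApp \textbf{x}_\epsilon \mdot \sembrack{\textbf{B}_\epsilon}_\alpha)\,\textbf{y}_\epsilon = \sembrack{\textbf{B}_\epsilon}_\alpha}$. Universal Generalization over $\textbf{y}_\epsilon$ --- licensed precisely by the hypothesis that $\textbf{y}_\epsilon$ is not effective in any $\textbf{H}_o \in \sH$ --- produces $\ForallApp \textbf{y}_\epsilon \mdot (\LambdaApp \textbf{x}_\epsilon \mdot \sembrack{\textbf{B}_\epsilon}_\alpha)\,\textbf{y}_\epsilon = \sembrack{\textbf{B}_\epsilon}_\alpha$, which is exactly $\Neg\mname{IS-EFFECTIVE-IN}(\textbf{x}_\epsilon,\sembrack{\textbf{B}_\epsilon}_\alpha)$.

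The main obstacle is Part~4: extracting syntactic non-freeness of $\textbf{x}_\epsilon$ in $\textbf{B}_\epsilon$ from the provable non-is-free-in hypothesis (so that Beta-Reduction by Substitution produces a clean identity on the beta-redex), and then finishing with Universal Generalization --- a step for which the extra hypothesis on $\textbf{y}_\epsilon$ and $\sH$ is essential. Parts~1--3, by contrast, are specializations of a single B11.2 pattern and should go through mechanically once the right instantiation is identified.
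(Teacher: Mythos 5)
Your proposal is correct and follows essentially the same route as the paper's proof: each part instantiates Axiom B11.2 on a suitable inner beta-redex, discharges its two hypotheses via Lemmas~\ref{lem:syn-is-expr} and~\ref{lem:syn-is-free-in} (or the assumed provable facts transferred by the Equality Rules), and simplifies with A4.1/A4.2, Beta-Reduction by Substitution, and the Syntactic Law of Disquotation, with Part~4 finishing by Universal Generalization over $\textbf{y}_\epsilon$ exactly as in the paper. The only cosmetic difference is that you make explicit the inconsistency fallback when extracting syntactic non-freeness of $\textbf{x}_\epsilon$ in $\textbf{B}_\epsilon$ in Part~4, which the paper leaves implicit.
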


\begin{proof}

\medskip

\noindent \textbf{Part 1}
\begin{align} \setcounter{equation}{0}
& \vdash (\LambdaApp \textbf{x}_\epsilon \mdot \textbf{x}_\epsilon) \, 
  \synbrack{\textbf{A}_\alpha} = 
  \synbrack{\textbf{A}_\alpha}\\
& \vdash \mname{is-expr}_{\epsilon \tarrow o}^{\alpha} \,
  ((\LambdaApp \textbf{x}_\epsilon \mdot \textbf{x}_\epsilon) \, 
  \synbrack{\textbf{A}_\alpha})\\
& \vdash \Neg\mname{is-free-in}_{\epsilon \tarrow \epsilon \tarrow o} \,
  \synbrack{\textbf{x}_\epsilon} \, 
  ((\LambdaApp \textbf{x}_\epsilon \mdot \textbf{x}_\epsilon) \, 
  \synbrack{\textbf{A}_\alpha})\\
& \vdash (\LambdaApp \textbf{x}_\epsilon \mdot \sembrack{\textbf{x}_\epsilon}_\alpha) \, 
  \synbrack{\textbf{A}_\alpha} =
  \sembrack{(\LambdaApp \textbf{x}_\epsilon \mdot \textbf{x}_\epsilon) \, 
  \synbrack{\textbf{A}_\alpha}}_\alpha\\
& \vdash (\LambdaApp \textbf{x}_\epsilon \mdot 
  \sembrack{\textbf{x}_\epsilon}_\alpha) \,
  \synbrack{\textbf{A}_\alpha} = \textbf{A}_\alpha
\end{align}
(1) is an instance of Axiom A4.2; (2) follows from (1) by
Lemma~\ref{lem:syn-is-expr} and the Equality Rules; (3) follows from
(1) by Lemma~\ref{lem:syn-is-free-in}, the hypothesis, and the
Equality Rules; (4) follows from (2), (3), and Axiom B11.2 by Modus
Ponens; (5) follows from (1) and (4) by the Equality Rules and the
Syntactic Law of Disquotation.

\medskip

\noindent \textbf{Part 2}
\begin{align} \setcounter{equation}{0}
& \vdash (\LambdaApp \textbf{x}_\epsilon \mdot \textbf{x}_\epsilon) \, 
  \textbf{B}_\epsilon = 
  \textbf{B}_\epsilon\\
T,\sH & \vdash \mname{is-expr}_{\epsilon \tarrow o}^{\alpha} \,
  ((\LambdaApp \textbf{x}_\epsilon \mdot \textbf{x}_\epsilon) \, 
  \textbf{B}_\epsilon)\\
T,\sH & \vdash \Neg\mname{is-free-in}_{\epsilon \tarrow \epsilon \tarrow o} \,
  \synbrack{\textbf{x}_\epsilon} \, 
  ((\LambdaApp \textbf{x}_\epsilon \mdot \textbf{x}_\epsilon) \, 
  \textbf{B}_\epsilon)\\
T,\sH & \vdash (\LambdaApp \textbf{x}_\epsilon \mdot 
  \sembrack{\textbf{x}_\epsilon}_\alpha) \, 
  \textbf{B}_\epsilon =
  \sembrack{(\LambdaApp \textbf{x}_\epsilon \mdot \textbf{x}_\epsilon) \, 
  \textbf{B}_\epsilon}_\alpha\\
T,\sH & \vdash (\LambdaApp \textbf{x}_\epsilon \mdot 
  \sembrack{\textbf{x}_\epsilon}_\alpha) \,
  \textbf{B}_\epsilon = \sembrack{\textbf{B}_\epsilon}_\alpha
\end{align}
(1) is an instance of Axiom A4.2; (2) and (3) follow from (1) and the
two hypotheses by the Equality Rules; (4) follows from (2), (3), and
Axiom B11.2 by Modus Ponens; (5) follows from (1) and (4) by the
Equality Rules.

\medskip

\noindent \textbf{Part 3} {\sglsp} Similar to the proof of part 2.

\medskip

\noindent \textbf{Part 4} {\sglsp} By the definitions of
$\mname{IS-EFFECTIVE-IN}$ and $\Forsome$ and propositional logic, we
must show
\[\proves{T,\sH}{\ForallApp \textbf{y}_\epsilon \mdot ((\LambdaApp
  \textbf{x}_\epsilon \mdot \sembrack{\textbf{B}_\epsilon}_\alpha) \,
  \textbf{y}_\epsilon} = \sembrack{\textbf{B}_\epsilon}_\alpha).\]
\begin{align} \setcounter{equation}{0}
& T,\sH \vdash (\LambdaApp \textbf{x}_\epsilon \mdot \textbf{B}_\epsilon) \, 
  \textbf{y}_\epsilon = 
  \textbf{B}_\epsilon\\
& T,\sH \vdash \mname{is-expr}_{\epsilon \tarrow o}^{\alpha} \,
  ((\LambdaApp \textbf{x}_\epsilon \mdot \textbf{B}_\epsilon) \, 
  \textbf{y}_\epsilon)\\
& T,\sH \vdash \Neg\mname{is-free-in}_{\epsilon \tarrow \epsilon \tarrow o} \,
  \synbrack{\textbf{x}_\epsilon} \, 
  ((\LambdaApp \textbf{x}_\epsilon \mdot \textbf{B}_\epsilon) \, 
  \textbf{y}_\epsilon)\\
& T,\sH \vdash (\LambdaApp \textbf{x}_\epsilon \mdot 
  \sembrack{\textbf{B}_\epsilon}_\alpha) \, 
  \textbf{y}_\epsilon =
  \sembrack{(\LambdaApp \textbf{x}_\epsilon \mdot \textbf{B}_\epsilon) \, 
  \textbf{y}_\epsilon}_\alpha\\
& T,\sH \vdash (\LambdaApp \textbf{x}_\epsilon \mdot 
  \sembrack{\textbf{B}_\epsilon}_\alpha) \,
  \textbf{y}_\epsilon = \sembrack{\textbf{B}_\epsilon}_\alpha\\
& T,\sH \vdash \ForallApp \textbf{y}_\epsilon \mdot 
  ((\LambdaApp \textbf{x}_\epsilon \mdot 
  \sembrack{\textbf{B}_\epsilon}_\alpha) \,
  \textbf{y}_\epsilon = \sembrack{\textbf{B}_\epsilon}_\alpha)
\end{align}
(1) follows from the second hypothesis by Beta-Reduction by
Substitution and Lemma~\ref{lem:syn-is-free-in}; (2) and (3) follow
from (1) and the third and fourth hypotheses by the Equality Rules;
(4) follows from (2), (3), and Axiom B11.2 by Modus Ponens; (5)
follows from (1) and (4) by the Equality Rules; and (6) follows from
(5) and the hypothesis about $\textbf{y}_\epsilon$ by Universal
Generalization.
\end{proof}

\section{Soundness}\label{sec:soundness}

The proof system for {\churchqe} is \emph{sound} if
$\proves{T}{\textbf{A}_o}$ implies $T \vDash \textbf{A}_o$ whenever
$T$ is a theory of {\churchqe} and $\textbf{A}_o$ is a formula of $T$.
We will prove that the proof system for {\churchqe} is sound by
showing that its axioms are valid in all general models for
{\churchqe} and its single rule of inference preserves validity in
all general models for {\churchqe}.  

\subsection{Lemmas Concerning Semantics}

\begin{lem}\label{lem:val-a}
Let $\sM = (\set{D_\alpha \;|\; \alpha \in \sT}, I)$ be a general
model for {\churchqe} and $\phi \in \mname{assign}(\sM)$.  

\be

  \item $V^{\cal M}_{\phi}(\textbf{A}_\alpha = \textbf{B}_\alpha) =
    \TRUE$ {\sglsp} iff {\sglsp} $V^{\cal M}_{\phi}(\textbf{A}_\alpha)
    = V^{\cal M}_{\phi}(\textbf{B}_\alpha)$.

  \item $V^{\cal M}_{\phi}((\LambdaApp \textbf{x}_\alpha \mdot
    \textbf{B}_\beta) \, \textbf{A}_\alpha) = V^{\cal M}_{\phi[{\bf
          x}_\alpha \mapsto V^{\cal M}_{\phi}({\bf
          A}_\alpha)]}(\textbf{B}_{\beta})$.

  \item $V^{\cal M}_{\phi}(\LambdaApp \textbf{x}_\alpha \mdot
    \textbf{B}_\beta) = V^{\cal M}_{\phi[{\bf x}_\alpha \mapsto
        d]}(\LambdaApp \textbf{x}_\alpha \mdot \textbf{B}_{\beta})$
    for all $d \in D_\alpha$.

  \item $V^{\cal M}_{\phi}(\ForallApp \textbf{x}_\alpha \mdot
    \textbf{A}_o) = \TRUE$ {\sglsp} iff {\sglsp} $V^{\cal
      M}_{\phi[{\bf x}_\alpha \mapsto d]}(\textbf{A}_o) = \TRUE$ for
    all $d \in D_\alpha$.

\ee

\end{lem}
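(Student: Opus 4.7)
The plan is to derive each of the four clauses directly from the conditions in the definition of a general model in subsection~\ref{subsec:gen-models}, together with the abbreviations in Table~\ref{tab:defs}; no inductive argument is needed, only the appropriate unfolding of definitions. I would prove the parts in order, since Part~4 relies on Part~1.

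For Part~1, I would rewrite $\textbf{A}_\alpha = \textbf{B}_\alpha$ as ${=_{\alpha \tarrow \alpha \tarrow o}} \, \textbf{A}_\alpha \, \textbf{B}_\alpha$, apply condition~3 of the general-model definition twice to push the valuation inside the application, apply condition~2 to identify $V^{\cal M}_{\phi}({=_{\alpha \tarrow \alpha \tarrow o}})$ with $I({=_{\alpha \tarrow \alpha \tarrow o}})$, and invoke condition~1 of the interpretation definition, according to which this function is the identity relation on $D_\alpha$. For Part~2, condition~3 of the general-model definition gives $V^{\cal M}_{\phi}((\LambdaApp \textbf{x}_\alpha \mdot \textbf{B}_\beta) \, \textbf{A}_\alpha) = V^{\cal M}_{\phi}(\LambdaApp \textbf{x}_\alpha \mdot \textbf{B}_\beta)(V^{\cal M}_{\phi}(\textbf{A}_\alpha))$, and condition~4 describes that inner function as the one sending $d \in D_\alpha$ to $V^{\cal M}_{\phi[\textbf{x}_\alpha \mapsto d]}(\textbf{B}_\beta)$; evaluating at $d = V^{\cal M}_{\phi}(\textbf{A}_\alpha)$ closes the case.

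For Part~3, I would apply condition~4 to each side. The left-hand side is the function $f \in D_{\alpha \tarrow \beta}$ with $f(e) = V^{\cal M}_{\phi[\textbf{x}_\alpha \mapsto e]}(\textbf{B}_\beta)$, while the right-hand side is the function $g$ with $g(e) = V^{\cal M}_{\phi[\textbf{x}_\alpha \mapsto d][\textbf{x}_\alpha \mapsto e]}(\textbf{B}_\beta)$; since an outer rebinding of $\textbf{x}_\alpha$ overwrites the inner one, $\phi[\textbf{x}_\alpha \mapsto d][\textbf{x}_\alpha \mapsto e] = \phi[\textbf{x}_\alpha \mapsto e]$, so $f = g$. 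For Part~4, I would unfold $\ForallApp \textbf{x}_\alpha \mdot \textbf{A}_o$ to $(\LambdaApp \textbf{x}_\alpha \mdot T_o) = (\LambdaApp \textbf{x}_\alpha \mdot \textbf{A}_o)$, use Part~1 to reduce the claim to equality of the two functions described by condition~4, and observe that the left one is constantly $\TRUE$. This last observation requires the small preliminary fact that $V^{\cal M}_{\psi}(T_o) = \TRUE$ for every $\psi$, which follows by unfolding $T_o$ as ${=_{o\tarrow o\tarrow o}} = {=_{o\tarrow o\tarrow o}}$ and applying Part~1 together with reflexivity of equality in $D_{o \tarrow o \tarrow o}$. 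Equality of the two functions on all of $D_\alpha$ then reads exactly as the desired condition on $\textbf{A}_o$.

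There is no real obstacle in the arguments themselves; each part is a short unfolding of the general-model clauses. The only bookkeeping I expect to watch carefully is that in applying conditions~3 and~4 the intermediate values lie in the correct function domains $D_{\alpha \tarrow \alpha \tarrow o}$ and $D_{\alpha \tarrow \beta}$, which is guaranteed by the typing requirement $V^{\cal M}_\phi(\textbf{C}_\gamma) \in D_\gamma$ built into the definition of a valuation function, and that the evaluation of $T_o$ is dispatched as a preliminary observation before starting on Part~4.
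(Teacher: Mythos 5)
Your proposal is correct and is essentially the paper's own argument: the paper's proof of this lemma is a one-line appeal to "the semantics of equality, function application, function abstraction, and the definition of $\Forall$," and your write-up simply supplies the details of that same unfolding (including the overwriting identity $\phi[\textbf{x}_\alpha \mapsto d][\textbf{x}_\alpha \mapsto e] = \phi[\textbf{x}_\alpha \mapsto e]$ for Part~3 and the preliminary fact $V^{\cal M}_{\psi}(T_o) = \TRUE$ for Part~4). No gaps.
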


\begin{proof}
Part 1 is by the semantics of equality; part 2 is by the semantics of
function application and abstraction; part 3 is by the semantics of
function abstraction; and part 4 is by parts 1 and 3 and the
definition of $\Forall$.
\end{proof}

\begin{lem}\label{lem:val-b}
Let $\sM$ be a general model for {\churchqe}, $\textbf{A}_\alpha$ be
eval-free, and $\phi, \psi \in \mname{assign}(\sM)$ agree on all free
variables of $\textbf{A}_\alpha$.  Then $V^{\cal
  M}_{\phi}(\textbf{A}_\alpha) = V^{\cal
  M}_{\psi}(\textbf{A}_\alpha)$.
\end{lem}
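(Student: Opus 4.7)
The plan is to prove this by structural induction on the eval-free expression $\textbf{A}_\alpha$, following the five formation rules for eval-free expressions (variable, constant, function application, function abstraction, quotation). The induction is well-founded because $\textbf{A}_\alpha$ is eval-free, so the only subexpressions we encounter are themselves eval-free, and the notions of ``free variable'' and ``bound variable'' are defined syntactically in the usual way (as per Subsection~\ref{subsec:expressions}).

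First, I would handle the two base cases. If $\textbf{A}_\alpha$ is a variable $\textbf{x}_\alpha$, then $\textbf{x}_\alpha$ is itself free in $\textbf{A}_\alpha$, so by hypothesis $\phi(\textbf{x}_\alpha) = \psi(\textbf{x}_\alpha)$, and condition~1 of the definition of a general model gives $V^{\cal M}_{\phi}(\textbf{x}_\alpha) = V^{\cal M}_{\psi}(\textbf{x}_\alpha)$. If $\textbf{A}_\alpha$ is a constant, both valuations equal $I(\textbf{A}_\alpha)$ by condition~2. For the function application case $\textbf{F}_{\beta \tarrow \alpha} \, \textbf{B}_\beta$, I would observe that the free variables of the application are the union of those of $\textbf{F}_{\beta \tarrow \alpha}$ and $\textbf{B}_\beta$, so $\phi$ and $\psi$ agree on the free variables of each; the induction hypothesis then gives equal valuations on the components, and condition~3 closes the case.

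The main case requiring some care is the function abstraction $\LambdaApp \textbf{x}_\beta \mdot \textbf{B}_\gamma$. Its free variables are those of $\textbf{B}_\gamma$ minus $\set{\textbf{x}_\beta}$. Fix any $d \in D_\beta$; I would show that $\phi[\textbf{x}_\beta \mapsto d]$ and $\psi[\textbf{x}_\beta \mapsto d]$ agree on every free variable of $\textbf{B}_\gamma$: on $\textbf{x}_\beta$ both return $d$, and on any other free variable of $\textbf{B}_\gamma$ (which must then be free in the abstraction) they agree by the overall hypothesis. The induction hypothesis applied to $\textbf{B}_\gamma$ then yields $V^{\cal M}_{\phi[{\bf x}_\beta \mapsto d]}(\textbf{B}_\gamma) = V^{\cal M}_{\psi[{\bf x}_\beta \mapsto d]}(\textbf{B}_\gamma)$, and since $d$ was arbitrary, condition~4 gives that the two functions denoted by the abstraction under $\phi$ and $\psi$ are equal.

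Finally, the quotation case $\synbrack{\textbf{B}_\beta}$ is essentially free: by condition~5 of the definition of a general model, $V^{\cal M}_{\phi}(\synbrack{\textbf{B}_\beta}) = \sE(\textbf{B}_\beta) = V^{\cal M}_{\psi}(\synbrack{\textbf{B}_\beta})$, independently of the assignments. No real obstacle arises anywhere; the only subtle point is the bookkeeping in the abstraction case to verify that the extended assignments agree on all free variables of the body $\textbf{B}_\gamma$, which is where the definition of ``free in'' as excluding the binder's variable does the needed work.
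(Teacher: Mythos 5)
Your proof is correct and takes exactly the approach the paper intends: the paper's own proof is just the one-line remark ``By induction on the structure of $\textbf{A}_\alpha$,'' and your five cases (with the binder bookkeeping in the abstraction case and the observation that quotations are assignment-independent by condition~5) are precisely the details that remark elides.
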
 

\begin{proof}
By induction on the structure of $\textbf{A}_\alpha$.
\end{proof}

\begin{lem}\label{lem:not-effective-sem}
Let $\sM = (\set{D_\alpha \;|\; \alpha \in \sT}, I)$ be a general
model for {\churchqe}.  $\sM \vDash \Neg
\mname{IS-EFFECTIVE-IN}(\textbf{x}_\alpha, \textbf{B}_\beta)$ {\sglsp}
iff {\sglsp} $V^{\cal M}_{\phi[{\bf x}_\alpha \mapsto
    d_1]}(\textbf{B}_\beta) = V^{\cal M}_{\phi[{\bf x}_\alpha \mapsto
    d_2]}(\textbf{B}_\beta)$ for all $\phi \in \mname{assign}(\sM)$
and all $d_1,d_2 \in D_\alpha$ with $d_1 \not= d_2$.
\end{lem}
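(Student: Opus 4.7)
The plan is to unfold the abbreviation $\mname{IS-EFFECTIVE-IN}$ and convert the statement $\sM \vDash \Neg \mname{IS-EFFECTIVE-IN}(\textbf{x}_\alpha,\textbf{B}_\beta)$ into a purely semantic condition on $V^{\cal M}$ using Lemma~\ref{lem:val-a}. By the definition of $\mname{IS-EFFECTIVE-IN}$ and of $\Forsome$, the statement is equivalent to $\sM \vDash \ForallApp \textbf{y}_\alpha \mdot ((\LambdaApp \textbf{x}_\alpha \mdot \textbf{B}_\beta) \, \textbf{y}_\alpha = \textbf{B}_\beta)$, where $\textbf{y}_\alpha$ is some fixed variable of type $\alpha$ distinct from $\textbf{x}_\alpha$. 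Applying parts~1, 2, and~4 of Lemma~\ref{lem:val-a} successively turns this into the condition $(*)$: for all $\phi \in \mname{assign}(\sM)$ and all $d \in D_\alpha$,
\[
V^{\cal M}_{\phi[\textbf{y}_\alpha \mapsto d][\textbf{x}_\alpha \mapsto d]}(\textbf{B}_\beta) = V^{\cal M}_{\phi[\textbf{y}_\alpha \mapsto d]}(\textbf{B}_\beta).
\]
I will then show that $(*)$ is equivalent to the condition $(**)$: for all $\phi$ and all $d_1,d_2 \in D_\alpha$, $V^{\cal M}_{\phi[\textbf{x}_\alpha \mapsto d_1]}(\textbf{B}_\beta) = V^{\cal M}_{\phi[\textbf{x}_\alpha \mapsto d_2]}(\textbf{B}_\beta)$ (the $d_1 \not= d_2$ constraint in the lemma is vacuous for equality).

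For the direction $(*) \Rightarrow (**)$, given $\phi$ and arbitrary $d_1, d_2$, I will instantiate $(*)$ at the assignment $\phi[\textbf{x}_\alpha \mapsto d_i]$ with the same parameter $d$; since $\textbf{y}_\alpha \not= \textbf{x}_\alpha$, the resulting left-hand side is independent of $d_i$, forcing the right-hand sides $V^{\cal M}_{\phi[\textbf{x}_\alpha \mapsto d_1][\textbf{y}_\alpha \mapsto d]}(\textbf{B}_\beta)$ and $V^{\cal M}_{\phi[\textbf{x}_\alpha \mapsto d_2][\textbf{y}_\alpha \mapsto d]}(\textbf{B}_\beta)$ to coincide. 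The remaining obstacle is to eliminate the spurious $\textbf{y}_\alpha \mapsto d$ binding; the trick is to choose $d := \phi(\textbf{y}_\alpha)$, so that $\phi[\textbf{x}_\alpha \mapsto d_i][\textbf{y}_\alpha \mapsto d] = \phi[\textbf{x}_\alpha \mapsto d_i]$, yielding $(**)$. For the reverse direction, given $(**)$, I take any $\phi$ and $d$, let $\phi' = \phi[\textbf{y}_\alpha \mapsto d]$, and apply $(**)$ at $\phi'$ with $d_1 = d$ and $d_2 = \phi'(\textbf{x}_\alpha)$ to get $V^{\cal M}_{\phi'[\textbf{x}_\alpha \mapsto d]}(\textbf{B}_\beta) = V^{\cal M}_{\phi'}(\textbf{B}_\beta)$, which is exactly $(*)$.

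The main obstacle is the bookkeeping in the $(*) \Rightarrow (**)$ step: because $\textbf{B}_\beta$ need not be eval-free, I cannot appeal to Lemma~\ref{lem:val-b} to drop the $\textbf{y}_\alpha$-modification from an assignment, so that elimination must be arranged semantically by the choice $d = \phi(\textbf{y}_\alpha)$ together with the commutation $\phi[\textbf{x}_\alpha \mapsto d_i][\textbf{y}_\alpha \mapsto \phi(\textbf{y}_\alpha)] = \phi[\textbf{x}_\alpha \mapsto d_i]$, which relies essentially on $\textbf{y}_\alpha \not= \textbf{x}_\alpha$. Once that manoeuvre is in place, the proof reduces to assembling Lemma~\ref{lem:val-a} with straightforward assignment algebra.
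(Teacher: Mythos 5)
Your proof is correct and takes essentially the same route as the paper's: both unfold $\mname{IS-EFFECTIVE-IN}$, reduce via Lemma~\ref{lem:val-a} to a semantic condition on assignments, and then recover the two-point condition by overwriting the $\textbf{x}_\alpha$-binding and specializing the $\textbf{y}_\alpha$-parameter to $\phi(\textbf{y}_\alpha)$ (the paper phrases this as $V^{\cal M}_{\psi}(\textbf{y}_\alpha)$ but it is the same manoeuvre). The only cosmetic difference is that you keep the quantifier over $d$ explicit where the paper folds it into the condition $V^{\cal M}_{\phi[\textbf{x}_\alpha \mapsto \phi(\textbf{y}_\alpha)]}(\textbf{B}_\beta) = V^{\cal M}_{\phi}(\textbf{B}_\beta)$.
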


\begin{proof}
Assume $\textbf{x}_\alpha$ and $\textbf{y}_\alpha$ are different variables.
\begin{align} \setcounter{equation}{0}
& \sM \vDash \Neg \mname{IS-EFFECTIVE-IN}(\textbf{x}_\alpha, \textbf{B}_\beta)\\
& \sM \vDash \Neg (\ForsomeApp \textbf{y}_\alpha \mdot 
  ((\LambdaApp \textbf{x}_\alpha \mdot \textbf{B}_\beta) \, \textbf{y}_\alpha \not=
  \textbf{B}_\beta))\\
& \sM \vDash \ForallApp \textbf{y}_\alpha \mdot 
  ((\LambdaApp \textbf{x}_\alpha \mdot \textbf{B}_\beta) \, \textbf{y}_\alpha =
  \textbf{B}_\beta)\\
&  V^{\cal M}_{\phi}(\ForallApp \textbf{y}_\alpha \mdot 
  ((\LambdaApp \textbf{x}_\alpha \mdot \textbf{B}_\beta) \, \textbf{y}_\alpha =
  \textbf{B}_\beta)) = \TRUE {\dblsp}
  \mbox{for all } \phi \in \mname{assign}(\sM)\\
& V^{\cal M}_{\phi}((\LambdaApp \textbf{x}_\alpha \mdot \textbf{B}_\beta) \, 
  \textbf{y}_\alpha) =
  V^{\cal M}_{\phi}(\textbf{B}_\beta) {\dblsp}
  \mbox{for all } \phi \in \mname{assign}(\sM)\\
& V^{\cal M}_{\phi[{\bf x}_\alpha \mapsto V^{\cal M}_{\phi}({\bf y}_\alpha)]}
  (\textbf{B}_\beta) =
  V^{\cal M}_{\phi}(\textbf{B}_\beta) {\dblsp}
  \mbox{for all } \phi \in \mname{assign}(\sM)
\end{align}
(1) is the left side of the iff statement of the lemma; (2) holds iff
(1) holds by the definition of \mname{IS-EFFECTIVE-IN}; (3) holds iff
(2) holds by the definition of $\Forsome$; (4) holds iff (3) holds by
the definition of a valid formula; (5) holds iff (4) holds by parts 1
and 4 of Lemma~\ref{lem:val-a}; and (6) holds iff (5) holds by part 2
of Lemma~\ref{lem:val-a}.

Now let $\psi \in \mname{assign}(\sM)$ and $d_1,d_2 \in D_\alpha$ with
$d_1 \not= d_1$.
\begin{align} 
& V^{\cal M}_{\psi[{\bf x}_\alpha \mapsto d_1]}(\textbf{B}_\beta)\\
& = V^{\cal M}_{\psi[{\bf x}_\alpha \mapsto d_1]
  [{\bf x}_\alpha \mapsto V^{\cal M}_{\psi}({\bf y}_\alpha)]}(\textbf{B}_\beta)\\
& = V^{\cal M}_{\psi[{\bf x}_\alpha \mapsto d_2]
  [{\bf x}_\alpha \mapsto V^{\cal M}_{\psi}({\bf y}_\alpha)]}(\textbf{B}_\beta)\\
& = V^{\cal M}_{\psi[{\bf x}_\alpha \mapsto d_2]}(\textbf{B}_\beta)
\end{align}
(8) and (10) are by (6) and (9) holds since \[\psi[{\bf x}_\alpha
  \mapsto d] [{\bf x}_\alpha \mapsto V^{\cal M}_{\psi}({\bf
    y}_\alpha)] = \psi[{\bf x}_\alpha \mapsto V^{\cal M}_{\psi}({\bf
    y}_\alpha)]\] for any $d \in D_\alpha$.
Hence (6) implies 
\begin{align} 
& V^{\cal M}_{\phi[{\bf x}_\alpha \mapsto d_1]}(\textbf{B}_\beta)
  = V^{\cal M}_{\phi[{\bf x}_\alpha \mapsto d_2]}(\textbf{B}_\beta)
\end{align}
for all $\phi \in \mname{assign}(\sM)$ and $d_1,d_2 \in D_\alpha$ with
$d_1 \not= d_1$, and obviously (11) implies (6).  Therefore, (1) holds
iff (11) holds.
\end{proof}

\begin{lem}\label{lem:sem-is-expr}
Let $\sM$ be a general model for {\churchqe} and $\phi \in
\mname{assign}(\sM)$.

\be

  \item $V^{\cal M}_{\phi}(\mname{is-expr}_{\epsilon \tarrow
    o}^{\alpha} \, \sE(\textbf{A}_\alpha)) = \TRUE$.

  \item $V^{\cal M}_{\phi}(\mname{is-expr}_{\epsilon \tarrow
    o}^{\alpha} \, \synbrack{\textbf{A}_\alpha}) = \TRUE$.

  \item $V^{\cal M}_{\phi}(\mname{is-expr}_{\epsilon \tarrow
    o}^{\alpha} \, \textbf{B}_\epsilon) = \TRUE$ implies $V^{\cal
    M}_{\phi}(\textbf{B}_\epsilon) = V^{\cal
    M}_{\phi}(\sE(\textbf{A}_\alpha))$ for some (eval-free)
    $\textbf{A}_\alpha$.

\ee
\end{lem}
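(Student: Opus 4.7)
The plan is to reduce each of the three parts directly to the relevant clauses of the interpretation/general-model definitions together with Proposition~\ref{prop:val-const}, which tells us that every construction evaluates to itself.

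For part~1, I would first unwind the function-application semantics (condition~3 of a general model):
\[ V^{\cal M}_{\phi}(\mname{is-expr}_{\epsilon \tarrow o}^{\alpha} \, \sE(\textbf{A}_\alpha)) = I(\mname{is-expr}_{\epsilon \tarrow o}^{\alpha})(V^{\cal M}_{\phi}(\sE(\textbf{A}_\alpha))). \]
Since $\sE(\textbf{A}_\alpha)$ is a construction, Proposition~\ref{prop:val-const} gives $V^{\cal M}_{\phi}(\sE(\textbf{A}_\alpha)) = \sE(\textbf{A}_\alpha)$, and then condition~10 of the definition of an interpretation (subsection~\ref{subsec:interpretations}) immediately yields \TRUE{}.

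For part~2, I would use condition~5 of the definition of a general model, which gives $V^{\cal M}_{\phi}(\synbrack{\textbf{A}_\alpha}) = \sE(\textbf{A}_\alpha)$, and then apply the same chain as in part~1. Alternatively, one can invoke Theorem~\ref{thm:sem-quotation} together with part~1. Part~2 therefore follows from part~1 in one step.

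For part~3, I would run the argument in reverse. Unwinding function application as before,
\[ I(\mname{is-expr}_{\epsilon \tarrow o}^{\alpha})(V^{\cal M}_{\phi}(\textbf{B}_\epsilon)) = \TRUE, \]
so by condition~10 of the definition of an interpretation there exists an eval-free expression $\textbf{A}_\alpha$ with $V^{\cal M}_{\phi}(\textbf{B}_\epsilon) = \sE(\textbf{A}_\alpha)$. Applying Proposition~\ref{prop:val-const} once more, $\sE(\textbf{A}_\alpha) = V^{\cal M}_{\phi}(\sE(\textbf{A}_\alpha))$, which gives the required equation.

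Each part is essentially a one- or two-line unfolding; there is no real obstacle here since the lemma is a bookkeeping result tying the constant $\mname{is-expr}_{\epsilon \tarrow o}^{\alpha}$ to the function $\sE$ at the semantic level. The only point that deserves a moment of care is noting that $\sE(\textbf{A}_\alpha) \in D_\epsilon$ so that Proposition~\ref{prop:val-const} is applicable, and that the quantification ``for some eval-free $\textbf{A}_\alpha$'' in part~3 matches exactly the quantification that appears in condition~10 of the definition of an interpretation.
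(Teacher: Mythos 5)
Your proof is correct and is essentially the same as the paper's: each part is handled by combining Proposition~\ref{prop:val-const} (constructions evaluate to themselves), condition~5 of the general-model definition for the quotation in part~2, and the interpretation of $\mname{is-expr}_{\epsilon \tarrow o}^{\alpha}$ (what the paper calls ``the semantics of $\mname{is-expr}_{\epsilon \tarrow o}^{\alpha}$''). You merely spell out the function-application unfolding that the paper leaves implicit.
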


\begin{proof}

\medskip

\bsp
\noindent \textbf{Part 1} {\sglsp} $V^{\cal
  M}_{\phi}(\sE(\textbf{A}_\alpha)) = \sE(\textbf{A}_\alpha)$ by
Proposition~\ref{prop:val-const}.  This implies $V^{\cal
  M}_{\phi}(\mname{is-expr}_{\epsilon \tarrow o}^{\alpha} \,
\sE(\textbf{A}_\alpha)) = \TRUE$ by the semantics of
$\mname{is-expr}_{\epsilon \tarrow o}^{\alpha}$.
\esp

\medskip

\noindent \textbf{Part 2} {\sglsp} $V^{\cal
  M}_{\phi}(\synbrack{\textbf{A}_\alpha}) = \sE(\textbf{A}_\alpha)$ by
condition 5 of the definition of a general model.  This implies
$V^{\cal M}_{\phi}(\mname{is-expr}_{\epsilon \tarrow o}^{\alpha} \,
\synbrack{\textbf{A}_\alpha}) = \TRUE$ by the semantics of
$\mname{is-expr}_{\epsilon \tarrow o}^{\alpha}$.

\medskip

\noindent \textbf{Part 3} {\sglsp} By the hypothesis and the semantics
of $\mname{is-expr}_{\epsilon \tarrow o}^{\alpha}$, $V^{\cal
  M}_{\phi}(\textbf{B}_\epsilon) = \sE(\textbf{A}_\alpha)$ for some
$\textbf{A}_\alpha$.  This implies the conclusion of the lemma by
Proposition~\ref{prop:val-const}.
\end{proof} 

\begin{lem}\label{lem:sem-is-free-in}
Let $\sM$ be a general model for {\churchqe}, $\phi \in
\mname{assign}(\sM)$.  Then $V^{\cal
  M}_{\phi}(\mname{is-free-in}_{\epsilon \tarrow \epsilon \tarrow o}
\, \synbrack{\textbf{x}_\alpha} \, \synbrack{\textbf{B}_\beta}) =
\TRUE$ {\sglsp} iff {\sglsp} $\textbf{x}_\alpha$ is free in
$\textbf{B}_\beta$.
\end{lem}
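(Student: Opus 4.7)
The plan is to reduce the semantic claim directly to the definition of the interpretation $I(\mname{is-free-in}_{\epsilon \tarrow \epsilon \tarrow o})$ given in subsection~\ref{subsec:interpretations} by evaluating the two quotation arguments. This is essentially a bookkeeping argument since all the genuine content already sits in the definition of the interpretation function.

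First I would apply condition 5 of the definition of a general model to compute $V^{\cal M}_{\phi}(\synbrack{\textbf{x}_\alpha}) = \sE(\textbf{x}_\alpha) = \synbrack{\textbf{x}_\alpha}$ and $V^{\cal M}_{\phi}(\synbrack{\textbf{B}_\beta}) = \sE(\textbf{B}_\beta)$. Then, using conditions 2 and 3 of the definition of a general model together with the interpretation of $\mname{is-free-in}_{\epsilon \tarrow \epsilon \tarrow o}$, I would unfold
\[
V^{\cal M}_{\phi}(\mname{is-free-in}_{\epsilon \tarrow \epsilon \tarrow o} \, \synbrack{\textbf{x}_\alpha} \, \synbrack{\textbf{B}_\beta})
= I(\mname{is-free-in}_{\epsilon \tarrow \epsilon \tarrow o})(\synbrack{\textbf{x}_\alpha})(\sE(\textbf{B}_\beta)).
\]
By the defining clause for $I(\mname{is-free-in}_{\epsilon \tarrow \epsilon \tarrow o})$, this equals $\TRUE$ iff the first argument is the quotation of some variable (witnessed by $\textbf{x}_\alpha$ itself), the second argument equals $\sE(\textbf{C}_\gamma)$ for some eval-free $\textbf{C}_\gamma$, and that variable is free in $\textbf{C}_\gamma$.

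The only step that needs a brief remark is the existence/uniqueness of the witness for the second argument: since $\sE(\textbf{B}_\beta)$ is a proper construction (by the definition of $\sE$) and $\sE$ is injective, the unique eval-free $\textbf{C}_\gamma$ with $\sE(\textbf{C}_\gamma) = \sE(\textbf{B}_\beta)$ is $\textbf{B}_\beta$ itself. Substituting $\textbf{x}_\alpha$ and $\textbf{B}_\beta$ for these witnesses, the right-hand side of the iff collapses to ``$\textbf{x}_\alpha$ is free in $\textbf{B}_\beta$'', which is exactly what we want.

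There is no real obstacle: the injectivity of $\sE$ (noted just after its definition) eliminates any ambiguity in recovering $\textbf{B}_\beta$ from its construction, so the equivalence is a direct translation between the semantic definition of $I(\mname{is-free-in}_{\epsilon \tarrow \epsilon \tarrow o})$ and the syntactic definition of ``free in'' given in subsection~\ref{subsec:expressions}. The lemma should require only a few lines of prose once these identifications are made.
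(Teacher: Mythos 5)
Your proof is correct, but it takes a different route from the paper's: the paper disposes of this lemma with a one-line appeal to induction on the structure of $\textbf{B}_\beta$, whereas you avoid induction entirely by unfolding the valuation down to clause 12 of the definition of an interpretation and then using the injectivity of $\sE$ to pin down the witnesses. Your route works precisely because the interpretation of $\mname{is-free-in}_{\epsilon \tarrow \epsilon \tarrow o}$ is \emph{defined} in the metatheory to hold of $(\synbrack{\textbf{x}_\alpha}, \sE(\textbf{C}_\beta))$ exactly when $\textbf{x}_\alpha$ is free in $\textbf{C}_\beta$, so once the two quotation arguments are evaluated (condition 5 of the definition of a general model, plus conditions 2 and 3 for the constant and the applications) there is nothing left to prove beyond identifying the witnesses, which injectivity of $\sE$ and the distinctness of variable quotations handle. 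Structural induction is genuinely needed only for the \emph{syntactic} counterpart (Lemma~\ref{lem:syn-is-free-in}), where the relation must be rebuilt from Axioms B7.1--8; for the semantic statement your direct unfolding is arguably the cleaner argument, and it parallels how the paper itself handles the analogous semantic facts in Lemma~\ref{lem:sem-is-expr}.
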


\begin{proof}
By induction on the structure of $\textbf{B}_\beta$.
\end{proof}

\begin{lem}\label{lem:alt-disquotation}
Let $\sM$ be a general model for {\churchqe} and $\phi \in
\mname{assign}(\sM)$.  Then $V^{\cal
  M}_{\phi}(\sembrack{\sE(\textbf{A}_\alpha)}_\alpha) = V^{\cal
  M}_{\phi}(\textbf{A}_\alpha)$.
\end{lem}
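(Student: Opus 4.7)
The plan is to apply condition 6 of the definition of a general model directly, relying on the two facts that $\sE(\textbf{A}_\alpha)$ is a construction (so it is denotationally fixed) and that $\mname{is-expr}_{\epsilon \tarrow o}^{\alpha}$ holds of it. Since $\sE$ is only defined on eval-free expressions, the statement is implicitly restricted to $\textbf{A}_\alpha$ eval-free.

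First, I would note that by Lemma~\ref{lem:sem-is-expr}(1) we have
\[
V^{\cal M}_{\phi}(\mname{is-expr}_{\epsilon \tarrow o}^{\alpha} \, \sE(\textbf{A}_\alpha)) = \TRUE,
\]
so that condition 6 of the definition of a general model (rather than the ``error value'' condition~7) applies to the evaluation $\sembrack{\sE(\textbf{A}_\alpha)}_\alpha$. Applying that condition gives
\[
V^{\cal M}_{\phi}(\sembrack{\sE(\textbf{A}_\alpha)}_\alpha) = V^{\cal M}_{\phi}\bigl(\sE^{-1}\bigl(V^{\cal M}_{\phi}(\sE(\textbf{A}_\alpha))\bigr)\bigr).
\]

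Next, I would simplify the inner valuation using Proposition~\ref{prop:val-const}, which says constructions are fixed points of $V^{\cal M}_{\phi}$; since $\sE(\textbf{A}_\alpha)$ is a construction, $V^{\cal M}_{\phi}(\sE(\textbf{A}_\alpha)) = \sE(\textbf{A}_\alpha)$. Substituting and using injectivity of $\sE$ (so that $\sE^{-1}(\sE(\textbf{A}_\alpha)) = \textbf{A}_\alpha$) yields
\[
V^{\cal M}_{\phi}(\sembrack{\sE(\textbf{A}_\alpha)}_\alpha) = V^{\cal M}_{\phi}(\textbf{A}_\alpha),
\]
which is the desired equality.

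The proof is essentially a one-line unfolding, so there is no real obstacle; the only thing to be mindful of is to invoke the right clause of the general-model definition, which requires the $\mname{is-expr}$ side condition that Lemma~\ref{lem:sem-is-expr}(1) supplies. This lemma is essentially the semantic analogue of the Law of Disquotation (Theorem~\ref{thm:sem-disquotation}) with the quotation replaced by its canonical construction representative $\sE(\textbf{A}_\alpha)$, and indeed the proof parallels that of Theorem~\ref{thm:sem-disquotation} with condition~5 of the general-model definition replaced by Proposition~\ref{prop:val-const}.
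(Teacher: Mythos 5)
Your proof is correct and follows essentially the same route as the paper's: part 1 of Lemma~\ref{lem:sem-is-expr} to discharge the $\mname{is-expr}_{\epsilon \tarrow o}^{\alpha}$ side condition, condition 6 of the general-model definition to unfold the evaluation, Proposition~\ref{prop:val-const} to fix the inner valuation, and injectivity of $\sE$ to finish. Nothing to add.
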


\begin{proof}
\begin{align} \setcounter{equation}{0}
&
V^{\cal M}_{\phi}(\sembrack{\sE(\textbf{A}_\alpha)}_\alpha) \\
&=
V^{\cal M}_{\phi}(\sE^{-1}(V^{\cal M}_{\phi}(\sE(\textbf{A}_\alpha)))) \\
&=
V^{\cal M}_{\phi}(\sE^{-1}(\sE(\textbf{A}_\alpha))) \\
&=
V^{\cal M}_{\phi}(\textbf{A}_\alpha)
\end{align}
(2) is by part 1 of Lemma~\ref{lem:sem-is-expr} and condition 6 of
the definition of a general model; (3) is by
Proposition~\ref{prop:val-const}; and (4) is immediate.
\end{proof}

\subsection{Soundness of Axioms and Rule of Inference}

\begin{lem}[Axioms are Valid]\label{lem:axioms}
Each axiom of the proof system for {\churchqe} is valid in
{\churchqe}.
\end{lem}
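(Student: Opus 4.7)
The plan is to check each axiom schema in turn, grouping them by the semantic feature they target. For the {\qzero}-style axioms A1--A3, A4.1--A4.4, and A4.6, the verification is essentially Andrews' argument: using Lemma~\ref{lem:val-a} I would unfold $V^{\cal M}_\phi$ on both sides of each equation and check that the values coincide in every general model. Axiom A4.5 is the first place where the new restrictions in {\churchqe} matter: under the side condition that either $\textbf{A}_\alpha$ is eval-free with $\textbf{y}_\beta$ not free in it, or $\textbf{B}_\gamma$ is eval-free with $\textbf{x}_\alpha$ not free in it, Lemma~\ref{lem:val-b} lets me slide the $\phi[\textbf{y}_\beta \mapsto d]$ reassignment past the relevant eval-free subexpression, after which the standard beta-reduction calculation goes through.

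The axioms B1--B5 and B7 simply assert semantic properties of the newly added logical constants. Each reduces directly to the corresponding clause of the definition of an interpretation in subsection~\ref{subsec:interpretations}, with Proposition~\ref{prop:val-const} used to evaluate the constructions that appear and Lemmas~\ref{lem:sem-is-expr} and~\ref{lem:sem-is-free-in} used where one needs to bridge syntactic statements about $\textbf{x}_\alpha$ and $\textbf{B}_\beta$ to semantic statements about their valuations. For B6, the induction principle for constructions, I would argue that $D_\epsilon$ is exactly the inductively generated set of constructions, so the subexpression relation $\sqsubset_{\epsilon \tarrow \epsilon \tarrow o}$ is well-founded on it; the usual structural induction on constructions then gives the principle, regardless of whether individual constructions are proper.

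For the quotation and evaluation axioms, the B8 axioms follow from the Law of Quotation (Theorem~\ref{thm:sem-quotation}) together with clauses~6--8 of the interpretation definition. B9 is immediate from parts~2 and~3 of Lemma~\ref{lem:val-a} once one notes that $V^{\cal M}_\phi(\synbrack{\textbf{B}_\beta}) = \sE(\textbf{B}_\beta)$ does not depend on $\phi$. The B10 axioms are read off condition~6 of the general model definition directly, with Lemma~\ref{lem:alt-disquotation} handling B10.3 and Lemma~\ref{lem:sem-is-free-in} handling B10.4. Axiom B11.1 is trivial by part~2 of Lemma~\ref{lem:val-a}. Axiom B12 is an immediate consequence of Lemma~\ref{lem:val-b} and Lemma~\ref{lem:not-effective-sem}, while B13 is the analogue of A4.5 in which Lemma~\ref{lem:not-effective-sem} supplies precisely the assignment-swapping that A4.5's eval-free/not-free-in condition supplies there.

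The main obstacle I expect is Axiom B11.2. There, the identity $(\LambdaApp \textbf{x}_\alpha \mdot \sembrack{\textbf{B}_\epsilon}_\beta) \, \textbf{A}_\alpha = \sembrack{(\LambdaApp \textbf{x}_\alpha \mdot \textbf{B}_\epsilon) \, \textbf{A}_\alpha}_\beta$ must be verified after resolving the double valuation of each evaluation according to condition~6 of the general model definition. The hypothesis $\mname{is-expr}_{\epsilon \tarrow o}^{\beta} \, ((\LambdaApp \textbf{x}_\alpha \mdot \textbf{B}_\epsilon) \, \textbf{A}_\alpha)$ combined with $\Neg\mname{is-free-in}_{\epsilon \tarrow \epsilon \tarrow o} \, \synbrack{\textbf{x}_\alpha} \, ((\LambdaApp \textbf{x}_\alpha \mdot \textbf{B}_\epsilon) \, \textbf{A}_\alpha)$ forces $V^{\cal M}_\phi((\LambdaApp \textbf{x}_\alpha \mdot \textbf{B}_\epsilon) \, \textbf{A}_\alpha)$ to equal $\sE(\textbf{C}_\beta)$ for some eval-free $\textbf{C}_\beta$ in which $\textbf{x}_\alpha$ is not free; by Lemma~\ref{lem:val-b} the outer substitution through $\textbf{x}_\alpha$ then has no effect, and the two sides collapse to the same value. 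The delicate point is confirming that this semantic hypothesis really does block the variable-capture pathology behind the Double Substitution Problem, so that the syntactic beta-step on the right genuinely matches the semantic reduction on the left.
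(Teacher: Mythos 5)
Your proposal is correct and follows essentially the same route as the paper: a case-by-case check in which the {\qzero}-style axioms are handled by Andrews-style valuation calculations via Lemma~\ref{lem:val-a} and Lemma~\ref{lem:val-b}, the B1--B7 axioms are read off the corresponding clauses of the interpretation definition, and B8--B13 are verified using Proposition~\ref{prop:val-const}, Lemmas~\ref{lem:sem-is-expr}, \ref{lem:sem-is-free-in}, \ref{lem:alt-disquotation}, and~\ref{lem:not-effective-sem}, with the crux of B11.2 being exactly as you describe: the two hypotheses force $V^{\cal M}_{\phi}((\LambdaApp \textbf{x}_\alpha \mdot \textbf{B}_\epsilon)\,\textbf{A}_\alpha) = \sE(\textbf{C}_\beta)$ for an eval-free $\textbf{C}_\beta$ not containing $\textbf{x}_\alpha$ free, so both sides reduce to $V^{\cal M}_{\phi}(\textbf{C}_\beta)$.
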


\begin{proof}
Let $\sM = (\set{D_\alpha \;|\; \alpha \in \sT}, I)$ be a general
model for {\churchqe} and $\phi \in \mname{assign}(\sM)$.  We must
prove that each of the 64 axioms is valid in $\sM$.

\medskip

\noindent \textbf{Axiom A1} {\sglsp} The proof is the same as the
proof of 5402 for Axiom 1 in~\cite[p.~241]{Andrews02}.

\medskip

\noindent \textbf{Axiom A2} {\sglsp} The proof is the same as the
proof of 5402 for Axiom 2 in~\cite[p.~242]{Andrews02}.

\medskip

\noindent \textbf{Axiom A3} {\sglsp} The proof is the same as the
proof of 5402 for Axiom 3 in~\cite[p.~242]{Andrews02}.

\medskip

\noindent \textbf{Axiom Group A4}

\bi

  \item[] \textbf{Axiom A4.1} {\sglsp} Let $\textbf{x}_\alpha$ and
    $\textbf{y}_\beta$ be distinct.  We must show
    \[V^{\cal M}_{\phi}((\LambdaApp \textbf{x}_\alpha \mdot \textbf{y}_\beta) \,
      \textbf{A}_\alpha) = V^{\cal M}_{\phi}(\textbf{y}_\beta)\] to
      prove Axiom A4.1 is valid in $\sM$.
    \begin{align} \setcounter{equation}{0}
    &
    V^{\cal M}_{\phi}((\LambdaApp \textbf{x}_\alpha \mdot \textbf{y}_\beta) \,
      \textbf{A}_\alpha)\\
    &
    = V^{\cal M}_{\phi[{\bf x}_\alpha \mapsto V^{\cal M}_{\phi}({\bf  A}_\alpha)]}
    (\textbf{y}_\beta)\\
    &
    = V^{\cal M}_{\phi}(\textbf{y}_\beta)
    \end{align}
    (2) is by part 2 of Lemma~\ref{lem:val-a} and (3) is by
    Lemma~\ref{lem:val-b}.

  \item[] \textbf{Axiom A4.2} {\sglsp} We must show
    \[V^{\cal M}_{\phi}((\LambdaApp \textbf{x}_\alpha \mdot \textbf{x}_\alpha) \,
      \textbf{A}_\alpha) = V^{\cal M}_{\phi}(\textbf{A}_\alpha)\] to
      prove Axiom A4.2 is valid in $\sM$.
    \begin{align} \setcounter{equation}{0}
    &
    V^{\cal M}_{\phi}((\LambdaApp \textbf{x}_\alpha \mdot \textbf{x}_\alpha) \,
      \textbf{A}_\alpha)\\
    &
    = V^{\cal M}_{\phi[{\bf x}_\alpha \mapsto V^{\cal M}_{\phi}({\bf  A}_\alpha)]}
    (\textbf{x}_\alpha)\\
    &
    = V^{\cal M}_{\phi}(\textbf{A}_\alpha)
    \end{align}
    (2) is by part 2 of Lemma~\ref{lem:val-a} and (3) is by
    the semantics of variables.

  \item[] \textbf{Axiom A4.3} {\sglsp} Similar to Axiom A4.1.

  \item[] \textbf{Axiom A4.4} {\sglsp} We must show
    \begin{align*}
    &
    V^{\cal M}_{\phi}((\LambdaApp \textbf{x}_\alpha \mdot 
    (\textbf{B}_{\beta \tarrow \gamma} \, 
    \textbf{C}_\beta)) \, \textbf{A}_\alpha)\\
    &
    = V^{\cal M}_{\phi}(((\LambdaApp \textbf{x}_\alpha \mdot
    \textbf{B}_{\beta \tarrow \gamma}) \, 
    \textbf{A}_\alpha) \, 
    ((\LambdaApp \textbf{x}_\alpha \mdot \textbf{C}_\beta) \, \textbf{A}_\alpha))
    \end{align*}
    to prove Axiom A4.4 is valid in $\sM$.
    \begin{align} \setcounter{equation}{0}
    &
    V^{\cal M}_{\phi}((\LambdaApp \textbf{x}_\alpha \mdot 
    (\textbf{B}_{\beta \tarrow \gamma} \, \textbf{C}_\beta)) \, \textbf{A}_\alpha)\\
    &
    = V^{\cal M}_{\phi[{\bf x}_\alpha \mapsto V^{\cal M}_{\phi}({\bf  A}_\alpha)]}
    (\textbf{B}_{\beta \tarrow \gamma} \, \textbf{C}_\beta)\\
    &
    = V^{\cal M}_{\phi[{\bf x}_\alpha \mapsto V^{\cal M}_{\phi}({\bf  A}_\alpha)]}
    (\textbf{B}_{\beta \tarrow \gamma})
    (V^{\cal M}_{\phi[{\bf x}_\alpha \mapsto V^{\cal M}_{\phi}({\bf  A}_\alpha)]}
    (\textbf{C}_\beta))\\
    &
    = V^{\cal M}_{\phi}
    ((\LambdaApp \textbf{x}_\alpha \mdot \textbf{B}_{\beta \tarrow \gamma}) \, 
    \textbf{A}_\alpha)
    (V^{\cal M}_{\phi}
    ((\LambdaApp \textbf{x}_\alpha \mdot \textbf{C}_\beta) \,
    \textbf{A}_\alpha))\\
    &
    = V^{\cal M}_{\phi}
    (((\LambdaApp \textbf{x}_\alpha \mdot \textbf{B}_{\beta \tarrow \gamma}) \, 
    \textbf{A}_\alpha) \,
    ((\LambdaApp \textbf{x}_\alpha \mdot \textbf{C}_\beta) \,
    \textbf{A}_\alpha))
    \end{align}
    (2) and (4) are by part 2 of Lemma~\ref{lem:val-a} and (3) and (5)
    are by the semantics of function application.

  \item[] \textbf{Axiom A4.5} {\sglsp} Let $\textbf{x}_\alpha$ and
    $\textbf{y}_\beta$ be distinct and either (a) $\textbf{A}_\alpha$ is
    eval-free and $\textbf{y}_\beta$ is not free in
    $\textbf{A}_\alpha$ or (b) $\textbf{B}_\gamma$ is eval-free and
    $\textbf{x}_\alpha$ is not free in $\textbf{B}_\gamma$.  We must
    show
    \[V^{\cal M}_{\phi}((\LambdaApp \textbf{x}_\alpha \mdot \LambdaApp
      \textbf{y}_\beta \mdot \textbf{B}_\gamma) \,
      \textbf{A}_\alpha)(d) = V^{\cal M}_{\phi}(\LambdaApp
      \textbf{y}_\beta \mdot ((\LambdaApp \textbf{x}_\alpha \mdot
      \textbf{B}_\gamma) \, \textbf{A}_\alpha))(d),\] where $d \in
      D_\beta$, to prove Axiom 4.5 is valid in $\sM$.
    \begin{align} \setcounter{equation}{0}
    &
    V^{\cal M}_{\phi}((\LambdaApp \textbf{x}_\alpha \mdot 
    \LambdaApp \textbf{y}_\beta \mdot \textbf{B}_\gamma) \, \textbf{A}_\alpha)(d)\\
    &
    = V^{\cal M}_{\phi[{\bf x}_\alpha \mapsto V^{\cal M}_{\phi}({\bf  A}_\alpha)]}
    (\LambdaApp \textbf{y}_\beta \mdot \textbf{B}_\gamma)(d)\\
    &
    = V^{\cal M}_{\phi[{\bf x}_\alpha \mapsto V^{\cal M}_{\phi}({\bf  A}_\alpha)]
    [{\bf y}_\beta \mapsto d]}
    (\textbf{B}_\gamma)\\
    &
    = V^{\cal M}_{\phi[{\bf y}_\beta \mapsto d]
    [{\bf x}_\alpha \mapsto V^{\cal M}_{\phi}({\bf  A}_\alpha)]}
    (\textbf{B}_\gamma)\\
    &
    = V^{\cal M}_{\phi[{\bf y}_\beta \mapsto d]
    [{\bf x}_\alpha \mapsto V^{\cal M}_{\phi[{\bf y}_\beta \mapsto d]}({\bf  A}_\alpha)]}
    (\textbf{B}_\gamma)\\
    &
    = V^{\cal M}_{\phi[{\bf y}_\beta \mapsto d]}
    ((\LambdaApp \textbf{x}_\alpha \mdot \textbf{B}_\gamma) \, \textbf{A}_\alpha)\\
    &
    = V^{\cal M}_{\phi}
    (\LambdaApp \textbf{y}_\beta \mdot 
    ((\LambdaApp \textbf{x}_\alpha \mdot \textbf{B}_\gamma) \, \textbf{A}_\alpha))(d)
    \end{align}
    (2) and (6) are by part 2 of Lemma~\ref{lem:val-a}; (3) and (7)
    are by the semantics of function abstraction; (4) is by
    $\textbf{x}_\alpha$ and $\textbf{y}_\beta$ being distinct; and (5)
    is by the hypothesis and Lemma~\ref{lem:val-b}.

  \item[] \textbf{Axiom A4.6} {\sglsp} We must show
    \[V^{\cal M}_{\phi}((\LambdaApp \textbf{x}_\alpha \mdot \LambdaApp
      \textbf{x}_\alpha \mdot \textbf{B}_\beta) \, \textbf{A}_\alpha)
      = V^{\cal M}_{\phi}(\LambdaApp \textbf{x}_\alpha \mdot
      \textbf{B}_\beta)\] to prove Axiom A4.6 is valid in $\sM$.
    \begin{align} \setcounter{equation}{0}
    &
    V^{\cal M}_{\phi}((\LambdaApp \textbf{x}_\alpha \mdot 
    \LambdaApp \textbf{x}_\alpha \mdot 
    \textbf{B}_\beta) \, \textbf{A}_\alpha)\\
    &
    = V^{\cal M}_{\phi[{\bf x}_\alpha \mapsto V^{\cal M}_{\phi}({\bf  A}_\alpha)]}
    (\LambdaApp \textbf{x}_\alpha \mdot \textbf{B}_\beta)\\
    & 
    = V^{\cal M}_{\phi}(\LambdaApp \textbf{x}_\alpha \mdot \textbf{B}_\beta)
    \end{align}
    (2) and (3) are by parts 2 and 3 of Lemma~\ref{lem:val-a},
    respectively.

\ei

\medskip

\noindent \textbf{Axiom Group B1} {\sglsp} By clauses 2 and 3 of the
definition of an interpretation.

\medskip

\noindent \textbf{Axiom Group B2} {\sglsp} By clauses 4 and 5 of the
definition of an interpretation.

\medskip

\noindent \textbf{Axiom Group B3} {\sglsp} By clauses 9 and 10 of the
definition of an interpretation.

\medskip

\noindent \textbf{Axiom Group B4} {\sglsp} By clauses 2--8 of the
definition of an interpretation.

\medskip

\noindent \textbf{Axiom Group B5} {\sglsp} By clause 11 of the
definition of an interpretation.

\medskip

\noindent \textbf{Axiom B6} {\sglsp} By the definition of the
domain of constructions in a frame.

\medskip

\noindent \textbf{Axiom Group B7} {\sglsp} By clause 12 of the
definition of an interpretation.

\medskip

\noindent \textbf{Axiom Group B8} {\sglsp}  

\bi

  \item[] \textbf{Axiom B8.1} {\sglsp} We must show
    \[V^{\cal M}_{\phi}(\synbrack{\textbf{F}_{\alpha \tarrow \beta} \,
     \textbf{A}_\alpha}) = 
     V^{\cal M}_{\phi}(\mname{app}_{\epsilon \tarrow \epsilon \tarrow \epsilon} \, 
     \synbrack{\textbf{F}_{\alpha \tarrow \beta}} \, 
     \synbrack{\textbf{A}_\alpha})\] to prove Axiom B8.1 is valid in $\sM$.
    \begin{align} \setcounter{equation}{0}
    &
    V^{\cal M}_{\phi}(\synbrack{\textbf{F}_{\alpha \tarrow \beta} \,
    \textbf{A}_\alpha})\\
    &
    = \sE(\synbrack{\textbf{F}_{\alpha \tarrow \beta} \, \textbf{A}_\alpha})\\
    &
    = \mname{app}_{\epsilon \tarrow \epsilon \tarrow \epsilon} \,
    \sE(\textbf{F}_{\alpha \tarrow \beta}) \, 
    \sE(\textbf{A}_\alpha)\\
    &
    = \mname{app}_{\epsilon \tarrow \epsilon \tarrow \epsilon} \,
    V^{\cal M}_{\phi}(\synbrack{\textbf{F}_{\alpha \tarrow \beta}}) \, 
    V^{\cal M}_{\phi}(\synbrack{\textbf{A}_\alpha})\\
    &
    = V^{\cal M}_{\phi}(\mname{app}_{\epsilon \tarrow \epsilon \tarrow \epsilon} \,
    \synbrack{\textbf{F}_{\alpha \tarrow \beta}} \,
    \synbrack{\textbf{A}_\alpha})
    \end{align}
    (2) and (4) are by condition 5 of the definition of a general
    model; (3) is by the definition of $\sE$; and (5) is by the
    semantics of $\mname{app}_{\epsilon \tarrow \epsilon \tarrow
      \epsilon}$ and clause~6 of the definition of an interpretation.

  \item[] \textbf{Axiom B8.2} {\sglsp} Similar to Axiom B8.1.

  \item[] \textbf{Axiom B8.3} {\sglsp} Similar to Axiom B8.1.

\ei

\medskip

\noindent \textbf{Axiom B9} {\sglsp} We must show
    \[V^{\cal M}_{\phi}((\LambdaApp \textbf{x}_\alpha \mdot 
      \synbrack{\textbf{B}_\beta}) \,
      \textbf{A}_\alpha) = V^{\cal M}_{\phi}(\synbrack{\textbf{B}_\beta})\] to
      prove Axiom B9 is valid in $\sM$.
    \begin{align} \setcounter{equation}{0}
    &
    V^{\cal M}_{\phi}((\LambdaApp \textbf{x}_\alpha \mdot 
      \synbrack{\textbf{B}_\beta}) \, \textbf{A}_\alpha)\\
    &
    = V^{\cal M}_{\phi[{\bf x}_\alpha \mapsto V^{\cal M}_{\phi}({\bf  A}_\alpha)]}
    (\synbrack{\textbf{B}_\beta})\\
    &
    = V^{\cal M}_{\phi}(\synbrack{\textbf{B}_\beta})
    \end{align}
    (2) is by part 2 of Lemma~\ref{lem:val-a} and (3) is by
    Lemma~\ref{lem:val-b}.

\medskip

\noindent \textbf{Axiom Group B10}

\bi

  \item[] \textbf{Axiom B10.1} {\sglsp} We must show
    \[V^{\cal M}_{\phi}(\sembrack{\synbrack{\textbf{x}_\alpha}}_\alpha) = 
     V^{\cal M}_{\phi}(\textbf{x}_\alpha)\] to prove Axiom B10.1 is
     valid in $\sM$.  This equation follows from the Law of
     Disquotation and part 1 of Lemma~\ref{lem:val-a}.

  \item[] \textbf{Axiom B10.2} {\sglsp} Same proof as for Axiom B10.1.

  \item[] \textbf{Axiom B10.3} {\sglsp} Let (a) $V^{\cal
    M}_{\phi}(\mname{is-expr}_{\epsilon \tarrow o}^{\alpha \tarrow
    \beta} \, \textbf{A}_\epsilon) = \TRUE$ and\\ (b)~$V^{\cal
    M}_{\phi}(\mname{is-expr}_{\epsilon \tarrow o}^{\alpha} \,
    \textbf{B}_\epsilon) = \TRUE$.  We must show
    \[V^{\cal M}_{\phi}(\sembrack{\mname{app}_{\epsilon \tarrow \epsilon
          \tarrow \epsilon} \, \textbf{A}_\epsilon \,
      \textbf{B}_\epsilon}_{\beta}) = V^{\cal
      M}_{\phi}(\sembrack{\textbf{A}_\epsilon}_{\alpha \tarrow \beta}
    \, \sembrack{\textbf{B}_\epsilon}_{\alpha})\] to prove Axiom B10.3
    is valid in $\sM$.  By part 3 of Lemma~\ref{lem:sem-is-expr}, (a) and
    (b) imply (c) $V^{\cal M}_{\phi}(\textbf{A}_\epsilon) = V^{\cal
      M}_{\phi}(\sE(\textbf{C}_{\alpha \tarrow \beta}))$ for some
    $\textbf{C}_{\alpha \tarrow \beta}$ and (d)~$V^{\cal
      M}_{\phi}(\textbf{B}_\epsilon) = V^{\cal
      M}_{\phi}(\sE(\textbf{D}_\alpha))$ for some $\textbf{D}_\alpha$,
    respectively.
    \begin{align} \setcounter{equation}{0}
    &
    V^{\cal M}_{\phi}(\sembrack{\mname{app}_{\epsilon \tarrow \epsilon \tarrow \epsilon} \, 
    \textbf{A}_\epsilon \, \textbf{B}_\epsilon}_{\beta})\\
    &
    = V^{\cal M}_{\phi}(\sembrack{\mname{app}_{\epsilon \tarrow \epsilon \tarrow \epsilon} \, 
    \sE(\textbf{C}_{\alpha \tarrow \beta}) \, \sE(\textbf{D}_\alpha)}_{\beta})\\
    &
    = V^{\cal M}_{\phi}(\sembrack{\sE(\textbf{C}_{\alpha \tarrow \beta} \, \textbf{D}_\alpha)}_{\beta})\\
    &
    = V^{\cal M}_{\phi}(\textbf{C}_{\alpha \tarrow \beta} \, \textbf{D}_\alpha)\\
    &
    = V^{\cal M}_{\phi}(\textbf{C}_{\alpha \tarrow \beta})(V^{\cal M}_{\phi}(\textbf{D}_\alpha))\\
    &
    = V^{\cal M}_{\phi}(\sembrack{\sE(\textbf{C}_{\alpha \tarrow \beta})}_{\alpha \tarrow \beta})
    (V^{\cal M}_{\phi}(\sembrack{\sE(\textbf{D}_\alpha)}_\alpha))\\
    &
    = V^{\cal M}_{\phi}(\sembrack{\textbf{A}_\epsilon}_{\alpha \tarrow \beta})
    (V^{\cal M}_{\phi}(\sembrack{\textbf{B}_\epsilon}_\alpha))\\
    &
    = V^{\cal M}_{\phi}(\sembrack{\textbf{A}_\epsilon}_{\alpha \tarrow \beta} \,
    \sembrack{\textbf{B}_\epsilon}_\alpha)
    \end{align}
    (2) and (7) are by (c) and (d); (3) is by the definition of $\sE$;
    (4) and (6) are by Lemma~\ref{lem:alt-disquotation}; and (5)
    and (8) are by the semantics of function application.

  \item[] \bsp \textbf{Axiom B10.4} {\sglsp} Let (a) $V^{\cal
    M}_{\phi}(\mname{is-expr}_{\epsilon \tarrow o}^{\beta} \,
    \textbf{A}_\epsilon) = \TRUE$ and\\ (b) $V^{\cal
      M}_{\phi}(\Neg(\mname{is-free-in}_{\epsilon \tarrow \epsilon
      \tarrow o} \, \synbrack{\textbf{x}_\alpha} \,
    \synbrack{\textbf{A}_\epsilon})) = \TRUE$.  We must show
    \[V^{\cal M}_{\phi}(\sembrack{\mname{abs}_{\epsilon \tarrow \epsilon
          \tarrow \epsilon} \, \synbrack{\textbf{x}_\alpha} \,
      \textbf{A}_\epsilon}_{\alpha \tarrow \beta})(d) = V^{\cal
      M}_{\phi}(\LambdaApp \textbf{x}_\alpha \mdot
    \sembrack{\textbf{A}_\epsilon}_\beta)(d),\] where $d \in
    D_\alpha$, to prove Axiom B10.4 is valid in $\sM$. By part 3 of
    Lemma~\ref{lem:sem-is-expr}, (a) implies (c) $V^{\cal
      M}_{\phi}(\textbf{A}_\epsilon) = V^{\cal
      M}_{\phi}(\sE(\textbf{B}_\beta))$ for some $\textbf{B}_\beta$.
    By Lemma~\ref{lem:sem-is-free-in}, (b) implies (d)
    $\textbf{x}_\alpha$ is not free in $\textbf{A}_\epsilon$.
    \begin{align} \setcounter{equation}{0}
    &
    V^{\cal M}_{\phi}(
    \sembrack{\mname{abs}_{\epsilon \tarrow \epsilon \tarrow \epsilon} \, 
    \synbrack{\textbf{x}_\alpha} \, 
    \textbf{A}_\epsilon}_{\alpha \tarrow \beta})(d)\\
    &
    = V^{\cal M}_{\phi}(
    \sembrack{\mname{abs}_{\epsilon \tarrow \epsilon \tarrow \epsilon} \, 
    \synbrack{\textbf{x}_\alpha} \, 
    \sE(\textbf{B}_\beta)}_{\alpha \tarrow \beta})(d)\\
    &
    = V^{\cal M}_{\phi}(\sembrack{\sE(\LambdaApp \textbf{x}_\alpha \mdot 
    \textbf{B}_\beta)}_{\alpha \tarrow \beta})(d)\\
    &
    = V^{\cal M}_{\phi}(\LambdaApp \textbf{x}_\alpha \mdot \textbf{B}_\beta)(d)\\
    &
    = V^{\cal M}_{\phi[{\bf x}_\alpha \mapsto d]}(\textbf{B}_\beta)\\
    &
    = V^{\cal M}_{\phi[{\bf x}_\alpha \mapsto d]}
    (\sembrack{\sE(\textbf{B}_\beta)}_\beta)\\
    &
    = V^{\cal M}_{\phi[{\bf x}_\alpha \mapsto d]}
    (\sembrack{\textbf{A}_\epsilon}_\beta)\\
    &
    = V^{\cal M}_{\phi}(\LambdaApp \textbf{x}_\alpha \mdot 
    \sembrack{\textbf{A}_\epsilon}_\beta)(d)
    \end{align}
    (2) is by (c); (3) is by the definition of $\sE$; (4) and (6) are
    by Lemma~\ref{lem:alt-disquotation}; (5) and (8) are by the
    semantics of function abstraction; and (7) is by (c), (d),
    Lemma~\ref{lem:val-b}, and the fact that constructions are closed
    expressions.\esp

  \item[] \textbf{Axiom B10.5} {\sglsp} Let (a) $V^{\cal
    M}_{\phi}(\mname{is-expr}_{\epsilon \tarrow o}^\epsilon \,
    \textbf{A}_\epsilon) = \TRUE$.  We must show
    \[V^{\cal M}_{\phi}(\sembrack{\mname{quo}_{\epsilon \tarrow \epsilon} \,
      \textbf{A}_\epsilon}_\epsilon) = V^{\cal
      M}_{\phi}(\textbf{A}_\epsilon)\] to prove Axiom B10.5 is valid in
    $\sM$.  By part 3 of Lemma~\ref{lem:sem-is-expr}, (a) implies (b)
    $V^{\cal M}_{\phi}(\textbf{A}_\epsilon) = V^{\cal
      M}_{\phi}(\sE(\textbf{B}_\epsilon))$ for some
    $\textbf{B}_\epsilon$.  
    \begin{align} \setcounter{equation}{0}
    &
    V^{\cal M}_{\phi}(\sembrack{\mname{quo}_{\epsilon \tarrow \epsilon} \,
      \textbf{A}_\epsilon}_\epsilon)\\
    &
    = V^{\cal M}_{\phi}(\sembrack{\mname{quo}_{\epsilon \tarrow \epsilon} \,
      \sE(\textbf{B}_\epsilon)}_\epsilon)\\
    &
    = V^{\cal M}_{\phi}(\sembrack{\sE(\synbrack{\textbf{B}_\epsilon})}_\epsilon)\\
    &
    = V^{\cal M}_{\phi}(\synbrack{\textbf{B}_\epsilon})\\
    &
    = V^{\cal M}_{\phi}(\sE(\textbf{B}_\epsilon))\\
    &
    = V^{\cal M}_{\phi}(\textbf{A}_\epsilon)
    \end{align}
    (2) and (6) are by (b); (3) is by the definition of $\sE$; (4) is
    by Lemma~\ref{lem:alt-disquotation}; and (5) is by Law of
    Quotation.

\ei

\medskip

\noindent \textbf{Axiom Group B11}

\bi

  \item[] \textbf{Axiom B11.1} {\sglsp} \item[] We must show
    \[V^{\cal M}_{\phi}((\LambdaApp \textbf{x}_\alpha \mdot 
    \sembrack{\textbf{B}_\epsilon}_\beta) \, \textbf{x}_\alpha) =
    V^{\cal M}_{\phi}(\sembrack{\textbf{B}_\epsilon}_\beta)\] to prove
    Axiom B11.1 is valid in $\sM$.
    \begin{align} \setcounter{equation}{0}
    &
    V^{\cal M}_{\phi}((\LambdaApp \textbf{x}_\alpha \mdot 
    \sembrack{\textbf{B}_\epsilon}_\beta) \, 
    \textbf{x}_\alpha)\\
    &
    = V^{\cal M}_{\phi[{\bf x}_\alpha \mapsto V^{\cal M}_{\phi}({\bf  x}_\alpha)]}
    (\sembrack{\textbf{B}_\epsilon}_\beta)\\
    & 
    = V^{\cal M}_{\phi}(\sembrack{\textbf{B}_\epsilon}_\beta)
    \end{align}
    (2) is by part 2 of Lemma~\ref{lem:val-a} and (3) is by
    \[\phi[{\bf x}_\alpha \mapsto V^{\cal M}_{\phi}({\bf x}_\alpha)]
    = \phi[{\bf x}_\alpha \mapsto \phi({\bf x}_\alpha)]
    = \phi.\]

  \item[] \textbf{Axiom B11.2} {\sglsp} Let (a) $V^{\cal
    M}_{\phi}(\mname{is-expr}_{\epsilon \tarrow o}^{\beta} \,
    ((\LambdaApp \textbf{x}_\alpha \mdot \textbf{B}_\epsilon) \,
    \textbf{A}_\alpha)) = \TRUE$ and\\ (b) $V^{\cal
      M}_{\phi}(\Neg(\mname{is-free-in}_{\epsilon \tarrow \epsilon
      \tarrow o} \, \synbrack{\textbf{x}_\alpha} \, ((\LambdaApp
    \textbf{x}_\alpha \mdot \textbf{B}_\epsilon) \,
    \textbf{A}_\alpha))) = \TRUE$.  We must show \[V^{\cal
      M}_{\phi}((\LambdaApp \textbf{x}_\alpha \mdot
    \sembrack{\textbf{B}_\epsilon}_\beta) \, \textbf{A}_\alpha) =
    V^{\cal M}_{\phi}(\sembrack{(\LambdaApp \textbf{x}_\alpha \mdot
      \textbf{B}_\epsilon) \, \textbf{A}_\alpha}_\beta)\] to prove
    Axiom B11.2 is valid in $\sM$.  By part 3 of
    Lemma~\ref{lem:sem-is-expr}, (a) implies (c) $V^{\cal
      M}_{\phi}((\LambdaApp \textbf{x}_\alpha \mdot
    \textbf{B}_\epsilon) \, \textbf{A}_\alpha) = V^{\cal
      M}_{\phi}(\sE(\textbf{C}_\beta))$ for some (eval-free)
    $\textbf{C}_\beta$. By (a) and part 2 of Lemma~\ref{lem:val-a},
    (d) $V^{\cal M}_{\phi[{\bf x}_\alpha \mapsto V^{\cal
          M}_{\phi}({\bf A}_\alpha)]}(\mname{is-expr}_{\epsilon
      \tarrow o}^{\beta} \, \textbf{B}_\epsilon) = \TRUE$. By (b),
    (c), Lemma~\ref{lem:sem-is-free-in}, and the Law of Quotation, (e)
    $\textbf{x}_\alpha$ is not free in $\textbf{C}_\beta$.
    \begin{align} \setcounter{equation}{0}
    &
    V^{\cal M}_{\phi}((\LambdaApp \textbf{x}_\alpha \mdot
    \sembrack{\textbf{B}_\epsilon}_\beta) \, \textbf{A}_\alpha)\\
    &
    = V^{\cal M}_{\phi[{\bf x}_\alpha \mapsto V^{\cal M}_{\phi}({\bf  A}_\alpha)]}
    (\sembrack{\textbf{B}_\epsilon}_\beta)\\
    &
    = V^{\cal M}_{\phi[{\bf x}_\alpha \mapsto V^{\cal M}_{\phi}({\bf  A}_\alpha)]}
    ({\sE}^{-1}
    (V^{\cal M}_{\phi[{\bf x}_\alpha \mapsto V^{\cal M}_{\phi}({\bf  A}_\alpha)]}
    (\textbf{B}_\epsilon)))\\
    &
    = V^{\cal M}_{\phi[{\bf x}_\alpha \mapsto V^{\cal M}_{\phi}({\bf  A}_\alpha)]}
    ({\sE}^{-1}
    (V^{\cal M}_{\phi}
    ((\LambdaApp \textbf{x}_\alpha \mdot
    \textbf{B}_\epsilon) \, \textbf{A}_\alpha)))\\
    &
    = V^{\cal M}_{\phi[{\bf x}_\alpha \mapsto V^{\cal M}_{\phi}({\bf  A}_\alpha)]}
    ({\sE}^{-1}
    (V^{\cal M}_{\phi}
    (\sE(\textbf{C}_\beta))))\\
    &
    = V^{\cal M}_{\phi[{\bf x}_\alpha \mapsto V^{\cal M}_{\phi}({\bf  A}_\alpha)]}
    ({\sE}^{-1}
    (\sE(\textbf{C}_\beta)))\\
    &
    = V^{\cal M}_{\phi[{\bf x}_\alpha \mapsto V^{\cal M}_{\phi}({\bf  A}_\alpha)]}
    (\textbf{C}_\beta)\\
    &
    = V^{\cal M}_{\phi}(\textbf{C}_\beta)\\
    &
    = V^{\cal M}_{\phi}(\sembrack{\sE(\textbf{C}_\beta)}_\beta)\\
    &
    = V^{\cal M}_{\phi}(\sembrack{(\LambdaApp \textbf{x}_\alpha \mdot
    \textbf{B}_\epsilon) \, \textbf{A}_\alpha}_\beta)
    \end{align}
    (2) and (4) are by part 2 of Lemma~\ref{lem:val-a}; (3) is by (d)
    and condition 6 of the definition of a general model; (5) is by
    (c); (6) is by Proposition~\ref{prop:val-const}; (7) is immediate;
    (8) is by (e) and Lemma~\ref{lem:sem-is-free-in}; (9) is by
    Lemma~\ref{lem:alt-disquotation}; and (10) is by (c).

\ei

\noindent \textbf{Axiom B12} {\sglsp} By Lemmas~\ref{lem:val-b}
and~\ref{lem:not-effective-sem}.

\medskip

\noindent \textbf{Axiom B13} {\sglsp} By Lemma ~\ref{lem:not-effective-sem}
and the proof for Axiom A4.5.
\end{proof}

\begin{lem}[Rule R Preserves Validity]\label{lem:rule}
Rule R preserves validity in all general models for {\churchqe}.
\end{lem}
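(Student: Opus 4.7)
The plan is to fix a general model $\sM$ for {\churchqe}, assume that $\sM \vDash \textbf{A}_\alpha = \textbf{B}_\alpha$ and $\sM \vDash \textbf{C}_o$, and show that $\sM \vDash \textbf{D}_o$ whenever $\textbf{D}_o$ arises from $\textbf{C}_o$ by one application of Rule R. The first step is to note that, by part~1 of Lemma~\ref{lem:val-a}, the hypothesis $\sM \vDash \textbf{A}_\alpha = \textbf{B}_\alpha$ amounts to $V^{\cal M}_\phi(\textbf{A}_\alpha) = V^{\cal M}_\phi(\textbf{B}_\alpha)$ for every $\phi \in \mname{assign}(\sM)$. I would then prove by structural induction on an arbitrary expression $\textbf{E}_\gamma$ the following more general statement: if $\textbf{E}'_\gamma$ is obtained from $\textbf{E}_\gamma$ by replacing one occurrence of $\textbf{A}_\alpha$ by $\textbf{B}_\alpha$ that is (a)~not within a quotation, (b)~not the first argument of a function abstraction, and (c)~not the second argument of an evaluation, then $V^{\cal M}_\phi(\textbf{E}_\gamma) = V^{\cal M}_\phi(\textbf{E}'_\gamma)$ for all $\phi$. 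Applied to $\textbf{E}_\gamma = \textbf{C}_o$, this yields $\sM \vDash \textbf{D}_o$.

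The base case is that the designated occurrence is the whole of $\textbf{E}_\gamma$, so $\textbf{E}_\gamma = \textbf{A}_\alpha$ and $\textbf{E}'_\gamma = \textbf{B}_\alpha$, which is handled by the hypothesis. Otherwise the occurrence sits in a proper subexpression, and I would treat the six formation cases in turn. Variables and constants are vacuous. For a function application $\textbf{F}_{\delta \tarrow \gamma} \, \textbf{G}_\delta$ the occurrence lies in $\textbf{F}$ or in $\textbf{G}$; induction together with the semantics of application gives the conclusion. For a function abstraction $\LambdaApp \textbf{x}_\delta \mdot \textbf{E}_\eta$, restriction~(b) forces the occurrence into $\textbf{E}_\eta$; the induction hypothesis applied pointwise to every modified assignment $\phi[\textbf{x}_\delta \mapsto d]$ shows that the two functions built via the semantics of abstraction coincide. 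A quotation $\synbrack{\textbf{E}_\delta}$ contributes no legal occurrence below its root by restriction~(a), so nothing needs to be checked there beyond the base case.

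The evaluation case is the one requiring the most care. Writing $\textbf{E}_\gamma = \sembrack{\textbf{E}_\epsilon}_{\textbf{F}_\gamma}$, restriction~(c) places the occurrence inside $\textbf{E}_\epsilon$, so by induction $V^{\cal M}_\phi(\textbf{E}_\epsilon) = V^{\cal M}_\phi(\textbf{E}'_\epsilon)$ for all $\phi$. Since $I(\mname{is-expr}_{\epsilon \tarrow o}^{\gamma})$ is a function on $D_\epsilon$, the values $V^{\cal M}_\phi(\mname{is-expr}_{\epsilon \tarrow o}^{\gamma} \, \textbf{E}_\epsilon)$ and $V^{\cal M}_\phi(\mname{is-expr}_{\epsilon \tarrow o}^{\gamma} \, \textbf{E}'_\epsilon)$ agree. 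If both are $\TRUE$, conditions~5--6 of the definition of a general model together with Proposition~\ref{prop:val-const} give
\begin{align*}
V^{\cal M}_\phi(\sembrack{\textbf{E}_\epsilon}_\gamma)
  &= V^{\cal M}_\phi(\sE^{-1}(V^{\cal M}_\phi(\textbf{E}_\epsilon))) \\
  &= V^{\cal M}_\phi(\sE^{-1}(V^{\cal M}_\phi(\textbf{E}'_\epsilon)))
   = V^{\cal M}_\phi(\sembrack{\textbf{E}'_\epsilon}_\gamma);
\end{align*}
if both are $\FALSE$, condition~7 gives the common error value $d_\gamma$ on each side. In either case the two evaluations agree, and the induction closes. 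I expect this evaluation case to be the main technical obstacle, since it is the only place where the (partial) double valuation underlying evaluations has to be shown to respect semantic equality; the abstraction case is the only other nontrivial one, and it goes through straightforwardly because $\sM \vDash \textbf{A}_\alpha = \textbf{B}_\alpha$ holds at \emph{all} assignments, including the shifted $\phi[\textbf{x}_\delta \mapsto d]$.
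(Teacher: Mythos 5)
Your proposal is correct and follows essentially the same route as the paper: reduce $\sM \vDash \textbf{A}_\alpha = \textbf{B}_\alpha$ to agreement of values at every assignment via part 1 of Lemma~\ref{lem:val-a}, then show by structural induction that a single restricted replacement preserves the value of every expression at every assignment. The paper states this induction in one line and merely flags that condition 7 of the definition of a general model is needed; your case analysis — in particular the evaluation case, where the fixed error value $d_\gamma$ from condition 7 handles the case where both $\mname{is-expr}$ tests come out $\FALSE$ — is exactly the intended argument, spelled out.
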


\begin{proof}
Let $\sM$ be a general model for {\churchqe}.  Suppose $\textbf{C}_o$
and $\textbf{C}'_o$ are formulas such that $\textbf{C}'_o$ is the
result of replacing one occurrence of $\textbf{A}_\alpha$ in
$\textbf{C}_o$ by an occurrence of $\textbf{B}_\alpha$, provided that
the occurrence of $\textbf{A}_\alpha$ in $\textbf{C}_o$ is not within
a quotation, not the first argument of a function abstraction, and not
the second argument of an evaluation.  Then it easily follows that
$V^{\cal M}_{\phi}(\textbf{A}_\alpha) = V^{\cal
  M}_{\phi}(\textbf{B}_\alpha)$ for all $\phi \in \mname{assign}(\sM)$
implies $V^{\cal M}_{\phi}(\textbf{C}_o) = V^{\cal
  M}_{\phi}(\textbf{C}'_o)$ for all $\phi \in \mname{assign}(\sM)$ by
induction on the structure of $\textbf{C}_o$ (condition 7 of the
definition of a general model is needed for the argument).  $\sM
\vDash \textbf{A}_\alpha = \textbf{B}_\alpha$ implies $V^{\cal
  M}_{\phi}(\textbf{A}_\alpha) = V^{\cal M}_{\phi}(\textbf{B}_\alpha)$
for all $\phi \in \mname{assign}(\sM)$ by part 1 of
Lemma~\ref{lem:val-a}, and hence $\sM \vDash \textbf{A}_\alpha =
\textbf{B}_\alpha$ and $\sM \vDash \textbf{C}_o$ imply $V^{\cal
  M}_{\phi}(\textbf{C}_o) = V^{\cal M}_{\phi}(\textbf{C}'_o) = \TRUE$
for all $\phi \in \mname{assign}(\sM)$.  Therefore, $\sM \vDash
\textbf{A}_\alpha = \textbf{B}_\alpha$ and $\sM \vDash \textbf{C}_o$
imply $\sM \vDash \textbf{C}'_o$, and so Rule R preserves validity in
$\sM$.
\end{proof}

\subsection{Soundness and Consistency Theorems}

\begin{thm}[Soundness Theorem]

Let $T$ be a theory of {\churchqe}, $\textbf{A}_o$ be
a formula of $T$, and $\sH$ be a set of formulas of $T$.
\be

  \item $\proves{T}{\textbf{A}_o}$ implies $T \vDash \textbf{A}_o$
    (i.e., the proof system for {\churchqe} is sound).

  \item $\proves{T,\sH}{\textbf{A}_o}$ implies $T,\sH \vDash
    \textbf{A}_o$.

\ee
\end{thm}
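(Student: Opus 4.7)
The plan is to prove both parts by induction on the structure of proofs, leveraging the two lemmas just established: Lemma~\ref{lem:axioms} (every axiom is valid in {\churchqe}) and Lemma~\ref{lem:rule} (Rule R preserves validity in every general model).

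For Part 1, I would induct on the length of a proof of $\textbf{A}_o$ in $T$. Each formula in the sequence is one of: (i) an axiom of {\churchqe}, which is valid in every general model for {\churchqe} by Lemma~\ref{lem:axioms}, and hence in every general model for $T$; (ii) a member of $\Gamma$, which is valid in every general model for $T$ by the definition of a general model for $T$; or (iii) obtained by Rule R from two earlier formulas in the sequence, which are valid in every general model for $T$ by the induction hypothesis, so Lemma~\ref{lem:rule} yields validity of the inferred formula. This gives $T \vDash \textbf{A}_o$.

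For Part 2, I would induct on the derivation witnessing $\proves{T,\sH}{\textbf{A}_o}$ along the three clauses defining provability from hypotheses. The first two clauses are immediate: if $\textbf{A}_o \in \sH$, then $T,\sH \vDash \textbf{A}_o$ holds trivially from the definition of entailment; and if $\textbf{A}_o$ is a theorem of $T$, then Part 1 gives $T \vDash \textbf{A}_o$, from which $T,\sH \vDash \textbf{A}_o$ follows immediately. The real work is the Rule ${\rm R}'$ case, handled by fixing a general model $\sM$ for $T$ and an assignment $\phi$ with $V^{\cal M}_{\phi}(\textbf{H}_o) = \TRUE$ for every $\textbf{H}_o \in \sH$, and showing $V^{\cal M}_{\phi}(\textbf{D}_o) = \TRUE$ where $\textbf{D}_o$ results from replacing one occurrence of $\textbf{A}_\alpha$ in $\textbf{C}_o$ by $\textbf{B}_\alpha$.

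The main obstacle is the Rule ${\rm R}'$ case under a function abstraction $\LambdaApp \textbf{x}_\beta \mdot \textbf{E}_\gamma$. Performing the replacement inside the body forces us to evaluate $\textbf{A}_\alpha$ and $\textbf{B}_\alpha$ at assignments of the form $\phi[\textbf{x}_\beta \mapsto d]$ for every $d \in D_\beta$, whereas the induction hypothesis applied to $\proves{T,\sH}{\textbf{A}_\alpha = \textbf{B}_\alpha}$ yields equality only at assignments that satisfy every member of $\sH$. The side conditions on Rule ${\rm R}'$ are precisely what rescues the argument, via Lemma~\ref{lem:not-effective-sem}: the first disjunct $\proves{T,\sH}{\Neg\mname{IS-EFFECTIVE-IN}(\textbf{x}_\beta, \textbf{A}_\alpha = \textbf{B}_\alpha)}$, combined with the induction hypothesis (which is valid since this is a shorter derivation), gives by Lemma~\ref{lem:not-effective-sem} that the truth value of $\textbf{A}_\alpha = \textbf{B}_\alpha$ at $\phi$ is independent of the value assigned to $\textbf{x}_\beta$, so equality at $\phi$ propagates to each $\phi[\textbf{x}_\beta \mapsto d]$; and the second disjunct $\proves{T,\sH}{\Neg\mname{IS-EFFECTIVE-IN}(\textbf{x}_\beta, \textbf{H}_o)}$ for every $\textbf{H}_o \in \sH$ ensures by the same lemma that each $\phi[\textbf{x}_\beta \mapsto d]$ still satisfies every hypothesis in $\sH$, so the induction hypothesis applies directly at the shifted assignment. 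Either disjunct yields $V^{\cal M}_{\phi[\textbf{x}_\beta \mapsto d]}(\textbf{A}_\alpha) = V^{\cal M}_{\phi[\textbf{x}_\beta \mapsto d]}(\textbf{B}_\alpha)$ for all $d$. A subinduction on the structure of $\textbf{C}_o$, respecting the syntactic restrictions (no replacement inside a quotation, as the first argument of an abstraction, or as the second argument of an evaluation), then propagates this pointwise equality outward through function application, abstraction, and evaluation, using condition 7 of the definition of a general model for the evaluation case, exactly as in the proof of Lemma~\ref{lem:rule}. This yields $V^{\cal M}_{\phi}(\textbf{C}_o) = V^{\cal M}_{\phi}(\textbf{D}_o)$, and since $V^{\cal M}_{\phi}(\textbf{C}_o) = \TRUE$ by the induction hypothesis, we conclude $V^{\cal M}_{\phi}(\textbf{D}_o) = \TRUE$, completing the argument.
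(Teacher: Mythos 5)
Your Part 1 is essentially the paper's argument: induct along the proof sequence, using Lemma~\ref{lem:axioms} for the axioms, the definition of a general model for $T$ for the members of $\Gamma$, and Lemma~\ref{lem:rule} for Rule R. For Part 2, however, you take a genuinely different route, and that route has a gap. The paper does not induct on the derivation of $\proves{T,\sH}{\textbf{A}_o}$ at all: it observes that only finitely many hypotheses $\textbf{H}^{1}_o,\ldots,\textbf{H}^{n}_o \in \sH$ are used, applies the Deduction Theorem $n$ times to obtain $\proves{T}{\textbf{H}^{1}_o \Implies (\cdots(\textbf{H}^{n}_o \Implies \textbf{A}_o)\cdots)}$, invokes Part 1, and finishes with semantic modus ponens. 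All the delicate bookkeeping about Rule ${\rm R}'$ under binders is thereby absorbed into the Deduction Theorem, which is a purely proof-theoretic result already established in section~\ref{sec:pt-results}, so there is no circularity.

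The gap in your direct treatment of the Rule ${\rm R}'$ case is the appeal to Lemma~\ref{lem:not-effective-sem}. That lemma characterizes $\sM \vDash \Neg\mname{IS-EFFECTIVE-IN}(\textbf{x}_\beta,\textbf{G}_o)$, i.e.\ truth at \emph{every} assignment, and its proof genuinely needs that universal quantification: to obtain $V^{\cal M}_{\psi[{\bf x}_\beta \mapsto d_1]}(\textbf{G}_o) = V^{\cal M}_{\psi[{\bf x}_\beta \mapsto d_2]}(\textbf{G}_o)$ it instantiates its hypothesis at the two shifted assignments $\psi[{\bf x}_\beta \mapsto d_1]$ and $\psi[{\bf x}_\beta \mapsto d_2]$. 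Your induction hypothesis only delivers $T,\sH \vDash \Neg\mname{IS-EFFECTIVE-IN}(\cdots)$, i.e.\ truth at assignments satisfying $\sH$, and truth at the single assignment $\phi$ does not yield the independence you need: unfolding the abbreviation at $\phi$ alone gives only $V^{\cal M}_{\phi[{\bf y}_\beta \mapsto d][{\bf x}_\beta \mapsto d]}(\textbf{G}_o) = V^{\cal M}_{\phi[{\bf y}_\beta \mapsto d]}(\textbf{G}_o)$, which cannot be freed of the auxiliary shift $[{\bf y}_\beta \mapsto d]$ without a further independence assumption, so a ``pointwise'' version of the lemma does not follow. The difficulty compounds when the replaced occurrence sits under several nested abstractions, since the side conditions for the inner binders would have to be exploited at assignments that may no longer satisfy $\sH$. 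Either strengthen what you carry through the induction to handle this, or follow the paper and route Part 2 through the Deduction Theorem.
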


\begin{proof}

\medskip

\noindent 
\textbf{Part 1} {\sglsp} Assume $\proves{T}{\textbf{A}_o}$ and
$\sM$ is a general model for $T$.  We must show that $\sM \vDash
\textbf{A}_o$. By assumption, each member of $\Gamma$ is valid in
$\sM$.  By Lemma~\ref{lem:axioms}, each axiom of {\churchqe} is valid
in $\sM$.  And by Lemma~\ref{lem:rule}, Rule R preserves validity in
$\sM$.  Therefore, $\proves{T}{\textbf{A}_o}$ implies $\sM \vDash
\textbf{A}_o$.

\medskip

\noindent 
\textbf{Part 2} {\sglsp} Assume $\proves{T,\sH}{\textbf{A}_o}$, $\sM$
is a general model for $T$, $\phi \in \mname{assign}(\sM)$, and
$V^{\cal M}_{\phi}(\textbf{H}_o) = \TRUE$ for all $\textbf{H}_o \in
\sH$.  We need to show $V^{\cal M}_{\phi}(\textbf{A}_o) = \TRUE$.
There is a finite subset
$\set{\textbf{H}^{1}_o,\ldots,\textbf{H}^{n}_o}$ of $\sH$ such that
$\proves{T,\set{\textbf{H}^{1}_o,\ldots,\textbf{H}^{n}_o}}{\textbf{A}_o}$.
Then
\[\proves{T}{\textbf{H}^{1}_o \Implies (\ldots(\textbf{H}^{n}_o \Implies 
\textbf{A}_o)\ldots)}\] by the Deduction Theorem, and so by part 1 of
the theorem, \[\sM \vDash \textbf{H}^{1}_o \Implies
(\ldots(\textbf{H}^{n}_o \Implies \textbf{A}_o)\ldots).\] By
hypothesis, $V^{\cal M}_{\phi}(\textbf{H}^{i}_o) = \TRUE$ for each $i$
with $1 \le i \le n$, and so $V^{\cal M}_{\phi}(\textbf{A}_o) =
\TRUE$.
\end{proof}

\begin{cor}
The proof system for {\churchqe} satisfies Requirement R1.
\end{cor}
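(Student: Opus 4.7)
The plan is to derive this corollary immediately from Part 1 of the Soundness Theorem just proved. Requirement R1, as formulated in subsection~\ref{subsec:requirements}, states that the proof system is sound, meaning it only proves valid formulas. This is precisely the content of Part 1 of the Soundness Theorem: for any theory $T$ of {\churchqe} and any formula $\textbf{A}_o$ of $T$, $\proves{T}{\textbf{A}_o}$ implies $T \vDash \textbf{A}_o$. Specializing to the theory of the logic $T_{\rm logic} = (L_{\cal C},\emptyset)$ gives the reading that every theorem $\textbf{A}_o$ of {\churchqe} is valid in every general model for {\churchqe}, i.e., ${} \vDash \textbf{A}_o$, which is exactly R1.

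There is no real obstacle here, since all the substantive work has already been done. The heavy lifting was carried out in Lemma~\ref{lem:axioms} (every axiom is valid in every general model for {\churchqe}) and Lemma~\ref{lem:rule} (Rule R preserves validity in every general model for {\churchqe}), and these combine via induction on the length of a proof to yield the Soundness Theorem. The corollary is therefore a one-line citation of Part 1 of that theorem. If one reads R1 more broadly as covering provability from a set $\sH$ of hypotheses, then Part 2 of the Soundness Theorem, which extends Part 1 by showing $\proves{T,\sH}{\textbf{A}_o}$ implies $T,\sH \vDash \textbf{A}_o$, handles that reading as well.
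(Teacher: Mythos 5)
Your proposal is correct and matches the paper's intent exactly: the corollary is an immediate consequence of Part 1 of the Soundness Theorem, since Requirement R1 is precisely the statement that the proof system only proves valid formulas. The paper offers no separate argument beyond this citation, so nothing further is needed.
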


\begin{thm}[Consistency Theorem]
Let $T$ be a theory of {\churchqe}.  If $T$ has a general model, then
$T$ is consistent.
\end{thm}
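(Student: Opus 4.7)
The plan is to prove this by contraposition via the Soundness Theorem: I will show that if $T$ is inconsistent, then $T$ can have no general model. Concretely, I would assume $T$ has a general model $\sM$ and assume toward contradiction that $\proves{T}{F_o}$. By Part 1 of the Soundness Theorem (already available), this gives $T \vDash F_o$, and hence $\sM \vDash F_o$, meaning $V^{\cal M}_{\phi}(F_o) = \TRUE$ for every $\phi \in \mname{assign}(\sM)$.

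The remaining step is to show that $V^{\cal M}_{\phi}(F_o) = \FALSE$ in any general model, which contradicts the above. Recall that $F_o$ abbreviates $(\LambdaApp x_o \mdot T_o) = (\LambdaApp x_o \mdot x_o)$. Using part 1 of Lemma~\ref{lem:val-a} together with the semantics of equality in condition 1 of the definition of an interpretation, it suffices to verify that the two functions $V^{\cal M}_{\phi}(\LambdaApp x_o \mdot T_o)$ and $V^{\cal M}_{\phi}(\LambdaApp x_o \mdot x_o)$ are distinct elements of $D_{o \tarrow o}$. The first function maps every truth value to $V^{\cal M}_{\phi}(T_o) = \TRUE$ (which follows from the semantics of $T_o = (=_{o \tarrow o \tarrow o}) = (=_{o \tarrow o \tarrow o})$ and the identity-relation interpretation of equality), while the second is the identity on $\set{\TRUE,\FALSE}$, which sends $\FALSE$ to $\FALSE$. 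So by extensionality of the function domains, these two functions differ, giving $V^{\cal M}_{\phi}(F_o) = \FALSE$.

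I expect no serious obstacle: the entire argument rests on the already-proved Soundness Theorem plus a brief semantic computation showing $\sM \not\vDash F_o$. The one small subtlety is to compute $V^{\cal M}_{\phi}(T_o) = \TRUE$ cleanly so that the comparison of the two lambda abstractions is unambiguous; this is a routine unwinding of the definitions of $T_o$ and $F_o$ from Table~\ref{tab:defs} and the interpretation of $=_{\alpha \tarrow \alpha \tarrow o}$. Once that is done, the contradiction with $\sM \vDash F_o$ is immediate, and the theorem follows.
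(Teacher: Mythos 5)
Your proposal is correct and takes essentially the same route as the paper: assume a general model and inconsistency, apply Part 1 of the Soundness Theorem to get $\sM \vDash F_o$, and contradict this with the semantics of $F_o$. The only difference is that you explicitly carry out the computation showing $V^{\cal M}_{\phi}(F_o) = \FALSE$ (comparing the constant-$\TRUE$ function with the identity on $\set{\TRUE,\FALSE}$), a step the paper compresses into the remark that $V^{\cal M}_{\phi}(F_o) \not= \FALSE$ ``contradicts the definition of a general model.''
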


\begin{proof}
Assume $\sM$ is a general model for $T$ and $T$ is inconsistent, i.e.,
$\proves{T}{F_o}$.  By the Soundness Theorem, $T \vDash F_o$ and
hence $\sM \vDash F_o$.  This means $V^{\cal M}_{\phi}(F_o) = \TRUE$
and hence $V^{\cal M}_{\phi}(F_o) \not= \FALSE$ (for any assignment
$\phi$), which contracts the definition of a general model.
\end{proof}

\section{Completeness}\label{sec:completeness}

We will now show that the proof system for {\churchqe} is complete
with respect to the (general models) semantics for {\churchqe} for
eval-free formulas.  More precisely, we will show that $T \vDash
\textbf{A}_o$ implies $\proves{T}{\textbf{A}_o}$ whenever $T$ is an
eval-free theory of {\churchqe} and $\textbf{A}_o$ is an eval-free
formula of $T$.  We will also show that the proof system for
{\churchqe} is not complete with respect to the semantics for
{\churchqe} for non-eval-free formulas.

\subsection{Eval-Free Completeness}

Let {\churcheps} be the logic obtained from {\churchqe} as follows:

\be

  \item Replace the set $\sC$ of constants by the expanded set $\sC
    \cup \sC'$ where:
  \[\sC' = \set{d^{\synbrack{{\bf
        x}_\alpha}}_{\epsilon} \;|\; \textbf{x}_\alpha \in \sV} \cup
  \set{d^{\synbrack{{\bf c}_\alpha}}_{\epsilon} \;|\;
    \textbf{c}_\alpha \in \sC}.\]

  \item Remove the quotation operator $\synbrack{\cdot}$ from the set
    of expression constructors so that there are no (primitive)
    quotations in {\churcheps}, but let $\synbrack{\textbf{A}_\alpha}$
    be an abbreviation defined by:

    \be

      \item $\synbrack{\textbf{x}_\alpha}$ stands for
        $d^{\synbrack{{\bf x}_\alpha}}_{\epsilon}$.

      \item $\synbrack{\textbf{c}_\alpha}$ stands for
        $d^{\synbrack{{\bf c}_\alpha}}_{\epsilon}$.

      \item $\synbrack{\textbf{F}_{\alpha \tarrow \beta} \,
        \textbf{A}_\alpha}$ stands for $\mname{app}_{\epsilon \tarrow
        \epsilon \tarrow \epsilon} \, \synbrack{\textbf{F}_{\alpha
          \tarrow \beta}} \, \synbrack{\textbf{A}_\alpha}$.

      \item $\synbrack{\LambdaApp \textbf{x}_\alpha \mdot
        \textbf{B}_\beta}$ stands for $\mname{abs}_{\epsilon \tarrow
        \epsilon \tarrow \epsilon} \, \synbrack{\textbf{x}_\alpha} \,
        \synbrack{\textbf{B}_\beta}$.

      \item $\synbrack{\synbrack{\textbf{A}_\alpha}}$ stands for
        $\mname{quo}_{\epsilon \tarrow \epsilon} \,
        \synbrack{\textbf{A}_\alpha}$.

    \ee

  \item Remove the evaluation operator $\sembrack{\cdot}_{\cdot}$ from
    the set of expression constructors so that there are no
    evaluations in {\churcheps}.

  \item A general model for {\churcheps} is the same as a general
    model for {\churchqe} except that (1) an interpretation maps the
    constants of the form $d^{\synbrack{{\bf x}_\alpha}}_{\epsilon}$
    and $d^{\synbrack{{\bf c}_\alpha}}_{\epsilon}$ to the
    constructions $\synbrack{\textbf{x}_\alpha}$ and
    $\synbrack{\textbf{c}_\alpha}$, respectively, and (2) a valuation
    function is not applicable to quotations and evaluations.

  \item The proof system for {\churcheps} is the same as the proof
    system for {\churchqe} except the axioms involving quotation,
    evaluation, and $\mname{IS-EFFECTIVE-IN}$ are removed: B8.1--3,
    B9, B10.1--5, B11.1--2, B12, and B13.

\ee
{\churcheps} is essentially the same as {\qzero} with the
inductive type $\epsilon$ added to it.

\begin{lem}\label{lem:completeness}
Let $T = (L_{\cal D}, \Gamma)$ be an eval-free theory of {\churchqe},
$\textbf{A}_o$ be an eval-free formula of $T$, and $T' = (L_{{\cal D}
  \cup {\cal C}'}, \Gamma)$.

\be

  \item If $\textbf{A}_o$ is valid in $T$ in {\churchqe}, then
    $\textbf{A}_o$ is valid in $T'$ in {\churcheps}.

  \item If $\textbf{A}_o$ is valid in $T'$ in {\churcheps}, then
    $\textbf{A}_o$ is a theorem of $T'$ in {\churcheps}.

  \item If $\textbf{A}_o$ is a theorem of $T'$ in {\churcheps}, then
    $\textbf{A}_o$ is a theorem of $T$ in {\churchqe}.

\ee
\end{lem}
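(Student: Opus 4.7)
The plan is to handle the three parts in turn, leveraging the observation that {\churcheps} is essentially {\qzero} augmented with the inductive type $\epsilon$ (characterized by Axioms B1--B7), with quotation absorbed into a collection of constants and abbreviations and evaluation simply removed. For each part I would exploit the fact that, for eval-free expressions, the mechanics of {\churchqe} and {\churcheps} line up perfectly.

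For Part 1, given a general model $\sM' = (\{D_\alpha \;|\; \alpha \in \sT\}, I')$ for $T'$ in {\churcheps}, I would construct a general model $\sM = (\{D_\alpha \;|\; \alpha \in \sT\}, I)$ for $T$ in {\churchqe} on the same frame, taking $I$ to be the restriction of $I'$ to $\sC$ (noting that $I'$ is forced to map $d^{\synbrack{\textbf{x}_\alpha}}_\epsilon$ and $d^{\synbrack{\textbf{c}_\alpha}}_\epsilon$ to $\synbrack{\textbf{x}_\alpha}$ and $\synbrack{\textbf{c}_\alpha}$). The {\churcheps}-valuation $V^{\sM'}$ is extended to a {\churchqe}-valuation $V^{\sM}$ by setting $V^{\sM}_\phi(\synbrack{\textbf{A}_\alpha}) = \sE(\textbf{A}_\alpha)$ (as forced by condition 5, and which agrees with the value of the {\churcheps} abbreviation) and then defining the valuation on evaluations by recursion on expression structure exactly as in the proof of Proposition~\ref{prop:gen-models-exist}. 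A routine induction on eval-free expressions then shows $V^{\sM}_\phi(\textbf{C}_\gamma) = V^{\sM'}_\phi(\textbf{C}_\gamma)$ for all $\phi$. Since $\sM$ is a general model for $T$ and $T \vDash \textbf{A}_o$ in {\churchqe}, we conclude $\sM' \vDash \textbf{A}_o$; as $\sM'$ was arbitrary, $\textbf{A}_o$ is valid in $T'$ in {\churcheps}.

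For Part 2, I would adapt Andrews' completeness proof (5502 in~\cite{Andrews02}) for {\qzero} to {\churcheps}. The standard Henkin-style construction extends a consistent set to a maximal consistent Henkin set and builds a term model; what must be checked is that the additional axiom groups B1--B7 force the term-model interpretation of type $\epsilon$ to be (isomorphic to) the intended set of constructions, with the logical constants receiving their intended interpretations. Axioms B1--B3, B5, and B7 dictate the behaviour of the predicates and the relation $\sqsubset_{\epsilon \tarrow \epsilon \tarrow o}$, the distinctness axioms B4 guarantee that constants $d^{\synbrack{\cdot}}_\epsilon$ and the constructors $\mname{app}, \mname{abs}, \mname{quo}$ generate disjoint equivalence classes, and the induction principle B6 rules out any junk elements in the $\epsilon$-domain. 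For Part 3, I would define a syntactic translation $\tau$ from {\churcheps}-expressions to {\churchqe}-expressions by replacing each $d^{\synbrack{\textbf{x}_\alpha}}_\epsilon$ with $\synbrack{\textbf{x}_\alpha}$ and each $d^{\synbrack{\textbf{c}_\alpha}}_\epsilon$ with $\synbrack{\textbf{c}_\alpha}$, leaving other constructors unchanged. Applying $\tau$ to a {\churcheps}-proof of $\textbf{A}_o$ yields a {\churchqe}-sequence ending in $\tau(\textbf{A}_o) = \textbf{A}_o$ (since $\textbf{A}_o$ already contains no $\sC'$-constants); one then verifies that $\tau$ maps each axiom of {\churcheps} to a {\churchqe}-axiom of the same form (A1--A4 transparently, and B1--B7 because the abbreviation $\synbrack{\cdot}$ in {\churcheps} composes with $\tau$ to produce actual {\churchqe} quotations) and that Rule R is preserved because its side conditions refer only to positional constraints that $\tau$ respects.

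The main obstacle is Part 2: one must carefully verify that the usual Henkin construction survives the presence of the inductive type $\epsilon$, and in particular that Axiom B6 (together with B4 and B5) pins down the $\epsilon$-domain of the term model to be exactly the intended construction algebra. Parts 1 and 3 are largely bookkeeping once one fixes the correct translations between the two logics, while the completeness argument in Part 2 requires a genuine extension of Andrews' proof to accommodate the new type.
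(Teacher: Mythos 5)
Your proposal follows essentially the same three-part strategy as the paper's proof: Part~1 by matching general models of the two logics on a common frame, Part~2 by adapting Andrews' completeness proof for {\qzero} (5502), and Part~3 by translating {\churcheps}-proofs into {\churchqe}-proofs through replacement of the constants $d^{\synbrack{\cdot}}_{\epsilon}$ by genuine quotations. (Your Part~2 discussion of pinning down the $\epsilon$-domain of the term model is more detailed than the paper, which simply cites Andrews; that elaboration is welcome.) The one point to repair is in Part~3: it is not true that $\tau(\textbf{A}_o) = \textbf{A}_o$ merely because $\textbf{A}_o$ contains no $\sC'$-constants. Viewed as a formula of {\churcheps}, every quotation $\synbrack{\textbf{B}_\beta}$ occurring in $\textbf{A}_o$ is an abbreviation whose unfolding \emph{does} contain constants from $\sC'$, so applying $\tau$ produces $\sE(\textbf{B}_\beta)$ rather than the primitive {\churchqe} quotation $\synbrack{\textbf{B}_\beta}$; the translated proof therefore ends in a formula $\textbf{A}'_o$ that differs from $\textbf{A}_o$ at each quotation of a non-atomic expression. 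The paper closes this gap by establishing $T \vdash \textbf{A}_o = \textbf{A}'_o$ via the Syntactic Law of Quotation (Theorem~\ref{thm:syn-quotation}) and Rule R, and then concluding with the Equality Rules; you need the same final step.
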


\begin{proof}
Let (a) $T = (L_{\cal D}, \Gamma)$ be an eval-free theory of
{\churchqe}, (b) $\textbf{A}_o$ be an eval-free formula of $T$, and
$T' = (L_{{\cal D} \cup {\cal C}'}, \Gamma)$.  (a) implies $T'$ is a
theory of {\churcheps}, and (b) implies $\textbf{A}_o$ is a formula of
$T'$.

\medskip

\noindent 
\textbf{Part 1} {\sglsp} Let (c) $T \vDash \textbf{A}_o$ in
{\churchqe}.  Let $\sM$ be a model for $T'$ in {\churcheps}.  Then (c)
and clauses 2 and 4 of the definition of {\churcheps} implies $\sM
\vDash \textbf{A}_o$.  Therefore $T' \vDash \textbf{A}_o$ in
{\churcheps}.

\medskip

\noindent 
\textbf{Part 2} {\sglsp} Let $T' \vDash \textbf{A}_o$ in {\churcheps}.
Then $T' \vdash \textbf{A}_o$ in {\churcheps} by a proof that is
essentially the same as the proof of the completeness of the proof
system for {\qzero} (see 5502 in~\cite[p.~253]{Andrews02}).

\medskip

\noindent 
\textbf{Part 3} {\sglsp} Let $P$ be a proof of $\textbf{A}_o$ in $T'$
in {\churcheps}.  Define $\textbf{A}'_o$ to be result of replacing
each $\synbrack{\textbf{B}_\beta}$ in $\textbf{A}_o$, where
$\textbf{B}_\beta \not\in \sV \cup \sC$, with $\sE(\textbf{B}_\beta)$,
and define $P'$ to be the result of replacing each $d^{\synbrack{{\bf
      x}_\alpha}}_{\epsilon}$ in $P'$ with
$\synbrack{\textbf{x}_\alpha}$ and each $d^{\synbrack{{\bf
      c}_\alpha}}_{\epsilon}$ in $P'$ with
$\synbrack{\textbf{c}_\alpha}$.  Since the axioms of the proof system
for {\churcheps} are a subset of the axioms of the proof system for
{\churchqe} and both proof systems share the same rule of inference,
(d) $P'$ is a proof of $\textbf{A}'_o$ in $T$ in {\churchqe}.  By the
Syntactic Law of Quotation and Rule R, (e) $T \vdash \textbf{A}_o =
\textbf{A}'_o$ in {\churchqe}.  Therefore, $T \vdash \textbf{A}_o$ in
       {\churchqe} follows from (d) and (e) by the Equality Rules.

\end{proof}

\begin{thm}[Completeness for Eval-Free Formulas]
\bsp
Let $T$ be an eval-free theory of {\churchqe} and $\textbf{A}_o$ be an
eval-free formula of $T$.  If $T \vDash \textbf{A}_o$, then
$\proves{T}{\textbf{A}_o}$.
\esp
\end{thm}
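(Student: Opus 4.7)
The plan is to prove this as an essentially immediate consequence of the three-part Lemma~\ref{lem:completeness}, which has been engineered precisely for this purpose. Let $T = (L_{\cal D},\Gamma)$ be an eval-free theory of {\churchqe} and let $\textbf{A}_o$ be an eval-free formula of $T$. Let $T' = (L_{{\cal D} \cup {\cal C}'},\Gamma)$, the corresponding theory in the auxiliary logic {\churcheps}, which is {\qzero} augmented with the inductive type $\epsilon$.

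I would proceed by chaining the three parts of Lemma~\ref{lem:completeness}. First, assume $T \vDash \textbf{A}_o$ in {\churchqe}. By part~1 of the lemma, $\textbf{A}_o$ is then valid in $T'$ in {\churcheps} (this step uses the fact that every model of $T'$ in {\churcheps} extends naturally to a model of $T$ in {\churchqe} via the abbreviation conventions for the new constants of the form $d^{\synbrack{\bf x_\alpha}}_\epsilon$ and $d^{\synbrack{\bf c_\alpha}}_\epsilon$). Second, by part~2, the validity of $\textbf{A}_o$ in $T'$ in the ``{\qzero}-like'' logic {\churcheps} yields $T' \vdash \textbf{A}_o$ in {\churcheps}, leveraging Andrews' completeness theorem for {\qzero}. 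Third, by part~3, a proof of $\textbf{A}_o$ in $T'$ in {\churcheps} can be translated back into a proof of $\textbf{A}_o$ in $T$ in {\churchqe}, by systematically replacing the new ``quotation constants'' with actual quotations and using the Syntactic Law of Quotation (Theorem~\ref{thm:syn-quotation}) together with Rule~R to bridge any residual mismatches. Concatenating these three implications gives $\proves{T}{\textbf{A}_o}$, which is the desired conclusion.

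The whole argument is thus essentially a one-line application of Lemma~\ref{lem:completeness}, and the real mathematical content — reducing {\churchqe} completeness to {\qzero} completeness by eliminating primitive quotations in favor of distinguished constants — is already packaged inside that lemma. The only conceptual subtlety worth noting is the role of the eval-free hypothesis: it ensures that $\textbf{A}_o$ and the axioms of $\Gamma$ contain no evaluations to worry about, so that the reduction to {\churcheps} (which has neither quotations nor evaluations as primitives) faithfully captures the intended semantics. Since nothing in the argument requires any new calculation beyond invoking the three parts of the lemma in sequence, I do not anticipate a substantive obstacle in writing up the proof.
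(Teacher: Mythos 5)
Your proposal matches the paper's proof exactly: both derive the theorem by chaining parts 1, 2, and 3 of Lemma~\ref{lem:completeness} in sequence, passing from validity in $T$ in {\churchqe} to validity in $T'$ in {\churcheps}, then to provability in $T'$ via Andrews' completeness argument, and back to provability in $T$ via the Syntactic Law of Quotation and Rule R. No gaps; the write-up can proceed as planned.
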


\begin{proof}
Let $T = (L_{\cal D}, \Gamma)$ be an eval-free theory of {\churchqe},
$\textbf{A}_o$ be an eval-free formula of $T$, and $T' = (L_{{\cal D}
  \cup {\cal C}'}, \Gamma)$.  Assume $T \vDash \textbf{A}_o$ in
{\churchqe}.  This implies $T' \vDash \textbf{A}_o$ in {\churcheps},
which implies $\proves{T'}{\textbf{A}_o}$ in {\churcheps}, which
implies $\proves{T}{\textbf{A}_o}$ in {\churchqe} by parts 1, 2, and
3, respectively, of Lemma~\ref{lem:completeness}.
\end{proof}

\begin{cor}
The proof system for {\churchqe} satisfies Requirement R2.
\end{cor}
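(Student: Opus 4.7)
The plan is to observe that this corollary is an immediate restatement of the preceding Completeness for Eval-Free Formulas theorem. Requirement R2, as stated in Subsection~\ref{subsec:requirements}, asks that the proof system be complete with respect to the general models semantics for {\churchqe} for eval-free formulas. The theorem just established has exactly this form: for any eval-free theory $T$ of {\churchqe} and any eval-free formula $\textbf{A}_o$ of $T$, semantic entailment $T \vDash \textbf{A}_o$ implies provability $\proves{T}{\textbf{A}_o}$.

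The proof I would write therefore consists of a single sentence: the Completeness for Eval-Free Formulas theorem is precisely the statement of Requirement R2, so the proof system satisfies R2. There is no additional content to unpack: no quantifier manipulation, no invocation of auxiliary lemmas beyond the theorem itself, and no need to revisit the three-step decomposition via {\churcheps} used in Lemma~\ref{lem:completeness}.

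The main (and only) obstacle is a presentational one, namely checking that the scope of R2 as originally stated matches the hypotheses of the completeness theorem. R2 restricts attention to eval-free formulas, and the theorem restricts attention to eval-free theories as well; since eval-free theories are the only theories for which an eval-free completeness statement is meaningful (a non-eval-free axiom could not be derived by eval-free means), I would briefly note that the matching of scopes is exactly as intended in the requirement. With that observation, the corollary follows immediately from the Completeness for Eval-Free Formulas theorem.
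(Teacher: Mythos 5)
Your proposal is correct and matches the paper, which states this corollary without a separate proof precisely because it is an immediate restatement of the Completeness for Eval-Free Formulas theorem. Your additional remark about the theorem's restriction to eval-free theories matching the intended scope of R2 is a reasonable clarification but adds nothing essential.
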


\subsection{Non-Eval-Free Incompleteness}

We will show that the example mentioned in the paragraph about the
Double Substitution problem in section~\ref{sec:introduction} is valid
in {\churchqe} but not provable in the proof system for {\churchqe}.

\begin{prop}\label{prop:double-sub}
$\vDash (\LambdaApp x_\epsilon \mdot \sembrack{x_\epsilon}_\epsilon) \,
  \synbrack{x_\epsilon} = \synbrack{x_\epsilon}$.
\end{prop}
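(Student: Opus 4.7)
The plan is to prove validity by semantic unwinding: fix a general model $\sM = (\set{D_\alpha \;|\; \alpha \in \sT}, I)$ and an assignment $\phi \in \mname{assign}(\sM)$, and show that both sides of the equation get the value $\synbrack{x_\epsilon}$ under $V^{\cal M}_\phi$. Then conclude with part 1 of Lemma~\ref{lem:val-a}.

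First I would compute the right-hand side: by condition 5 of the definition of a general model, $V^{\cal M}_\phi(\synbrack{x_\epsilon}) = \sE(x_\epsilon) = \synbrack{x_\epsilon}$. Next I would compute the left-hand side. By part 2 of Lemma~\ref{lem:val-a},
\[
V^{\cal M}_\phi((\LambdaApp x_\epsilon \mdot \sembrack{x_\epsilon}_\epsilon) \, \synbrack{x_\epsilon})
= V^{\cal M}_{\psi}(\sembrack{x_\epsilon}_\epsilon),
\]
where $\psi = \phi[x_\epsilon \mapsto V^{\cal M}_\phi(\synbrack{x_\epsilon})] = \phi[x_\epsilon \mapsto \synbrack{x_\epsilon}]$.

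Now the key observation is that $V^{\cal M}_\psi(x_\epsilon) = \psi(x_\epsilon) = \synbrack{x_\epsilon} = \sE(x_\epsilon)$, which is a proper construction. Hence by part 1 (or part 2) of Lemma~\ref{lem:sem-is-expr} applied to the value $\sE(x_\epsilon)$, the hypothesis $V^{\cal M}_\psi(\mname{is-expr}_{\epsilon \tarrow o}^{\epsilon} \, x_\epsilon) = \TRUE$ of condition 6 of the definition of a general model is met. Applying that condition,
\[
V^{\cal M}_{\psi}(\sembrack{x_\epsilon}_\epsilon)
= V^{\cal M}_\psi(\sE^{-1}(V^{\cal M}_\psi(x_\epsilon)))
= V^{\cal M}_\psi(\sE^{-1}(\sE(x_\epsilon)))
= V^{\cal M}_\psi(x_\epsilon)
= \synbrack{x_\epsilon}.
\]

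Combining with the right-hand side computation gives $V^{\cal M}_\phi((\LambdaApp x_\epsilon \mdot \sembrack{x_\epsilon}_\epsilon) \, \synbrack{x_\epsilon}) = V^{\cal M}_\phi(\synbrack{x_\epsilon})$, and since $\phi$ was arbitrary, part 1 of Lemma~\ref{lem:val-a} yields $\sM \vDash (\LambdaApp x_\epsilon \mdot \sembrack{x_\epsilon}_\epsilon) \, \synbrack{x_\epsilon} = \synbrack{x_\epsilon}$, hence validity in {\churchqe}. There is no real obstacle here; the subtle point worth emphasizing is that the double valuation in condition 6 is precisely what makes both sides collapse to the same value $\synbrack{x_\epsilon}$, and it is this same phenomenon that, on the proof-theoretic side, the proof system cannot replicate via beta-reduction (motivating the subsequent non-eval-free incompleteness discussion).
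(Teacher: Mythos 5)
Your proof is correct and follows essentially the same route as the paper's: unwind the left-hand side with part 2 of Lemma~\ref{lem:val-a}, verify the $\mname{is-expr}_{\epsilon \tarrow o}^{\epsilon}$ hypothesis via Lemma~\ref{lem:sem-is-expr} and the semantics of variables, and apply condition 6 of the general-model definition so that the double valuation collapses both sides to $\synbrack{x_\epsilon}$. No gaps; the argument matches the paper's step for step.
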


\begin{proof}
Let $\sM$ be a general model for {\churchqe} and $\phi \in
\mname{assign}(\sM)$.  By part 1 of Lemma~\ref{lem:val-a}, we need to
show that \[V^{\cal M}_{\phi}((\LambdaApp x_\epsilon \mdot
\sembrack{x_\epsilon}_\epsilon) \, \synbrack{x_\epsilon}) = V^{\cal
  M}_{\phi}(\synbrack{x_\epsilon})\] to prove the proposition.
\begin{align} \setcounter{equation}{0}
&
V^{\cal M}_{\phi}((\LambdaApp x_\epsilon \mdot \sembrack{x_\epsilon}_\epsilon) \,
\synbrack{x_\epsilon})\\
&
= V^{\cal M}_{\phi[x_\epsilon \mapsto V^{\cal M}_{\phi}(\synbrack{x_\epsilon})]}
(\sembrack{x_\epsilon}_\epsilon)\\
&
= V^{\cal M}_{\phi[x_\epsilon \mapsto V^{\cal M}_{\phi}(\synbrack{x_\epsilon})]}
(\sE^{-1}(V^{\cal M}_{\phi[x_\epsilon \mapsto V^{\cal M}_{\phi}(\synbrack{x_\epsilon})]}
(x_\epsilon)))\\
&
= V^{\cal M}_{\phi[x_\epsilon \mapsto V^{\cal M}_{\phi}(\synbrack{x_\epsilon})]}
(\sE^{-1}(V^{\cal M}_{\phi}(\synbrack{x_\epsilon}))))\\
&
= V^{\cal M}_{\phi[x_\epsilon \mapsto V^{\cal M}_{\phi}(\synbrack{x_\epsilon})]}
(\sE^{-1}(\sE(x_\epsilon)))\\
&
= V^{\cal M}_{\phi[x_\epsilon \mapsto V^{\cal M}_{\phi}(\synbrack{x_\epsilon})]}
(x_\epsilon)\\
&
= V^{\cal M}_{\phi}(\synbrack{x_\epsilon})
\end{align}
(2) by part 2 of Lemma~\ref{lem:val-b}; (3) is by condition 6 of the
definition of a general model since \[V^{\cal M}_{\phi[x_\epsilon
    \mapsto V^{\cal M}_{\phi}(\synbrack{x_\epsilon})]}
(\mname{is-expr}_{\epsilon \tarrow o}^{\epsilon}(x_\epsilon)) =
V^{\cal M}_{\phi[x_\epsilon \mapsto V^{\cal
      M}_{\phi}(\synbrack{x_\epsilon})]}(\mname{is-expr}_{\epsilon
  \tarrow o}^{\epsilon}(\synbrack{x_\epsilon})) = \TRUE\] by the
semantics of variables and part 2 of Lemma~\ref{lem:sem-is-expr}; (4)
and (7) are by the semantics of variables; (5) is by condition 5 of
the definition of a general model; and (6) is immediate.
\end{proof}

\begin{prop} $\not\vdash (\LambdaApp x_\epsilon \mdot
  \sembrack{x_\epsilon}_\epsilon) \, \synbrack{x_\epsilon} =
  \synbrack{x_\epsilon}$.
\end{prop}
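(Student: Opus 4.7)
The plan is to exhibit a ``non-standard'' semantic structure $\sM^*$---weaker than a general model---in which every axiom of the proof system is valid and Rule R preserves validity, yet the target equation fails. Running the soundness argument of Section~\ref{sec:soundness} with $\sM^*$ in place of a general model then shows that the equation cannot be a theorem. Since Proposition~\ref{prop:double-sub} shows that the equation is valid in every general model, no semantic counterexample can live inside the class of general models; we must genuinely loosen the semantics.

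The essential modification lies in condition~6 of the general-model definition---the double-valuation clause for $\sembrack{\textbf{A}_\epsilon}_\beta$. Concretely, I would start with a standard model $\sM$, then define $V^{\cal M^*}$ to coincide with $V^{\cal M}$ on all non-evaluation expressions, while interpreting evaluations by a \emph{single} valuation rather than two. For evaluations whose argument is syntactically a quotation, the value is $V^{\cal M^*}_\phi(\sembrack{\synbrack{\textbf{B}_\beta}}_\beta) = V^{\cal M^*}_\phi(\textbf{B}_\beta)$ (which automatically validates B10.1--B10.5 via the Law of Quotation); for all other evaluations, the value is $V^{\cal M^*}_{\phi_0}(\sE^{-1}(V^{\cal M^*}_\phi(\textbf{A}_\epsilon)))$, where $\phi_0$ is a distinguished assignment chosen so that $\phi_0(x_\epsilon) \neq \synbrack{x_\epsilon}$ (possible provided $D_\epsilon$ has more than one element, which it does because $\synbrack{x_\epsilon}$ and $\synbrack{y_\epsilon}$ are distinct by Axiom~B4.6).

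The main obstacle is verifying that every axiom of the proof system remains valid in $\sM^*$. Axioms A1--A4 and B1--B9 are untouched because they involve only non-evaluation constructs or quotations. Axiom B11.1 goes through because substituting $\textbf{x}_\alpha$ for itself leaves the assignment unchanged. The delicate axioms are B10.3--B10.4 (compositionality for evaluations) and B11.2 (conditional beta-reduction). The key observation is that the hypotheses of these axioms always ensure that the relevant inner valuations end up \emph{semantically equivalent to quotations}, so the choice of $\phi_0$ versus $\phi$ in the "other" clause is forced to agree with the quotation clause up to the syntactic equivalences guaranteed by the hypotheses. I expect this requires a careful case analysis driven by the is-expr and is-free-in side conditions, and this is where the proof will spend most of its effort.

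Once $\sM^*$ is validated, the refutation of the target is a three-line calculation:
\begin{align*}
V^{\cal M^*}_\phi((\LambdaApp x_\epsilon \mdot \sembrack{x_\epsilon}_\epsilon) \, \synbrack{x_\epsilon})
&= V^{\cal M^*}_{\phi[x_\epsilon \mapsto \synbrack{x_\epsilon}]}(\sembrack{x_\epsilon}_\epsilon) \\
&= V^{\cal M^*}_{\phi_0}(\sE^{-1}(\synbrack{x_\epsilon})) = \phi_0(x_\epsilon) \\
&\neq \synbrack{x_\epsilon} = V^{\cal M^*}_\phi(\synbrack{x_\epsilon}),
\end{align*}
since inside the outer lambda the argument $x_\epsilon$ is no longer a syntactic quotation. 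Combining this with the modified soundness lemma gives $\not\vdash (\LambdaApp x_\epsilon \mdot \sembrack{x_\epsilon}_\epsilon) \, \synbrack{x_\epsilon} = \synbrack{x_\epsilon}$, as required.
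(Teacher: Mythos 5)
Your route is genuinely different from the paper's. The paper gives a proof-theoretic argument: the only axiom that could beta-reduce $(\LambdaApp x_\epsilon \mdot \sembrack{x_\epsilon}_\epsilon)\,\synbrack{x_\epsilon}$ is B11.2, and its side condition $\Neg\mname{is-free-in}_{\epsilon \tarrow \epsilon \tarrow o}\,\synbrack{x_\epsilon}\,((\LambdaApp x_\epsilon \mdot x_\epsilon)\,\synbrack{x_\epsilon})$ is refutable (via Axiom A4.2 and Lemma~\ref{lem:syn-is-free-in}), so by the Consistency Theorem that axiom instance is unavailable. A countermodel argument of the kind you propose would be a welcome and arguably more rigorous alternative, and you are right that it must leave the class of general models because of Proposition~\ref{prop:double-sub}. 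However, the structure $\sM^*$ you describe does not validate the proof system, so the soundness argument you want to rerun collapses.

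The fatal flaw is the case split on whether the first argument of an evaluation is \emph{syntactically} a quotation. Rule R may rewrite inside the first argument of an evaluation (only the second argument is protected), and Axiom B8.1 proves $\synbrack{f_{\iota \tarrow \iota}\,y_\iota} = \mname{app}_{\epsilon \tarrow \epsilon \tarrow \epsilon}\,\synbrack{f_{\iota \tarrow \iota}}\,\synbrack{y_\iota}$; consequently $\sembrack{\mname{app}_{\epsilon \tarrow \epsilon \tarrow \epsilon}\,\synbrack{f_{\iota \tarrow \iota}}\,\synbrack{y_\iota}}_\iota = f_{\iota \tarrow \iota}\,y_\iota$ is a theorem (from the Syntactic Law of Disquotation by Rule R). In your $\sM^*$ the left side of this theorem falls under the ``other'' clause (an application of $\mname{app}_{\epsilon \tarrow \epsilon \tarrow \epsilon}$ is never syntactically a quotation) and evaluates to $\varphi_0(f_{\iota \tarrow \iota})(\varphi_0(y_\iota))$, while the right side evaluates to $\varphi(f_{\iota \tarrow \iota})(\varphi(y_\iota))$; choosing the domains and $\varphi$ so that these differ shows $\sM^*$ falsifies a theorem, i.e.\ either an axiom fails or Rule R fails to preserve validity. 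The same example refutes your stated ``key observation'': in Axiom B10.3 with $\textbf{A}_\epsilon = \synbrack{f_{\iota \tarrow \iota}}$ and $\textbf{B}_\epsilon = \synbrack{y_\iota}$ the hypotheses are satisfied --- they only constrain the \emph{values} of the arguments to be proper constructions of the right types --- yet the left side uses your ``other'' clause and the right side your quotation clause, so nothing reconciles $\varphi_0$ with $\varphi$. Making both clauses use $\varphi$ restores condition 6 of the general-model definition and with it Proposition~\ref{prop:double-sub}; making both use $\varphi_0$ falsifies Axiom B10.1. A workable countermodel must deviate from condition 6 in a way that is invariant under provable equality of the evaluation's first argument, and your construction is not; as it stands the proposal has a genuine gap, not merely an unverified case analysis.
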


\begin{proof}
It is necessary to use Axiom B11.2 in order to prove $(\LambdaApp
x_\epsilon \mdot \sembrack{x_\epsilon}_\epsilon) \,
\synbrack{x_\epsilon} = \synbrack{x_\epsilon}$, but Axiom B11.2
requires that
\[\proves{}{\Neg\mname{is-free-in}_{\epsilon \tarrow \epsilon \tarrow o} \,
\synbrack{x_\epsilon} \, ((\LambdaApp x_\epsilon \mdot x_\epsilon) \,
\synbrack{x_\epsilon}}).\] However, $\proves{}{(\LambdaApp x_\epsilon
  \mdot x_\epsilon) \, \synbrack{x_\epsilon} = \synbrack{x_\epsilon}}$
holds by Axiom A4.2 and\\ $\proves{}{\mname{is-free-in}_{\epsilon
    \tarrow \epsilon \tarrow o} \, \synbrack{x_\epsilon} \,
  \synbrack{x_\epsilon}}$ holds by Lemma~\ref{lem:syn-is-free-in},
which implies \[\proves{}{\mname{is-free-in}_{\epsilon \tarrow
    \epsilon \tarrow o} \, \synbrack{x_\epsilon} \, ((\LambdaApp
  x_\epsilon \mdot x_\epsilon) \, \synbrack{x_\epsilon}})\] by the
Equality Rules.  Therefore, Axiom B11.2 is not applicable
to\\ $(\LambdaApp x_\epsilon \mdot \sembrack{x_\epsilon}_\epsilon) \,
\synbrack{x_\epsilon}$ by the Consistency Theorem, and thus
$(\LambdaApp x_\epsilon \mdot \sembrack{x_\epsilon}_\epsilon) \,
\synbrack{x_\epsilon} = \synbrack{x_\epsilon}$ cannot be proved in the
proof system for {\churchqe}.
\end{proof}

\begin{thm}[Incompleteness Theorem]
The proof system for {\churchqe} is incomplete.
\end{thm}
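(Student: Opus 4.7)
The plan is to derive the Incompleteness Theorem as an immediate corollary of the two propositions immediately preceding it, which together exhibit a concrete formula that separates validity from provability in the proof system for {\churchqe}. Since completeness of a proof system means that every valid formula of {\churchqe} (in the empty theory) is provable, the existence of a single valid but unprovable formula suffices.

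The witness formula is $(\LambdaApp x_\epsilon \mdot \sembrack{x_\epsilon}_\epsilon) \, \synbrack{x_\epsilon} = \synbrack{x_\epsilon}$, the very example that motivates the Double Substitution Problem in Section~\ref{sec:introduction}. By the Proposition immediately before the theorem (Proposition~\ref{prop:double-sub}), this formula is valid in {\churchqe}, i.e., ${} \vDash (\LambdaApp x_\epsilon \mdot \sembrack{x_\epsilon}_\epsilon) \, \synbrack{x_\epsilon} = \synbrack{x_\epsilon}$. By the Proposition immediately above the theorem, this same formula is not provable, i.e., $\not\vdash (\LambdaApp x_\epsilon \mdot \sembrack{x_\epsilon}_\epsilon) \, \synbrack{x_\epsilon} = \synbrack{x_\epsilon}$. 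Thus, taking $T$ to be the theory of the logic $T_{\rm logic}$, we have a formula $\textbf{A}_o$ of $T_{\rm logic}$ with $T_{\rm logic} \vDash \textbf{A}_o$ but not $\proves{T_{\rm logic}}{\textbf{A}_o}$. This directly contradicts the completeness statement $T \vDash \textbf{A}_o \Rightarrow \proves{T}{\textbf{A}_o}$, so the proof system is incomplete.

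There is essentially no obstacle here: all of the work has been packaged into the two preceding propositions. The only delicate point is to make sure that the formula in question really is a formula of the empty theory $T_{\rm logic}$ (it is, since it contains only the logical constants $=$, the syntax constructors, and the free variable $x_\epsilon$) and that the notion of incompleteness being asserted matches the symmetric notion of completeness defined in Section~\ref{sec:completeness} (namely, that $T \vDash \textbf{A}_o$ implies $\proves{T}{\textbf{A}_o}$). Both checks are immediate, so the proof will fit in a few lines.
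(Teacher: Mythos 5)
Your proposal is correct and is essentially identical to the paper's own proof, which likewise cites the two preceding propositions (validity and unprovability of $(\LambdaApp x_\epsilon \mdot \sembrack{x_\epsilon}_\epsilon) \, \synbrack{x_\epsilon} = \synbrack{x_\epsilon}$) and concludes immediately. Your additional checks that the witness is a formula of $T_{\rm logic}$ and that the notion of completeness matches are sound but routine.
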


\begin{proof}
This theorem follows directly from the previous two propositions.
\end{proof}

\section{Examples Revisited}\label{sec:revisit}

We prove in this section within the proof system for {\churchqe} the
results that were stated in section~\ref{sec:examples}.  These proofs
show the efficacy of the proof system for {\churchqe} for reasoning
about syntax, instantiating formulas containing evaluations, and proving
schemas and meaning formulas.

\subsection{Reasoning about Syntax}\label{subsec:revisit-syntax}

Let $T = (L_{\cal D},\Gamma)$ be a theory of {\churchqe} such that
$\mname{make-implication}_{\epsilon \tarrow \epsilon \tarrow \epsilon}
\in \sD$ and $\Gamma$ contains the definition
\begin{align*}
&
\mname{make-implication}_{\epsilon \tarrow \epsilon \tarrow \epsilon}\\
&
= \LambdaApp x_\epsilon \mdot \LambdaApp y_\epsilon \mdot
(\mname{app}_{\epsilon \tarrow \epsilon \tarrow \epsilon} \,
(\mname{app}_{\epsilon \tarrow \epsilon \tarrow \epsilon} \,
\synbrack{\Implies_{o \tarrow o \tarrow o}} \, x_\epsilon) \, y_\epsilon).
\end{align*}

\begin{prop}\label{prop:make-impl-a}
$\proves{T}{\mname{make-implication}_{\epsilon \tarrow \epsilon
      \tarrow \epsilon} \, \synbrack{\textbf{A}_o} \,
    \synbrack{\textbf{B}_o} = \synbrack{\textbf{A}_o \Implies
      \textbf{B}_o}}$.
\end{prop}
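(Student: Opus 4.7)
The plan is to unfold the definition of $\mname{make-implication}_{\epsilon \tarrow \epsilon \tarrow \epsilon}$, carry out the two beta-reductions on the resulting lambda application, and then verify that the fully reduced construction coincides with $\synbrack{\textbf{A}_o \Implies \textbf{B}_o}$ after peeling off the quotation on the right-hand side using Axiom B8.1.

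First I would use Rule R with the definitional equation in $\Gamma$ to rewrite the left-hand side as
\[
(\LambdaApp x_\epsilon \mdot \LambdaApp y_\epsilon \mdot
(\mname{app}_{\epsilon \tarrow \epsilon \tarrow \epsilon} \,
(\mname{app}_{\epsilon \tarrow \epsilon \tarrow \epsilon} \,
\synbrack{\Implies_{o \tarrow o \tarrow o}} \, x_\epsilon) \, y_\epsilon)) \, \synbrack{\textbf{A}_o} \, \synbrack{\textbf{B}_o}.
\]
Next I would perform the outer beta-reduction via Beta-Reduction by Substitution. The quotation $\synbrack{\textbf{A}_o}$ is eval-free and $y_\epsilon$ is not free in it (quotations have no free variables), so clause~5 of \mname{SUB} applies and the substitution passes under $\LambdaApp y_\epsilon \mdot {}$, reducing the expression to $\LambdaApp y_\epsilon \mdot (\mname{app} \, (\mname{app} \, \synbrack{\Implies_{o \tarrow o \tarrow o}} \, \synbrack{\textbf{A}_o}) \, y_\epsilon)$ (quotation bodies are untouched by \mname{SUB} per clause~8). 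A second application of Beta-Reduction by Substitution with argument $\synbrack{\textbf{B}_o}$ then yields $\mname{app} \, (\mname{app} \, \synbrack{\Implies_{o \tarrow o \tarrow o}} \, \synbrack{\textbf{A}_o}) \, \synbrack{\textbf{B}_o}$.

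On the right-hand side, I would apply Axiom B8.1 twice (recalling that $\textbf{A}_o \Implies \textbf{B}_o$ is by definition the iterated function application $(\Implies_{o \tarrow o \tarrow o} \, \textbf{A}_o) \, \textbf{B}_o$): first to obtain $\synbrack{\textbf{A}_o \Implies \textbf{B}_o} = \mname{app} \, \synbrack{\Implies_{o \tarrow o \tarrow o} \, \textbf{A}_o} \, \synbrack{\textbf{B}_o}$, and then again on the inner quotation to obtain $\mname{app} \, (\mname{app} \, \synbrack{\Implies_{o \tarrow o \tarrow o}} \, \synbrack{\textbf{A}_o}) \, \synbrack{\textbf{B}_o}$. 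At this point the two sides are syntactically identical, and one concludes using the Equality Rules (transitivity and symmetry of equality).

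This proof is essentially routine bookkeeping; there is no real obstacle. The only subtle point to check carefully is that the side conditions for the beta-reductions are satisfied, i.e.\ that the bound variable $y_\epsilon$ of the inner abstraction does not capture a free variable of $\synbrack{\textbf{A}_o}$. This is immediate because $\synbrack{\textbf{A}_o}$ is a quotation and hence eval-free with no free variables, so clause~5 of \mname{SUB} (equivalently Axiom A4.5) is applicable without needing any appeal to \mname{IS-EFFECTIVE-IN} or to Axiom B13.
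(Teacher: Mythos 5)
Your proof is correct and takes essentially the same route as the paper's: unfold the definitional equation, perform the two beta-reductions via Beta-Reduction by Substitution (the side conditions holding because quotations are eval-free and contain no free variables), and identify the result with the unfolded right-hand side before closing with the Equality Rules. The only cosmetic difference is that you apply Axiom B8.1 directly twice to $\synbrack{\textbf{A}_o \Implies \textbf{B}_o}$, whereas the paper routes through $\sE$ and the Syntactic Law of Quotation --- which is itself just iterated B8.1--3 --- so the two arguments coincide.
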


\begin{proof}
\begin{align} \setcounter{equation}{0}
T \vdash {} 
&\mname{make-implication}_{\epsilon \tarrow \epsilon \tarrow \epsilon} = {}\nonumber\\
&\LambdaApp x_\epsilon \mdot \LambdaApp y_\epsilon \mdot
(\mname{app}_{\epsilon \tarrow \epsilon \tarrow \epsilon} \,
(\mname{app}_{\epsilon \tarrow \epsilon \tarrow \epsilon} \,
\synbrack{\Implies_{o \tarrow o \tarrow o}} \, x_\epsilon) \, y_\epsilon)\\
T \vdash {} 
&(\LambdaApp x_\epsilon \mdot \LambdaApp y_\epsilon \mdot
(\mname{app}_{\epsilon \tarrow \epsilon \tarrow \epsilon} \,
(\mname{app}_{\epsilon \tarrow \epsilon \tarrow \epsilon} \,
\synbrack{\Implies_{o \tarrow o \tarrow o}} \, x_\epsilon) \, y_\epsilon)) \, 
\synbrack{\textbf{A}_o} \, \synbrack{\textbf{B}_o}\nonumber\\
&= \mname{app}_{\epsilon \tarrow \epsilon \tarrow \epsilon} \,
(\mname{app}_{\epsilon \tarrow \epsilon \tarrow \epsilon} \,
\synbrack{\Implies_{o \tarrow o \tarrow o}} \, 
\synbrack{\textbf{A}_o}) \, \synbrack{\textbf{B}_o}\\
T \vdash {} 
&\sE(\textbf{A}_o \Implies \textbf{B}_o) =  
\mname{app}_{\epsilon \tarrow \epsilon \tarrow \epsilon} \,
(\mname{app}_{\epsilon \tarrow \epsilon \tarrow \epsilon} \,
\synbrack{\Implies_{o \tarrow o \tarrow o}} \, 
\synbrack{\textbf{A}_o}) \, \synbrack{\textbf{B}_o}\\
T \vdash {}
& \synbrack{\textbf{A}_o \Implies \textbf{B}_o} = 
\sE(\textbf{A}_o \Implies \textbf{B}_o)\\
T \vdash {} 
&\mname{make-implication}_{\epsilon \tarrow \epsilon \tarrow \epsilon} \, 
\synbrack{\textbf{A}_o} \, \synbrack{\textbf{B}_o} =
\synbrack{\textbf{A}_o \Implies \textbf{B}_o}
\end{align}
(1) follows from $T$ containing the definition of
$\mname{make-implication}_{\epsilon \tarrow \epsilon \tarrow
  \epsilon}$; (2) is by Beta-Reduction by Substitution; (3) is by the
definition of $\sE$; (4) is by the Syntactic Law of Quotation; and (5)
follows from (1), (2), (3), and (4) by the Equality Rules.
\end{proof}

\begin{prop}\label{prop:make-impl-b}
$\proves{T}{\sembrack{\mname{make-implication}_{\epsilon \tarrow
        \epsilon \tarrow \epsilon} \, \synbrack{\textbf{A}_o} \,
      \synbrack{\textbf{B}_o}}_o = \textbf{A}_o \Implies \textbf{B}_o}$.
\end{prop}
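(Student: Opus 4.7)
The plan is to chain Proposition~\ref{prop:make-impl-a} with the Syntactic Law of Disquotation (Theorem~\ref{thm:syn-disquotation}) using the Equality Rules (Lemma~\ref{lem:eq-rules}), which are precisely designed to allow substitution of equals under an evaluation operator.

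First I would invoke Proposition~\ref{prop:make-impl-a} to obtain
\[T \vdash \mname{make-implication}_{\epsilon \tarrow \epsilon \tarrow \epsilon} \, \synbrack{\textbf{A}_o} \, \synbrack{\textbf{B}_o} = \synbrack{\textbf{A}_o \Implies \textbf{B}_o}.\]
Then I would apply part~4 of the Equality Rules (the rule that from $\textbf{A}_\epsilon = \textbf{B}_\epsilon$ one obtains $\sembrack{\textbf{A}_\epsilon}_\alpha = \sembrack{\textbf{B}_\epsilon}_\alpha$) with $\alpha = o$ to conclude
\[T \vdash \sembrack{\mname{make-implication}_{\epsilon \tarrow \epsilon \tarrow \epsilon} \, \synbrack{\textbf{A}_o} \, \synbrack{\textbf{B}_o}}_o = \sembrack{\synbrack{\textbf{A}_o \Implies \textbf{B}_o}}_o.\]

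Next I would invoke the Syntactic Law of Disquotation (Theorem~\ref{thm:syn-disquotation}) instantiated at $\textbf{A}_o \Implies \textbf{B}_o$ to get
\[\vdash \sembrack{\synbrack{\textbf{A}_o \Implies \textbf{B}_o}}_o = \textbf{A}_o \Implies \textbf{B}_o,\]
and finish by transitivity of equality (part~2 of Lemma~\ref{lem:eq-rules}).

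I do not expect any genuine obstacle here, since the result merely combines the ``builds the intended construction'' property (Proposition~\ref{prop:make-impl-a}) with the fact that evaluation undoes quotation. The only point that requires a sliver of care is that part~4 of the Equality Rules is exactly the right instrument: one cannot apply Rule~R inside the second argument of an evaluation, but part~4 of Lemma~\ref{lem:eq-rules} legitimately lifts equality across the evaluation operator, so the chaining step is available without any side condition on free variables or on $\mname{IS-EFFECTIVE-IN}$.
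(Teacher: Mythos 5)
Your proposal is correct and follows essentially the same three-step chain as the paper's own proof: Proposition~\ref{prop:make-impl-a}, then lifting the equality under the evaluation via part~4 of the Equality Rules, then the Syntactic Law of Disquotation and transitivity. (One small remark: the replacement here occurs in the \emph{first} argument of the evaluation, which Rule~R does permit --- the ``second argument'' restriction concerns only the type-establishing subscript --- but part~4 of Lemma~\ref{lem:eq-rules} is nonetheless the clean instrument and is what the paper uses.)
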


\begin{proof}
\begin{align} \setcounter{equation}{0}
&
T \vdash \mname{make-implication}_{\epsilon \tarrow \epsilon \tarrow \epsilon} \, 
\synbrack{\textbf{A}_o} \, \synbrack{\textbf{B}_o} = 
\synbrack{\textbf{A}_o \Implies \textbf{B}_o}\\
&
T \vdash \sembrack{
\mname{make-implication}_{\epsilon \tarrow \epsilon \tarrow \epsilon} \, 
\synbrack{\textbf{A}_o} \, \synbrack{\textbf{B}_o}}_o = 
\sembrack{\synbrack{\textbf{A}_o \Implies \textbf{B}_o}}_o\\
&
T \vdash \sembrack{
\mname{make-implication}_{\epsilon \tarrow \epsilon \tarrow \epsilon} \, 
\synbrack{\textbf{A}_o} \, \synbrack{\textbf{B}_o}}_o = 
\textbf{A}_o \Implies \textbf{B}_o
\end{align}
(1) is Proposition~\ref{prop:make-impl-a}; (2) follows from (1) by the
Equality Rules; and (3) follows from (2) and the Syntactic Law of
Disquotation by the Equality Rules.

\end{proof}

\subsection{Schemas}\label{subsec:revisit-schemas}

\begin{thm}[Law of Excluded Middle]\label{thm:lem}
\[\proves{}{\ForallApp x_\epsilon \mdot \mname{is-expr}_{\epsilon
      \tarrow o}^{o} \, x_\epsilon \Implies (\sembrack{x_\epsilon}_o
    \Or \Neg \sembrack{x_\epsilon}_o)}.\]
\end{thm}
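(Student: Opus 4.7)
The plan is to observe that $\sembrack{x_\epsilon}_o$ is a formula (i.e.\ an expression of type $o$), so the desired disjunction is just a propositional tautology, and then to universally generalize.

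First I would apply the Tautology Theorem to the tautology $p \Or \Neg p$ with $p$ instantiated by $\sembrack{x_\epsilon}_o$, yielding
\[\vdash \sembrack{x_\epsilon}_o \Or \Neg \sembrack{x_\epsilon}_o.\]
Next I would weaken by adding the hypothesis $\mname{is-expr}_{\epsilon \tarrow o}^{o} \, x_\epsilon$ in front, again using the Tautology Theorem (together with Modus Ponens) to get
\[\vdash \mname{is-expr}_{\epsilon \tarrow o}^{o} \, x_\epsilon \Implies (\sembrack{x_\epsilon}_o \Or \Neg \sembrack{x_\epsilon}_o).\]
Finally, since $\sH = \emptyset$, the side condition of Universal Generalization is vacuous, so I can generalize over $x_\epsilon$ to conclude
\[\vdash \ForallApp x_\epsilon \mdot \mname{is-expr}_{\epsilon \tarrow o}^{o} \, x_\epsilon \Implies (\sembrack{x_\epsilon}_o \Or \Neg \sembrack{x_\epsilon}_o).\]

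There is no real obstacle here: the Variable and Double Substitution Problems do not arise because we are proving, not instantiating, this schema, and the typing machinery of {\churchqe} guarantees that $\sembrack{x_\epsilon}_o$ denotes a truth value regardless of what $x_\epsilon$ denotes (when $x_\epsilon$ is not a construction of the appropriate type, the evaluation yields the error value $d_o$, but $d_o \in \set{\TRUE,\FALSE}$ by definition of $D_o$, so the tautology still holds). The hypothesis $\mname{is-expr}_{\epsilon \tarrow o}^{o} \, x_\epsilon$ is therefore not actually needed for the internal tautological reasoning, but it appears in the statement because instantiations of the schema (as in Proposition~\ref{prop:lem-instance}) will need it to discharge the antecedent when applying B10-series axioms to reduce $\sembrack{\synbrack{\textbf{A}_o}}_o$ to $\textbf{A}_o$.
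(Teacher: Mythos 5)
Your proof is correct and follows essentially the same route as the paper: the paper simply observes that the entire implication $\mname{is-expr}_{\epsilon \tarrow o}^{o} \, x_\epsilon \Implies (\sembrack{x_\epsilon}_o \Or \Neg \sembrack{x_\epsilon}_o)$ is already a substitution instance of a tautology (so your two-step detour through $p \Or \Neg p$ plus weakening collapses into one application of the Tautology Theorem), and then applies Universal Generalization exactly as you do. Your closing remarks about the vacuous side condition and the role of the antecedent are accurate but not needed for the derivation itself.
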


\begin{proof}
\begin{align} \setcounter{equation}{0}
&
\vdash \mname{is-expr}_{\epsilon \tarrow o}^{o} \, x_\epsilon \Implies 
(\sembrack{x_\epsilon}_o \Or \Neg \sembrack{x_\epsilon}_o)\\
&
\vdash \ForallApp x_\epsilon \mdot 
\mname{is-expr}_{\epsilon  \tarrow o}^{o} \, x_\epsilon \Implies 
(\sembrack{x_\epsilon}_o \Or \Neg \sembrack{x_\epsilon}_o)
\end{align}
(1) is by the Tautology Theorem and (2) follows from (1) by Universal
Generalization.
\end{proof}

\begin{prop}\label{prop:lem-instance}
If $\textbf{A}_o$ is eval-free and $x_\epsilon$ is not free in
$\textbf{A}_o$, then $\textbf{A}_o \Or \Neg\textbf{A}_o$ can be
derived from the Law of Excluded Middle within the proof system for
{\churchqe}.
\end{prop}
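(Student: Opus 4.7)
The plan is to instantiate the universally quantified Law of Excluded Middle (Theorem~\ref{thm:lem}) with $\synbrack{\textbf{A}_o}$ for $x_\epsilon$, then unpack the substitution and discharge the \mname{is-expr} hypothesis. First I would apply Universal Instantiation to Theorem~\ref{thm:lem} with $\synbrack{\textbf{A}_o}$, obtaining a theorem of the form $\mname{SUB}(\synbrack{\textbf{A}_o}, x_\epsilon, \mname{is-expr}_{\epsilon \tarrow o}^{o} \, x_\epsilon \Implies (\sembrack{x_\epsilon}_o \Or \Neg \sembrack{x_\epsilon}_o))$.

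Next I would compute this substitution using the recursive clauses of \mname{SUB}. Since $\Implies$, $\Or$, and $\Neg$ are defined in terms of function application and constants, clauses 3 and 4 distribute $\mname{SUB}$ through them. The antecedent $\mname{is-expr}_{\epsilon \tarrow o}^{o} \, x_\epsilon$ is eval-free, so the substitution produces $\mname{is-expr}_{\epsilon \tarrow o}^{o} \, \synbrack{\textbf{A}_o}$ in the ordinary way. Each occurrence of $\sembrack{x_\epsilon}_o$ is handled by clause 10 of \mname{SUB} (since $\synbrack{\textbf{A}_o}$ is not the variable $x_\epsilon$), yielding $(\LambdaApp x_\epsilon \mdot \sembrack{x_\epsilon}_o) \, \synbrack{\textbf{A}_o}$. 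So the instantiation step gives
\[
\vdash \mname{is-expr}_{\epsilon \tarrow o}^{o} \, \synbrack{\textbf{A}_o} \Implies \bigl((\LambdaApp x_\epsilon \mdot \sembrack{x_\epsilon}_o) \, \synbrack{\textbf{A}_o} \Or \Neg((\LambdaApp x_\epsilon \mdot \sembrack{x_\epsilon}_o) \, \synbrack{\textbf{A}_o})\bigr).
\]

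The key reduction is then to apply Lemma~\ref{lem:axiomB11.2-app}(1), whose hypotheses ($\textbf{A}_o$ is eval-free and $x_\epsilon$ is not free in $\textbf{A}_o$) are exactly what we have assumed, to obtain $\vdash (\LambdaApp x_\epsilon \mdot \sembrack{x_\epsilon}_o) \, \synbrack{\textbf{A}_o} = \textbf{A}_o$. I would then use Rule R (via the Equality Rules) to replace both occurrences of $(\LambdaApp x_\epsilon \mdot \sembrack{x_\epsilon}_o) \, \synbrack{\textbf{A}_o}$ with $\textbf{A}_o$; these occurrences are at top level in the implication, neither inside a quotation nor as the first argument of a further $\Lambda$ nor as the second argument of an evaluation, so the syntactic side conditions on Rule~R are satisfied. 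Finally, Lemma~\ref{lem:syn-is-expr} provides $\vdash \mname{is-expr}_{\epsilon \tarrow o}^{o} \, \synbrack{\textbf{A}_o}$, and Modus Ponens yields $\vdash \textbf{A}_o \Or \Neg \textbf{A}_o$, as required.

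The only subtle point — and the main ``obstacle'' — is verifying that clause 10 of \mname{SUB} really does produce the lambda application $(\LambdaApp x_\epsilon \mdot \sembrack{x_\epsilon}_o) \, \synbrack{\textbf{A}_o}$ rather than the raw substitution $\sembrack{\synbrack{\textbf{A}_o}}_o$ (which would also work via the Syntactic Law of Disquotation, but is not what \mname{SUB} delivers), and that this lambda application is reducible back to $\textbf{A}_o$ by Lemma~\ref{lem:axiomB11.2-app}(1). The two hypotheses on $\textbf{A}_o$ in the proposition are precisely what make this reduction licit, so the proof is essentially a careful unrolling of definitions, supported entirely by results already proved.
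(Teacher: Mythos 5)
Your proposal is correct and follows essentially the same route as the paper: Universal Instantiation of Theorem~\ref{thm:lem} with $\synbrack{\textbf{A}_o}$ (where clause 10 of \mname{SUB} yields the lambda applications $(\LambdaApp x_\epsilon \mdot \sembrack{x_\epsilon}_o) \, \synbrack{\textbf{A}_o}$), discharge of the antecedent via Lemma~\ref{lem:syn-is-expr} and Modus Ponens, and elimination of the lambda applications via part 1 of Lemma~\ref{lem:axiomB11.2-app} and the Equality Rules. The only difference is the inessential ordering of the last two steps (the paper applies Modus Ponens before the equational replacement), and your explicit attention to why \mname{SUB} produces the lambda application rather than a direct substitution is exactly the subtlety the paper's step (2) silently encodes.
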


\begin{proof}
\begin{align} \setcounter{equation}{0}
& \vdash \ForallApp x_\epsilon \mdot \mname{is-expr}_{\epsilon \tarrow o}^{o} \, 
  x_\epsilon \Implies (\sembrack{x_\epsilon}_o \Or \Neg \sembrack{x_\epsilon}_o)\\
& \vdash \mname{is-expr}_{\epsilon \tarrow o}^{o} \, \synbrack{\textbf{A}_o} \Implies 
  ((\LambdaApp x_\epsilon \mdot \sembrack{x_\epsilon}_o) \, 
  \synbrack{\textbf{A}_o} \Or 
  \Neg((\LambdaApp x_\epsilon \mdot \sembrack{x_\epsilon}_o) \, 
  \synbrack{\textbf{A}_o}))\\
& \vdash (\LambdaApp x_\epsilon \mdot \sembrack{x_\epsilon}_o) \, 
  \synbrack{\textbf{A}_o} 
  \Or \Neg((\LambdaApp x_\epsilon \mdot \sembrack{x_\epsilon}_o) \, 
  \synbrack{\textbf{A}_o})\\
& \vdash (\LambdaApp x_\epsilon \mdot \sembrack{x_\epsilon}_o) \, \synbrack{\textbf{A}_o} =
  \textbf{A}_o\\
& \vdash \textbf{A}_o \Or \Neg\textbf{A}_o
\end{align}
(1) is the Law of Excluded Middle; (2) follows from (1) by Universal
Instantiation; (3) follows from (2) by Lemma~\ref{lem:syn-is-expr} and
Modus Ponens; (4) follows from part 1 of Lemma~\ref{lem:axiomB11.2-app}
and the hypothesis; and (5) follows from (3) and (4) by the Equality
Rules.
\end{proof}

\subsection{Meaning Formulas}\label{subsec:revisit-meaning}

We show now that the meaning formula for $\mname{poly-diff}_{\epsilon
  \tarrow \epsilon \tarrow \epsilon}$ and applications of it are
provable in the proof system for {\churchqe}.  These results
illustrate the power of {\churchqe}'s facility for reasoning about
the interplay of syntax and semantics.

\begin{thm}[Derivatives of Polynomial Functions]\label{lem:derivatives}
\be

  \item[]

  \item $\proves{T_{\mathbb R}}{\mname{is-diff}_{(\iota \tarrow \iota)
      \tarrow o} \, (\LambdaApp \textbf{x}_\iota \mdot
    \textbf{x}_\iota) \And \mname{deriv}_{(\iota \tarrow \iota)
      \tarrow (\iota \tarrow \iota)} (\LambdaApp \textbf{x}_\iota
    \mdot \textbf{x}_\iota) = \LambdaApp \textbf{x}_\iota \mdot
    1_\iota}$.

  \item $\proves{T_{\mathbb R}}{\mname{is-diff}_{(\iota \tarrow \iota)
      \tarrow o} \, (\LambdaApp \textbf{x}_\iota \mdot
    \textbf{y}_\iota) \And \mname{deriv}_{(\iota \tarrow \iota)
      \tarrow (\iota \tarrow \iota)} (\LambdaApp \textbf{x}_\iota
    \mdot \textbf{y}_\iota) = \LambdaApp \textbf{x}_\iota \mdot
    0_\iota}$\\ where $\textbf{x}_\iota$ and $\textbf{y}_\iota$ are
    distinct.

  \item $\proves{T_{\mathbb R}}{\mname{is-diff}_{(\iota \tarrow \iota)
      \tarrow o} \, (\LambdaApp \textbf{x}_\iota \mdot
    \textbf{c}_\iota) \And \mname{deriv}_{(\iota \tarrow \iota)
      \tarrow (\iota \tarrow \iota)} (\LambdaApp \textbf{x}_\iota
    \mdot \textbf{c}_\iota) = \LambdaApp \textbf{x}_\iota \mdot
    0_\iota}$.

  \item $\proves{T_{\mathbb R}}{\mname{is-diff}_{(\iota \tarrow \iota) \tarrow o} \,
    \textbf{F}_{\iota \tarrow \iota} \Implies {}\\
    \hspace*{7ex} (\mname{is-diff}_{(\iota \tarrow \iota) \tarrow o} \,
    (-_{(\iota \tarrow \iota) \tarrow (\iota \tarrow \iota)} \,
    \textbf{F}_{\iota \tarrow \iota}) \And {}\\
    \hspace*{8ex} \mname{deriv}_{(\iota \tarrow \iota) \tarrow (\iota
      \tarrow \iota)} \, (-_{(\iota \tarrow \iota) \tarrow (\iota \tarrow
      \iota)} \, \textbf{F}_{\iota \tarrow \iota}) = {}\\
    \hspace*{8ex}{-_{(\iota \tarrow
      \iota) \tarrow (\iota \tarrow \iota)} \, (\mname{deriv}_{(\iota
      \tarrow \iota) \tarrow (\iota \tarrow \iota)} \,
    \textbf{F}_{\iota \tarrow \iota})})}$.

  \item $\proves{T_{\mathbb R}}{\mname{is-diff}_{(\iota \tarrow \iota) \tarrow o} \,
    \textbf{F}_{\iota \tarrow \iota} \And \mname{is-diff}_{(\iota
      \tarrow \iota) \tarrow o} \, \textbf{G}_{\iota \tarrow
      \iota}\Implies {}\\ 
    \hspace*{7ex}(\mname{is-diff}_{(\iota \tarrow \iota) \tarrow o} \,
    (\textbf{F}_{\iota \tarrow \iota} +_{(\iota \tarrow \iota) \tarrow
      (\iota \tarrow \iota) \tarrow (\iota \tarrow \iota)}
    \textbf{G}_{\iota \tarrow \iota}) \And {}\\
    \hspace*{8ex}\mname{deriv}_{(\iota \tarrow \iota) \tarrow (\iota
      \tarrow \iota)} \, (\textbf{F}_{\iota \tarrow \iota} +_{(\iota \tarrow \iota) \tarrow
      (\iota \tarrow \iota) \tarrow (\iota \tarrow \iota)}
    \textbf{G}_{\iota \tarrow \iota}) = {}\\
    \hspace*{8ex}(\mname{deriv}_{(\iota \tarrow \iota) \tarrow (\iota
      \tarrow \iota)} \, \textbf{F}_{\iota \tarrow \iota})\\
    \hspace*{9ex}+_{(\iota \tarrow \iota) \tarrow (\iota \tarrow
      \iota) \tarrow (\iota \tarrow \iota)}\\ 
    \hspace*{8ex}(\mname{deriv}_{(\iota \tarrow \iota) \tarrow (\iota
      \tarrow \iota)} \, \textbf{G}_{\iota \tarrow \iota}))}$.

  \item $\proves{T_{\mathbb R}}{\mname{is-diff}_{(\iota \tarrow \iota) \tarrow o} \,
    \textbf{F}_{\iota \tarrow \iota} \And \mname{is-diff}_{(\iota
      \tarrow \iota) \tarrow o} \, \textbf{G}_{\iota \tarrow
      \iota}\Implies {}\\ 
    \hspace*{7ex}(\mname{is-diff}_{(\iota \tarrow \iota) \tarrow o} \,
    (\textbf{F}_{\iota \tarrow \iota} *_{(\iota \tarrow \iota) \tarrow
      (\iota \tarrow \iota) \tarrow (\iota \tarrow \iota)}
    \textbf{G}_{\iota \tarrow \iota}) \And {}\\
    \hspace*{8ex}\mname{deriv}_{(\iota \tarrow \iota) \tarrow (\iota
      \tarrow \iota)} \, (\textbf{F}_{\iota \tarrow \iota} *_{(\iota \tarrow \iota) \tarrow
      (\iota \tarrow \iota) \tarrow (\iota \tarrow \iota)}
    \textbf{G}_{\iota \tarrow \iota}) = {}\\
    \hspace*{8ex}((\mname{deriv}_{(\iota \tarrow \iota) \tarrow
      (\iota \tarrow \iota)} \, \textbf{F}_{\iota \tarrow \iota})
    *_{(\iota \tarrow \iota) \tarrow (\iota \tarrow \iota) \tarrow
      (\iota \tarrow \iota)} \textbf{G}_{\iota \tarrow \iota})\\
    \hspace*{9ex}+_{(\iota \tarrow \iota) \tarrow (\iota \tarrow \iota) \tarrow
      (\iota \tarrow \iota)} {}\\
    \hspace*{8ex} (\textbf{F}_{\iota \tarrow \iota} *_{(\iota \tarrow
      \iota) \tarrow (\iota \tarrow \iota) \tarrow (\iota \tarrow
      \iota)} (\mname{deriv}_{(\iota \tarrow \iota) \tarrow (\iota
      \tarrow \iota)} \, \textbf{G}_{\iota \tarrow \iota})))}$.

\ee
\end{thm}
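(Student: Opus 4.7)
The plan is to recognize that these six facts are standard function-level rules of differentiation from elementary real analysis. They have no intrinsic connection to the quotation or evaluation machinery of {\churchqe}; rather, they must be derivable from whatever axioms the set $\Gamma$ of $T_{\mathbb R}$ places on the constants $\mname{is-diff}_{(\iota \tarrow \iota) \tarrow o}$ and $\mname{deriv}_{(\iota \tarrow \iota) \tarrow (\iota \tarrow \iota)}$. Since the paper does not explicitly enumerate these axioms, I would assume that $\Gamma$ contains (or entails) the six familiar rules: the derivative of the identity is the constant function $1_\iota$; the derivative of a constant function is the constant function $0_\iota$; and the negation, sum, and product (Leibniz) rules for differentiable functions stated at the function-composition level.

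For part 1, the assertion is the identity-function rule applied directly. For parts 2 and 3, the essential observation is that $\LambdaApp \textbf{x}_\iota \mdot \textbf{y}_\iota$ (with $\textbf{x}_\iota$ and $\textbf{y}_\iota$ distinct) and $\LambdaApp \textbf{x}_\iota \mdot \textbf{c}_\iota$ are constant functions in $\textbf{x}_\iota$. I would justify this using Axiom A4.1 (beta-reduction on a variable distinct from the bound variable) and Axiom A4.3 (beta-reduction on a constant), combined with Extensionality (Axiom A3), to show equality with a canonical constant function. Once each is exhibited as such, the assumed constant-function rule yields the derivative $\LambdaApp \textbf{x}_\iota \mdot 0_\iota$.

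For parts 4, 5, and 6, each conclusion is a direct implication whose hypothesis asserts differentiability of the component function(s) and whose consequent is obtained from the corresponding assumed axiom (negation, sum, and product rule, respectively) by Modus Ponens and the Equality Rules, which align the function-level syntactic forms with the conclusion of the rule.

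The main obstacle is not logical but notational: $\Gamma$ is not made explicit in the paper, so the proof is schematic in the axiomatization of $\mname{is-diff}$ and $\mname{deriv}$. The substantive content of the lemma is merely that $T_{\mathbb R}$ is an adequate formalization of real analysis in the sense that it supports the standard function-level rules of differentiation. None of the reasoning requires the quotation-and-evaluation machinery of {\churchqe}, so once the axioms of $T_{\mathbb R}$ are granted the proofs reduce to classical proof-theoretic manipulations within the proof system inherited from {\qzero}.
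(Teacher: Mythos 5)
Your proposal is correct and matches the paper's treatment: the paper's entire proof is the remark that the theorem ``is proved in the standard way from the definition of $\mname{deriv}_{(\iota \tarrow \iota) \tarrow (\iota \tarrow \iota)}$'' with a citation to Spivak, i.e., it defers to the standard real-analysis development inside $T_{\mathbb R}$ exactly as you do. The only cosmetic difference is that you posit the six differentiation rules as (consequences of) axioms in $\Gamma$, whereas the paper envisions deriving them from a definition of $\mname{deriv}$; both amount to the same acknowledgment that this lemma lives entirely in the unquoted, analysis-level fragment of $T_{\mathbb R}$ and involves none of the quotation/evaluation machinery.
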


\begin{proof}
This theorem is proved in the standard way from the definition of
$\mname{deriv}_{(\iota \tarrow \iota) \tarrow (\iota \tarrow \iota)}$.
See~\cite{Spivak08} for details.
\end{proof} 

\begin{lem}\label{lem:poly}
\be

  \item[]

  \item $\proves{T_{\mathbb R}}{\mname{is-poly}_{\epsilon \tarrow o}
    \, \textbf{B}_\epsilon \Implies \mname{is-expr}_{\epsilon \tarrow
      o}^{\iota} \, \textbf{B}_\epsilon}$.

  \item $\proves{T_{\mathbb R}}{\mname{is-poly}_{\epsilon \tarrow o}
    \, \textbf{B}_\epsilon \Implies (\mname{is-free-in}_{\epsilon
      \tarrow \epsilon \tarrow o} \, \textbf{A}_\epsilon \,
    \textbf{B}_\epsilon} \Implies (\textbf{A}_\epsilon =
    \synbrack{x_\iota} \Or \textbf{A}_\epsilon = \synbrack{y_\iota}))$.

  \item $\proves{T_{\mathbb R}}{\mname{is-poly}_{\epsilon \tarrow o}
    \, \textbf{B}_\epsilon \Implies \mname{is-poly}_{\epsilon \tarrow
      o} \, (\mname{poly-diff}_{\epsilon \tarrow \epsilon \tarrow
      \epsilon} \, \textbf{B}_\epsilon \,
    \synbrack{\textbf{x}_\iota})}$.

  \item $\proves{T_{\mathbb R}}{\mname{is-poly}_{\epsilon \tarrow o}
    \, \textbf{B}_\epsilon \Implies (\LambdaApp u_\epsilon \mdot
    \sembrack{u_\epsilon}_\iota) \, \textbf{B}_\epsilon =
    \sembrack{\textbf{B}_\epsilon}_\iota}$.

  \item $\proves{T_{\mathbb R}}{\mname{is-poly}_{\epsilon \tarrow o}
    \, u_\epsilon \Implies (\LambdaApp z_\epsilon \mdot
    \sembrack{u_\epsilon}_\iota) \, \textbf{B}_\epsilon =
    \sembrack{u_\epsilon}_\iota}$.

  \item $\proves{T_{\mathbb R}}{\mname{is-poly}_{\epsilon \tarrow o}
    \, \textbf{B}_\epsilon \Implies {}\\
    \hspace*{7ex}(\LambdaApp u_\epsilon \mdot
    \sembrack{\mname{poly-diff}_{\epsilon \tarrow \epsilon \tarrow
        \epsilon} \, u_\epsilon \,
      \synbrack{\textbf{x}_\iota}}_\iota) \, \textbf{B}_\epsilon =
    \sembrack{\mname{poly-diff}_{\epsilon \tarrow \epsilon \tarrow
        \epsilon} \, \textbf{B}_\epsilon \,
      \synbrack{\textbf{x}_\iota}}_\iota}$.

  \item $\proves{T_{\mathbb R}}{\mname{is-poly}_{\epsilon \tarrow o}
    \, u_\epsilon \Implies {}\\
    \hspace*{7ex}(\LambdaApp z_\epsilon \mdot
    \sembrack{\mname{poly-diff}_{\epsilon \tarrow \epsilon \tarrow
        \epsilon} \, u_\epsilon \,
      \synbrack{\textbf{x}_\iota}}_\iota) \, \textbf{B}_\epsilon =
    \sembrack{\mname{poly-diff}_{\epsilon \tarrow \epsilon \tarrow
        \epsilon} \, u_\epsilon \,
      \synbrack{\textbf{x}_\iota}}_\iota}$.

\ee
\end{lem}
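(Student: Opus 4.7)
The plan is to prove parts 1, 2, and 3 by induction on the construction structure using Axiom B6, exploiting the fact that $\mname{is-poly}_{\epsilon \tarrow o}$ is defined recursively on proper subexpressions; parts 4 through 7 then follow by straightforward applications of Lemma~\ref{lem:axiomB11.2-app} using parts 1--3 as side conditions. For part 1, I would take the induction predicate to be $\LambdaApp u_\epsilon \mdot (\mname{is-poly}_{\epsilon \tarrow o} \, u_\epsilon \Implies \mname{is-expr}_{\epsilon \tarrow o}^{\iota} \, u_\epsilon)$ and apply B6. Assuming the property holds for all $v_\epsilon \sqsubset_{\epsilon \tarrow \epsilon \tarrow o} u_\epsilon$, I would split $\mname{is-poly}_{\epsilon \tarrow o} \, u_\epsilon$ via its seven-disjunct recursive definition. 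The four base cases ($u_\epsilon$ equal to $\synbrack{x_\iota}$, $\synbrack{y_\iota}$, $\synbrack{0_\iota}$, or $\synbrack{1_\iota}$) follow directly from Axioms B1.2, B2.2, and B3.2. The three inductive cases (negation, addition, multiplication) use Axiom B3.3 twice and B3.2 applied to the relevant constants ($-_{\iota \tarrow \iota}$, $+_{\iota \tarrow \iota \tarrow \iota}$, $*_{\iota \tarrow \iota \tarrow \iota}$); the subcomponents $v_\epsilon$ and $w_\epsilon$ are proper subexpressions by Axioms B5.1--B5.2, so the induction hypothesis applies to show they have type $\iota$.

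Part 2 proceeds by the analogous induction, using Axioms B7.1--B7.8. For $\synbrack{x_\iota}$ (and $\synbrack{y_\iota}$) the only free subexpression is the quotation itself (B7.1, B7.2); for $\synbrack{0_\iota}$ and $\synbrack{1_\iota}$ there are no free occurrences at all (B7.3); and for the $\mname{app}$-constructed cases, B7.4 shows that any free subexpression must occur in a proper subexpression, to which the induction hypothesis applies, yielding $\synbrack{x_\iota}$ or $\synbrack{y_\iota}$ as required. Part 3 is proved by yet another induction on $\textbf{B}_\epsilon$, but the defining equations for $\mname{poly-diff}_{\epsilon \tarrow \epsilon \tarrow \epsilon}$ are used as rewrite rules to compute $\mname{poly-diff}_{\epsilon \tarrow \epsilon \tarrow \epsilon} \, \textbf{B}_\epsilon \, \synbrack{\textbf{x}_\iota}$ case by case; in each case the result is manifestly a polynomial, using closure of $\mname{is-poly}_{\epsilon \tarrow o}$ under negation, addition, and multiplication from its recursive definition.

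For parts 4 through 7, I would invoke Lemma~\ref{lem:axiomB11.2-app} parts 2 and 3. The $\mname{is-expr}$ hypothesis comes from part 1 (and from part 3 combined with part 1 for the $\mname{poly-diff}$ evaluations in parts 6 and 7). The $\Neg\mname{is-free-in}_{\epsilon \tarrow \epsilon \tarrow o} \, \synbrack{u_\epsilon} \, \textbf{B}_\epsilon$ (respectively $\synbrack{z_\epsilon}$) hypothesis comes from part 2 by contradiction: if the quotation were free, part 2 forces it to equal $\synbrack{x_\iota}$ or $\synbrack{y_\iota}$, but then Axiom B4.6 yields a contradiction because $u_\epsilon$ and $z_\epsilon$ are variables of type $\epsilon$, syntactically distinct from the $\iota$-type variables $x_\iota$ and $y_\iota$.

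The main obstacle will be the structural inductions in parts 1--3. Because $\mname{is-poly}_{\epsilon \tarrow o}$ is defined by a self-referential formula rather than by an external recursion, the case split on its seven disjuncts must be performed inside the induction, and in each inductive case the induction hypothesis must be extracted from the appropriate conjunct of the disjunct; doing this formally in the proof system requires a careful combination of propositional reasoning, Universal Instantiation, and the subexpression relations of Axiom group B5. Once parts 1--3 are in hand, however, parts 4--7 reduce to straightforward applications of the already-proved Lemma~\ref{lem:axiomB11.2-app}.
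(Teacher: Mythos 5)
Your plan for parts 1--5 matches the paper's proof: parts 1--3 are obtained by induction via Axiom B6 over the recursive definitions of $\mname{is-poly}_{\epsilon \tarrow o}$ and $\mname{poly-diff}_{\epsilon \tarrow \epsilon \tarrow \epsilon}$, using Axioms B1.1--4, B2.1--4, B3.1--9, B5.1--6, and B7.1--8; and parts 4 and 5 are exactly parts 2 and 3 of Lemma~\ref{lem:axiomB11.2-app}, with the $\Neg\mname{is-free-in}_{\epsilon \tarrow \epsilon \tarrow o}$ side condition discharged from part 2 together with Axiom B4.6, as you describe.

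For parts 6 and 7, however, the tool you cite does not apply as stated. Parts 2 and 3 of Lemma~\ref{lem:axiomB11.2-app} cover only lambda applications of the forms $(\LambdaApp \textbf{x}_\epsilon \mdot \sembrack{\textbf{x}_\epsilon}_\alpha) \, \textbf{B}_\epsilon$ and $(\LambdaApp \textbf{x}_\epsilon \mdot \sembrack{\textbf{y}_\epsilon}_\alpha) \, \textbf{B}_\epsilon$, i.e., where the argument of the evaluation is a bare variable. In parts 6 and 7 the evaluation's argument is the compound expression $\mname{poly-diff}_{\epsilon \tarrow \epsilon \tarrow \epsilon} \, u_\epsilon \, \synbrack{\textbf{x}_\iota}$, so neither part of that lemma can be invoked. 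One must instead rerun the underlying argument directly from Axiom B11.2: first beta-reduce $(\LambdaApp u_\epsilon \mdot \mname{poly-diff}_{\epsilon \tarrow \epsilon \tarrow \epsilon} \, u_\epsilon \, \synbrack{\textbf{x}_\iota}) \, \textbf{B}_\epsilon$ to $\mname{poly-diff}_{\epsilon \tarrow \epsilon \tarrow \epsilon} \, \textbf{B}_\epsilon \, \synbrack{\textbf{x}_\iota}$ by Beta-Reduction by Substitution, then establish $\mname{is-expr}_{\epsilon \tarrow o}^{\iota}$ of that expression (from parts 1 and 3) and $\Neg\mname{is-free-in}_{\epsilon \tarrow \epsilon \tarrow o} \, \synbrack{u_\epsilon}$ of it (from parts 2 and 3, again via Axiom B4.6), and only then apply Axiom B11.2 followed by the Equality Rules and the Deduction Theorem. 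This is exactly how the paper proves part 6. The side conditions you identify are the right ones, so the repair is mechanical, but as written the appeal to Lemma~\ref{lem:axiomB11.2-app} for parts 6 and 7 is a gap.
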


\bsp
\begin{proof}

\medskip

\noindent \textbf{Parts 1--3} {\sglsp} Follow straightforwardly from the definitions of
$\mname{is-poly}_{\epsilon \tarrow o}$ and
$\mname{poly-diff}_{\epsilon \tarrow \epsilon \tarrow \epsilon}$ and
the axioms for $\mname{is-var}_{\epsilon \tarrow o}^{\iota}$,
$\mname{is-con}_{\epsilon \tarrow o}^{\iota}$,
$\mname{is-expr}_{\epsilon \tarrow o}^{\iota}$, $\sqsubset_{\epsilon
  \tarrow \epsilon \tarrow o}$, and $\mname{is-free-in}_{\epsilon
  \tarrow \epsilon \tarrow o}$ (Axioms B1.1--4, B2.1--4, B3.1--9,
B5.1--6, and B7.1--8) by induction using the Induction Principle for
Constructions (Axiom B6).

\medskip

\noindent \textbf{Part 4} {\sglsp} Follows from parts 1 and 2
of the lemma and part 2 of Lemma~\ref{lem:axiomB11.2-app}.

\medskip

\noindent \textbf{Part 5} {\sglsp} Follows from parts 1 and 2 of the
lemma and part 3 of Lemma~\ref{lem:axiomB11.2-app}.

\medskip

\noindent \textbf{Part 6} {\sglsp} Let $\textbf{A}_o$ be
$\mname{is-poly}_{\epsilon \tarrow o} \, \textbf{B}_\epsilon$.
\begin{align} \setcounter{equation}{0}
& T_{\mathbb R} \vdash
  (\LambdaApp u_\epsilon \mdot
  \mname{poly-diff}_{\epsilon \tarrow \epsilon \tarrow \epsilon} \, 
  u_\epsilon \, \synbrack{\textbf{x}_\iota}) \, \textbf{B}_\epsilon =
  \mname{poly-diff}_{\epsilon \tarrow \epsilon \tarrow \epsilon} \, 
  \textbf{B}_\epsilon \, \synbrack{\textbf{x}_\iota}\\
& T_{\mathbb R}, \set{\textbf{A}_o} \vdash
  \mname{is-expr}_{\epsilon \tarrow o}^{\iota} \,
  ((\LambdaApp u_\epsilon \mdot
  \mname{poly-diff}_{\epsilon \tarrow \epsilon \tarrow \epsilon} \, 
  u_\epsilon \, \synbrack{\textbf{x}_\iota}) \, \textbf{B}_\epsilon)\\
& T_{\mathbb R}, \set{\textbf{A}_o} \vdash
  \Neg (\mname{is-free-in}_{\epsilon \tarrow \epsilon \tarrow o} \, 
  \synbrack{u_\epsilon} \,
  ((\LambdaApp u_\epsilon \mdot
  \mname{poly-diff}_{\epsilon \tarrow \epsilon \tarrow \epsilon} \, 
  u_\epsilon \, \synbrack{\textbf{x}_\iota}) \, \textbf{B}_\epsilon))\\
& T_{\mathbb R}, \set{\textbf{A}_o} \vdash
  (\LambdaApp u_\epsilon \mdot
  \sembrack{\mname{poly-diff}_{\epsilon \tarrow \epsilon \tarrow \epsilon} \, 
  u_\epsilon \, \synbrack{\textbf{x}_\iota}}_\iota) \, 
  \textbf{B}_\epsilon = {}\nonumber\\
& \hspace*{5.5ex}\sembrack{(\LambdaApp u_\epsilon \mdot
  \mname{poly-diff}_{\epsilon \tarrow \epsilon \tarrow \epsilon} \, 
  u_\epsilon \, \synbrack{\textbf{x}_\iota}) \, \textbf{B}_\epsilon}_\iota\\
& T_{\mathbb R}, \set{\textbf{A}_o} \vdash
  (\LambdaApp u_\epsilon \mdot
  \sembrack{\mname{poly-diff}_{\epsilon \tarrow \epsilon \tarrow \epsilon} \, 
  u_\epsilon \, \synbrack{\textbf{x}_\iota}}_\iota) \, 
  \textbf{B}_\epsilon = {}\nonumber\\
& \hspace*{5.5ex}
  \sembrack{\mname{poly-diff}_{\epsilon \tarrow \epsilon \tarrow \epsilon} \, 
  \textbf{B}_\epsilon \, \synbrack{\textbf{x}_\iota}}_\iota\\
& T_{\mathbb R} \vdash
  \textbf{A}_o \Implies {}\nonumber\\
& \hspace*{5.5ex}(\LambdaApp u_\epsilon \mdot
  \sembrack{\mname{poly-diff}_{\epsilon \tarrow \epsilon \tarrow \epsilon} \, 
  u_\epsilon \, \synbrack{\textbf{x}_\iota}}_\iota) \, 
  \textbf{B}_\epsilon = 
  \sembrack{\mname{poly-diff}_{\epsilon \tarrow \epsilon \tarrow \epsilon} \, 
  \textbf{B}_\epsilon \, \synbrack{\textbf{x}_\iota}}_\iota
\end{align}
(1) is by Beta-Reduction by Substitution; (2) and (3) follow from (1),
the hypothesis $\textbf{A}_o$, and parts 1--3 of the lemma by the
Equality Rules and propositional logic; (4) follows from (2), (3), and
Axiom B11.2 by Modus Ponens; (5) follows from (1) and (4) by the
Equality Rules; and (6), the lemma to be proved, follows from (5) by
the Deduction Theorem.

\medskip

\noindent \textbf{Part 7} {\sglsp} Similar to the proof of part 6.
\end{proof}
\esp

\begin{thm}[Meaning Formula for $\mname{poly-diff}_{\epsilon \tarrow \epsilon
  \tarrow \epsilon}$]\label{thm:meaning-form}
\begin{align*}
& \proves{T_{\mathbb R}}
  {\ForallApp u_\epsilon \mdot 
  (\mname{is-poly}_{\epsilon \tarrow o} \, u_\epsilon \Implies {} \\
& \hspace*{7ex}
  \mname{deriv}_{(\iota \tarrow \iota) \tarrow (\iota \tarrow \iota)}
  (\LambdaApp \textbf{x}_\iota \mdot \sembrack{u_\epsilon}_\iota) = 
  \LambdaApp \textbf{x}_\iota \mdot
  \sembrack{\mname{poly-diff}_{\epsilon \tarrow \epsilon \tarrow \epsilon} \, 
  u_\epsilon \, \synbrack{\textbf{x}_\iota}}_\iota)}.
\end{align*}
where $\textbf{x}_\iota$ is either $x_\iota$ or $y_\iota$.
\end{thm}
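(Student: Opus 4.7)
The plan is to prove the meaning formula by structural induction on $u_\epsilon$, using the Induction Principle for Constructions (Axiom B6). Because parts 4--6 of Theorem~\ref{lem:derivatives} require differentiability as a hypothesis, I would strengthen the induction predicate so that $\mathbf{P}(u_\epsilon)$ is, under the premise $\mname{is-poly}_{\epsilon \tarrow o}\, u_\epsilon$, the conjunction of $\mname{is-diff}_{(\iota \tarrow \iota) \tarrow o}(\LambdaApp \textbf{x}_\iota \mdot \sembrack{u_\epsilon}_\iota)$ with the desired equation. Proving the hypothesis of B6 then reduces, after unfolding the recursive definition of $\mname{is-poly}$ to disjoin on the seven possible shapes of $u_\epsilon$, to seven cases; for non-polynomial shapes, $\mathbf{P}(u_\epsilon)$ holds vacuously.

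In the four base cases $u_\epsilon \in \{\synbrack{x_\iota}, \synbrack{y_\iota}, \synbrack{0_\iota}, \synbrack{1_\iota}\}$, I would apply the Syntactic Law of Disquotation (Theorem~\ref{thm:syn-disquotation}) on the LHS to rewrite $\sembrack{u_\epsilon}_\iota$ as the bare variable or constant, and apply the appropriate specifying formula (1--3) for $\mname{poly-diff}_{\epsilon \tarrow \epsilon \tarrow \epsilon}$ to reduce $\mname{poly-diff}\, u_\epsilon \, \synbrack{\textbf{x}_\iota}$ on the RHS. The two sides then match via parts~1--3 of Theorem~\ref{lem:derivatives}. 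The split on whether $\textbf{x}_\iota$ coincides with the quoted variable uses Axiom B4.6 to supply the inequality of quotations needed by specifying formula~2.

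For the three inductive cases, take the product case as representative: $u_\epsilon = \mname{app}(\mname{app}(\synbrack{*_{\iota \tarrow \iota \tarrow \iota}}, v_\epsilon), w_\epsilon)$ with $v_\epsilon, w_\epsilon \sqsubset_{\epsilon \tarrow \epsilon \tarrow o} u_\epsilon$ and $\mname{is-poly}$ holding of both. On the LHS, repeated use of Axiom B10.3 together with B10.2 rewrites $\sembrack{u_\epsilon}_\iota$ as $\sembrack{v_\epsilon}_\iota *_{\iota \tarrow \iota \tarrow \iota} \sembrack{w_\epsilon}_\iota$; extensionality (Axiom A3) then identifies $\LambdaApp \textbf{x}_\iota \mdot (\sembrack{v_\epsilon}_\iota *_{\iota \tarrow \iota \tarrow \iota} \sembrack{w_\epsilon}_\iota)$ with $(\LambdaApp \textbf{x}_\iota \mdot \sembrack{v_\epsilon}_\iota) *_{(\iota \tarrow \iota) \tarrow (\iota \tarrow \iota) \tarrow (\iota \tarrow \iota)} (\LambdaApp \textbf{x}_\iota \mdot \sembrack{w_\epsilon}_\iota)$. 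The product rule (Theorem~\ref{lem:derivatives}(6)) combined with the IHs for $v_\epsilon$ and $w_\epsilon$ (which Lemma~\ref{lem:poly}(4) unlocks by stripping the $(\LambdaApp u_\epsilon \mdot \cdots)$ abstraction introduced when $\mathbf{P}$ is instantiated) then computes the derivative. On the RHS, specifying formula~6 rewrites $\mname{poly-diff}\,u_\epsilon\, \synbrack{\textbf{x}_\iota}$ in terms of $\mname{poly-diff}\,v_\epsilon\, \synbrack{\textbf{x}_\iota}$ and $\mname{poly-diff}\,w_\epsilon\, \synbrack{\textbf{x}_\iota}$, both polynomials by Lemma~\ref{lem:poly}(3), and a further application of Axiom B10.3 brings the two sides into the same shape. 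The sum and negation cases proceed the same way, with specifying formulas~4, 5 and Theorem~\ref{lem:derivatives}(4, 5).

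The main obstacle will be the layered bookkeeping for semantic side conditions: the $\mname{is-expr}^\alpha$ premises of Axiom B10.3 (discharged by Lemma~\ref{lem:syn-is-expr} together with Lemma~\ref{lem:poly}(1)), the ``not-free-in'' premises required to beta-reduce across evaluations (discharged by Lemma~\ref{lem:syn-is-free-in} together with Lemma~\ref{lem:poly}(2)), and above all the bridging between the IH---which, after $\mathbf{P}$ is instantiated at a subexpression, retains unreduced beta-redexes of the form $(\LambdaApp u_\epsilon \mdot \sembrack{u_\epsilon}_\iota)\, v_\epsilon$ and $(\LambdaApp u_\epsilon \mdot \sembrack{\mname{poly-diff}\, u_\epsilon \, \synbrack{\textbf{x}_\iota}}_\iota)\, v_\epsilon$ because of the Double Substitution Problem---and the decomposed form produced on the LHS by Axiom B10.3. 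Parts~4 and~6 of Lemma~\ref{lem:poly} are tailored precisely to supply these missing beta-reductions under the polynomial hypothesis, so careful application at each step is the critical piece of the argument. The Variable Problem does not cause trouble here because $u_\epsilon$ is a construction and hence a closed value, so $\textbf{x}_\iota$ is never free in $u_\epsilon$.
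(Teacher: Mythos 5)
Your proposal matches the paper's proof in all essentials: the paper likewise strengthens the induction predicate to carry $\mname{is-diff}_{(\iota \tarrow \iota) \tarrow o}\,(\LambdaApp x_\iota \mdot \sembrack{u_\epsilon}_\iota)$ alongside the derivative equation, applies Axiom B6 with a seven-way case split on the definition of $\mname{is-poly}_{\epsilon \tarrow o}$, handles the base cases via the Syntactic Law of Disquotation and parts 1--3 of Theorem~\ref{lem:derivatives}, and in the inductive cases uses Axiom B10.3, the specifying formulas, and precisely parts 4--7 of Lemma~\ref{lem:poly} to bridge the residual beta-redexes, with the $\Neg\mname{IS-EFFECTIVE-IN}$ side conditions discharged via Axiom B12 and Lemma~\ref{lem:exist-rule} used to eliminate the existentials. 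One small imprecision: $u_\epsilon$ is a bound variable, not a closed construction, so the Variable Problem is dispatched not by closedness but by the fact that the relevant hypotheses (e.g.\ $u_\epsilon = \synbrack{x_\iota}$) are eval-free with $x_\iota$ not free in them --- which is exactly the bookkeeping you already flag as the main obstacle.
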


\begin{proof}
Without loss of generality, we may assume that $\textbf{x}_\iota$ is
$x_\iota$; the proof is exactly the same when $\textbf{x}_\iota$ is
$y_\iota$.  For this proof we make the following notational
definitions:

\be

  \item $\textbf{C}_o$ {\sglsp} is
    \begin{align*}
    & \mname{is-diff}_{(\iota \tarrow \iota) \tarrow o} \, 
      (\LambdaApp x_\iota \mdot \sembrack{u_\epsilon}_\iota) \And {}\\
    & \mname{deriv}_{(\iota \tarrow \iota) \tarrow (\iota \tarrow \iota)}
      (\LambdaApp x_\iota \mdot \sembrack{u_\epsilon}_\iota) = 
      \LambdaApp x_\iota \mdot
      \sembrack{\mname{poly-diff}_{\epsilon \tarrow \epsilon \tarrow \epsilon} \, 
      u_\epsilon \, \synbrack{x_\iota}}_\iota.
    \end{align*}

  \item $\textbf{P}_{\epsilon \tarrow o}$ {\sglsp} is {\sglsp}
    $\LambdaApp u_\epsilon \mdot (\mname{is-poly}_{\epsilon \tarrow o}
    \, u_\epsilon \Implies \textbf{C}_o)$.

  \item $\textbf{D}_o$ {\sglsp} is {\sglsp} $\ForallApp z_\epsilon
    \mdot (z_\epsilon \sqsubset_{\epsilon \tarrow \epsilon \tarrow o}
    u_\epsilon \Implies \textbf{P}_{\epsilon \tarrow o} \, z_\epsilon)$.

  \item $\textbf{E}^{1}_o$ {\sglsp} is {\sglsp}
    $\mname{is-poly}_{\epsilon \tarrow o} \, v_\epsilon \And
    u_\epsilon = \synbrack{-_{\iota \tarrow \iota} \,
      \commabrack{v_\epsilon}}$.

  \item $\textbf{E}^{2}_o$ {\sglsp} is {\sglsp}
    $\mname{is-poly}_{\epsilon \tarrow o} \, v_\epsilon \And
    \mname{is-poly}_{\epsilon \tarrow o} \, w_\epsilon \And
    u_\epsilon = \synbrack{\commabrack{v_\epsilon} +_{\iota \tarrow
        \iota \tarrow \iota} \commabrack{w_\epsilon}}$.

  \item $\textbf{E}^{3}_o$ {\sglsp} is {\sglsp}
    $\mname{is-poly}_{\epsilon \tarrow o} \, v_\epsilon \And
    \mname{is-poly}_{\epsilon \tarrow o} \, w_\epsilon \And
    u_\epsilon = \synbrack{\commabrack{v_\epsilon} *_{\iota \tarrow
        \iota \tarrow \iota} \commabrack{w_\epsilon}}$.

\ee

We will prove the following statement, a stronger result
that immediately implies the theorem: \[\proves{T_{\mathbb
    R}}{\ForallApp u_\epsilon \mdot \textbf{P}_{\epsilon \tarrow o} \,
  u_\epsilon}.\]
Our proof is given by the following derivation:
\begin{align} \setcounter{equation}{0}
& T_{\mathbb R} \vdash 
  \Neg \mname{IS-EFFECTIVE-IN}(u_\iota, \textbf{P}_{\epsilon \tarrow o})\\
& T_{\mathbb R} \vdash 
  \Neg \mname{IS-EFFECTIVE-IN}(z_\iota, \textbf{P}_{\epsilon \tarrow o})\\
& T_{\mathbb R} \vdash
  \ForallApp u_\epsilon \mdot (\textbf{D}_o \Implies 
  \textbf{P}_{\epsilon \tarrow o} \, u_\epsilon) \Implies
  \ForallApp u_\epsilon \mdot \textbf{P}_{\epsilon \tarrow o} \, u_\epsilon\\
& T_{\mathbb R}, \set{\textbf{D}_o, 
  u_\epsilon = \synbrack{x_\iota}} \vdash
  \textbf{C}_o\\
& T_{\mathbb R}, \set{\textbf{D}_o, 
  u_\epsilon = \synbrack{y_\iota}} \vdash
  \textbf{C}_o\\
& T_{\mathbb R}, \set{\textbf{D}_o, 
  u_\epsilon = \synbrack{0_\iota}} \vdash
  \textbf{C}_o\\
& T_{\mathbb R}, \set{\textbf{D}_o, 
  u_\epsilon = \synbrack{1_\iota}} \vdash
  \textbf{C}_o\\
& T_{\mathbb R}, \set{\textbf{D}_o,
  \ForsomeApp v_\epsilon \mdot \textbf{E}^{1}_o} 
  \vdash
  \textbf{C}_o\\
& T_{\mathbb R}, \set{\textbf{D}_o,
  \ForsomeApp v_\epsilon \mdot \ForsomeApp w_\epsilon \mdot \textbf{E}^{2}_o}
  \vdash
  \textbf{C}_o\\
& T_{\mathbb R}, \set{\textbf{D}_o, 
  \ForsomeApp v_\epsilon \mdot \ForsomeApp w_\epsilon \mdot \textbf{E}^{3}_o}
  \vdash
  \textbf{C}_o\\
& T_{\mathbb R}, \set{\textbf{D}_o,
  \mname{is-poly}_{\epsilon \tarrow o} \, u_\epsilon} \vdash
  \textbf{C}_o\\
& T_{\mathbb R} \vdash
  \textbf{D}_o \Implies 
  (\mname{is-poly}_{\epsilon \tarrow o} \, u_\epsilon \Implies \textbf{C}_o)\\
& T_{\mathbb R} \vdash
  \textbf{D}_o \Implies \textbf{P}_{\epsilon \tarrow o} \, u_\epsilon\\
& T_{\mathbb R} \vdash
  \ForallApp u_\epsilon \mdot (\textbf{D}_o \Implies 
  \textbf{P}_{\epsilon \tarrow o} \, u_\epsilon)\\
& T_{\mathbb R} \vdash
  \ForallApp u_\epsilon \mdot \textbf{P}_{\epsilon \tarrow o} \, u_\epsilon
\end{align}
(1) follows from the definition of $\mname{IS-EFFECTIVE-IN}$ using
Axiom A4.6; (2) follows the definition of $\mname{IS-EFFECTIVE-IN}$
using part 5 of Lemma~\ref{lem:poly}; (3) follows from (1), (2), Axiom
B6, the Induction Principle for Constructions, Alpha-Equivalence,
Universal Generalization, and Universal Instantiation; (4)--(10) are
proved below; (11) follows from (4)--(10) by the definition of
$\mname{is-poly}_{\epsilon \tarrow o}$ and propositional logic (proof
by cases); (12) follows from (11) by the Deduction Theorem; (13)
follows from (12) by Beta-Reduction by Substitution and the Equality
Rules; (14) follows from (13) by Universal Generalization; and (15)
follows from (14) and (3) by Modus Ponens.

\medskip

\noindent 
\textbf{Proof of (4)}
\begin{align}
& T_{\mathbb R} \vdash {}
  \mname{is-diff}_{(\iota \tarrow \iota) \tarrow o} \, 
  (\LambdaApp x_\iota \mdot x_\iota) \And 
  \mname{deriv}_{(\iota \tarrow \iota) \tarrow (\iota \tarrow \iota)} 
  (\LambdaApp x_\iota \mdot x_\iota) = 
  \LambdaApp x_\iota \mdot 1_\iota\\
& T_{\mathbb R} \vdash {}
  \sembrack{\synbrack{x_\iota}}_\iota = x_\iota\\
& T_{\mathbb R} \vdash {}
  \sembrack{\synbrack{1_\iota}}_\iota = 1_\iota\\
& T_{\mathbb R} \vdash {}
  \mname{poly-diff}_{\epsilon \tarrow \epsilon \tarrow \epsilon} \, 
  \synbrack{x_\iota} \, \synbrack{x_\iota} = \synbrack{1_\iota}\\
& T_{\mathbb R} \vdash {}
  \mname{is-diff}_{(\iota \tarrow \iota) \tarrow o} \, 
  (\LambdaApp x_\iota \mdot \sembrack{\synbrack{x_\iota}}_\iota) \And {}\nonumber\\
& \hspace{5.5ex}\mname{deriv}_{(\iota \tarrow \iota) \tarrow (\iota \tarrow \iota)} 
  (\LambdaApp x_\iota \mdot \sembrack{\synbrack{x_\iota}}_\iota) = 
  \LambdaApp x_\iota \mdot 
  \sembrack{\mname{poly-diff}_{\epsilon \tarrow \epsilon \tarrow \epsilon} \, 
  \synbrack{x_\iota} \, \synbrack{x_\iota}}_\iota\\
& T_{\mathbb R}, 
  \set{\textbf{D}_o,
  u_\epsilon = \synbrack{x_\iota}} \vdash {}\nonumber\\
& \hspace{5.5ex}\mname{is-diff}_{(\iota \tarrow \iota) \tarrow o} \, 
  (\LambdaApp x_\iota \mdot \sembrack{\synbrack{x_\iota}}_\iota) \And {}\nonumber\\
& \hspace{5.5ex}\mname{deriv}_{(\iota \tarrow \iota) \tarrow (\iota \tarrow \iota)} 
  (\LambdaApp x_\iota \mdot \sembrack{\synbrack{x_\iota}}_\iota) = 
  \LambdaApp x_\iota \mdot 
  \sembrack{\mname{poly-diff}_{\epsilon \tarrow \epsilon \tarrow \epsilon} \, 
  \synbrack{x_\iota} \, \synbrack{x_\iota}}_\iota\\
& T_{\mathbb R} \vdash
  \Neg \mname{IS-EFFECTIVE-IN}(x_\iota, u_\epsilon = \synbrack{x_\iota})\\
& T_{\mathbb R}, 
  \set{\textbf{D}_o, 
  u_\epsilon = \synbrack{x_\iota}} \vdash \textbf{C}_o
\end{align}
(16) is an instance of part 1 of Theorem~\ref{lem:derivatives}; (17)
and (18) are by the Syntactic Law of Disquotation; (19) follows from
the definition of $\mname{poly-diff}_{\epsilon \tarrow \epsilon
  \tarrow \epsilon}$ by Universal Generalization, Universal
Instantiation, Axiom B1.2, and Modus Ponens; (20) follows from
(16)--(19) by the Equality Rules; (21) follows from (20) by Weakening;
(22) is by Axiom B12; and (23) follows from (21), (22), and the
hypothesis $u_\epsilon = \synbrack{x_\iota}$ by Rule ${\rm R}'$.

\medskip

\noindent 
\textbf{Proof of (5)}{\sglsp} Similar to the proof of (4).

\medskip

\noindent 
\textbf{Proof of (6)}{\sglsp} Similar to the proof of (4).

\medskip

\noindent 
\textbf{Proof of (7)}{\sglsp} Similar to the proof of (4).

\medskip

\noindent 
\textbf{Proof of (8)}{\sglsp} Similar to the proof of (9).

\medskip

\noindent 
\textbf{Proof of (9)}
\begin{align}
& T_{\mathbb R} \vdash {}
  \mname{is-diff}_{(\iota \tarrow \iota) \tarrow o} \,
  (\LambdaApp x_\iota \mdot \sembrack{v_\epsilon}_\iota) \And 
  \mname{is-diff}_{(\iota \tarrow \iota) \tarrow o} \, 
  (\LambdaApp x_\iota \mdot \sembrack{w_\epsilon}_\iota) \Implies {}\nonumber\\ 
& \hspace*{7ex}(\mname{is-diff}_{(\iota \tarrow \iota) \tarrow o} \,
  (\LambdaApp x_\iota \mdot \sembrack{v_\epsilon}_\iota 
  +_{(\iota \tarrow \iota) \tarrow (\iota \tarrow \iota) \tarrow (\iota \tarrow \iota)}
  \LambdaApp x_\iota \mdot \sembrack{w_\epsilon}_\iota) \And {}\nonumber\\
& \hspace*{8ex}\mname{deriv}_{(\iota \tarrow \iota) \tarrow (\iota \tarrow \iota)} \, 
  (\LambdaApp x_\iota \mdot \sembrack{v_\epsilon}_\iota 
  +_{(\iota \tarrow \iota) \tarrow (\iota \tarrow \iota) \tarrow (\iota \tarrow \iota)}
  \LambdaApp x_\iota \mdot \sembrack{w_\epsilon}_\iota) = {}\nonumber\\
& \hspace*{8ex}(\mname{deriv}_{(\iota \tarrow \iota) \tarrow (\iota \tarrow \iota)} \, 
  \LambdaApp x_\iota \mdot \sembrack{v_\epsilon}_\iota)\nonumber\\
& \hspace*{9ex}
  +_{(\iota \tarrow \iota) \tarrow (\iota \tarrow \iota) \tarrow (\iota \tarrow \iota)}
  \nonumber\\ 
& \hspace*{8ex}(\mname{deriv}_{(\iota \tarrow \iota) \tarrow (\iota \tarrow \iota)} \, 
  \LambdaApp x_\iota \mdot \sembrack{w_\epsilon}_\iota))
\end{align}
\begin{align}
& T_{\mathbb R}, \set{\textbf{D}_o, \textbf{E}^{2}_o} \vdash {}\nonumber\\
& \hspace*{5ex}\mname{is-diff}_{(\iota \tarrow \iota) \tarrow o} \,
  (\LambdaApp x_\iota \mdot \sembrack{v_\epsilon}_\iota) \And 
  \mname{is-diff}_{(\iota \tarrow \iota) \tarrow o} \, 
  (\LambdaApp x_\iota \mdot \sembrack{w_\epsilon}_\iota) \Implies {}\nonumber\\ 
& \hspace*{7ex}(\mname{is-diff}_{(\iota \tarrow \iota) \tarrow o} \,
  (\LambdaApp x_\iota \mdot \sembrack{v_\epsilon}_\iota 
  +_{(\iota \tarrow \iota) \tarrow (\iota \tarrow \iota) \tarrow (\iota \tarrow \iota)}
  \LambdaApp x_\iota \mdot \sembrack{w_\epsilon}_\iota) \And {}\nonumber\\
& \hspace*{8ex}\mname{deriv}_{(\iota \tarrow \iota) \tarrow (\iota \tarrow \iota)} \, 
  (\LambdaApp x_\iota \mdot \sembrack{v_\epsilon}_\iota 
  +_{(\iota \tarrow \iota) \tarrow (\iota \tarrow \iota) \tarrow (\iota \tarrow \iota)}
  \LambdaApp x_\iota \mdot \sembrack{w_\epsilon}_\iota) = {}\nonumber\\
& \hspace*{8ex}(\mname{deriv}_{(\iota \tarrow \iota) \tarrow (\iota \tarrow \iota)} \, 
  \LambdaApp x_\iota \mdot \sembrack{v_\epsilon}_\iota)\nonumber\\
& \hspace*{9ex}
  +_{(\iota \tarrow \iota) \tarrow (\iota \tarrow \iota) \tarrow (\iota \tarrow \iota)}
  \nonumber\\ 
& \hspace*{8ex}(\mname{deriv}_{(\iota \tarrow \iota) \tarrow (\iota \tarrow \iota)} \, 
  \LambdaApp x_\iota \mdot \sembrack{w_\epsilon}_\iota))
\end{align}
\begin{align}
& T_{\mathbb R}, \set{\textbf{D}_o, \textbf{E}^{2}_o} \vdash 
  v_\epsilon \sqsubset u_\epsilon
\end{align}
\begin{align}
& T_{\mathbb R}, \set{\textbf{D}_o, \textbf{E}^{2}_o} \vdash 
  w_\epsilon \sqsubset u_\epsilon
\end{align}
\begin{align}
& T_{\mathbb R}, \set{\textbf{D}_o, \textbf{E}^{2}_o} \vdash 
  \textbf{P}_{\epsilon \tarrow o} \, v_\epsilon
\end{align}
\begin{align}
& T_{\mathbb R}, \set{\textbf{D}_o, \textbf{E}^{2}_o} \vdash 
  \textbf{P}_{\epsilon \tarrow o} \, w_\epsilon
\end{align}
\begin{align}
& T_{\mathbb R}, \set{\textbf{D}_o, \textbf{E}^{2}_o} \vdash\nonumber\\
& \hspace*{5ex}\mname{is-diff}_{(\iota \tarrow \iota) \tarrow o} \, 
  (\LambdaApp x_\iota \mdot \sembrack{v_\epsilon}_\iota) \And {}\nonumber\\
& \hspace*{5ex}\mname{deriv}_{(\iota \tarrow \iota) \tarrow (\iota \tarrow \iota)}
  (\LambdaApp x_\iota \mdot \sembrack{v_\epsilon}_\iota) = 
  \LambdaApp x_\iota \mdot
  \sembrack{\mname{poly-diff}_{\epsilon \tarrow \epsilon \tarrow \epsilon} \, 
  v_\epsilon \, \synbrack{x_\iota}}_\iota
\end{align}
\begin{align}
& T_{\mathbb R}, \set{\textbf{D}_o, \textbf{E}^{2}_o} \vdash\nonumber\\
& \hspace*{5ex}\mname{is-diff}_{(\iota \tarrow \iota) \tarrow o} \, 
  (\LambdaApp x_\iota \mdot \sembrack{w_\epsilon}_\iota) \And {}\nonumber\\
& \hspace*{5ex}\mname{deriv}_{(\iota \tarrow \iota) \tarrow (\iota \tarrow \iota)}
  (\LambdaApp x_\iota \mdot \sembrack{w_\epsilon}_\iota) = 
  \LambdaApp x_\iota \mdot
  \sembrack{\mname{poly-diff}_{\epsilon \tarrow \epsilon \tarrow \epsilon} \, 
  w_\epsilon \, \synbrack{x_\iota}}_\iota
\end{align}
\begin{align}
& T_{\mathbb R}, \set{\textbf{D}_o, \textbf{E}^{2}_o} \vdash {}\nonumber\\
& \hspace*{5ex}\mname{is-diff}_{(\iota \tarrow \iota) \tarrow o} \,
  (\LambdaApp x_\iota \mdot \sembrack{v_\epsilon}_\iota 
  +_{(\iota \tarrow \iota) \tarrow (\iota \tarrow \iota) \tarrow (\iota \tarrow \iota)}
  \LambdaApp x_\iota \mdot \sembrack{w_\epsilon}_\iota) \And {}\nonumber\\
& \hspace*{5ex}\mname{deriv}_{(\iota \tarrow \iota) \tarrow (\iota \tarrow \iota)} \, 
  (\LambdaApp x_\iota \mdot \sembrack{v_\epsilon}_\iota 
  +_{(\iota \tarrow \iota) \tarrow (\iota \tarrow \iota) \tarrow (\iota \tarrow \iota)}
  \LambdaApp x_\iota \mdot \sembrack{w_\epsilon}_\iota) = {}\nonumber\\
& \hspace*{5ex}(\mname{deriv}_{(\iota \tarrow \iota) \tarrow (\iota \tarrow \iota)} \, 
  \LambdaApp x_\iota \mdot \sembrack{v_\epsilon}_\iota)\nonumber\\
& \hspace*{6ex}
  +_{(\iota \tarrow \iota) \tarrow (\iota \tarrow \iota) \tarrow (\iota \tarrow \iota)}
  \nonumber\\ 
& \hspace*{5ex}(\mname{deriv}_{(\iota \tarrow \iota) \tarrow (\iota \tarrow \iota)} \, 
  \LambdaApp x_\iota \mdot \sembrack{w_\epsilon}_\iota)
\end{align}
\begin{align}
& T_{\mathbb R}, \set{\textbf{D}_o, \textbf{E}^{2}_o} \vdash {}\nonumber\\
& \hspace*{5ex}\mname{is-diff}_{(\iota \tarrow \iota) \tarrow o} \,
  (\LambdaApp x_\iota \mdot \sembrack{v_\epsilon}_\iota 
  +_{(\iota \tarrow \iota) \tarrow (\iota \tarrow \iota) \tarrow (\iota \tarrow \iota)}
  \LambdaApp x_\iota \mdot \sembrack{w_\epsilon}_\iota) \And {}\nonumber\\
& \hspace*{5ex}\mname{deriv}_{(\iota \tarrow \iota) \tarrow (\iota \tarrow \iota)} \, 
  (\LambdaApp x_\iota \mdot \sembrack{v_\epsilon}_\iota 
  +_{(\iota \tarrow \iota) \tarrow (\iota \tarrow \iota) \tarrow (\iota \tarrow \iota)}
  \LambdaApp x_\iota \mdot \sembrack{w_\epsilon}_\iota) = {}\nonumber\\
& \hspace*{5ex}\LambdaApp x_\iota \mdot
  \sembrack{\mname{poly-diff}_{\epsilon \tarrow \epsilon \tarrow \epsilon} \, 
  v_\epsilon \, \synbrack{x_\iota}}_\iota\nonumber\\
& \hspace*{6ex}
  +_{(\iota \tarrow \iota) \tarrow (\iota \tarrow \iota) \tarrow (\iota \tarrow \iota)}
  \nonumber\\ 
& \hspace*{5ex}\LambdaApp x_\iota \mdot
  \sembrack{\mname{poly-diff}_{\epsilon \tarrow \epsilon \tarrow \epsilon} \, 
  w_\epsilon \, \synbrack{x_\iota}}_\iota
\end{align}
\begin{align}
& T_{\mathbb R}, \set{\textbf{D}_o, \textbf{E}^{2}_o} \vdash {}\nonumber\\
& \hspace*{5ex}\mname{is-diff}_{(\iota \tarrow \iota) \tarrow o} \,
  (\LambdaApp x_\iota \mdot (\sembrack{v_\epsilon}_\iota 
  +_{\iota \tarrow \iota \tarrow \iota}
  \sembrack{w_\epsilon}_\iota)) \And {}\nonumber\\
& \hspace*{5ex}\mname{deriv}_{(\iota \tarrow \iota) \tarrow (\iota \tarrow \iota)} \, 
  (\LambdaApp x_\iota \mdot (\sembrack{v_\epsilon}_\iota 
  +_{\iota \tarrow \iota \tarrow \iota}
  \sembrack{w_\epsilon}_\iota)) = {}\nonumber\\
& \hspace*{5ex}\LambdaApp x_\iota \mdot
  (\sembrack{\mname{poly-diff}_{\epsilon \tarrow \epsilon \tarrow \epsilon} \, 
  v_\epsilon \, \synbrack{x_\iota}}_\iota
  +_{\iota \tarrow \iota \tarrow \iota}
  \sembrack{\mname{poly-diff}_{\epsilon \tarrow \epsilon \tarrow \epsilon} \, 
  w_\epsilon \, \synbrack{x_\iota}}_\iota)
\end{align}
\begin{align}
& T_{\mathbb R}, \set{\textbf{D}_o, \textbf{E}^{2}_o} \vdash {}\nonumber\\
& \hspace*{5ex}\mname{is-diff}_{(\iota \tarrow \iota) \tarrow o} \,
  (\LambdaApp x_\iota \mdot 
  (\sembrack{\synbrack{\commabrack{v_\epsilon} 
  +_{\iota \tarrow \iota \tarrow \iota} \commabrack{w_\epsilon}}}_\iota)) 
  \And {}\nonumber\\
& \hspace*{5ex}\mname{deriv}_{(\iota \tarrow \iota) \tarrow (\iota \tarrow \iota)} \, 
  (\LambdaApp x_\iota \mdot 
  (\sembrack{\synbrack{\commabrack{v_\epsilon} 
  +_{\iota \tarrow \iota \tarrow \iota} \commabrack{w_\epsilon}}}_\iota))
  = {}\nonumber\\
& \hspace*{5ex}\LambdaApp x_\iota \mdot
  (\sembrack{\synbrack{\commabrack{
  \mname{poly-diff}_{\epsilon \tarrow \epsilon \tarrow \epsilon} \, 
  v_\epsilon \, \synbrack{x_\iota}}
  +_{\iota \tarrow \iota \tarrow \iota}
  \commabrack{\mname{poly-diff}_{\epsilon \tarrow \epsilon \tarrow \epsilon} \, 
  w_\epsilon \, \synbrack{x_\iota}}}}_\iota)
\end{align}
\begin{align}
& T_{\mathbb R}, \set{\textbf{D}_o, \textbf{E}^{2}_o} \vdash {}\nonumber\\
& \hspace*{5ex}\mname{poly-diff}_{\epsilon \tarrow \epsilon \tarrow \epsilon} \, 
  \synbrack{\commabrack{v_\epsilon} 
  +_{\iota \tarrow \iota \tarrow \iota} 
  \commabrack{w_\epsilon}} \, \synbrack{x_\iota} = {}\nonumber\\
& \hspace*{5ex}\synbrack{\commabrack{
  \mname{poly-diff}_{\epsilon \tarrow \epsilon \tarrow \epsilon} \, 
  v_\epsilon \, \synbrack{x_\iota}}
  +_{\iota \tarrow \iota \tarrow \iota}
  \commabrack{\mname{poly-diff}_{\epsilon \tarrow \epsilon \tarrow \epsilon} \, 
  w_\epsilon \, \synbrack{x_\iota}}}
\end{align}
\begin{align}
& T_{\mathbb R}, \set{\textbf{D}_o, \textbf{E}^{2}_o} \vdash {}\nonumber\\
& \hspace*{5ex}\mname{is-diff}_{(\iota \tarrow \iota) \tarrow o} \,
  (\LambdaApp x_\iota \mdot 
  (\sembrack{\synbrack{\commabrack{v_\epsilon} 
  +_{\iota \tarrow \iota \tarrow \iota} \commabrack{w_\epsilon}}}_\iota)) 
  \And {}\nonumber\\
& \hspace*{5ex}\mname{deriv}_{(\iota \tarrow \iota) \tarrow (\iota \tarrow \iota)} \, 
  (\LambdaApp x_\iota \mdot 
  (\sembrack{\synbrack{\commabrack{v_\epsilon} 
  +_{\iota \tarrow \iota \tarrow \iota} \commabrack{w_\epsilon}}}_\iota))
  = {}\nonumber\\
& \hspace*{5ex}\LambdaApp x_\iota \mdot
  (\sembrack{  
  \mname{poly-diff}_{\epsilon \tarrow \epsilon \tarrow \epsilon} \, 
  \synbrack{\commabrack{v_\epsilon} 
  +_{\iota \tarrow \iota \tarrow \iota} 
  \commabrack{w_\epsilon}} \, \synbrack{x_\iota}}_\iota)
\end{align}
\begin{align}
& T_{\mathbb R}, \set{\textbf{D}_o, \textbf{E}^{2}_o} \vdash
  \Neg \mname{IS-EFFECTIVE-IN}(x_\iota,u_\epsilon =
  \synbrack{\commabrack{v_\epsilon} +_{\iota \tarrow \iota \tarrow \iota} 
  \commabrack{w_\epsilon}})
\end{align}
\begin{align}
& T_{\mathbb R}, \set{\textbf{D}_o, \textbf{E}^{2}_o} \vdash \textbf{C}_o
\end{align}
\begin{align}
& T_{\mathbb R}, \set{\textbf{D}_o, \textbf{E}^{2}_o} \vdash 
  \Neg \mname{IS-EFFECTIVE-IN}(v_\epsilon, \textbf{C}_o)
\end{align}
\begin{align}
& T_{\mathbb R}, \set{\textbf{D}_o, \textbf{E}^{2}_o} \vdash 
  \Neg \mname{IS-EFFECTIVE-IN}(w_\epsilon, \textbf{C}_o)
\end{align}
\begin{align}
& T_{\mathbb R}, \set{\textbf{D}_o, \textbf{E}^{2}_o} \vdash 
  \Neg \mname{IS-EFFECTIVE-IN}(v_\epsilon, \textbf{D}_o)
\end{align}
\begin{align}
& T_{\mathbb R}, \set{\textbf{D}_o, \textbf{E}^{2}_o} \vdash 
  \Neg \mname{IS-EFFECTIVE-IN}(w_\epsilon, \textbf{D}_o)
\end{align}
\begin{align}
& T_{\mathbb R}, \set{\textbf{D}_o, 
  \ForsomeApp v_\epsilon \mdot \ForsomeApp w_\epsilon \mdot \textbf{E}^{2}_o} 
  \vdash \textbf{C}_o
\end{align}
(24) is an instance of part 5 of Theorem~\ref{lem:derivatives}; (25)
follows from (24) by Weakening; (26) and (27) follow from the
hypothesis $\textbf{E}^{2}_o$ and Axioms B5.1--6 by Universal
Generalization, Universal Instantiation, the Equality Rules and
propositional logic; (28) and (29) follow from (26), (27), and the
hypothesis $\textbf{D}_o$ by Universal Instantiation, parts 5 and 7 of
Lemma~\ref{lem:poly}, the Equality Rules, and propositional logic;
(30) and (31) follow from (28), (29), and the hypothesis
$\textbf{E}^{2}_o$ by Universal Instantiation, parts 4 and 6 of
Lemma~\ref{lem:poly}, the Equality Rules, and propositional logic;
(32) follows from (25), (30), and (31) by propositional logic; (33)
follows from (30)--(32) by the Equality Rules and propositional logic;
(34) follows from (33) by the definition of $+_{(\iota \tarrow \iota)
  \tarrow (\iota \tarrow \iota) \tarrow (\iota \tarrow \iota)}$; (35)
follows from (34) and the hypothesis $\textbf{E}^{2}_o$ by Axiom
B10.3, Lemma~\ref{lem:poly}, quasiquotation, and propositional logic;
(36) follows from the definition of $\mname{poly-diff}_{\epsilon
  \tarrow \epsilon \tarrow \epsilon}$ and the hypothesis
$\textbf{E}^{2}_o$ by Universal Generalization, Universal
Instantiation, Axiom B1.2, and propositional logic; (37) follows from
(35) and (36) by the Equality Rules; (38) is by Axiom B12; (39)
follows from the hypothesis $\textbf{E}^{2}_o$, (37), and (38) by Rule
${\rm R}'$ and propositional logic; (40)--(43) follow from the
hypothesis $\textbf{E}^{2}_o$ and the definition of
$\mname{IS-EFFECTIVE}$ by Beta-Reduction by Substitution, part 4 of
Lemma~\ref{lem:axiomB11.2-app} and Lemma~\ref{lem:poly}; and (44)
follows from (41)--(43) by Lemma~\ref{lem:exist-rule}.

\medskip

\noindent 
\textbf{Proof of (10)}{\sglsp} Similar to the proof of (9).

\medskip

\noindent
This finally completes the proof of Theorem~\ref{thm:meaning-form}.
\end{proof}

\begin{prop}\label{prop:mf-app}
\[\proves{T_{\mathbb R}}{\mname{deriv}_{(\iota \tarrow \iota) \tarrow (\iota \tarrow
  \iota)} (\LambdaApp x_\iota \mdot x_{\iota}^{2}) = \LambdaApp
x_\iota \mdot 2_\iota * x_{\iota}}.\]
\end{prop}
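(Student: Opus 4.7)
The plan is to derive this instance from the general meaning formula (Theorem~\ref{thm:meaning-form}) by instantiating at $u_\epsilon = \synbrack{x_\iota^{2}}$, evaluating the resulting applications of $\mname{poly\mbox{-}diff}$ via the specification rules of $T_{\mathbb R}$, and collapsing the remaining evaluations using the Syntactic Law of Disquotation.

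First, I would expand the abbreviation $x_\iota^{2}$ to $(1_\iota * x_\iota) * x_\iota$ using the definitions in subsection~\ref{secsub:meaning-formulas}, and prove the side condition $\proves{T_{\mathbb R}}{\mname{is\mbox{-}poly}_{\epsilon \tarrow o} \, \synbrack{x_\iota^{2}}}$. This is done by a straightforward application of the recursive definition of $\mname{is\mbox{-}poly}_{\epsilon \tarrow o}$: the outer $*$ reduces the task to showing $\mname{is\mbox{-}poly}_{\epsilon \tarrow o} \, \synbrack{1_\iota * x_\iota}$ and $\mname{is\mbox{-}poly}_{\epsilon \tarrow o} \, \synbrack{x_\iota}$, both of which follow by further unfolding and the base cases for $\synbrack{x_\iota}$, $\synbrack{1_\iota}$, together with Axioms B4.6--10 to discharge the existential witnesses.

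Next I would specialize Theorem~\ref{thm:meaning-form} at $u_\epsilon := \synbrack{x_\iota^{2}}$ via Universal Instantiation, apply Modus Ponens with the side condition just established, and then use Beta-Reduction by Substitution together with part~1 of Lemma~\ref{lem:axiomB11.2-app} to rewrite $\LambdaApp x_\iota \mdot \sembrack{\synbrack{x_\iota^{2}}}_\iota$ as $\LambdaApp x_\iota \mdot x_\iota^{2}$, matching the left-hand side of the goal. In parallel, I would compute $\mname{poly\mbox{-}diff}_{\epsilon \tarrow \epsilon \tarrow \epsilon} \, \synbrack{x_\iota^{2}} \, \synbrack{x_\iota}$ by repeated use of the defining formulas of $\mname{poly\mbox{-}diff}_{\epsilon \tarrow \epsilon \tarrow \epsilon}$ as conditional rewrites: the product rule (clause 6) applied once reduces the problem to derivatives of $1_\iota * x_\iota$ and $x_\iota$; a second application of the product rule handles $1_\iota * x_\iota$; and the base clauses 1 and 3 handle $\synbrack{x_\iota}$ and $\synbrack{1_\iota}$. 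The result is (provably equal to) a quotation $\synbrack{((0_\iota * x_\iota) + (1_\iota * 1_\iota)) * x_\iota + (1_\iota * x_\iota) * 1_\iota}$. Applying the Syntactic Law of Disquotation inside the $\lambda$-abstraction (again via Lemma~\ref{lem:axiomB11.2-app}.1, since $x_\iota$ is not free in this quotation) converts the right-hand side to $\LambdaApp x_\iota \mdot ((0_\iota * x_\iota) + (1_\iota * 1_\iota)) * x_\iota + (1_\iota * x_\iota) * 1_\iota$.

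The final step is purely arithmetic: the standard identities $0_\iota * x_\iota = 0_\iota$, $0_\iota + y_\iota = y_\iota$, $1_\iota * y_\iota = y_\iota$, $y_\iota * 1_\iota = y_\iota$, and $x_\iota + x_\iota = 2_\iota * x_\iota$ (all provable in $T_{\mathbb R}$, as assumed in~\cite{Farmer08}) combine under Rule R to reduce the body to $2_\iota * x_\iota$, yielding $\LambdaApp x_\iota \mdot \sembrack{\mname{poly\mbox{-}diff}_{\epsilon \tarrow \epsilon \tarrow \epsilon} \, \synbrack{x_\iota^{2}} \, \synbrack{x_\iota}}_\iota = \LambdaApp x_\iota \mdot 2_\iota * x_\iota$. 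Transitivity of equality then delivers the proposition. The main obstacle is not the logical machinery but the bookkeeping: each unfolding of $\mname{poly\mbox{-}diff}_{\epsilon \tarrow \epsilon \tarrow \epsilon}$ requires discharging the quasiquoted side conditions (is a variable, is a polynomial) via Lemma~\ref{lem:poly}, and the arithmetic simplification at the end relies on an unspecified collection of algebraic theorems from $T_{\mathbb R}$ that must be assumed available.
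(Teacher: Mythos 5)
Your proposal follows the same skeleton as the paper's proof: instantiate Theorem~\ref{thm:meaning-form} at $u_\epsilon := \synbrack{x_\iota^{2}}$, establish $\proves{T_{\mathbb R}}{\mname{is-poly}_{\epsilon \tarrow o}\,\synbrack{x_\iota^{2}}}$, use Lemma~\ref{lem:poly} and Lemma~\ref{lem:axiomB11.2-app} to convert the lambda-applications produced by $\mname{SUB}$ back into evaluations, compute $\mname{poly-diff}_{\epsilon \tarrow \epsilon \tarrow \epsilon}\,\synbrack{x_\iota^{2}}\,\synbrack{x_\iota}$ by the conditional rewrite rules, and finish with the Syntactic Law of Disquotation and the Equality Rules. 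The one place you genuinely diverge is the final simplification. The paper's step (7) asserts $\proves{T_{\mathbb R}}{\mname{poly-diff}_{\epsilon \tarrow \epsilon \tarrow \epsilon}\,\synbrack{x_\iota^{2}}\,\synbrack{x_\iota} = \synbrack{2_\iota * x_\iota}}$ outright, i.e., it performs the algebraic clean-up at the level of constructions; you instead compute the literal output of the rewrite rules --- the quotation of the unsimplified derivative $((0_\iota * x_\iota)+(1_\iota * 1_\iota)) * x_\iota + (1_\iota * x_\iota) * 1_\iota$ --- then disquote, and only afterwards simplify using the arithmetic identities of $T_{\mathbb R}$. Your ordering is the more defensible one: the defining formulas of $\mname{poly-diff}_{\epsilon \tarrow \epsilon \tarrow \epsilon}$ force its value to be the construction representing the unsimplified derivative, and since syntactically distinct proper constructions are provably distinct by the Axiom Group B4, an equation between the two quotations cannot hold as the paper writes it; the simplification has to happen after disquotation, exactly where you place it. The price is an explicit dependence on the ring identities of $T_{\mathbb R}$, which you correctly flag as an assumption on the background theory. (A small bookkeeping note: establishing $\mname{is-poly}_{\epsilon \tarrow o}\,\synbrack{x_\iota^{2}}$ needs Existential Generalization to supply the witnesses for the product clause, not the distinctness axioms B4.6--10, which play no role in that positive direction.)
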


\begin{proof}
\begin{align} \setcounter{equation}{0}
T_{\mathbb R} \vdash {} 
& \ForallApp u_\epsilon \mdot
  (\mname{is-poly}_{\epsilon \tarrow o} \, u_\epsilon \Implies {}\nonumber\\
& \hspace*{2ex}
  \mname{deriv}_{(\iota \tarrow \iota) \tarrow (\iota \tarrow \iota)}
  (\LambdaApp x_\iota \mdot \sembrack{u_\epsilon}_\iota) = 
  \LambdaApp x_\iota \mdot
  \sembrack{\mname{poly-diff}_{\epsilon \tarrow \epsilon \tarrow \epsilon} \, 
  u_\epsilon \, \synbrack{x_\iota}}_\iota)\\
T_{\mathbb R} \vdash {}
& \mname{is-poly}_{\epsilon \tarrow o} \, 
  \synbrack{x_{\iota}^{2}} \Implies {}\nonumber\\
& \hspace*{2ex}
  \mname{deriv}_{(\iota \tarrow \iota) \tarrow (\iota \tarrow \iota)}
  (\LambdaApp x_\iota \mdot 
  ((\LambdaApp u_\epsilon \mdot \sembrack{u_\epsilon}_\iota) \, 
  \synbrack{x_{\iota}^{2}})) = {}\nonumber\\
& \hspace*{2ex} 
  \LambdaApp x_\iota \mdot
  ((\LambdaApp u_\epsilon \mdot 
  \sembrack{\mname{poly-diff}_{\epsilon \tarrow \epsilon \tarrow \epsilon} \, 
  u_\epsilon \, \synbrack{x_\iota}}_\iota) \, 
  \synbrack{x_{\iota}^{2}}))\\
T_{\mathbb R} \vdash {} 
& \mname{is-poly}_{\epsilon \tarrow o} \, 
  \synbrack{x_{\iota}^{2}}\\
T_{\mathbb R} \vdash {}
& \mname{deriv}_{(\iota \tarrow \iota) \tarrow (\iota \tarrow \iota)}
  (\LambdaApp x_\iota \mdot 
  ((\LambdaApp u_\epsilon \mdot \sembrack{u_\epsilon}_\iota) \, 
  \synbrack{x_{\iota}^{2}})) = {}\nonumber\\
& \LambdaApp x_\iota \mdot
  ((\LambdaApp u_\epsilon \mdot 
  \sembrack{\mname{poly-diff}_{\epsilon \tarrow \epsilon \tarrow \epsilon} \, 
  u_\epsilon \, \synbrack{x_\iota}}_\iota) \, 
  \synbrack{x_{\iota}^{2}})\\
T_{\mathbb R} \vdash {}
& (\LambdaApp u_\epsilon \mdot \sembrack{u_\epsilon}_\iota) \, \synbrack{x_{\iota}^{2}} =
  \sembrack{\synbrack{x_{\iota}^{2}}}_\iota\\
T_{\mathbb R} \vdash {}
& (\LambdaApp u_\epsilon \mdot 
  \sembrack{\mname{poly-diff}_{\epsilon \tarrow \epsilon \tarrow \epsilon} \, 
  u_\epsilon \, \synbrack{x_\iota}}_\iota) \, \synbrack{x_{\iota}^{2}} = 
  \sembrack{\mname{poly-diff}_{\epsilon \tarrow \epsilon \tarrow \epsilon} \, 
  \synbrack{x_{\iota}^{2}} \, \synbrack{x_\iota}}_\iota\\
T_{\mathbb R} \vdash {}
& \mname{poly-diff}_{\epsilon \tarrow \epsilon \tarrow \epsilon} \, 
  \synbrack{x_{\iota}^{2}} \, \synbrack{x_\iota} = \synbrack{2_\iota * x_\iota}\\
T_{\mathbb R} \vdash {}
& \sembrack{\synbrack{x_{\iota}^{2}}}_\iota = x_{\iota}^{2}\\
T_{\mathbb R} \vdash {}
& \sembrack{\synbrack{2_\iota * x_{\iota}}}_\iota = 2_\iota * x_{\iota}\\
T_{\mathbb R} \vdash {}
& \mname{deriv}_{(\iota \tarrow \iota) \tarrow (\iota \tarrow \iota)} 
  (\LambdaApp x_\iota \mdot x_{\iota}^{2}) = 
  \LambdaApp x_\iota \mdot 2_\iota * x_{\iota}
\end{align}
(1) is Theorem~\ref{thm:meaning-form}; (2) follows from (1) by
Universal Instantiation; (3) follows from the definition of
$\mname{is-poly}_{\epsilon \tarrow o}$ by Beta-Reduction by
Substitution, Existential Generalization, and propositional logic; (4)
follows from (3) and (2) by Modus Ponens; (5) follows from (3) by part
4 of Lemma~\ref{lem:poly} by Modus Ponens; (6)~follows from (3) and
part 6 of Lemma~\ref{lem:poly} by Modus Ponens; (7) follows from the
definitions of $\mname{is-poly}_{\epsilon \tarrow o}$ and
$\mname{poly-diff}_{\epsilon \tarrow \epsilon \tarrow \epsilon}$ by
Universal Generalization, Universal Instantiation, Axiom B1.2, and
propositional logic; (8) and (9) are by the Syntactic Law of
Disquotation; and (10) follows from (4)--(9) by the Equality Rules.
\end{proof}

\section{Related Work}\label{sec:related-work}

\subsection{Metareasoning with Reflection}

\emph{Metareasoning} is reasoning about the behavior of a reasoning
system such as a proof system for a logic.  Metareasoning about the
proof system for a \emph{logic} $L$ is performed in a proof system for
a \emph{metalogic} $M$ where $M$ may be $L$ itself.  Since a proof
system involves manipulating expressions as syntactic objects,
metareasoning starts with reasoning about syntax.  This can be done in
a number of ways.  Kurt G\"odel's famously used \emph{G\"odel numbers}
in~\cite{Goedel31} to encode expressions and thereby reduce reasoning
about expressions to reasoning about natural numbers.  The technique
of a \emph{deep embedding} --- in which a particular language of
expressions is represented by an inductive type of values --- is the
most common means used today to reason about the syntax of a
language~\cite{BoultonEtAl93,ContejeanEtAl07,WildmoserNipkow04}.

Metareasoning is the most interesting when the metalogic $M$ is the
same as the logic $L$ and reasoning about $L$'s proof system is
integrated into reasoning within $L$'s proof system.  This is commonly
called \emph{reflection}.  The integration of meta-level reasoning in
object-level reasoning requires some form of \emph{quotation} and some
form of \emph{evaluation}.  Stanfania Costantini presents a general
survey of metareasoning and reflection in~\cite{Costantini02}, and
John Harrison in his excellent paper~\cite{Harrison95} surveys the
applications of reflection to computer theorem proving while arguing
that LCF-style proof assistants do not have an inherent need for
reflection.

Harrison identifies two kinds of reflection: logical and
computational.  \emph{Logical reflection} employs metareasoning about
$L$'s proof system within itself to reveal logical properties about
$L$.  G\"odel, Tarski, and others have used reflection in this form to
show the limits of formal logic~\cite{Goedel31,Tarski35a} and to
explore the logical impact of various \emph{reflection
  principles}~\cite{Harrison95,Koellner09}.  \emph{Computational
  reflection} incorporates algorithms that manipulate expressions and
other meta-level objects into the logic's proof system.  Examples of
such algorithms are the differentiation algorithm for polynomials
defined in section~\ref{sec:examples} and the ring tactic in the Coq
proof assistant~\cite{Boutin97,GregoireMahboubi05}.

Computational reflection has been explored and exploited in several
computer theorem systems.  In the seminal paper~\cite{BoyerMoore81},
Robert Boyer and J Moore developed a global infrastructure for
incorporating symbolic algorithms into the Nqthm~\cite{BoyerMoore88}
theorem prover.  This approach is also used in
ACL2~\cite{KaufmannMoore97}, the successor to Nqthm;
see~\cite{HuntEtAl05}.  Over the last 30 years, the Nuprl group lead
by Robert Constable has produced a large body of work on metareasoning
and reflection for theorem
proving~\cite{AllenEtAl90,Barzilay05,Constable95,Howe92,KnoblockConstable86,Nogin05,Yu07}
that has been implemented in the Nuprl~\cite{Constable86} and
MetaPRL~\cite{HickeyEtAl03} systems.  Proof by reflection has become a
mainstream technique in the Coq~\cite{Coq8.7.2} proof assistant with the
development of tactics based on symbolic computations like the Coq
ring tactic~\cite{Boutin97,GregoireMahboubi05} and the formalizations
in Coq of the \emph{four color theorem}~\cite{Gonthier08} and the
\emph{Feit-Thompson odd-order theorem}~\cite{GonthierEtAl13} led by
Georges Gonthier.
See~\cite{Boutin97,BraibantPous11,Chlipala13,GonthierEtAl15,GregoireMahboubi05,JamesHinze09,OostdijkGeuvers02}
for a selection of the work done on using reflection in Coq.
Agda~\cite{Norell07,Norell09} supports reflection in both programming
and proving; see~\cite{VanDerWalt12,VanDerWaltSwierstra12}.  Martin
Giese and Bruno Buchberger present in~\cite{GieseBuchberger07} the
design for a global infrastructure for employing reflection in the
Theorema~\cite{BuchbergerEtAl06} theorem prover.  See the following
references for research on using metareasoning and reflection in other
systems: 
Idris~\cite{Christiansen:2016,Christiansen:2014,Christiansen:2016:Thesis},
Isabelle/HOL~\cite{ChaiebNipkow08},
Lean~\cite{ebner2017metaprogramming},
Maude~\cite{ClavelMeseguer02}, 
PVS~\cite{VonHenkeEtAl98}, 
and
reFLect~\cite{MelhamEtAl13}.

The programming language community has likewise looked at reflection.
Its use for metaprogramming will be covered in the next section, but
trying to come to grasp with these ideas produced some interesting
papers on reflective theories~\cite{Mendhekar:1996} and
reification~\cite{DBLP:conf/lfp/FriedmanW84}. Of particular note is
that, even in a pure programming context, unrestricted reflection
leads to problems~\cite{DBLP:journals/lisp/Wand98}.  Kavvos' recent
D.Phil thesis~\cite{Kavvos2017} has a very interesting overview of the
impossibility of building a quotation operator (with certain
properties) and other dangers.  His literature review is also quite
extensive.

The Nqthm/ACL2~\cite{BoyerMoore81,HuntEtAl05} and
Theorema~\cite{GieseBuchberger07} approaches to computational
reflection are the approaches in the literature that are closest to
{\churchqe}.  Like {\churchqe}, these approaches utilize a global
reflection infrastructure for a traditional logic.

\subsection{Metaprogramming with Reflection}

\emph{Metaprogramming} is writing computer programs to manipulate and
generate computer programs in some programming language $L$.
Metaprogramming is especially useful when the ``metaprograms'' can be
written in $L$ itself.  This is facilitated by implementing in $L$
metaprogramming techniques for $L$ that involve the manipulation of
program code.  See~\cite{DemersMalenfant95} for a survey of how this
kind of ``reflection'' can be done for the major programming
paradigms.  

\bsp We listed in section~\ref{sec:introduction} several programming
languages that support metaprogramming with reflection: 
Lisp,
Agda~\cite{Norell07,Norell09,VanDerWalt12}, 
Elixir~\cite{Elixir18},
F\#~\cite{FSharp18},
Idris~\cite{Christiansen:2016,Christiansen:2014,Christiansen:2016:Thesis},
MetaML~\cite{TahaSheard00}, 
MetaOCaml~\cite{MetaOCaml11},
reFLect~\cite{GrundyEtAl06}, 
Scala~\cite{Odersky16,Scalameta18}, and
Template Haskell~\cite{SheardJones02}.  
These languages represent
fragments of computer code as values in an inductive type and include
quotation, quasiquotation, and evaluation operations.  For example,
these operations are called \emph{quote}, \emph{backquote}, and
\emph{eval} in the Lisp programming language.  The metaprogramming
language Archon~\cite{Stump09} developed by Aaron Stump offers an
interesting alternate approach in which program code is manipulated
directly instead of manipulating representations of computer code.
\esp

The reflection infrastructure in a programming language provides the
basis for \emph{multistage programming}~\cite{Taha04} in which code
generation and code execution are interleaved to produce programs that
are both general and efficient.  The code generation and execution can
take place at compile-time or run-time.
See~\cite{BergerEtAlArxiv16,CalcagnoEtAl03,MoggiFagorzi03} for
research on developing models for multistage programming
and~\cite{DaviesPfenning01,NanevskiEtAl08} for research on type
systems that support multistage programming.

\subsection{Theories of Quotation}

The semantics of the quotation operator $\synbrack{\cdot}$ is based on
the \emph{disquotational theory of quotation}~\cite{Quotation12}.
According to this theory, a quotation of an expression $e$ is an
expression that denotes $e$ itself.  In {\churchqe},
$\synbrack{\textbf{A}_\alpha}$ denotes a value that represents the
syntactic structure of $\textbf{A}_\alpha$.  Andrew Polonsky presents
in~\cite{Polonsky11} a set of axioms for quotation operators of this
kind.  There are several other theories of quotation that have been
proposed~\cite{Quotation12}.  For instance, quotation can be viewed as
an operation that constructs literals for syntactic values.  Florian
Rabe explores in~\cite{Rabe15} this approach to quotation.

\subsection{Theories of Truth}

Truth is a major subject in philosophy~\cite{Truth13}.  A theory of
truth seeks to explain what truth is and how the liar and other
related paradoxes can be resolved.  A \emph{truth
  predicate}~\cite{Truth13} is the face of a \emph{theory of truth}:
the properties of a truth predicate characterize a theory of
truth~\cite{Leitgeb07}.  A \emph{semantics theory of truth} defines a
truth predicate for a formal language, while an \emph{axiomatic theory
  of truth}~\cite{Halbach11,AxiomThyTruth13} specifies a truth
predicate for a formal language by means of an axiomatic theory.

In {\churchqe}, $\sembrack{\textbf{A}_\epsilon}_o$ asserts the truth
of the formula represented by $\textbf{A}_\epsilon$, and thus the
evaluation operator $\sembrack{\cdot}_o$ is a truth predicate.  Hence
{\churchqe} provides a semantic theory of truth via it semantics and
an axiomatic theory of truth via its proof system.  Since our goal is
not to explicate the nature of truth, it is not surprising that the
semantic and axiomatic theories of truth provided by {\churchqe} are
not very innovative.  Theories of truth --- starting with Tarski's
work~\cite{Tarski33,Tarski35,Tarski35a} in the 1930s --- have
traditionally been restricted to the truth of sentences, i.e.,
formulas with no free variables.  However, the {\churchqe} semantic
and axiomatic theories of truth admit formulas with free variables.

\subsection{Reasoning in the Lambda Calculus about Syntax}

Corrado B\"ohm and Alessandro Berarducci present
in~\cite{BoehmBerarducci85} a method for representing an inductive
type of values as a collection of lambda-terms.  Then functions
defined on the members of the inductive type can also be represented
as lambda terms.  Both the lambda terms representing the values and
those representing the functions defined on the values can be typed in
the second-order lambda calculus (System F)~\cite{Girard72,Reynolds74}
as shown in~\cite{BoehmBerarducci85}.  B\"ohm and his collaborators
present in~\cite{BerarducciBoehm93,BoehmEtAl94} a second, more
powerful method for representing inductive types as collections of
lambda-terms in which the lambda terms are not as easily typeable as
in the first method.  These two methods provide the means to
efficiently formalize syntax-based mathematical algorithms in the
lambda calculus.

Using the fact that inductive types can be directly represented in the
lambda calculus, Torben \AE. Mogensen in~\cite{Mogensen94} represents
the inductive type of lambda terms in lambda calculus itself as well
as defines a global evaluation operator in the lambda calculus.  (See
Henk Barendregt's survey paper~\cite{Barendregt97} on the impact of
the lambda calculus for a nice description of this work.)
Nevertheless these representations were only partially typed. The
\emph{finally tagless} approach to embedded
representations~\cite{CaretteKS09} kicked off a series of papers on
typed
self-representation~\cite{atkey2009syntax,atkey2009unembedding,brown2015self,BrownPalsberg16,BrownP17,BrownP18,JayP11,rendel2009typed}
which eventually succeeded at providing elegant solutions.

\subsection{Undefinedness}

Undefinedness naturally occurs in two places in {\churchqe}.  It
occurs when a syntax constructor is applied to inappropriate arguments
and when the evaluation operator $\sembrack{\cdot}_\alpha$ is applied
to an expression $\synbrack{\textbf{B}_\beta}$ where $\alpha \not=
\beta$.  We would prefer a cleaner version of {\churchqe} that
formalizes the traditional approach to undefinedness~\cite{Farmer04}.
Then improper constructions would not be needed and checking whether
an expression $ \textbf{A}_\epsilon$ denotes a construction or an
evaluation $\sembrack{\textbf{A}_\epsilon}_\alpha$ is meaningful would
be reduced to checking for definedness.  We argue in~\cite{Farmer04}
that a logic that supports the traditional approach to undefinedness is
much closer to mathematical practice than traditional logics and can
be effectively implemented.

We show in~\cite{Farmer08a} how to formalize the traditional approach
to undefinedness in a traditional logic.  The paper~\cite{Farmer08a}
presents {\qzerou}, a version of Andrews' {\qzero} that takes this
approach to undefinedness.  {\qzerou} is a simplified version of
{\lutins}~\cite{Farmer90,Farmer93b,Farmer94}, the logic of the the
{\imps} theorem proving
system~\cite{FarmerEtAl93,FarmerEtAl96}. Roughly speaking, {\qzerouqe}
is {\qzerou} plus quotation and evaluation.
{\churchuqe}~\cite{Farmer17} is a variant of {\churchqe} in which
undefinedness is incorporated in {\churchqe} in the same way that it
is incorporated in {\qzerou} and {\qzerouqe}.

\section{Conclusion}\label{sec:conclusion}

Quotation and evaluation provide a basis for metaprogramming as seen
in Lisp and other programming languages.  We believe that these
mechanisms can also provide a basis for metareasoning in traditional
logics like first-order logic or simple type theory~\cite{Farmer08}.
However, incorporating quotation and evaluation into a traditional
logic is much more challenging than incorporating them into a
programming language due to the Evaluation, Variable, and Double
Substitution Problems we described in the Introduction.

In this paper we have introduced {\churchqe}, a logic based on
{\qzero}~\cite{Andrews02}, Andrews' version of Church's type theory,
that includes quotation and evaluation.  We have presented the syntax
and semantics of {\churchqe} as well as a proof system for
{\churchqe}.  The syntax of {\churchqe} has the machinery of Church's
type theory plus an inductive type $\epsilon$ of syntactic values, a
partial quotation operator, and a typed evaluation operator.  The
semantics of {\churchqe} is based on Henkin-style general
models~\cite{Henkin50}.  Constructions --- certain expressions of type
$\epsilon$ --- represent the syntactic structures of eval-free
expressions (i.e., expressions that do not contain the evaluation
operator); they serve as the syntactic values in the semantics.  The
proof system for {\churchqe} is an extension of the proof system for
{\qzero}.  We proved that it is sound for all formulas (Requirement
R1) and complete for eval-free formulas (R2).  We also showed it can
be used to reason about constructions~(R3), can instantiate free
variables occurring within evaluations~(R4), and can prove formulas
containing evaluations such as schemas and meaning formulas for
syntax-based mathematical algorithms~(R5).

The Evaluation Problem is completely avoided in {\churchqe} by
restricting the quotation operator to eval-free expressions.  The
Variable Problem is solved by (1) using the more restrictive semantic
notion of ``a variable is effective in an expression'' in place of the
syntactic notion of ``a variable is free in an expression'' and (2)
adding beta-reduction axioms for quotations and evaluations to the
beta-reduction axioms used by Andrews in the proof system for
{\qzero}~\cite[p.~213]{Andrews02}.  The Double Substitution Problem is
solved by not allowing beta-reductions that embody a double
substitution.

Using examples, we have shown that {\churchqe} is suitable for
reasoning about the interplay of syntax and semantics, expressing
quasiquotations, and stating and proving schemas and meaning formulas.
In particular, we proved within the proof system for {\churchqe} the
meaning formula for an symbolic differentiation algorithm for
polynomials.  The proof of this result
(Theorem~\ref{thm:meaning-form}) is an comprehensive test of the
efficacy of {\churchqe}'s proof system.

{\churchqe} is much simpler than {\qzerouqe}~\cite{FarmerArxiv14}, a
richer, but more complicated, version of {\qzero} with undefinedness,
quotation, and evaluation.  In {\qzerouqe}, quotation may be applied
to expressions containing evaluations, expressions may be undefined
and functions may be partial, and substitution is implemented
explicitly as a logical constant.  Allowing quotation to be applied to
all expressions makes {\qzerouqe} much more expressive than
{\churchqe} but also much more difficult to implement since
substitution in the presence of evaluations is highly complex.  The
decision to represent ``a variable is free in an expression'' in the
logic but represent substitution only in the metalogic gives the proof
system for {\churchqe} much greater fluency that the proof system for
{\qzerouqe}.

We believe that {\churchqe} is the first version of simple type theory
with global quotation and evaluation that has a practicable proof
system.  We also believe that our approach for incorporating quotation
and evaluation into Church's type theory --- introducing an inductive
type of constructions, a partial quotation operator, and a typed
evaluation operator --- can be applied to other logics including
many-sorted first-order logic.  We have shown that developing the
needed syntax and semantics is relatively straightforward, while
developing a proof system for the logic is fraught with difficulties.

In our future research we will seek to answer the following three
questions:

\be

  \item Can {\churchqe} (or a logic like {\churchqe} with global
    quotation and evaluation) be effectively implemented as a software
    system?

  \item Is {\churchqe} an effective logic for developing defining,
    applying, proving properties about syntax-based mathematical
    algorithms.

  \item Is {\churchqe} an effective logic for formalizing graphs of
    biform theories?

\ee
Since {\churchqe} is a version of Church's type theory, the most
promising approach to answering the first question is to implement
{\churchqe} by extending HOL Light~\cite{Harrison09}, a simple
implementation of the HOL proof assistant~\cite{GordonMelham93}.  We
have developed a system called HOL Light
QE~\cite{CaretteFarmerLaskowskiArxiv18} by modifying HOL Light to
include global quotation and evaluation operators.  As future work, we
intend to continue the development of HOL Light QE and to show that
HOL Light QE can be effectively used to develop syntax-based
mathematical algorithms.

A \emph{biform theory}~\cite{CaretteFarmer08,Farmer07b} is a basic
unit of mathematical knowledge that consists of a set of
\emph{concepts} that denote mathematical values, \emph{transformers}
that denote symbolic algorithms, and \emph{facts} about the concepts
and transformers.  Since transformers manipulate the syntax of
expressions, biform theories are difficult to formalize in a
traditional logic.  The notion of a biform theory is a key component
of a framework for integrating axiomatic and algorithmic mathematics
that is being developed under the MathScheme
project~\cite{CaretteEtAl11} at McMaster University, led by Jacques
Carette and the author.  One of the main goals of the MathScheme is to
see if a logic like {\churchqe} can be used to develop a library of
biform theories connected by meaning preserving theory morphisms.  As
part of a case study~\cite{CaretteFarmer17}, we have formalized a
graph of biform theories encoding natural number arithmetic in
{\churchuqe}~\cite{Farmer17}, a variant of {\churchqe} with
undefinedness and theory morphisms.  Our next step in this direction
will be to formalize this same graph of biform theories in the HOL
Light QE system we mentioned above.

\addcontentsline{toc}{section}{Acknowledgments}
\section*{Acknowledgments} 

The author is grateful to Marc Bender, Jacques Carette, Michael
Kohlhase, Pouya Larjani, and Florian Rabe for many valuable
discussions on the use of quotation and evaluation in logic.  Peter
Andrews deserves special thanks for writing \emph{An Introduction to
  Mathematical Logic and Type Theory: To Truth through
  Proof}~\cite{Andrews02}.  The ideas embodied in {\churchqe} heavily
depend on the presentation of {\qzero} given in this superb textbook.
This research was supported by NSERC.  Finally, the author would like
to thank the referees for their insightful comments and suggestions.

\addcontentsline{toc}{section}{References}
\bibliography{$HOME/research/lib/imps}
\bibliographystyle{plain}

\end{document}